\documentclass[11pt,a4paper]{article}
\usepackage[a4paper]{geometry}
\usepackage{amssymb,latexsym,mathtools,amsfonts,amsthm}
\usepackage{graphicx}
\usepackage{amsbsy}
\usepackage{hyperref}\hypersetup{hidelinks}
\usepackage{fullpage}
\usepackage{enumitem}
\usepackage{color}
\usepackage{tikz}
\usepackage{subfigure}
\usepackage{booktabs,longtable}
\usepackage[alphabetic,initials]{amsrefs}
\usepackage{hyperref}
\usepackage{curves}
\usepackage{array}
\usepackage[english]{babel}
\usepackage{enumerate}
\usepackage{mathrsfs}
\usepackage{empheq}
\usepackage{comment}
\usepackage{braket}
\hypersetup{colorlinks = true, linkcolor = {blue} }
\usepackage{cancel}
 \usepackage{lipsum}
 \usepackage{mathtools}
 \usepackage{MnSymbol}
\usepackage{xcolor}
 \usepackage{subfigure}
\usepackage{soul}

\usepackage{tikz} % Optional PGF libraries
\usetikzlibrary{shapes,snakes}
\usepackage{pgf}
\usepackage{pgflibraryarrows}
\usepackage{pgflibrarysnakes}
\usetikzlibrary{decorations.text}
\usepgfmodule{shapes}
\usetikzlibrary{decorations.pathmorphing}
\usetikzlibrary{decorations.markings}
\usetikzlibrary{patterns}
\usetikzlibrary{automata}
\usetikzlibrary{positioning}
\usetikzlibrary {arrows.meta}
\usepackage{pgfplots}
\usetikzlibrary{backgrounds}
\tikzset{partial ellipse/.style args={#1:#2:#3}{insert path={+ (#1:#3) arc (#1:#2:#3)} }}
\tikzset{->-/.style={decoration={ markings, mark=at position #1 with {\arrow{>}}},postaction={decorate}}}

\tikzset{
  midarrow/.style={
    decoration={
      markings,
      mark=at position 0.5 with {\arrow[>=stealth]{>}}
    },
    postaction={decorate}
  }
}

\definecolor{Dgreen}{RGB}{0,153,0}

\newcommand*{\mailto}[1]{\href{mailto:#1}{\nolinkurl{#1}}}

\newtheorem{theorem}{Theorem}[section]

\newtheorem{remark}[theorem]{Remark}

\newtheorem{prop}[theorem]{Proposition}
\newtheorem{thm}[theorem]{Theorem}

\theoremstyle{definition}
\newtheorem{defn}[theorem]{Definition}
\newtheorem{Rhp}[theorem]{RH problem}
\newcommand{\oo}{\mathcal{O}}
\newcommand{\ccc}{\mathbb{C}}
\newcommand{\AAA}{{\mathcal A}}
\newcommand{\R}{\mathbb{R}}
\newcommand{\Z}{\mathbb{Z}}
\newcommand{\C}{\mathbb{C}}

\newcommand{\abs}[1]{\lvert#1 \rvert}

\newcommand{\ddd}{\mathrm{d}}

\newcommand{\ii}{\mathrm{i}}
\newcommand{\E}{\mathrm{e}}

\newcommand{\re}{\mathrm{Re}}
\newcommand{\I}{\mathrm{Im}}
\newcommand{\sech}{\mathop{\mathrm{sech}}}
\DeclareMathOperator{\res}{Res}

\newcommand{\lb}{\lambda}
\def\d{{\rm d}}
\def\e{{\rm e}}
\def\i{{\rm i}}
\def\1{\operatorname{Id}}

\def\res{\mathop{{\rm Res}}}
\def\Tr{\mathop{{\rm Tr}}}
\def\Ai{\mathrm{Ai}}
\def\Re{\operatorname{Re}}
\def\Im{\operatorname{Im}}
\def\identity{\operatorname{Id}}
\def\exp{\operatorname{exp}}
\def\Bes{\mathrm{Bes}}
\def\Jac{\operatorname{nd}}

\def\XXint#1#2#3{{\setbox0=\hbox{$#1{#2#3}{\int}$}
     \vcenter{\hbox{$#2#3$}}\kern-.5\wd0}}

\numberwithin{equation}{section}

\begin{document}
\title{A dense focusing Ablowitz-Ladik soliton gas and its asymptotics}
\author{Meisen Chen\footnotemark[1],\quad Engui Fan\footnotemark[2], \quad Zhaoyu Wang\footnotemark[4]~\footnotemark[5],  \quad Yiling Yang\footnotemark[6],
\quad Lun Zhang\footnotemark[2]~\footnotemark[3]}

\footnotetext[1]{School of Mathematics and Statistics, Fujian Normal University, Fuzhou 350117,  China. E-mail: \texttt{chenms@fjnu.edu.cn.}}
\footnotetext[2]{School of Mathematical Sciences, Fudan University, Shanghai 200433, China.
E-mail: \texttt{\{faneg, lunzhang\}@fudan.edu.cn.}}
\footnotetext[3]{Shanghai Key Laboratory for Contemporary Applied Mathematics, Fudan University, Shanghai 200433, China.}
\footnotetext[4]{Department of Mathematics, Shanghai University, Shanghai 200444, China. E-mail: \texttt{zhaoyuwang@shu.edu.cn.}}
\footnotetext[5]{Newtouch Center for Mathematics of Shanghai University, Shanghai 200444, China.}
\footnotetext[6]{College of Mathematics and Statistics, Chongqing University, Chongqing 401331, China. E-mail: \texttt{ylyang19@fudan.edu.cn.}}

\maketitle
\begin{abstract}
In this paper, we propose a soliton gas solution for the focusing Ablowitz-Ladik system. This solution is defined as the large $N$ limit of the $N$-soliton solution, and arises from a continuous spectrum of poles that accumulate within two disjoint intervals on the imaginary axis. We show that this gas solution admits a Fredholm determinant representation. By further exploring its Riemann-Hilbert characterization, we are able to establish the large-space asymptotics at $t=0$ and large-time asymptotics of the gas solution. 
\\[6pt]
% We establish the soliton gas Riemann-Hilbert problem for the focusing Ablowitz-Ladik (AL) system, 
% construct both soliton and soliton gas solutions in terms of Fredholm determinants, 
% and analyze the asymptotic behaviors of the soliton gas solution $q_n(t)$.
% 	Specifically, we investigate the large-space asymptotic behaviors at $t=0$: as $n\to+\infty$, $q_n(0)$ decays exponentially, while as $n\to-\infty$, $q_n(0)$ approaches an algebro-geometric solution at a rate of $\oo(n^{-1})$.
% 	Furthermore, we study the large-time asymptotic behaviors of the soliton gas solution in different regions, including a fast decaying region, two hyperelliptic wave regions, and two transition regions.

    \noindent {\bf AMS Subject Classification 2020:} 37K60; 39A14; 35Q51; 35B40; 35Q15;  35Q35.
\end{abstract}

\tableofcontents

\section{Introduction and main results} %\label{sec1}
%Introduction for AL
%In this paper, we  consider the soliton gas solutions of
The Ablowitz-Ladik (AL) system is defined by 
\begin{align}\label{e1.1}
	\ii\frac{\ddd }{\ddd  t}q_n=q_{n+1}-2q_n+q_{n-1}+\sigma\lvert q_n\rvert^2(q_{n+1}+q_{n-1}),\qquad n\in \mathbb{Z},
\end{align}
where  $\sigma= \pm 1$, corresponding to the focusing/defocusing case, respectively. The AL system was introduced by Ablowitz and Ladik \cite{AL1974,AL1976} via discretizing the $2\times 2$ Zakharov-Shabat Lax pair of the cubic nonlinear Schr\"{o}dinger equation 
\begin{align*}
    \ii u_t = u_{xx}+\sigma|u|^2 u.
\end{align*}
In addition, the AL system also possesses numerous physical applications, ranging from the dynamics of anharmonic lattices \cite{TH1990}, self-trapping on a dimer \cite{KC1986} to Heisenberg spin chains \cite{I1982, N1987}.
%and  nonlinear waveguides \cite{BKA2009}. 
%The integrability of the AL system has also been proved by constructing a bi-Hamiltonian structure \cite{AKV2020,EL2006}.
%Besides, being used as numerical difference schemes for its continuous counterpart, the AL system possesses numerous physical applications, including the dynamics of anharmonic lattices \cite{TH1990}, self-trapping on a dimer \cite{KC1986} and Heisenberg spin chains \cite{I1982, N1987}.
%and  nonlinear waveguides \cite{BKA2009}.

In this paper, we are concerned with the focusing AL system, i.e., $\sigma=1$ in  \eqref{e1.1}, which reads
\begin{align}\label{e1.2}
	\ii\frac{\ddd }{\ddd  t}q_n=q_{n+1}-2q_n+q_{n-1}+\lvert q_n\rvert^2(q_{n+1}+q_{n-1}).
\end{align}
It is well known that the  AL system is integrable as can be seen from its  bi-Hamiltonian structure \cite{AKV2020,EL2006}. There have been intensive studies of \eqref{e1.2} from multiple perspectives, which include the initial-boundary problem \cite{XF2018}, inverse scattering transform (IST) with nonzero background \cite{ABP2007, P2016,PV2016, VK1992}, integrable decomposition \cite{ GDZ2007},  quasi-periodic solutions \cite{LN2012, MEKL1995, OY2014}, and large-time asymptotics in the presence of solitons \cite{CFW2025}. Among various investigations of the focusing AL system, we particularly mention the soliton solutions -- one of the most fundamental types of particular solutions for integrable systems.  In the framework of IST, these solutions are related to  poles of the transmission coefficient. A one-soliton solution arises from a pair of conjugate simple poles, while an $N$-soliton solution corresponds to $N$ pairs of conjugate simple poles. The one-soliton solution to the AL system  \eqref{e1.2} associated with the discrete eigenvalue $\lb_1 \in \C$ such that $|\lb_1|>1$ is given by \cite{AL1974,APT2004}
%and its norming constant $\varLambda_1$
\begin{align}
    q_n(t) = \i \frac{\sinh(2 \log|\lb_1|)}{|\lb_1|} \e^{-2 \i(-n\arg \lb_1- \tilde{w} t - \psi^+) } \sech( 2 (n+1) \log|\lb_1| - \tilde{v}t -\delta_0),
\end{align}
where the constants $\tilde{w}$, $\tilde{v}$ and $\delta_0$ depend on $\lb_1$, and $\psi^+ = -\arg \varLambda_1 -2 \arg \lb_1 \pm \pi$ with $\varLambda_1$ being the normalizing constant; see also \cite{LSG2024} for the higher-order soliton solution and \eqref{eqtau} below for the $N$-soliton solution.

The interpretation of soliton ensembles as particle-like entities has been a catalyst for a multitude of novel research endeavors ever since its discovery. In \cite{Z1971}, the notion of “soliton gas” was first introduced by  Zakharov in the context of Korteweg-de Vries (KdV) equation. By evaluating the efficient modification of the soliton velocity within a rarefied gas, Zakharov derived an integro-differential kinetic equation for the soliton gas. The kinetic equation describes the evolution of the spectral distribution function of solitons due to soliton-soliton collisions. 
%The kinetic equation for the soliton gas of other integrable equations like Whitham equation and nonlinear Schr\"{o}dinger equation can be found in \cite{E03,EK05,EKPZ11,ET20}.
The kinetic equation for the soliton gas has been extended to other integrable equations like focusing nonlinear Schr\"{o}dinger (NLS) equation \cite{EK05, EKPZ11} and defocusing and resonant NLS equations \cite{CER2021}.
In addition, Dyachenko \textit{et al.} formulated a Riemann-Hilbert (RH) problem for the soliton gas of the KdV equation in \cite{DZZ16}. Later, Girotti \textit{et al.} established large distance and large time asymptotics of KdV soliton gas dynamics in \cite{kdvgas} via the RH approach; see also \cite{mkdvgas} for the analysis of a dense modified Korteweg-de Vries (mKdV) soliton gas.

% Since then,  many creative and worthwhile papers have studied the kinetic equation for the soliton gas of integrable systems \cite{E03,EK05,EKPZ11,ET20}. Research on soliton gas via the Riemann-Hilbert (RH) method started from \cite{DZZ16}, where Dyachenko and his collaborators first formulated an RH problem for the soliton gas of the KdV equation. Following their idea, researchers began to use  RH problem for the soliton gas of integrable systems to investigate its properties. Notably,  Girotti and her collaborators in \cite{kdvgas} used the RH problem for the   KdV soliton gas  to establish an asymptotic description of soliton gas dynamics for large distance and large time. Further, in \cite{mkdvgas}, Girotti and her collaborators analyzed the case of a dense modified Korteweg–de Vries (mKdV) soliton gas and its large-time behavior in the presence of a single trial soliton.

%上面强调连续的结果，下面说离散的
Although substantial advances have been achieved in the studies of soliton gases for continuous integrable systems, the corresponding analogue for discrete integrable systems, however, has seen very little progress. In \cite{AA2025}, Aggarwal considered the Toda lattice $(\mathbf{p}(t);\,\mathbf{q}(t))$ at thermal equilibrium in the sense that the variables $(p_i)$ and $(\E^{q_i -q_{i+1}})$ are independent Gaussian and Gamma random variables, respectively, which can be interpreted as a soliton gas of the Toda system. The present work represents the first attempt to investigate the soliton gas of discrete integrable systems via the RH approach in the context of the focusing AL system \eqref{e1.2}, pioneering a large-$n$ and large-$t$ asymptotic analysis framework.  Our main results are stated in the next section.

% introduce the RH method into the study of soliton gases in the AL system \eqref{e1.2}, pioneering a large-$n$ and large-$t$ asymptotic analysis framework for soliton gases in discrete integrable systems. 
%In this paper, we aim to  analyse the case of a dense AL soliton gas and its large-time behavior.

\subsection{Main results}
As in the case of continuous integrable system, the soliton gas solution of \eqref{e1.2} is defined as the large $N$ limit of its $N$-soliton solution $q^{[N]}_{n}$. Here, $q^{[N]}_{n}$ is determined by the discrete spectrum consisting of the points $\i \lb_j^{\pm1}$, $\lb_j>0$, $j=1,\cdots,N$, together with the associated norming constants $\varLambda_j \in \R \setminus \{ 0\}$ and reflection coefficient $r(\lb)$.  By taking $N \to \infty$, the soliton gas solution $q_n$ arises from a continuous spectrum of poles that accumulate within the interval $\Sigma_1 \cup \Sigma_2$, where %$\Sigma_1=(\i\eta_1, \i\eta_2)$ and $\Sigma_2=(\i\eta_2^{-1}, \i\eta_1^{-1})$. Here the parameters satisfy the condition $1 < \eta_1 < \eta_2$. 
\begin{align}
    \Sigma_1:=(\i\eta_1, \i\eta_2),\qquad\Sigma_2:=(\i\eta_2^{-1}, \i\eta_1^{-1}),\qquad1 < \eta_1 < \eta_2.\label{def sigma1} 
\end{align}
Our first result shows that $q_n$ can be expressed in terms of a Fredholm determinant, which is a natural extension of the Fredholm determinant representation of the $N$-soliton solution $q^{[N]}_{n}$ established in \eqref{eqtau} below.

\begin{theorem}\label{thm1}
	%Under Assumption \ref{assum1},
	%we set $\mathcal{K}$ as an integral operator on $L^2(\Sigma_1)$
	%\begin{align}
	%	\mathcal{K}[f](\zeta)=\int_{\i\eta_1}^{\i\eta_2}K(\zeta,s)f(s)\ddd s.
	%\end{align}
	%with the kernel
	%\begin{align*}
	%	K(\zeta,s)=\frac{\sqrt{r(\zeta)r(s)}(-\zeta s)^{-(n+2)/2}e^{\i t(\zeta+s+\zeta^{-1}+s^{-1})/2}}{2\pi\i(1+\zeta s)}.
	%\end{align*}
	%As a result, we obtain the soliton gas solution in the form
	%\begin{align}
	%	q_n(t)=\i^{n+1}\E^{2\i t}\frac{\d}{\d t}\Im\ln\det(\identity_{L^2([\i\eta_1,\i\eta_2])}+ \mathcal{K}).
	%\end{align}
    %Deonte $\Sigma_1=(\i\eta_1,\i\eta_2)$, $\eta_2>\eta_1>1$. 
Suppose that $r(\lb)$ is a positive and continuous function on $[\i\eta_1,\i\eta_2]$ and let $\mathcal{K}: L^2(\Sigma_1)\to L^2(\Sigma_1)$ be an integral operator defined by
	\begin{align}
		\mathcal{K}_{n,t}[f](\zeta)=\int_{\i\eta_1}^{\i\eta_2}K_{n,t}(s;\zeta)f(s) \ddd s,
	\end{align}
where
	\begin{align*}
		K_{n,t}(s;\zeta)=-\frac{\sqrt{r(\zeta)r(s)}(-\zeta s)^{-n/2}e^{\i t(\zeta+s+\zeta^{-1}+s^{-1})/2}}{2\pi(1+\zeta s)}
	\end{align*}
is the associated kernel. Then,
	\begin{align}\label{e1.4}
		q_n(t):=\i^{n+1}\E^{2\i t}\frac{\d}{\d t}\Im\ln\det\left(\identity_{L^2(\Sigma_1)}+ \mathcal{K}_{n+1,t}\right)
	\end{align}
is the soliton gas solution for the focusing AL system \eqref{e1.2}. In addition, we have
    \begin{align}\label{fcn}
\prod_{k=n}^\infty(1+\abs{q_k(t)}^2)=\left(\re\left[\frac{\det(\identity_{L^2(\Sigma_1)}+\mathcal{K}_{n+2,t})}{\det(\identity_{L^2(\Sigma_1)}+\mathcal{K}_{n,t})}\right]\right)^{-1}.
    \end{align}
\end{theorem}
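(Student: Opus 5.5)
We would prove Theorem \ref{thm1} by first establishing the Fredholm determinant formulas at the level of the $N$-soliton solution, and then passing to the limit $N\to\infty$ through a Riemann--Hilbert characterization.

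\emph{Step 1: the $N$-soliton determinant.} For reflectionless data with poles $\i\lb_j^{\pm1}$ and norming constants $\varLambda_j$, we would write the RH problem for the AL Jost solutions in the spectral variable $z$, with jump/residue conditions at $\i\lb_j$ and $(\i\lb_j)^{-1}$ related by the AL symmetries $z\mapsto \bar z^{-1}$ and $z\mapsto -z$. Solving the residue conditions reduces to an $N\times N$ linear system with a Cauchy-like matrix $\bigl(c_jc_k(-\zeta_j\zeta_k)^{-n/2}\E^{\i t(\zeta_j+\zeta_k+\zeta_j^{-1}+\zeta_k^{-1})/2}/(1+\zeta_j\zeta_k)\bigr)$, whose entries are exactly the discrete version of $K_{n,t}$. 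By Cramer's rule, $q_n^{[N]}$ and the normalization $\prod_{k\ge n}(1+|q_k|^2)$, read off from the large-$z$ and $z=0$ expansions of the solution, become ratios of these determinants. This gives the discrete version of \eqref{e1.4} and \eqref{fcn}, namely \eqref{eqtau}. The $t$-derivative of $\Im\ln\det$ arises because $\partial_t$ of the exponent produces the needed first moment. The identity $\prod(1+|q_k|^2)=(\re[\tau_{n+2}/\tau_n])^{-1}$ follows from evaluating the $(1,1)$ entry at $z=0$ and using the symmetry that makes $\tau_n$ and $\bar\tau_n$ related.

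\emph{Step 2: continuum limit.} Next we choose the points $\i\lb_j$ as quadrature nodes on $[\i\eta_1,\i\eta_2]$, for example equispaced or distributed according to a density, with $|\varLambda_j|^2$ equal to $r(\i\lb_j)$ times the quadrature weight. The $N\times N$ determinant is then $\det(I+K^{(N)})$ for a Nyström discretization of $\mathcal{K}_{n,t}$. Since $r$ is continuous and positive, and the kernel is continuous on $\Sigma_1\times\Sigma_1$ (note $1+\zeta s\le 1-\eta_1^2<0$, so it is nonvanishing), the Nyström determinants converge to the Fredholm determinant $\det(\identity+\mathcal{K}_{n,t})$, uniformly together with $t$-derivatives on compact sets. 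The same holds for the RH problem: the sum of poles converges to a Cauchy integral over $\Sigma_1\cup\Sigma_2$, giving an RH problem with jumps on these contours.

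\emph{Step 3: the limit is a solution, and the main obstacle.} To show that $q_n$ solves \eqref{e1.2}, we would use the limiting RH problem. Its solution $Y_n(z;t)$ has $n$- and $t$-independent jumps after conjugation by $z^{n\sigma_3}\E^{\ldots\sigma_3}$, so the Lax pair $Y_{n+1}Y_n^{-1}$ and $Y_tY_n^{-1}$ are rational in $z$ by Liouville's theorem. Their compatibility yields \eqref{e1.2}. Alternatively, we could pass locally uniform convergence of $q^{[N]}$ and its $t$-derivative through the equation.

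The key obstacle is nonvanishing: we must show $\det(\identity+\mathcal{K}_{n,t})\neq0$ and, more precisely, that $\re[\det_{n+2}/\det_n]>0$, so that $\ln\det$ and \eqref{fcn} make sense and the RH problem is uniquely solvable. Our first approach would be to show that $\mathcal{K}_{n,t}$, after conjugation by a unimodular diagonal factor, is a positive-semidefinite plus skew structure. Indeed, $-1/(1+\zeta s)$ with $\zeta=\i a$, $s=\i b$ equals $1/(ab-1)=\int_0^\infty \E^{-x(ab-1)}\ldots$, which is not directly positive. A better choice is to use the vanishing lemma for the RH problem, relying on the Schwarz symmetry of the jumps together with $r>0$.
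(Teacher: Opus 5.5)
Your outline is sound, but at the decisive step it takes a different route from the paper. The paper derives the discrete formulas from two displacement identities: $\Psi(n)-\Psi(n+2)=\mathbf d\mathbf d^T$, used with the Weinstein--Aronszajn identity, and $\partial_t\Psi=-\tfrac12(D\mathbf d\mathbf d^T+\mathbf d\mathbf d^TD)$, used with Jacobi's formula. It then passes to $N=\infty$ in several steps:
\begin{itemize}
\item it splits $n$ by parity and factors $-\i\Psi(2n,t)$ through $\ell^2(\{l\ge n\})$ using the geometric series in $(\lb_j\lb_k)^{-1}$;
\item it lets the Hankel kernel $\sum_k d_k(l+l',t)^2$ converge to a moment integral over $\Sigma_1$;
\item it returns to $L^2(\Sigma_1)$ via $\det(I+AB)=\det(I+BA)$.
\end{itemize}
That the limit solves \eqref{e1.2} is taken from the limiting RH problem (Proposition \ref{Pro 1}). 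You instead observe that, up to the factor $\pm\i$, $\Psi(n,t)$ is exactly the symmetrized Nystr\"om matrix of $\mathcal K_{n,t}$ for a Riemann-sum rule. You then invoke Fredholm's classical convergence theorem for continuous kernels, with analyticity in $t$ giving convergence of $t$-derivatives. This buys three things: it treats all $n$ at once, it avoids the parity split, and it avoids having to justify trace-norm convergence of the Hankel operators on $\ell^2$ (which the paper only asserts). Your second option in Step 3, passing to the limit directly in the equation, also dispenses with the RH problem for $Z$ altogether.

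Two points need correcting.

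\emph{Step 1 is missing its key identities.} Cramer's rule by itself only gives bordered determinants. What turns them into $\frac{\d}{\d t}\Im\ln\tau^{[N]}$ and into the ratio $\tau^{[N]}(\i;n+2,t)/\tau^{[N]}(\i;n,t)$ is precisely the pair of displacement identities above, so you must state and use them. Two smaller slips: the norming constants must be linear in the quadrature weights, $\varLambda_j=r(\i\lb_j)w_j/(2\pi)$ (cf.\ \eqref{lbva}), not $|\varLambda_j|^2$, because $\Psi_{jk}\propto\sqrt{\varLambda_j\varLambda_k}$. Also, the partner of $\i\lb_j$ under $\lb\mapsto\bar\lb^{-1}$ is $\i\lb_j^{-1}$, not $(\i\lb_j)^{-1}$.

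\emph{Your dismissal of positivity is mistaken, and positivity removes your ``key obstacle'' with no vanishing lemma.} For $a,b\in[\eta_1,\eta_2]$ one has $\frac{1}{ab-1}=\sum_{l\ge1}a^{-l}b^{-l}$, which is a Gram kernel; this is exactly the paper's $\ell^2$ factorization. Write $\zeta=\i a$, $s=\i b$, $\d s=\i\,\d b$, and pull out the real positive weights $w_n(a)=\sqrt{r(\i a)}\,a^{-n/2}\E^{-t(a-a^{-1})/2}$.
\begin{itemize}
\item You get $\mathcal K_{n,t}=\pm\i P_n$ with $P_n\ge0$ trace class. Hence $\det(\identity+\mathcal K_{n,t})=\prod_k(1\pm\i\mu_k)$ with $\mu_k\ge0$, so $|\det(\identity+\mathcal K_{n,t})|\ge1$.
\item Moreover $P_n-P_{n+2}=u\otimes u$ with $u(a)\propto w_n(a)/a$. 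The matrix determinant lemma then gives $\Re[\det(\identity+\mathcal K_{n+2,t})/\det(\identity+\mathcal K_{n,t})]=1-\langle u,P_n(1+P_n^2)^{-1}u\rangle$.
\item This is positive. Since $P_{n+2}\ge0$, we have $u\otimes u\le P_n$, which forces $u=P_n^{1/2}v$ with $\|v\|\le1$. Therefore the subtracted term is at most $\|P_n\|^2/(1+\|P_n\|^2)<1$.
\end{itemize}
The vanishing-lemma alternative you propose would additionally require identifying $\identity+\mathcal K_{n,t}$ with the singular integral operator of RH problem \ref{RHP0}, which you have not done.
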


%\begin{remark}
%	In Theorem \ref{thm1}, we see that the soliton gas solution can be written in the Fredholm representation.
%    Especially, at the right-hand side of \eqref{e1.4}, $\i^{n+1}e^{2\i t}$ is the phase and $\mathcal{K}_{n,t}$ is the Fredholm operator. 
%	In the following asymptotic analysis, we also find that it is equivalent to analyze the asymptotic behavior of the term $\frac{\d}{\d t}\Im\ln\det(\identity_{L^2([\i\eta_1,\i\eta_2])}+ \mathcal{K}_{n+1,t})$.
%\end{remark}

%By analyzing the RH problem for the soliton gas of  \eqref{e1.2},   we obtain the following theorem in Section \ref{sec 3}, which characterizes the asymptotic behavior of the soliton gas potential $q_n(0)$ as $n\to\pm\infty$.
%Under Assumption \ref{assum1}, in Section \ref{sec fredhlom}, we present the soliton gas associated to $\{r(\lb),\eta_1,\eta_2\}$  as follows, which is a limit of $N$-soliton solution as $N\to\infty$.
%Under Assumption \ref{assum1}, we prove that the soliton gas RH problem \ref{RHP0}, is exactly solved.
%In addition, the solution recovered from this RH problem solves \eqref{e1.2}.
%Further, in Section \ref{sec fredhlom}, we present the soliton gas solution in the form of the Fredhlom determinant as follows, which is a limit of $N$-soliton solution as $N\to\infty$.

One cannot evaluate the Fredholm determinants \eqref{e1.4} explicitly for fixed $n$ and $t$, and therefore it is natural to try to approximate them for large $n$ and $t$. We start with the asymptotics of $q_n(0)$ as $n\to \pm \infty$.

% On the other hand,  by taking $N \to \infty$ through RH gymnastics,  we obtain the asymptotic behavior of $q_n(0)$ as $n\to \pm \infty$.
\begin{thm}\label{Thm 1}
Let %the intervel $\Sigma_1=(\i\eta_1,\i\eta_2)$, oriented upwards with $\eta_2>\eta_1>1$, and 
$q_n(t)$ be the soliton gas solution of the focusing AL system \eqref{e1.2} defined in \eqref{e1.4}.
%related to spectral data $\{r(\lb),r(\bar\lb^{-1});\Sigma_1\cup \Sigma_2\}$. 
Assume that $r(\lb)$ is analytic and positive in a neighborhood of $\Sigma_1$, we have  
%, then, since $\Sigma_1$ and $\Sigma_2$ are symmetric by the mapping: $\lb\mapsto\bar\lb^{-1}$, $\overline{r(\bar\lb^{-1})}$ is naturally analytic in some neighborhood of $\Sigma_2$.
% For the soliton gas solution, when $t=0$,
% as $n\to+\infty$, we have
\begin{align}\label{eq:qnasy}
    q_n(0)=\mathcal{O}(\E^{-cn}), \qquad n \to +\infty,
\end{align}
where $c$ is a positive constant, and 
%As $n\to-\infty$, we have
\begin{align}\label{eq:qn0negn}
	q_n(0)=\frac{\ii^{n+1}}{2}\left(\sqrt{\frac{\eta_2}{\eta_1}}-\sqrt{\frac{\eta_1}{\eta_2}}\right)\Jac\left(\frac{K(k)((n+1)\Omega+\Delta)}{\pi},k\right)+\oo(n^{-1}), \qquad n\to -\infty,
\end{align}
where  $\Jac( z,k)$ given in  \eqref{ndjac} is the subsidiary  Jacobi elliptic function with modulus
    \begin{equation}\label{def:modulus}
        k=\frac{(\eta_1-1)(\eta_2+1)}{(\eta_2-1)(\eta_1+1)},
    \end{equation}
$K$ is the complete elliptic integral defined in  \eqref{ei1},  $\Omega$ is a constant defined in  \eqref{eOmega}, and
\begin{align*}
    \Delta=\frac{ l_2 }{(2\Pi(l_1^2,k)-K(k))\sqrt{(1-l_1^2)(1-l_2^2)}}\int_{\eta_1}^{\eta_2}\frac{\ln r(\i s)\ddd  s }{\sqrt{(\eta_2-s)(s-\eta_1)(s-\eta_1^{-1})(s-\eta_2^{-1})}}
\end{align*}
with 
\begin{align}\label{def:lj}
    l_j:=\frac{\eta_j-1}{\eta_j+1}, \qquad j=1,2.
\end{align}
\end{thm}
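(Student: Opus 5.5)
The plan is a Deift--Zhou nonlinear steepest descent analysis of the Riemann--Hilbert problem that characterizes the gas solution. The starting point is the RH problem for $m(\lb;n,t)$ obtained as the $N\to\infty$ limit of the $N$-soliton problem. For this $m$, the poles $\i\lb_j^{\pm1}$ have been replaced by jumps on $\Sigma_1\cup\Sigma_2$:
\[
m_+=m_-\begin{pmatrix}1&0\\ -\i r(\lb)\lb^{-n}\E^{\i t(\lb+\lb^{-1})}&1\end{pmatrix} \text{ on } \Sigma_1,
\]
with the symmetric upper-triangular jump on $\Sigma_2$ and normalization at $\lb=\infty$ (and the appropriate behaviour at $0$). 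The potential is recovered through a formula of the type $q_n=\i^{n+1}(\ldots)m_{12}$, read off at $\infty$ or $0$. I take as given, from the proof of Theorem \ref{thm1}, that the Fredholm determinant in \eqref{e1.4} coincides with the solution of this RH problem. At $t=0$ the only large parameter is $n$, and the jump on $\Sigma_1$ carries the factor $|\lb|^{-n}$ with $|\lb|\ge\eta_1>1$.

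The case $n\to+\infty$ is the easy one. On $\Sigma_1$ we have $|\lb^{-n}|\le\eta_1^{-n}$, and on $\Sigma_2$ the conjugate-symmetric jump is likewise $\oo(\eta_1^{-n})$. Hence the jump matrices equal $I+\oo(\E^{-cn})$ uniformly, with $c=\ln\eta_1$ (or slightly less). Small-norm theory for RH problems on the fixed compact contour $\Sigma_1\cup\Sigma_2$ then gives $m=I+\oo(\E^{-cn})$ uniformly away from the contour, and \eqref{eq:qnasy} follows from the reconstruction formula. Alternatively, the same bound follows directly from $\|\mathcal{K}_{n+1,0}\|=\oo(\eta_1^{-n})$ in trace norm.

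For $n\to-\infty$ the jump entries grow exponentially, so the first step is to swap the triangularity. Factorizing the jump with the interval's Cauchy-type transform removes the growing factor on $\Sigma_1$, leaving an off-diagonal jump plus exponentially growing remainders. I then introduce a $g$-function on $\Sigma_1\cup\Sigma_2\cup\gamma$, where $\gamma$ is an arc (or the unit circle segment) joining the two intervals; the symmetry $\lb\mapsto\bar\lb^{-1}$ forces a genus-one structure with branch points $\i\eta_1^{\pm1}$ and $\i\eta_2^{\pm1}$. The function $g$ is built from $R(\lb)=\sqrt{(\lb^2+\eta_1^2)(\lb^2+\eta_2^2)(\lb^2+\eta_1^{-2})(\lb^2+\eta_2^{-2})}$, restricted by symmetry, so that:
\begin{enumerate}
\item $g_++g_--n\ln\lb$ is constant on the bands;
\item $g_+-g_-$ is a real constant $n\Omega$ on the gap;
\item the leftover terms decay.
\end{enumerate}
A Szeg\H{o}-type function $D$ on the bands absorbs $\ln r(\i s)$ and produces the phase shift $\Delta$ through the displayed hyperelliptic integral, whose normalization involves $2\Pi(l_1^2,k)-K(k)$.

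After these transformations, opening lenses gives a model problem with constant jumps $\begin{pmatrix}0&1\\-1&0\end{pmatrix}$ on the bands and $\E^{\i((n+1)\Omega+\Delta)\sigma_3}$ on the gap. This is solved by theta functions on the genus-one surface, and the problem is local parametrices at the four branch points. Since $r$ is analytic and positive, $D$ has $(\lb-\i\eta_j)^{\pm1/4}$-type behaviour, so Bessel (index zero) or Airy-type parametrices are needed. The endpoint behaviour is log-type because the density is smooth up to the endpoints, and I expect Bessel parametrices as in the KdV gas. Matching these gives an error of $\oo(n^{-1})$.

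The main obstacle is reducing the theta-function formula for $q_n$ to the explicit Jacobi form $\Jac$ with modulus \eqref{def:modulus}. This requires mapping the elliptic curve with branch points $\pm\i\eta_j^{\pm1}$ via a Möbius map $\lb\mapsto(\lb-\i)/(\lb+\i)$-type transformation, which sends the branch points to $\pm l_j$-related values and yields the modulus $k=l_1/l_2$ together with the amplitude prefactor $\frac12(\sqrt{\eta_2/\eta_1}-\sqrt{\eta_1/\eta_2})$. Identifying the ratio of theta functions with $\Jac$ and normalizing the period to $K(k)/\pi$ is the bookkeeping I expect to be the heaviest.
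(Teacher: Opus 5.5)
Your outline follows the paper's proof essentially step by step. For $n\to+\infty$ you use exponential smallness of the jumps on the fixed contour. For $n\to-\infty$ you use a genus-one $g$-function with bands $\Sigma_1\cup\Sigma_2$ and gap $\i(\eta_1^{-1},\eta_1)$, a Szeg\H{o}-type function $\delta$ producing $\Delta$, lens opening, a theta-function global parametrix, and Bessel parametrices at all four endpoints with $\oo(n^{-1})$ matching. Finally, the substitution $\lb\mapsto(\lb-\i)/(\lb+\i)$ sends $\i\eta_j^{\pm1}$ to $\pm l_j$ and gives $k=l_1/l_2$; the amplitude is $\frac12(\kappa(0)-\kappa(0)^{-1})$ with $\kappa(0)=\sqrt{\eta_2/\eta_1}$. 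Several details are wrong as written, however, and the first two would break the construction if taken literally.

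First, your $R(\lb)=\sqrt{(\lb^2+\eta_1^2)(\lb^2+\eta_2^2)(\lb^2+\eta_1^{-2})(\lb^2+\eta_2^{-2})}$ has eight branch points $\pm\i\eta_j^{\pm1}$, so it defines a genus-three curve. It would also create cuts at $-\i\eta_j^{\pm1}$, where $Z$ has no jump. The relevant symmetry here is $\lb\mapsto\bar\lb^{-1}$, not $\lb\mapsto-\lb$ as in the KdV gas. You need $R(\lb)=\sqrt{(\lb-\i\eta_1)(\lb-\i\eta_2)(\lb-\i\eta_1^{-1})(\lb-\i\eta_2^{-1})}$ with $R\sim-\lb^2$ at infinity.

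Second, across the gap the exponent must jump by a purely imaginary constant, $g_+-g_-=-\i\Omega$ with $\Omega\in\R$. A real jump $n\Omega$ would give an exponentially large diagonal jump rather than the phase $\E^{\i(n\Omega+\Delta)\sigma_3}$ you use afterwards.

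Third, no preliminary ``Cauchy-transform'' swap of triangularity is needed. Conjugating by $\E^{-n(g-\frac12\ln\lb)\sigma_3}\delta^{-\sigma_3}$ already turns the jump on $\Sigma_1$ into $\begin{pmatrix} r^{-1}\delta_-^2\E^{2ng_-}&0\\ \i& r^{-1}\delta_+^2\E^{2ng_+}\end{pmatrix}$. This factors as an upper-triangular matrix times $\begin{pmatrix}0&\i\\ \i&0\end{pmatrix}$ times an upper-triangular matrix, giving the lens factors directly.

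Fourth, $\delta$ is bounded and nonvanishing at the endpoints, since $\delta_+\delta_-=r$ on $\Sigma_1$. The quarter-power singularities come from the factor $\kappa$ in $Z^{(\infty)}$, not from $\delta$. Bessel parametrices are forced because $g-g_\pm(p)$ vanishes like $(\lb-p)^{1/2}$ at each endpoint.

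Finally, you should say where the prefactor $\i^{n+1}$ comes from. It arises from reading $Z_{12}$ at $\lb=0$ and using $g^{(\infty)}+\lim_{\lb\to0}(g-\frac12\ln\lb)=-\frac{\pi\i}{2}$ together with $\delta(\infty)\delta(0)=1$. These identities follow from the symmetries $g(\bar\lb^{-1})=-\overline{g(\lb)}$ and $\delta(\bar\lb^{-1})=\overline{\delta(\lb)}^{-1}$.

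With these corrections your argument is the paper's.
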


\begin{remark}
Since $\Sigma_2$ is the image of $\Sigma_1$ under the mapping $\lb\mapsto\bar\lb^{-1}$, our assumption on $r(\lb)$ in Theorem  \ref{Thm 1} also implies that 
$\overline{r(\bar\lb^{-1})}$ is analytic in some neighborhood of $\Sigma_2$.

    % Here, we only need to assume that $r(\lb)$ is analytically extended to the neighborhood of $\Sigma_1$,  since $\Sigma_1$ and $\Sigma_2$ are symmetric by the mapping: $\lb\mapsto\bar\lb^{-1}$, $\overline{r(\bar\lb^{-1})}$ is naturally analytic in some neighborhood of $\Sigma_2$.
\end{remark}
%Delete
%Furthermore, \hl{in Section} \ref{sec 4}, we provide a global large-time asymptotic description of the soliton gas potential $q_n(t)$ of \eqref{e1.2}.

We finally derive large time asymptotics of $q_n(t)$, which exhibits qualitatively different behaviors in different regions of the $(n+1,t)$-half plane. Depending on the ratio $(n+1)/t$, there are three main regions: one fast decaying region and two genus-1 hyperelliptic wave regions, together with two transition regions between adjacent regions, as described below and illustrated in Figure \ref{fig:region}.

\begin{defn}\label{def region} 
With the constants $k$ and $l_1$ given in \eqref{def:modulus} and \eqref{def:lj}, we define 
	\begin{itemize}
		\item the fast decaying region: $\left\{ (n+1,t): \ 	\frac{n+1}{t}>-\frac{\eta_{1}-\eta_{1}^{-1}}{\ln\eta_{1}}\right\}$;
		\item the first transition region $T_I$: $\left\{(n+1,t):\  \bigcup_{m=0}^{\infty}T^{(m)}_I\right\}$, where $$T^{(m)}_I=\left\{	 (n+1,t):\ -\frac{2m+1}{\ln \eta_{1}}\frac{\ln t}{t}<\frac{n+1}{t}+\frac{\eta_{1}-\eta_{1}^{-1}}{\ln\eta_{1}}< - \frac{2m-1}{\ln \eta_{1}}\frac{\ln t}{t}\right\}$$
       with $m\in \{0\}\cup \mathbb{N}$;
		\item the first genus-1 hyperelliptic wave region $H_I$: $\left\{(n+1,t):\ 	\xi_{\mathrm{crit}}<\frac{n+1}{t}<-\frac{\eta_{1}-\eta_{1}^{-1}}{\ln\eta_{1}}\right\}$;
		\item the second transition region $T_{II}$: $\left\{ (n+1,t):\ |	\frac{n+1}{t} - \xi_{\mathrm{crit}}| < C t^{-2/3}\right\}$, where $C$ is any positive constant;
		\item the second genus-1 hyperelliptic wave region $H_{II}$: $\left\{(n+1,t):\ 	\frac{n+1}{t}<\xi_{\mathrm{crit}}\right\}$.
	\end{itemize}
Here, 
\begin{align}\label{def xi}
	%\xi_{\mathrm{crit}}=\frac{-\frac{(\eta_2 - 1)^2}{\eta_2}\left(\frac{(\eta_2 + 1)^2}{\eta_2}+\frac{(\eta_1 - 1)^2}{\eta_1}\left(\frac{2\Pi(l_1^2,k)}{K(k)}-1\right)+\frac{(\eta_1 + 1)^2}{\eta_1}\left(\frac{E(k)}{K(k)}-1\right)\right)}{\eta_2+\eta_2^{-1}-2 - 4\frac{\Pi(l_1^2,k)}{K(k)}},
    \xi_{\mathrm{crit}}:=\frac{\eta_2 +\frac{1}{\eta_2}+\eta_1 +\frac{1}{\eta_1}}{2}+\frac{-\eta_2^2-\frac{1}{\eta_2^2}+2+\frac{8l_1^2}{(k^2-l_1^2)(l_1^2-1)}\left(\frac{E(k)}{K(k)}+\frac{k^2-l_1^2}{l_1^2}-\frac{k^2-l_1^4}{l_1^2}\frac{\Pi(l_1^2,k)}{K(k)}\right)}{\eta_2+\eta_2^{-1}+2 - 4\frac{\Pi(l_1^2,k)}{K(k)}},
\end{align}
where
%$k$ is given in \eqref{def:modulus}, 
$K$, $E$ and $\Pi$ are the complete elliptic integrals defined in \eqref{ei1} and \eqref{ei2}. 
%A direct numerical calculation shows that $\xi_{\mathrm{crit}}<-\frac{\eta_{1}-\eta_{1}^{-1}}{\ln\eta_{1}}$. 

\end{defn}

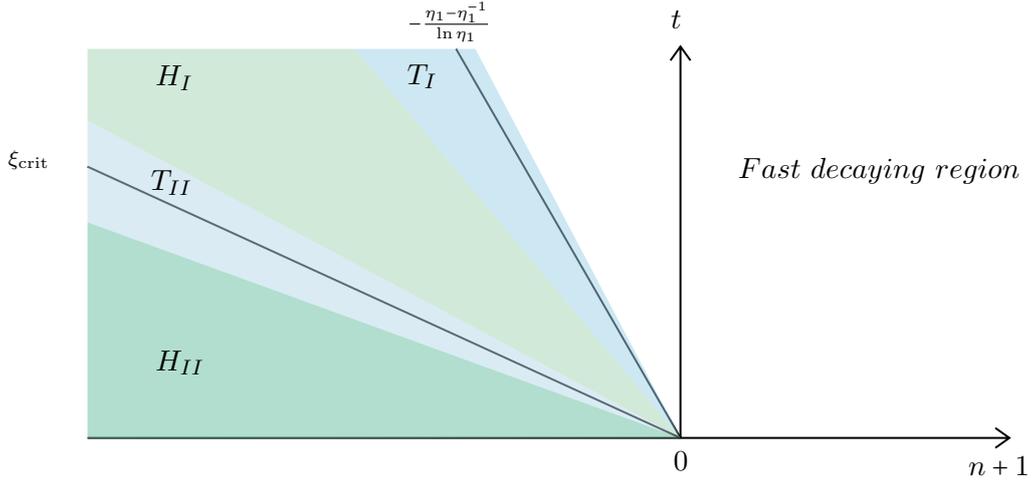
\begin{figure}
    \centering

\tikzset{every picture/.style={line width=0.75pt}} %set default line width to 0.75pt

\begin{tikzpicture}[x=0.75pt,y=0.75pt,yscale=-1,xscale=1]
%uncomment if require: \path (0,310); %set diagram left start at 0, and has height of 310

%Shape: Axis 2D [id:dp19081316815157778]
\draw  (81.39,281) -- (541.6,281)(377.39,84.13) -- (377.39,281) (534.6,276) -- (541.6,281) -- (534.6,286) (372.39,91.13) -- (377.39,84.13) -- (382.39,91.13)  ;
%Straight Lines [id:da6650657193618597]
\draw    (265.39,85.33) -- (377.39,281) ;
%Straight Lines [id:da6317823028841635]
\draw    (81.39,144.53) -- (377.39,281) ;
%Shape: Polygon [id:ds7735808411530944]
%\draw  [color={rgb, 255:red, 0; green, 0; blue, 0 }  ,draw opacity=0 ][fill={rgb, 255:red, 74; green, 144; blue, 226 }  ,fill opacity=0.3 ] (265.39,85.33) -- (377.39,281) -- (214.99,85.33) -- cycle ;
%Shape: Polygon [id:ds22924589138265705]
%\draw  [color={rgb, 255:red, 0; green, 0; blue, 0 }  ,draw opacity=0 ][fill={rgb, 255:red, 155; green, 155; blue, 155 }  ,fill opacity=0.5 ] (214.99,85.33) -- (377.39,281) -- (81.39,121.35) -- (81.39,85.33) -- cycle ;
%Shape: Polygon [id:ds35103513033656397]
%\draw  [color={rgb, 255:red, 0; green, 0; blue, 0 }  ,draw opacity=0 ][fill={rgb, 255:red, 74; green, 144; blue, 226 }  ,fill opacity=0.5 ] (81.39,121.35) -- (377.39,281) -- (81.39,172.55) -- cycle ;
%Shape: Polygon [id:ds9825007318231097]
%\draw  [color={rgb, 255:red, 0; green, 0; blue, 0 }  ,draw opacity=0 ][fill={rgb, 255:red, 74; green, 74; blue, 74 }  ,fill opacity=0.6 ] (81.39,172.55) -- (377.39,281) -- (81.39,281.35) -- cycle ;

\draw  [color={rgb, 255:red, 0; green, 0; blue, 0 }  ,draw opacity=0 ][fill={rgb, 255:red, 155; green, 207; blue, 232  }  ,fill opacity=0.5 ] (275,85.33) -- (377.39,281) -- (214.99,85.33) -- cycle ;

\draw  [color={rgb, 255:red, 0; green, 0; blue, 0 }  ,draw opacity=0 ][fill={rgb, 255:red, 164; green, 211; blue, 179  }  ,fill opacity=0.5 ] (214.99,85.33) -- (377.39,281) -- (81.39,121.35) -- (81.39,85.33) -- cycle ;
%Shape: Polygon [id:ds35103513033656397]
\draw  [color={rgb, 255:red, 0; green, 0; blue, 0 }  ,draw opacity=0 ][fill={rgb, 255:red, 182; green, 215; blue,232 }  ,fill opacity=0.5 ] (81.39,121.35) -- (377.39,281) -- (81.39,172.55) -- cycle ;
%Shape: Polygon [id:ds9825007318231097]
\draw  [color={rgb, 255:red, 0; green, 0; blue, 0 }  ,draw opacity=0 ][fill={rgb, 255:red, 99; green, 190; blue, 157 }  ,fill opacity=0.5 ] (81.39,172.55) -- (377.39,281) -- (81.39,281.35) -- cycle ;

% Text Node
\draw (240,60) node [anchor=north west][inner sep=0.75pt]  [font=\scriptsize]  {$-\frac{\eta_{1} -\eta_{1} ^{-1}}{\ln\eta_1}$};
% Text Node
\draw (40.4,134.6) node [anchor=north west][inner sep=0.75pt]  [font=\scriptsize]  {$\xi_{\mathrm{crit}}$};
% Text Node
\draw (114,92.2) node [anchor=north west][inner sep=0.75pt]    {$H_I$};
% Text Node
\draw (239.6,91.4) node [anchor=north west][inner sep=0.75pt]    {$T_I$};
% Text Node
\draw (404.8,138.8) node [anchor=north west][inner sep=0.75pt]    {$Fast\ decaying\ region$};
% Text Node
\draw (372.4,286.4) node [anchor=north west][inner sep=0.75pt]    {$0$};
% Text Node
\draw (519.6,288.6) node [anchor=north west][inner sep=0.75pt]    {$n+1$};
% Text Node
\draw (371.2,64.4) node [anchor=north west][inner sep=0.75pt]    {$t$};
% Text Node
\draw (111.6,144.8) node [anchor=north west][inner sep=0.75pt]    {$T_{II}$};
% Text Node
\draw (114,235.4) node [anchor=north west][inner sep=0.75pt]    {$H_{II}$};
\end{tikzpicture}
    \caption{\footnotesize Five different asymptotic regions given in Definition \ref{def region}.}
    \label{fig:region}
\end{figure}

Large time asymptotics of $q_n(t)$ in these regions are shown in the following theorem.
\begin{thm}\label{Thm 2}
%Suppose that $r: \Sigma_1 \to \mathbb{R}$ is a continuous, positive function on $\Sigma_1$ with analytic extension to a neighbourhood of $\Sigma_1$. 
Assume that $r(\lb)$ is analytic and positive in a neighborhood of $\Sigma_1$. As $t\to +\infty$, we have the following asymptotics of $q_n(t)$ in the regions given in Definition \ref{def region}. 
% Then the large-time asymptotics of the soliton gas AL solution $q_n(t)$ 
%  are shown as follows.
	\begin{itemize}
		\item[\rm (1)] For $\frac{n+1}{t}$ in the fast decaying region,  there exists a  positive constant $c$ such that 
		\begin{align}\label{eq:decay}
		    q_n(t)=\oo(\E^{-ct}).
		\end{align}
		\item[\rm (2)]  
    For $\frac{n+1}{t} \in T_I^{(m)} \subseteq T_I$, $m\in \{0\}\cup \mathbb{N}$, we have
			\begin{align}\label{mr1}
		    q_n(t)=\oo	\left( \min(\E^{(2m-1)\ln t +  t(\frac{n+1}{t}+\frac{\eta_1-\eta_1^{-1}}{\ln \eta_1})\ln \eta_1},\E^{-(2m+1)\ln t - t(\frac{n+1}{t}+\frac{\eta_1-\eta_1^{-1}}{\ln \eta_1})\ln \eta_1} )\right).
		\end{align}
	
		\item[\rm (3)] For $\frac{n+1}{t} \in H_I$, we have
		\begin{align}\label{mr2}
		    q_n(t)=\frac{\E^{2\ii t(1+\frac{\pi(n+1)}{4t})}}{2}\left(\sqrt{\frac{\alpha(\xi)}{\eta_1}}-\sqrt{\frac{\eta_1}{\alpha(\xi)}}\right)	\Jac\left(\frac{K(k(\xi))(t\Omega+\Delta)}{\pi},k(\xi)\right)+\oo(t^{-1}),
		\end{align}
where  $\alpha(\xi)$ is a real-valued function  defined in \eqref{def alpha} with $\xi=(n+1)/t$, $\Jac(z,k(\xi))$ given in  \eqref{ndjac} is the subsidiary Jacobi elliptic function with modulus
    \begin{equation}\label{def:kxi}
    k(\xi)=\frac{(\eta_1-1)(\alpha(\xi)+1)}{(\alpha(\xi)-1)(\eta_1+1)},
    \end{equation}
    and $\Omega$ and $\Delta$ are constants given in \eqref{def Omegj} and \eqref{def Delta}, respectively.

\item[\rm (4)]  For $\frac{n+1}{t} \in T_{II}$, we have
			\begin{align}\label{mr3}
		    q_n(t)=\frac{\E^{2\ii t(1+\frac{\pi(n+1)}{4t})}}{2}\left(\sqrt{\frac{\eta_2}{\eta_1}}-\sqrt{\frac{\eta_1}{\eta_2}}\right)\Jac\left(\frac{K(k)(t\Omega+\Delta)}{\pi},k\right)+\oo(t^{-1/3}),
		\end{align}
where  $k$, $\Omega$ and $\Delta$ are given in \eqref{def:kxi}, \eqref{def Omegj} and \eqref{def Delta}, respectively.

    % by the same expressions as for $\frac{n+1}{t} \in H_{I}$ under $\alpha(\xi)=\eta_2$.
		\item[\rm (5)]  For $\frac{n+1}{t} \in H_{II}$, we have  \begin{align}\label{mr4}
		    q_n(t)=\frac{\E^{2\ii t(1+\frac{\pi(n+1)}{4t})}}{2}\left(\sqrt{\frac{\eta_2}{\eta_1}}-\sqrt{\frac{\eta_1}{\eta_2}}\right)\Jac\left(\frac{K(k)(t\Omega+\Delta)}{\pi},k\right)+\oo(t^{-1}),
		\end{align}
	where $k$, $\Omega$ and $\Delta$ are given in \eqref{def:kxi}, \eqref{def Omegj} and \eqref{def Delta}, respectively.
	\end{itemize}
\end{thm}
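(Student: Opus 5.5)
The proof will be a nonlinear steepest descent analysis of the soliton gas RH problem, whose solution recovers $q_n(t)$ (the $N\to\infty$ limit of the $N$-soliton RH problem behind Theorem \ref{thm1}). Its jump sits on $\Sigma_1\cup\Sigma_2$ and $\Sigma_2$ is the image of $\Sigma_1$ under $\lb\mapsto\bar\lb^{-1}$, with oscillatory factor
\begin{align*}
\E^{2\i t\theta(\lb;\xi)},\qquad \theta(\lb;\xi)=\tfrac{1}{2}\bigl(\lb-\lb^{-1}\bigr)\cdot(\text{const})-\tfrac{\i}{2}\,\xi\ln\lb ,\qquad \xi=\tfrac{n+1}{t},
\end{align*}
with the constant normalized so that on $\Sigma_1$ the modulus of the exponential is governed by $\Re(\i\theta)$. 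First I would compute the sign of $\Re(\i\theta)$ along $\Sigma_1=(\i\eta_1,\i\eta_2)$ as a function of $\xi$. At $\lb=\i s$ the exponent is proportional to $-t\bigl[(s-s^{-1})+\xi\ln s\bigr]$. It vanishes at $s=\eta_1$ exactly when $\xi=-(\eta_1-\eta_1^{-1})/\ln\eta_1$, which produces the boundary of the fast decaying region.

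\textbf{Fast decaying region.} Here the jump matrix on $\Sigma_1\cup\Sigma_2$ is $I+\oo(\E^{-ct})$ uniformly. A small-norm argument then gives $q_n(t)=\oo(\E^{-ct})$, which is \eqref{eq:decay}.

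\textbf{Regions $H_I$ and $H_{II}$.} When the exponential grows on part of $\Sigma_1$, I would introduce a $g$-function. On a subinterval $(\i\eta_1,\i\alpha(\xi))$, together with its mirror in $\Sigma_2$, I would factorize the jump so that the growing part becomes an off-diagonal constant jump. The function $g$ is built on the genus-1 surface $y^2=(\lb^2+\eta_1^2)(\lb^2+\alpha^2)(\lb^2+\eta_1^{-2})(\lb^2+\alpha^{-2})$. The endpoint $\alpha(\xi)$ is fixed by the conditions that $g-\theta$ be $\oo(\lb^{-1})$-type at infinity and that the sign inequalities hold on the remaining gap. The modulus $k(\xi)$ follows from a Möbius map sending the branch points to $\pm1,\pm 1/k$. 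As $\xi$ decreases, $\alpha(\xi)$ reaches $\eta_2$ at $\xi_{\mathrm{crit}}$. For $\xi<\xi_{\mathrm{crit}}$ the band is the full $\Sigma_1$, so $\alpha=\eta_2$ and $k$ is the one in \eqref{def:modulus}.

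The model RH problem has constant jumps, with a factor involving $\ln r$ that produces the phase $\Delta$. I would solve it with theta functions and reduce the result to $\Jac$ via standard identities. Airy-type or Bessel-type local parametrices at the endpoints are unnecessary at hard edges where $r$ is analytic; the endpoint behaviour gives an $\oo(t^{-1})$ error through the small-norm problem. Extracting $q_n$ from the $\lb\to0$ or $\lb\to\infty$ expansion yields \eqref{mr2} and \eqref{mr4}.

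\textbf{Transition regions.} In $T_{II}$ the soft endpoint $\alpha$ merges with the hard edge $\eta_2$. I would use the $H_{II}$ model together with a local parametrix at $\i\eta_2$ built from Painlevé-type or Airy-integral data in the scaling variable $t^{2/3}(\xi-\xi_{\mathrm{crit}})$. Its mismatch gives the $\oo(t^{-1/3})$ error. In $T_I$ the band has shrunk to the point $\i\eta_1$. There I would not use a $g$-function. Instead I would expand the jump near $\i\eta_1$: the integral of $\E^{t(\cdot)}$ against $(s-\eta_1)^{j}$ produces factors $t^{-(j+1)}$ that balance $\E^{t(\xi+\cdots)\ln\eta_1}$. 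This explains the layers $T_I^{(m)}$ and the $\min$ bound \eqref{mr1}, and can be seen through a finite-rank approximation of the Fredholm operator of Theorem \ref{thm1}.

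The main obstacle will be the construction and sign verification of the $g$-function in $H_I$: showing that the endpoint equation has a unique solution $\alpha(\xi)\in(\eta_1,\eta_2)$ that is monotone in $\xi$, identifying $\xi_{\mathrm{crit}}$ with the elliptic-integral formula \eqref{def xi}, and checking the required inequalities on the gap and the lens contours. I would do this by writing $g'$ as a rational function over $y$ with coefficients fixed by the vanishing of the $\mathfrak a$-period, then verifying the inequalities via the monotonicity of the relevant elliptic integrals in $\alpha$.
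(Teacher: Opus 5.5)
Your overall architecture matches the paper: exponential decay when $\xi>-\frac{\eta_1-\eta_1^{-1}}{\ln\eta_1}$, a genus-one $g$-function with lenses and a theta-function model in $H_I\cup H_{II}$, and a Painlev\'e-type parametrix at $\i\eta_2^{\pm1}$ in $T_{II}$. Two steps, however, fail as written.

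\textbf{Local parametrices in $H_I$ and $H_{II}$.} You claim that no local parametrices are needed there, but they are.
\begin{itemize}
\item After the $g$-transformation, $\Re g\to0$ at every branch point. At the hard edges $g_\pm(\i\eta_1)=\mp\i\Omega/2$; at the soft edges $g\sim(\lb-\i\alpha(\xi))^{3/2}$. So the lens jumps, and the jumps on $\i(\alpha(\xi),\eta_2)$, are not uniformly close to $I$ near these points.
\item Any solution of the model problem has $(\lb-p)^{-1/4}$ singularities at the branch points; this is forced by the jump $\begin{pmatrix}0&\i\\ \i&0\end{pmatrix}$. Hence near a hard edge $Z^{(\infty)}J^{(1)}(Z^{(\infty)})^{-1}-I$ has size $|\lb-p|^{-1/2}\E^{-ct|\lb-p|^{1/2}}$, which is neither bounded nor in $L^2$. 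The small-norm theorem cannot be applied.
\item Analyticity of $r$ does not rescue this; it only lets you open lenses.
\end{itemize}
The paper inserts Bessel parametrices at $\i\eta_1^{\pm1}$ (and at $\i\eta_2^{\pm1}$ in $H_{II}$) with $\zeta=\frac{t^2}{4}(g-g_\pm(p))^2$. It inserts Airy parametrices at $\i\alpha(\xi)^{\pm1}$ with $\zeta=(\pm\frac32 tg)^{2/3}$. The $\oo(t^{-1})$ in \eqref{mr2} and \eqref{mr4} is exactly the matching error on $\partial U(p)$: $\zeta^{-1/2}$ with $\zeta\asymp t^2$, resp.\ $\zeta^{-3/2}$ with $\zeta\asymp t^{2/3}$. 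The same Bessel parametrices are needed at $\i\eta_1^{\pm1}$ in $T_{II}$.

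Two smaller corrections. First, $g-\phi/(2t)$ tends to a nonzero constant $g^{(\infty)}$, not to $\oo(\lb^{-1})$; $\Im g^{(\infty)}$ is what produces the prefactor $\E^{2\i t(1+\frac{\pi(n+1)}{4t})}$. Second, $\alpha(\xi)$ is pinned down by requiring $g$ to vanish like $(\lb-\i\alpha(\xi))^{3/2}$.

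\textbf{The region $T_I$.} What you describe there is the right scaling heuristic, but for $m\ge1$ it cannot become a perturbative argument.
\begin{itemize}
\item Set $X:=t\ln\eta_1\bigl(\xi+\frac{\eta_1-\eta_1^{-1}}{\ln\eta_1}\bigr)\in\bigl(-(2m+1)\ln t,-(2m-1)\ln t\bigr)$. The jump near $\i\eta_1^{\pm1}$ then has size $\E^{-X}>t^{2m-1}$.
\item The natural small parameter $t^{-1}\E^{-X}$ therefore exceeds $t^{2m-2}\ge1$. No expansion about $Z=I$, or about the identity in the Fredholm determinant, converges.
\end{itemize}
The missing idea is to factor out $m$ solitons collapsed at $\i\eta_1$. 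The paper takes $Z^{(\infty)}=\bigl(\frac{\lb-\i\eta_1}{\lb-\i\eta_1^{-1}}\bigr)^{-m\sigma_3}$. In $U(\i\eta_1)$ it uses the degree-$m$ Laguerre parametrix in $\zeta\asymp t(\lb-\i\eta_1)$, conjugated by $\E^{\pm\frac t2(\xi+\frac{\eta_1-\eta_1^{-1}}{\ln\eta_1})\ln\lb}$. The exponent $-m$ is exactly what makes the prefactor $A$ analytic, and the layer index is the Laguerre degree.

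A Fredholm version is conceivable. The $p$-th term in the expansion of the determinant is of order $\E^{-pX}t^{-p^2}$, involving a Hankel determinant of the moments $(j+k)!$, and consecutive $p$ balance exactly on the layer boundaries. But you would have to isolate the dominant $m\times m$ block and control the remainder uniformly, and you have not indicated how.

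Finally, be careful about what such an argument can prove. The matching error $Z^{(\i\eta_1)}(Z^{(\infty)})^{-1}-I$ contains both $t^{2m-1}\E^{X}$ (in the $(1,2)$ entry) and $t^{-2m-1}\E^{-X}$ (in the $(2,1)$ entry), so what one controls is their \emph{maximum}. For instance, for $m=0$ the leading term of $q_n$ is
$\frac{1}{2\pi\i}\int_{\Sigma_2}\i r(\bar s^{-1})\E^{\phi(s)}\frac{\ddd s}{s}$.
Its integrand has constant phase on $\Sigma_2$, so $|q_n|\asymp t^{-1}\E^{-X}$. For $X<0$, with $t^{-1}\E^{-X}$ still small, this exceeds the $\min$ appearing in \eqref{mr1}.
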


Since $\alpha(\xi)$ takes different values in different regions, it is readily seen from \eqref{mr2} and \eqref{mr4} that the regions $H_{I}$ and $H_{II}$ correspond to the genus-1 hyperelliptic wave region with modulated and  constant coefficients, respectively. In addition,  as $\frac{n+1}{t} \to -\frac{\eta_{1}-\eta_{1}^{-1}}{\ln\eta_{1}}$, the saddle point of the phase function coalesces with the endpoint $\i \eta_1$  of $\Sigma_1$, while as $\frac{n+1}{t} \to \xi_{\mathrm{crit}}$, it approaches the other endpoint $\i \eta_2$; see Figure \ref{fig reg} below for an illustration. This leads to the introduction of transition regions $T_I$ and $T_{II}$, which requires new ingredients of analysis. It also explains why the error estimates differ in the regions $T_{II}$ and $H_{II}$, although the leading asymptotics are the same.

In Theorem \ref{Thm 2}, we only present the leading asymptotics of the soliton gas solution $q_n$ with error estimates for large $t$, which is the main focus of the present work. Our asymptotic analysis, leading to the proof of the above theorem, also in principal allows us to establish correction terms explicitly in the expansion. All the ingredients for obtaining such results are presented, but we will not write the details down neither comment them any further. It is worthwhile to mention that detailed calculations of the sub-leading term in large time asymptotics of the soliton gas solution for the mKdV equation in the transition regions can be found in \cite{mkdvtran}.

We emphasize that providing uniform asymptotics in transition regions is usually a difficult problem, which is fully resolved only in some particular cases; cf. \cite{mis,RN6,HM1981,wang2023defocusing,xu2024transient}. In the context of focusing AL system, the analysis in transition regions $T_I$ and $T_{II}$ involves RH problems relevant to the generalized Laguerre polynomial and Painlev\'e XXXIV equation, respectively, which is different from the classical local parametrices used in other regions. The reason why we divide $T_I$ into different subregions $T_I^{(m)}$ is that the index $m$ is exactly the degree of Laguerre polynomials related to the local analysis. Thus, although $T_I$ is relatively small in size, it is itself composed of several layers. It is worthwhile to point out that the transitional asymptotics of the mKdV equation for step-like initial data is also related to the RH problem built from Laguerre polynomials \cite{BM2018}, and the Painlev\'e \uppercase\expandafter{\romannumeral34\relax} transcendents play an important role in asymptotic studies of critical behaviors arsing from integrable differential equations \cite{FLYZ2006}, random unitary ensembles \cite{ikj2008}  and orthogonal polynomials \cite{XZ2011}. 

The rest of this paper is organized as follows. In Section \ref{sec 2}, we consider a sequence of RH problems, indexed by $N$, characterizing the pure $N$-soliton solution $q^{[N]}_{n}$, whose spectrum is confined to the interval $\Sigma_1\cup\Sigma_2$. As $N\to +\infty$, the soliton gas solution $q_n$ is then characterized by the limiting RH problem for $Z$, which serves as the starting point of further asymptotic analysis. In Section \ref{sec fredhlom}, we introduce the tau-functions and establish the Fredholm determinant representation of $q^{[N]}_{n}$. This structure is preserved in the large $N$ limit, which leads to the proof of Theorem \ref{thm1}. We perform Deift-Zhou steepest descent analysis \cite{RN6,RN10} to RH problems for $Z(\lb; n,0)$ and $Z(\lb; n+1,t)$ in Sections \ref{sec 3} and \ref{sec 4}, respectively. The main idea of the analysis is to transform RH problem for $Z$ into a solvable form consisting of the global and local RH problems. A key step in this procedure is to apply the $g$-function mechanism \cite{gfunction} to arrive at a global RH problem solvable in terms of the Jacobi theta function. As aforementioned, one needs special treatments for the analysis in the transition regions $T_I$ and $T_{II}$. The outcome of our analysis is the proofs of Theorems \ref{Thm 1} and \ref{Thm 2}, presented in Section \ref{sec:proof}.

\paragraph{Notations}
Throughout this paper, the following notations will be used.
\begin{itemize}
	\item The Pauli matrices:
	\begin{align}\label{def:Pauli}
		\sigma_1=\left(\begin{matrix}
			0&1\\1&0
		\end{matrix}\right),\quad\sigma_2=\left(\begin{matrix}
			0&-\ii\\\ii&0
		\end{matrix}\right),\quad\sigma_3=\left(\begin{matrix}
			1&0\\0&-1
		\end{matrix}\right).
	\end{align}
    For a $2 \times 2$ matrix $A$, we also set 
    \begin{equation*}
        \e^{\hat{\sigma}_j} A:= \e^{\sigma_j} A  \e^{-\sigma_j}, \qquad j=1,2,3.
    \end{equation*}
%\item The function $R(\lb)=\sqrt{(\lb-\ii\eta_1)(\lb-\ii\eta_2)(\lb-\ii/\eta_1)(\lb-\ii/\eta_2)}$.
%\item \hl{The} phase function for the RH problem of the focusing AL system \eqref{e1.2}:
%\begin{align}\label{phi}
%	\phi(\lb; n,t)=-\ii t(\lb+\lb^{-1}-2)+n\ln\lb.
%\end{align}
\item  Assume $1-k^2 \in \mathbb{C}\setminus (-\infty,0]$ and $1-k^2 \sin^2 m \in \mathbb{C}\setminus (-\infty,0]$, except that one of them may be $0$,
and $1-\alpha^2 \in \mathbb{C}\setminus {0}$. Then,
the complete and incomplete elliptic integrals are defined by
\begin{align}
	&K(k)=\int_{0}^{1}\frac{\ddd  s}{\sqrt{(1-s^2)(1-k^2s^2)}},\quad E(k)=\int_{0}^{1}\sqrt{\frac{1-k^2s^2}{1-s^2}}\ddd  s,\label{ei1}\\
    &F(m,k)=\int_{0}^{m}\frac{\ddd  s}{\sqrt{(1-s^2)(1-k^2s^2)}},\\
	&\Pi(\alpha^2,k)=\int_{0}^{1}\frac{\ddd  s}{(1-\alpha^2 s^2)\sqrt{(1-s^2)(1-k^2s^2)}}.\label{ei2}
\end{align}
\item If $A$ is a matrix, then $(A)_{ij}$ stands for its $(i,j)$-th entry.
\item In Sections \ref{sec 3} and \ref{sec 4}, we adopt the same notations (such as $g$, $\delta$, $T$, $Z^{(1)}$, $Z^{(\infty)}$, $Z^{(p)}$, \dots  ) during the analysis,  which should be understood in different contexts. We believe this will not lead
to any confusion.
\end{itemize}

%\section{Lax pair and the soliton gas RH problem}
\section{The soliton gas solution and its RH characterization}

\label{sec 2}
In this section, we review some basic results of the focusing AL system \eqref{e1.2}, including an RH problem for the $N$-soliton solution $q^{[N]}_{n}$
\cite{APT2004,CFW2025}. By taking $N\to+\infty$, we 
interpret the $(1,2)$-th entry of the limiting RH problem at the origin as the soliton gas solution. We justify this claim by checking that this entry indeed solves the focusing AL system \eqref{e1.2} and can be obtained as the large $N$ limit of $q^{[N]}_{n}$. 

% derive RH problem \ref{RHP0} corresponding to the soliton gas solution of \eqref{e1.2}.
\begin{comment}
It is known that the Lax pair of the focusing AL system \eqref{e1.2} reads
\begin{align}
	&SX=AX,\qquad \frac{\ddd  X}{\ddd  t}=BX,
\end{align}
where 
\begin{equation}\label{def:rsoprator}
    Sf(n)=f(n+1)
\end{equation}
is the right-shift operator, and
\begin{align*}
	&A=z^{\sigma_3}+Q,\ \ B=\frac{\sigma_3}{\ii}\left(\frac{(z-z^{-1})^2}{2}-Q(S^{-1}Q)+z^{\sigma_3}Q-
	(S^{-1}Q)z^{\sigma_3}\right),
\end{align*}
with 
\begin{align*}
	Q:=Q(n,t)=\left(\begin{matrix}
		0&q_n(t)\\-\overline{q_n(t)}&0
	\end{matrix}\right)
\end{align*}
being a skew-Hermitian matrix.   For ease of handling and to reduce symmetry, set $\lb=z^2$.
\end{comment}

From \cite{CFW2025}, it follows that 
\begin{align}\label{eq:Nsoliton}
		q^{[N]}_n(t)=M_{12}(0;n+1,t),
\end{align}
where $M(\lb;n,t)$ solves the following RH problem. 
% The RH problem associated with the $N$-soliton solution of \eqref{e1.2} is established in \cite{CFW2025} as follows.
%In \cite{CFW2025}, we established the RH problem  associated to the $N$-soliton solution of \eqref{e1.2} as follows.
%In the following, we see that RH problem \ref{rNsolition} is associated to the $N$-soliton solution of \eqref{e1.2}, while RH problem \ref{RHP0} corresponds to the soliton gas solution of \eqref{e1.2}.  Furthermore, as $N\to+\infty$, the solution of RH problem \ref{rNsolition} tends to that of RH problem \ref{RHP0} for any $(n,t)\in\mathbb{Z}\times\mathbb{R}_+$.
\begin{Rhp}\label{rNsolition}\
	\begin{itemize}
		\item $M(\lb)=M(\lb;n,t)$ is meromorphic in $\ccc$ with simple poles at  $\{\ii\lambda_j, \ii\lambda_j^{-1}\}_{j=1}^N$ with $\lb_j>1$.
        % and at the corresponding conjugate points $\ccc\setminus\{\ii\lambda_j^{-1}\}_{j=1}^N$.%is analytic in $\ccc\setminus\{\ii\lambda_j,\ii\lambda_j^{-1}\}_{j=1}^N$ with $\lb_j>1$.
		\item
        $M(\lb)$  satisfies the residue conditions 
		\begin{align*}
			&\underset{\lambda=\i\lb_j}{\res}M(\lb)= \lim_{\lb\to \ii \lb_j}M(\lb)\left(\begin{matrix}
				0&0\\-\ii\varLambda_j\E^{-\phi(\lb)}&0
			\end{matrix}\right),\\
		&\underset{\lambda=\i\lb_j^{-1}}{\res}M(\lb)=\lim_{\lb\to \ii \lb_j^{-1}}M(\lb)\left(\begin{matrix}
			0&-\ii\varLambda_j\lb_j^{-2}\E^{\phi(\lb)}\\0&0
		\end{matrix}\right),
		\end{align*}
	where $\varLambda_j$, $j=1,\cdots,N$, are nonzero  norming constants and
    % $\varLambda_j$, $j=1,\cdots,N$, and phase function $\phi(\lb):=\phi(\lb;n,t)$ given by
        \begin{align}\label{phi}
	\phi(\lb)=\phi(\lb;n,t):=-\ii t(\lb+\lb^{-1}-2)+n\ln\lb.
\end{align}
		\item As $\lb\to\infty$, we have
		$
			M(\lambda)=I+\oo(\lambda^{-1}).
		$
	\end{itemize}
\end{Rhp}
%\begin{assumption}
%	Given $\eta_2>\eta_1>1$, set $\Sigma_1=\ii[\eta_1,\eta_2]$ and $\Sigma_2=\ii[\eta_2^{-1},\eta_1^{-1}]$ as two upwards oriented contours.	Assume that $r(\lb)$ is holomorphic in the neighborhood of $\ii[\eta_1,\eta_2]$ and  $r(\lambda)\in H^1(\ii[\eta_1,\eta_2])$.
%\end{assumption}
%In the above RH problem, under the Assumption \ref{assum1}, if we let $\lb_j$ and $\Lambda_j$ satisfy the conditions in the following proposition, then, ones find that the summations in \eqref{e2.2} converge to some Riemann integrals, which play a key role in the limit shown in Proposition \ref{p2.3}. 

Suppose that  $r(\lb)$ is positive and continuous on $[\i\eta_1,\i\eta_2]$,
and set
	\begin{align}\label{lbva}
		\lb_j=\eta_1+\frac{j-1}{N}(\eta_2-\eta_1),\quad \varLambda_j=\frac{r(\i\lb_j)}{2\pi N},\qquad  j=1,\cdots,N.
	\end{align}
It is then readily seen that
\begin{subequations}\label{e2.2}
        \begin{align}
		&\lim_{N\to+\infty}\sum_{j=1}^{N}\frac{\i\Lambda_j}{\lb-\i\lb_j}=\frac{1}{2\pi\i}\int_{\Sigma_1}\frac{\i r(s)}{\lb-s}\d s,
        \qquad \lb\in\ccc\setminus \Sigma_1
        \\
		&\lim_{N\to+\infty}\sum_{j=1}^{N}\frac{\i\Lambda_j\lb_j^{-2}}{\lb-\i\lb_j}=\frac{1}{2\pi\i}\int_{\Sigma_2}\frac{\i r(\bar s^{-1})}{\lb-s}\d s,  \qquad \lb\in\ccc\setminus \Sigma_2,
	\end{align}
    \end{subequations}
where $\Sigma_i$, $i=1,2$, are defined in \eqref{def sigma1}. 
This, together with RH problem \ref{rNsolition}, leads us to consider the following limiting RH problem. 
% If $\lb_j$ and $\Lambda_j$, $j=1,\cdots,N$, in the above RH problem satisfy the conditions stated in the following proposition, then the sums in \eqref{e2.2} converge to certain Riemann integrals, which describe the continuous spectral condition of the soliton gas RH problem \ref{RHP0} with the jump contour $\Sigma_1\cup\Sigma_2$ oriented upwards.
% The soliton gas solution $q_n(t)$--the limit of $N$-soliton solution $q^{[N]}_n(t)$--can be recovered. 
% Additionally, Proposition \ref{Pro 1} demonstrates that the soliton gas solution exactly solves the focusing AL system.
% \begin{prop}\label{p2.2}
% 	Suppose that $r(\lb)$ is continuous on $[\i\eta_1,\i\eta_2]$.
% 	Let
% 	\begin{align*}
% 		\lb_j=\eta_1+\frac{j-1}{N}(\eta_2-\eta_1),\quad \varLambda_j=\frac{r(\i\lb_j)}{2\pi N},\quad  j=1,\cdots,N.
% 	\end{align*}
% 	Then, for $\lb\in\ccc\setminus(\Sigma_1\cup\Sigma_2)$, we have the following identities:
%     \begin{subequations}\label{e2.2}
%         \begin{align}
% 		&\lim_{N\to+\infty}\sum_{j=1}^{N}\frac{\i\Lambda_j}{\lb-\i\lb_j}=\frac{1}{2\pi\i}\int_{\Sigma_1}\frac{\i r(s)}{\lb-s}\d s, 
%         \\
%         &\lim_{N\to+\infty}\sum_{j=1}^{N}\frac{\i\Lambda_j\lb_j^{-2}}{\lb-\i\lb_j}=\frac{1}{2\pi\i}\int_{\Sigma_2}\frac{\i r(\bar s^{-1})}{\lb-s}\d s.
% 	\end{align}
%     \end{subequations}
% \end{prop}
% \begin{proof}
% 	This proof is obtained by directly computing the Riemann integrals on $\Sigma_1$ and $\Sigma_2$, respectively.
% \end{proof}

% The soliton gas RH problem for the focusing AL system \eqref{e1.2} is stated as follows:
\begin{Rhp}\label{RHP0}\
	\begin{itemize}
		\item $Z(\lb)=Z(\lb;n,t)$ is analytic in $\ccc\setminus(\Sigma_1\cup\Sigma_2)$.
		\item For $\lb\in\Sigma_1\cup\Sigma_2$, $Z(\lb)$ satisfies the jump condition 
        $	Z_+(\lb)=Z_-(\lb)J(\lb),$
		where
		\begin{align}\label{J}
			J(\lb)=\begin{cases}
				\left(\begin{matrix}
					1&0\\\ii r(\lb)\E^{-\phi(\lb)}&1
				\end{matrix}\right),&\lb\in\Sigma_1,\\
				\left(\begin{matrix}
					1&\ii r(\bar\lb^{-1})\E^{\phi(\lb)}\\0&1
				\end{matrix}\right),&\lb\in\Sigma_2.
			\end{cases}
		\end{align}
		\item As $\lb\to\infty$, we have
		$
			Z(\lb)=I+\oo(\lb^{-1}).
		$
	\end{itemize}
\end{Rhp}

Our next proposition shows that the above RH problem characterizes the soliton gas solution of the focusing AL system \eqref{e1.2}.  
\begin{prop}\label{Pro 1}
Suppose that $r(\lb)$ is positive and continuous on $[\i\eta_1,\i\eta_2]$, then RH problem \ref{RHP0} is uniquely solvable and satisfies the symmetry relation
	\begin{align}
Z(\lb)=\sigma_2\overline{Z(0)^{-1}Z(\bar\lb^{-1})}\sigma_2.\label{sy Z}
	\end{align}
Furthermore, 
%the soliton gas solution corresponding to the RH problem \ref{RHP0} is given by the reconstruction formula
	\begin{align}\label{rec}
		q_n(t):= Z_{12}(0;n+1,t)
	\end{align}
is the soliton gas solution of the focusing AL system \eqref{e1.2} and for any $(n,t)\in\Z\times\R$, we have
	\begin{align}\label{NtoSG}
		\lim_{N\to\infty}q^{[N]}_n(t)=q_n(t).
	\end{align}

\end{prop}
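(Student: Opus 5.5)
The plan has four parts: unique solvability via a vanishing lemma, the symmetry by comparing two solutions, the AL equation via a Lax pair built from $Z$, and the limit $N\to\infty$ by viewing $M$ as a perturbation of $Z$.

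\emph{Unique solvability.} Since $\det J\equiv1$, Liouville gives uniqueness: for two solutions $Z,\tilde Z$, the quotient $\tilde Z Z^{-1}$ has no jump and tends to $I$, so it is identically $I$ (with $\det Z\equiv 1$). For existence we use Zhou's vanishing lemma: the RH problem is equivalent to a Fredholm singular integral equation on $L^2(\Sigma_1\cup\Sigma_2)$, so it suffices to show that the homogeneous problem ($Z\to0$ at infinity) has only the trivial solution. Let $Z_0$ be such a solution and set $H(\lb)=Z_0(\lb)\sigma_2\overline{Z_0(\bar\lb^{-1})}^{T}\sigma_2$, or a suitable variant adapted to the Schwarz-type reflection $\lb\mapsto\bar\lb^{-1}$ combined with the conjugation $\lb\mapsto-\bar\lb$ that preserves the imaginary axis. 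One then integrates $H$ along the imaginary axis, using Cauchy's theorem to close the contour. Positivity of $r$, together with the fact that $\E^{-\phi}$ is real and positive on $\i\R_+$ (since $\lb+\lb^{-1}$ is purely imaginary there and $n\ln\lb$ has real part $n\ln|\lb|$ up to a phase), should make the jump contributions sign-definite. This forces $Z_{0,-}=0$ on $\Sigma_1\cup\Sigma_2$, hence $Z_0\equiv0$.

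\emph{Symmetry.} Set $\hat Z(\lb):=\sigma_2\overline{Z(\bar\lb^{-1})}\sigma_2$. Check that $\hat Z$ has jumps on $\Sigma_1\cup\Sigma_2$ of the same form $J$: the map $\lb\mapsto\bar\lb^{-1}$ swaps $\Sigma_1$ and $\Sigma_2$, reverses orientation, and turns $\phi$ into $-\bar\phi$. Then $\overline{Z(0)}^{-1}$ conjugated by $\sigma_2$, multiplied by $\hat Z$, normalizes the behaviour at infinity, and uniqueness gives \eqref{sy Z}. The main work here is bookkeeping the conjugation and the orientation flip.

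\emph{The AL equation.} Define $\Psi=Z\E^{-\frac{\phi}{2}\sigma_3}$, so that $\Psi$ has jumps independent of $n$ and $t$. Then $\Psi(\lb;n+1,t)\Psi(\lb;n,t)^{-1}$ and $\partial_t\Psi\,\Psi^{-1}$ are meromorphic, with singularities only at $0$ and $\infty$. Using the symmetry \eqref{sy Z} to relate the behaviour at $0$ to that at $\infty$, these become Laurent polynomials in $\lb^{1/2}$. Working in the variable $z$ with $\lb=z^2$, they are exactly the AL Lax matrices $U=z^{\sigma_3}+Q$ and the corresponding $B$, with $q_n=Z_{12}(0;n+1,t)$. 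The compatibility condition $\partial_tU=B(n+1)U-UB(n)$ then yields \eqref{e1.2}.

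\emph{Convergence $M\to Z$.} We convert the poles of $M$ into jumps: around $\Sigma_1$ and $\Sigma_2$ choose contours $\gamma_1,\gamma_2$ enclosing the poles, and define $\tilde M$ inside them as $M$ times the triangular factor $I-\sum_j\frac{\i\varLambda_j\E^{-\phi}}{\lb-\i\lb_j}E_{21}$, and similarly for $\Sigma_2$. The jumps of $\tilde M$ on $\gamma_i$ then converge uniformly, by \eqref{e2.2}, to those of the analogous deformation of $Z$, where the Cauchy transforms $\frac1{2\pi\i}\int\frac{\i r\E^{-\phi}}{\lb-s}\d s$ appear. Since the limiting problem is solvable, standard small-norm theory for $\tilde M\,\tilde Z^{-1}$ gives uniform convergence, and evaluating at $\lb=0$ gives \eqref{NtoSG}.

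The expected main obstacle is the vanishing lemma, because of the lack of the usual $\lb\mapsto\bar\lb$ symmetry. If the sign-definiteness fails, I would fall back on a Fredholm-determinant argument: show directly that $\det(I+\mathcal K)>0$ using the positive kernel structure $\frac1{1+\zeta s}$, which on $\i\R_+$ becomes $\frac1{1-|\zeta||s|}$, a Cauchy-type kernel.
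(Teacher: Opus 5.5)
Your symmetry argument, your Lax-pair derivation and your $N\to\infty$ small-norm argument follow the paper's route. There are a few sign slips to fix. The jump becomes independent of $n,t$ for $Z\E^{+\phi\sigma_3/2}$, not $Z\E^{-\phi\sigma_3/2}$. The $(2,1)$ entry of the factor removing the poles of $M$ must be $+\sum_j\frac{\i\varLambda_j\E^{-\phi}}{\lb-\i\lb_j}$. Also, the matrix $A$ you actually obtain is not literally $z^{\sigma_3}+Q$, so, as in the paper, you need the auxiliary identities \eqref{cz1} before reading off \eqref{e1.2}.

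\emph{The genuine gap is the existence half of unique solvability.} The mechanism you propose is to integrate along $\i\mathbb{R}$ and extract sign-definite contributions on $\Sigma_1\cup\Sigma_2$ from $r>0$. This cannot work.
\begin{itemize}
\item $\E^{-\phi}$ is not real on $\i\mathbb{R}_+$: $\E^{-\phi(\i y)}=y^{-n}\E^{-t(y-y^{-1})}\E^{-\i(2t+n\pi/2)}$.
\item More fundamentally, for a reflection fixing the contour pointwise (such as $\lb\mapsto-\bar\lb$), Zhou's argument needs $J+J^\dagger>0$ on the contour. For a unipotent triangular jump with off-diagonal entry $w$ one has $\det(J+J^\dagger)=4-|w|^2$. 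On $\Sigma_1$, $|w|=r(\i y)y^{-n}\E^{-t(y-y^{-1})}$, which exceeds $2$ in general (e.g.\ $t=0$, $n\ll0$). A larger $r$ only makes this worse.
\item Your explicit candidate $Z_0(\lb)\sigma_2\overline{Z_0(\bar\lb^{-1})}^T\sigma_2$ is $Z_0(\lb)$ times the adjugate of $\overline{Z_0(\bar\lb^{-1})}$. Its jumps would cancel only if $J(\lb)=\overline{J(\bar\lb^{-1})}$, which is false: one is lower triangular, the other upper triangular.
\end{itemize}

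\emph{The missing idea} is that the relevant reflection is $\lb\mapsto\bar\lb^{-1}$, whose fixed set, the unit circle, does not meet $\Sigma_1\cup\Sigma_2$. Put $H(\lb)=Z_0(\lb)Z_0(\bar\lb^{-1})^\dagger$. With both intervals oriented upwards, the map sends the $\pm$ side of one interval to the $\pm$ side of the other. Hence $H_+=Z_{0,-}(\lb)J(\lb)J(\bar\lb^{-1})^\dagger Z_{0,-}(\bar\lb^{-1})^\dagger=H_-$, by $J(\bar\lb^{-1})^\dagger=J(\lb)^{-1}$. That identity uses only that $r$ is real and that $\overline{\phi(\bar\lb^{-1})}=-\phi(\lb)$ on $\i\mathbb{R}_+$. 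So $H$ is entire and $\oo(\lb^{-1})$ at infinity, hence $H\equiv0$. On $|\lb|=1$ this reads $Z_0Z_0^\dagger=0$, so $Z_0$ vanishes on the circle and therefore identically. This is exactly the structure the paper invokes through Lemma 4.4 of \cite{CFW2025}; no sign condition on the jumps is involved.

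Your fallback also needs repair. $\det(I+\mathcal{K}_{n,t})$ is not positive in general. In the $\ell^2$ picture of Section~\ref{sec fredhlom} it is $\det(I-\i P)$ with $P$ self-adjoint and nonnegative, so it is only nonzero. You would also still have to connect nonvanishing of this determinant to solvability of RH problem~\ref{RHP0}.
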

\begin{proof}
First, it is easy to check that the orientation of $\Sigma_1\cup\Sigma_2$ is reversed by the mapping $\lb\to\bar\lb^{-1}$. Since the jump matrix satisfies the skew-Hermitian symmetry relation $\left(J(\bar\lb^{-1})\right)^\dagger=J(\lb)^{-1}$ for $\lb\in \Sigma_1\cup\Sigma_2$, Lemma 4.4 in \cite{CFW2025} then guarantees that RH problem \ref{RHP0} is uniquely solvable.

Next, to ensure $q_n$ defined in \eqref{rec} is a solution of the AL system \eqref{e1.2}, we will show that, starting from $Z$, one can define a $2\times2$ matrix-valued function $\Phi$ satisfying the Lax pair equations
\begin{align}\label{lax Z}
    S\Phi=A \Phi,\qquad \frac{\ddd  \Phi}{\ddd  t}=B\Phi, 
\end{align}
where the coefficients $A$ and $B$ are obtained from $Z$, and $S$ is the right-shift operator defined by
\begin{equation}\label{defs}
    Sf(n)=f(n+1).
\end{equation}
The compatibility condition of \eqref{lax Z} then gives us the AL system \eqref{e1.2}.

% and whose compatibility condition is the AL system \eqref{e1.2}. Here, we recall that $S$ is the right-shift operator: $Sf(n)=f(n+1)$.

To proceed, we note that the symmetry relation \eqref{sy Z} reads 
\begin{align*}
   	Z(\lb)=\sigma_2\overline{Z(0)^{-1}Z(\lb)}\sigma_2, \qquad |\lb|=1,
\end{align*}
from which it follows that 
\begin{align*}
   & \sigma_2\overline{Z(\lb)}^{-1}\sigma_2=Z(\lb)^{-1}\sigma_2\overline{Z(0)}^{-1}\sigma_2,\qquad Z(0)=Z(\lb)\sigma_2\overline{Z(\lb)}^{-1}\sigma_2.
\end{align*}
Combining these two identities yields
\begin{align*}
    Z(0)=\sigma_2\overline{Z(0)}^{-1}\sigma_2.
\end{align*}
Denote
\begin{align}
	Z_{12}(0;n)=q_{n-1}, \qquad  Z_{22}(0;n)=c_{n-1}.\label{Zij}
\end{align}
Together with the fact $\det[Z]\equiv1$, we have
\begin{align}\label{def:Z21}
Z_{21}(0;n)=\overline{q_{n-1}},\qquad Z_{11}(0;n)=\frac{1+|q_{n-1}|^2}{c_{n-1}}.
\end{align}
Since $\E^{\phi(\lb)/2}$ is not analytic in $\mathbb{C}$, we introduce the $z$-plane by setting
\begin{align*}
    \lb=z^2,
\end{align*}
and define
\begin{align*}
	\Phi(z):=Z(z^2)\E^{\phi(z^2)\sigma_3/2},
\end{align*}
which admits the jump $\E^{\phi(z^2)\hat{\sigma}_3/2} V$ for $z^2\in \Sigma_1\cup\Sigma_2$ and satisfies $\Phi=(I+\mathcal{O}(z^{-2}))\E^{\phi(z^2)\sigma_3/2}$ as $z\to\infty$. Note that the jump of $\Phi$ is independent of $n$ and $t$, the function
\begin{align*}
	\Phi(z;n+1,t)\Phi(z;n,t)^{-1}z^{-\sigma_3}=Z(z^2;n+1,t)z^{\sigma_3}Z(z^2;n,t)^{-1}z^{-\sigma_3}
\end{align*}
% has no jump on $z$-plane, and only may have poles on $\infty$ and $0$.  
is analytic in the $z$-plane except for possible poles at $z=0$ and $z=\infty$. 

To analyze its behavior near these points, we denote
\begin{align*}
    Z(\lb; n)=I+Z_1/\lb+\oo(\lb^{-2}), \qquad \lb\to \infty. 
\end{align*}
As $z\to\infty$, we have
\begin{align*}
	\Phi(z;n+1,t)\Phi(z;n,t)^{-1}z^{-\sigma_3} &= (I+Z_1(n+1)/z^2+\mathcal{O}(z^{-4}))(I+z^{\sigma_3}Z_1(n)z^{-\sigma_3}/z^2+\mathcal{O}(z^{-3}))^{-1},\\
	& =	I+\begin{pmatrix}
		0& -Z_{1,12}(n)\\
		0 &0\\
	\end{pmatrix}+\mathcal{O}(z^{-2}).
\end{align*}
As $z\to0$,  it follows from \eqref{def:Z21} that
\begin{align*}
	&\Phi(z;n+1,t)\Phi(z;n,t)^{-1}z^{-\sigma_3}
	=(Z(0;n+1)+\mathcal{O}(z^{2}))(z^{\sigma_3}Z(0;n)z^{-\sigma_3}+\mathcal{O}(z))^{-1}\\
	&=(Z(0;n+1)+\mathcal{O}(z^{2}))\left(z^{-2}\begin{pmatrix}
		0& 0\\
		-\overline{q_{n-1}} &0\\
	\end{pmatrix}+\mathcal{O}(1)\right)
	=z^{-2}Z(0;n+1)\begin{pmatrix}
		0& 0\\
		-\overline{q_{n-1}} &0\\
	\end{pmatrix}+\mathcal{O}(1).
\end{align*}
Consequently, one has
\begin{align*}
	\Phi(z;n+1,t)\Phi(z;n,t)^{-1}z^{-\sigma_3}=z^{-2}Z(0;n+1)\begin{pmatrix}
		0& 0\\
		-\overline{q_{n-1}} &0\\
	\end{pmatrix}+	I+\begin{pmatrix}
	0& -Z_{1,12}(n)\\
	0 &0\\
	\end{pmatrix},
\end{align*}
or equivalently,  
\begin{align}
    S\Phi=A\Phi,\label{laxphi 1}
\end{align}
where 
\begin{align*}
A
	= z^{\sigma_3}-z^{-1}\begin{pmatrix}
		\overline{q_{n-1}}q_{n}& Z_{1,12}(n)\\
		\overline{q_{n-1}}c_{n} &0\\
	\end{pmatrix}.
\end{align*}
Similarly, consider
\begin{align*}
	\frac{\ddd  \Phi(z;n,t)}{\ddd  t}\Phi(z;n,t)^{-1}&=(Z(z^2;n,t)\E^{\phi(z^2;n,t)\sigma_3/2})_t\E^{-\phi(z^2;n,t)\sigma_3/2}Z(z^2;n,t)^{-1}\\
	&=\frac{\ddd  Z(z^2;n,t)}{\ddd  t}Z(z^2;n,t)^{-1}-\frac{\i(z^2-2+z^{-2})}{2}Z(z^2;n,t)\sigma_3Z(z^2;n,t)^{-1},
\end{align*}
which is analytic in the $z$-plane except for possible poles at $z=0$ and $z=\infty$. As $z\to\infty$, we have
\begin{align*}
	\frac{\ddd  \Phi(z;n,t)}{\ddd  t}\Phi(z;n,t)^{-1}&=-\frac{\i(z^2-2+z^{-2})}{2}(I+Z_1(n)/z^2+\mathcal{O}(z^{-4}))\sigma_3(I+Z_1(n)/z^2+\mathcal{O}(z^{-4}))^{-1}+\mathcal{O}(z^{-2})\\
	&=-\frac{\i}{2}z^2\sigma_3-\frac{\i}{2}(-2\sigma_3+\sigma_3Z_1(n)^{-1}+Z_1(n)\sigma_3)+\mathcal{O}(z^{-2})\\
	&=-\frac{\i}{2}(z^2-2)\sigma_3-\frac{\i}{2}\begin{pmatrix}
		 Z_{1,11}(n)+ Z_{1,22}(n)& -2Z_{1,12}(n)\\
		2Z_{1,21}(n) &-(Z_{1,11}(n)+ Z_{1,22}(n))\\
	\end{pmatrix}+\mathcal{O}(z^{-2}),
\end{align*}
and as $z\to0$,  we have
\begin{align*}
		\frac{\ddd  \Phi(z;n,t) }{\ddd  t}\Phi(z;n,t)^{-1}=&-\frac{\i(z^2-2+z^{-2})}{2}(Z(0;n)+\mathcal{O}(z^{2}))\sigma_3(Z(0;n)+\mathcal{O}(z^{2}))^{-1}\\
		=&-\frac{\i(z^2-2+z^{-2})}{2}Z(0;n)\sigma_3Z(0;n)^{-1}+\mathcal{O}(1)\\
		=&-\frac{\i}{2}z^{-2}\begin{pmatrix}
			1+2|q_{n-1}|^2& -2\frac{1+|q_{n-1}|^2}{c_{n-1}}q_{n-1}\\
			2\overline{q_{n-1}}c_{n-1} &-(1+2|q_{n-1}|^2)\\
		\end{pmatrix}+\mathcal{O}(1).
\end{align*}
It thereby holds that
\begin{align}
	\frac{\ddd  \Phi}{\ddd  t}&=B\Phi,\label{laxphi 2}
\end{align}
where
%\begin{align*}
%\hat{B}&=\left(-\frac{\i}{2}z^{-2}\begin{pmatrix}
%		1+2|q_{n-1}|^2& -2\frac{1+|q_{n-1}|^2}{c_{n-1}}q_{n-1}\\
%		2\overline{q_{n-1}}c_{n-1}&-(1+2|q_{n-1}|^2)\\
%	\end{pmatrix} \right.\\
%	&\left.-\frac{\i}{2}(z^2-2)\sigma_3-\frac{\i}{2}\begin{pmatrix}
%	Z_{1,11}(n)+ Z_{1,22}(n)& -2Z_{1,12}(n)\\
%	2Z_{1,21}(n) &-(Z_{1,11}(n)+ Z_{1,22}(n))\\
%	\end{pmatrix}\right) .
%\end{align*}
\begin{align*}
B=&-\frac{\i}{2}z^{-2}\begin{pmatrix}
		1+2|q_{n-1}|^2& -2\frac{1+|q_{n-1}|^2}{c_{n-1}}q_{n-1}\\
		2\overline{q_{n-1}}c_{n-1}&-(1+2|q_{n-1}|^2)\\
	\end{pmatrix}-\frac{\i}{2}(z^2-2)\sigma_3 \\
	&-\frac{\i}{2}\begin{pmatrix}
	Z_{1,11}(n)+ Z_{1,22}(n)& -2Z_{1,12}(n)\\
	2Z_{1,21}(n) &-(Z_{1,11}(n)+ Z_{1,22}(n))
	\end{pmatrix}.
\end{align*}

The compatibility condition of equations \eqref{laxphi 1} and \eqref{laxphi 2} then yields
\begin{align}
    \frac{\ddd A}{\ddd  t}+AB=(SB)A.\label{ccequ}
\end{align}
Comparing the coefficients of the $\mathcal{O}(z)$ and $\mathcal{O}(z^{-3})$ terms on both sides of \eqref{ccequ}, we obtain
\begin{align}
    Z_{1,21}(n)=\overline{q_{n-1}}c_n, \label{cz1}
\qquad
   c_n(1+|q_{n-1}|^2)=c_{n-1}.
\end{align}
Substituting the above two equalities into the coefficient of the $\mathcal{O}(z^{-1})$ term in \eqref{ccequ} yields the AL system \eqref{e1.2}.

Finally, it remains to show \eqref{NtoSG}. Let $\gamma_1$ be a closed curve encircling the interval $\Sigma_1$ in a clockwise manner and set $\gamma_2:=\overline{\gamma_1^{-1}}$ which encircles $\Sigma_2$.  We then define 
\begin{align*}
		T^M(\lb)=\begin{cases}
			\left(\begin{matrix}
				1&0\\\sum_{j=1}^N\frac{\i\Lambda_j\E^{-\phi(\lb)}}{\lb-\i\lb_j}&1
			\end{matrix}\right),&\lb\in D_1,\\
			\left(\begin{matrix}
				1&\sum_{j=1}^{N}\frac{\i\Lambda_j\lb_j^{-2}}{\lb-\i\lb_j^{-1}}\E^{\phi(\lb)}\\0&1
			\end{matrix}\right),&\lb\in D_2,\\
			I,&\text{otherwise},
		\end{cases}\quad T^Z(\lb)=\begin{cases}
		\left(\begin{matrix}
			1&0\\-\int_{\Sigma_1}\frac{\i r(s)\E^{-\phi(\lb)}\d s}{2\pi\i(s-\lb)}&1
		\end{matrix}\right),&\lb\in D_1,\\
		\left(\begin{matrix}
			1&-\int_{\Sigma_2}\frac{\i r(s)\E^{\phi(\lb)}\d s}{2\pi\i(s-\lb)}\\0&1
		\end{matrix}\right),&\lb\in D_2,\\
		I,&\text{otherwise},
	\end{cases}
	\end{align*}
where $D_i$, $i=1,2$, denotes the region between $\gamma_i$ and $\Sigma_i$, and
% We first introduce a clockwise-oriented contour $\gamma_1$ that closely encircles $\Sigma_1$. Define $\gamma_2:=\overline{\gamma_1^{-1}}$, then $\gamma_2$ is also clockwise-oriented and encircles  $\Sigma_2$.
%  Let $D_j\;(j=1,2)$
% 	denote the region between $\gamma_j$ and $\Sigma_j$ 
% 	and introduce the matrix-valued functions 
%For the solutions of RH problem \ref{rNsolition} and \ref{RHP0}, we perform the following transformations:
	\begin{align}\label{e2.15}
		\tilde M(\lb):=M(\lb)T^M(\lb),\qquad
	\tilde Z(\lb):=Z(\lb)T^Z(\lb).
	\end{align}
It is readily seen that $\tilde M(\lb)$ has no pole at $\lb_j$ and $\tilde Z(\lb)$ has no jump on $\Sigma_1\cup\Sigma_2$.
%, the function $\mathcal{E}(\lb)$ has neither poles nor jumps on $\Sigma_1\cup\Sigma_2$. 
Indeed,  the first identity in \eqref{e2.15} implies that $\tilde M(\lb)$ may have  at most  simple poles at each $\lb_j$.
    %from the first formula in \eqref{e2.15}, 
    %Exactly, on the one hand, seeing the first formula in \eqref{e2.15}, 
  %  we see that $\tilde M(\lb)$ can have 
    %, and by directly computing the residue at any $\lb_j$ for $\tilde M(\lb)$, we notice that $\tilde M(\lb)$ has no pole at $\lb_j$.
However, a direct computation of the residue condition at each $\lb_j$ shows that $\tilde{M}(\lb)$ is pole-free at $\lb_j$. On the other hand, by the Sokhotski-Plemelj formula, for any $\lb\in\Sigma_1$, one has
    \begin{align*}
        \tilde Z_+(\lb)=\tilde Z_-(\lb)\left(\begin{matrix}
            1&0\\\int_{\Sigma_1}\frac{\i r(s)\E^{-\phi(\lb)}\d s}{2\pi\i(s-\lb_-)}-\int_{\Sigma_1}\frac{\i r(s)\E^{-\phi(\lb)}\d s}{2\pi\i(s-\lb_+)}+\ii r(\lb)\E^{-\phi(\lb)}&1
        \end{matrix}\right)=\tilde Z_-(\lb),
    \end{align*}
so $\tilde Z(\lb)$ has no jump on $\Sigma_1$. The same argument shows that $\tilde Z(\lb)$ has no jump on $\Sigma_2$ either. As a consequence, the function $\mathcal{E}(\lb):=\tilde M(\lb)\tilde Z(\lb)^{-1}$ satisfies an RH problem with the jump contour given by $\gamma_1 \cup \gamma_2$ and the jump matrix $J^{\mathcal{E}}$ reads
	\begin{align}\label{e2.33}
		J^{\mathcal{E}}(\lb)-I=\tilde Z_-(\lb)T_-^M(\lb)^{-1}T_-^Z(\lb)\tilde Z_-(\lb)^{-1}-I=\tilde Z_-(\lb)(T_-^Z(\lb)-T_-^M(\lb))\tilde Z_-(\lb)^{-1}.
	\end{align}
By \eqref{e2.2}, the right-hand side of \eqref{e2.33} tends to zero as $N\to\infty$. Then, the RH problem for $\mathcal{E}$ is a small-norm RH problem for large $N$. Thus, by the small-norm RH problem theory \cite{RN10}, as $N\to\infty$, $M(0;n,t)\to Z(0;n,t)$ for any $(n,t)\in\Z\times\R$, which leads to \eqref{NtoSG}. 

This completes the proof of Proposition \ref{Pro 1}.
\end{proof}

\section{Fredholm determinant representation of $q_n$}\label{sec fredhlom}
In this section, we intend to prove Theorem \ref{thm1} by establishing a Fredholm determinant representation of the soliton gas solution $q_n$. We start with a construction of the $N$-soliton solution $q^{[N]}_{n}$ in terms of Fredholm determinants.

% Having established solvability of the soliton gas RH problem, we now construct explicit expressions for the  
% soliton and soliton gas solutions in terms of Fredholm determinants. 
% In the first part of this section, we introduce the associated tau-functions and represent the finite $N$
% soliton solutions as Fredholm determinants. 
% %, where we also refer to the Hirota bilinear method.
% In the second part we pass to the limit $N\to\infty$ in these determinant formula to obtain the soliton gas solution and thereby complete the proof of Theorem \ref{thm1}.

\subsection{The Fredholm determinant representation of the $N$-soliton solution}\label{s3.1}
% Now, using the tau-function $\tau^{[N]}$ defined in \eqref{tau}, we construct the $N$-soliton solution of the focusing AL system \eqref{e1.2}. 
%\begin{align}\label{eqtau}
%		q^{[N]}_{n}(t)=\i^{n+1}\E^{2\i t}\frac{\d}{\d t}\Im\ln\tau^{[N]}(\i;n+1,t).
%	\end{align}
%    Moreover, the product 
 %   \begin{align}
 %       c^{[N]}_n(t)=\prod_{k=n}^\infty(1+\abs{q^{[N]}_n(t)}^2)\label{def cNnt}
%    \end{align}
 %   can be written in the form
  %  \begin{align}\label{ectau}
%        c^{[N]}_n(t)=\left(\re\left(\frac{\tau^{[N]}(\i;n+2,t)}{\tau^{[N]}(\i;n,t)}\right)\right)^{-1}.
%    \end{align}
    
Note that the first row of the solution of RH problem \ref{rNsolition} is given by
	\begin{align}\label{e2.14}
		(M_{11}(\lb),M_{12}(\lb))
	&=(1,0)-\sum_{j=1}^{N}\left(\begin{matrix}
		\frac{\ii d_j\lb_j\hat\beta_j}{\lb-\ii\lb_j},&\frac{\ii^{n+1}e^{2\i t}d_j\hat\alpha_j}{\lb-\ii\lb_j^{-1}}
	\end{matrix}\right).
	\end{align}
Here, the vectors $\hat\alpha=\left(\hat\alpha_1,\cdots,\hat\alpha_N\right)^T$ and $\hat\beta=\left(\hat\beta_1,\cdots,\hat\beta_N\right)^T$ are determined by the finite linear system
	\begin{align}\label{Ns}
		\left(\begin{matrix}
			I_N&-\Psi\\\Psi&I_N
		\end{matrix}\right)\left(\begin{matrix}
			\hat\alpha\\\hat\beta
		\end{matrix}\right)=\left(\begin{matrix}
			\mathbf{d} \\\mathbf{0}
		\end{matrix}\right),
	\end{align}
where $I_N$ is the $N\times N$ identity matrix, 
%$d$ is a column vector function
\begin{align}\label{defpsi}
\mathbf{d}=(d_1,\cdots,d_N)^T,\qquad \Psi=\Psi(n,t):=\left(\frac{d_jd_k}{1-\lb_j^{-1}\lb_k^{-1}}\right)_{j,k=1}^N, \end{align}
with 
\begin{align}\label{def:dj}
d_j=d_j(n,t):=\sqrt{\Lambda_j\lb_j^{-n-2}\E^{t(\lb_j^{-1}-\lb_j)}}.
\end{align}
As it is assumed that $r(\lb)$ is positive on $[\i\eta_1,\i\eta_2]$, we see from \eqref{lbva} that $d_j>0$. 
%Furthermore, the whole solution of RH problem \ref{rNsolition} is determined by the symmetry
%    \begin{align}
%M(\bar\lb^{-1})=\sigma_2\overline{M(0)^{-1}M(\lb)}\sigma_2
%    \end{align}
%The second row of the solution is irrelevant for constructing the soliton and soliton-gas solutions and is therefore omitted.
Besides \eqref{eq:Nsoliton}, we also have
% In addition, the $N$-soliton solution of \eqref{e1.2} is recovered from 
% \begin{align}\label{eqN}
% q^{[N]}_n(t)=M_{12}(0;n+1,t).
% \end{align}
% Meanwhile, define the product 
\begin{align}
  c^{[N]}_n(t):=\prod_{k=n}^\infty(1+\abs{q^{[N]}_n(t)}^2)= M_{11}(0;n,t)^{-1},\label{def cNnt}
   \end{align}
which can be proved by using an argument analogous to the derivation of  \eqref{def:Z21}.

%, this product can be reconstructed by
% \begin{align}\label{concn}
% c^{[N]}_n(t).
% \end{align}
%Let $\{\lb_j,\Lambda_j\}_{j=1}^N$ be as in \eqref{lbva}. %Then,  we have $\lb_j>1$ and $\Lambda_j>0$.
Since $\Psi$ is real and symmetric, the linear system \eqref{Ns} is equivalent to 
	\begin{align}
		\left(\begin{matrix}
			\hat\alpha\\\hat\beta
		\end{matrix}\right)=\left(\begin{matrix}
			(I_N+\Psi^2)^{-1}&(I_N+\Psi^2)^{-1}\Psi\\-(I_N+\Psi^2)^{-1}\Psi&(I_N+\Psi^2)^{-1}
		\end{matrix}\right)\left(\begin{matrix}
			\mathbf{d}\\\mathbf{0}
		\end{matrix}\right).
	\end{align}
Note that $I_N+\Psi^2=(I_N+\i\Psi)(I_N-\i\Psi)$, we introduce the tau-function 
	\begin{align}\label{tau}
		\tau^{[N]}(\eta)=\tau^{[N]}(\eta;n,t):=\det(I_N-\eta\Psi(n,t)),\qquad N\ge1.
	\end{align}
	Using the fact that
	\begin{align}\label{erecoN}
		\Psi_{jk}-(S^2\Psi)_{jk}=\frac{d_jd_k-(S^2d)_j(S^2d)_k}{1-\lb_j^{-1}\lb_k^{-1}}=d_jd_k=(\mathbf{d}\mathbf{d}^T)_{jk},
	\end{align}
where $S$ is the right-shift operator defined in \eqref{defs}, we obtain from the Weinstein–Aronszajn identity that
	\begin{align*}
		S^2\tau^{[N]}(\eta)&=\det(I_N-\eta S^2\Psi)
		=\tau^{[N]}(\eta)\det(I_N+\eta(I_N-\eta\Psi)^{-1}(\Psi-S^2\Psi))\\
		&=\tau^{[N]}(\eta)\det(I_N+\eta(I_N-\eta\Psi)^{-1}\mathbf{d}\mathbf{d}^T)=\tau^{[N]}(\eta)(1+\eta \mathbf{d}^T(I_N-\eta\Psi)^{-1}\mathbf{d}),
	\end{align*}
	which yields
	\begin{align}\label{e2.21}
		\frac{S^2\tau^{[N]}(\eta)}{\tau^{[N]}(\eta)}=1+\eta \mathbf{d}^T(I_N-\eta\Psi)^{-1}\mathbf{d}.
	\end{align}
	In addition, by Jacobi's formula, we have
	\begin{align}\label{e3.10}
		\frac{\ddd }{\ddd t}\ln\tau^{[N]}(\eta)=\Tr[-\eta(I_N-\eta\Psi)^{-1}\frac{\d }{\d t}\Psi].
	\end{align}
In view of \eqref{defpsi}, it follows that 
    \begin{align}\label{e3.11}
        \frac{\d }{\d t}\Psi_{jk}=-\frac{\lb_j+\lb_k}{2}d_jd_k.
    \end{align}
    %and both $\Psi$ and $D$ are symmetric,
Substituting \eqref{e3.11} into \eqref{e3.10} gives us
    \begin{align}\label{qN}
        \frac{\ddd }{\ddd t}\ln\tau^{[N]}(\eta)=\frac{\eta}{2}\mathbf{d}^T[D(I_N-\eta\Psi)^{-1}+(I_N-\eta\Psi)^{-1}D]\mathbf{d}=\eta \mathbf{d}^TD(I_N-\eta\Psi)^{-1}\mathbf{d},
    \end{align}
where  $D:=\operatorname{diag}(\lb_1,\cdots,\lb_N)$. Taking $\eta=\i$ in \eqref{qN} and extracting the imaginary part yields
	\begin{equation}\label{e2.23}
    \begin{split}
        \frac{\d}{\d t}\Im\ln\tau^{[N]}(\i)&=\frac{1}{2\i}\left(\frac{\ddd }{\ddd t}\ln\tau^{[N]}(\i)-\overline{\frac{\ddd }{\ddd t}\ln\tau^{[N]}(\i)}\right)=\mathbf{d}^TD(I_N+\Psi^2)^{-1}\mathbf{d}.
    \end{split}
	\end{equation}
	On the other hand, it follows from \eqref{e2.14} that
	\begin{align}
		M_{12}(0;n,t)=\i^n\e^{2\i t}\sum_{j=1}^{N}\lb_jd_j(n,t)\hat\alpha_j(n,t)=\i^n\E^{2\i t}[\mathbf{d}^TD(I_N+\Psi^2)^{-1}\mathbf{d}](n,t),
	\end{align}
	which combined with \eqref{eq:Nsoliton} and \eqref{e2.23} induces 
    \begin{align}\label{eqtau}
		q^{[N]}_{n}(t)=\i^{n+1}\E^{2\i t}\frac{\d}{\d t}\Im\ln\tau^{[N]}(\i;n+1,t).
	\end{align}
Similarly,  using \eqref{e2.14}, \eqref{def cNnt},  \eqref{erecoN} and \eqref{e2.21}, it follows that
	\begin{align}\label{ectau}
		\left(c^{[N]}_n(t)\right)^{-1}&=1+\mathbf{d}(n,t)^T\hat\beta(n,t)=1-\mathbf{d}(n,t)^T(I+\Psi(n,t)^2)^{-1}\Psi(n,t) \mathbf{d}(n,t) \nonumber \\
		&=\frac{1}{2}\left(\frac{\tau^{[N]}(-\i;n+2,t)}{\tau^{[N]}(-\i;n,t)}+\frac{\tau^{[N]}(\i;n+2,t)}{\tau^{[N]}(\i;n,t)}\right)=\re\left(\frac{\tau^{[N]}(\i;n+2,t)}{\tau^{[N]}(\i;n,t)}\right).
	\end{align}
%which is equivalent to
%\begin{align}\label{ectau}
%        c^{[N]}_n(t)=\left(\re\left(\frac{\tau^{[N]}(\i;n+2,t)}{\tau^{[N]}(\i;n,t)}\right)\right)^{-1}.
%   \end{align}

We are now ready to prove Theorem \ref{thm1}.

    \subsection{Proof of Theorem \ref{thm1}}
	%We are now ready to prove Theorem \ref{thm1}.
By \eqref{def:dj}, it is readily seen that $$d_j(n+2,t)=\lb_j^{-1} d_j(n,t).$$ This inspires us to consider the tau-function in \eqref{tau} for different parity of $n$. 
For the tau-function associated with $\Psi(2n,t)$,  define the linear operators
    $$A^{N,e}_{n,t}: \ell^2(\{l\}_{l=n}^{\infty})\to \mathbb{C}^N, \quad B^{N,e}_{n,t}: \mathbb{C}^N\to \ell^2(\{l\}_{l=n}^{\infty})$$
    by
	\begin{align}\label{def:evenop}
	(A^{N,e}_{n,t}[x])_j=\sum_{l=n}^{\infty}d_j(2l,t)x(l),\quad B^{N,e}_{n,t}[u](l)=-\i\sum_{j=1}^{N}d_j(2l,t)u_j,
	\end{align}
    for $j=1,\cdots,N$ and $l \in \mathbb{Z}\geq n$. Thus, by \eqref{defpsi}, one has
	\begin{align*}
		-\i\Psi(2n,t)=A^{N,e}_{n,t}\circ B^{N,e}_{n,t}.
	\end{align*}
Using the fact that $\det(I_N+ A^{N,e}_{n,t}\circ B^{N,e}_{n,t})=\det(\identity_{\ell^2(\{l\}_{l=n}^{\infty})}+ B^{N,e}_{n,t}\circ A^{N,e}_{n,t})$, we obtain from \eqref{eqtau}, \eqref{tau} and the above formula that 
	\begin{align*}
		q^{[N]}_{2n-1}(t)&=(-1)^{n}\E^{2\i t}\frac{\d}{\d t}\Im\ln\tau^{[N]}(\i;2n,t)\\
        &=(-1)^{n}\E^{2\i t}\frac{\d}{\d t}\Im\ln\det(I_N+A^{N,e}_{n,t}\circ B^{N,e}_{n,t})\\
		&=(-1)^{n}\E^{2\i t}\frac{\d}{\d t}\Im\ln\det(\identity_{\ell^2(\{l\}_{l=n}^{\infty})}+ \mathcal{K}^{N,e}_{n,t}),
	\end{align*}
where $\mathcal{K}^{N,e}_{n,t}:=B^{N,e}_{n,t}\circ A^{N,e}_{n,t}$. Note that $\mathcal{K}^{N,e}_{n,t}$ is a discrete integral operator acting on $\ell^2(\{l\}_{l=n}^{\infty})$ with kernel $\sum_{k=1}^{N}d_k(l+l',t)^2$, that is, 
	\begin{align}
		\mathcal{K}^{N,e}_{n,t}[x](l):=-\i\sum_{l'=n}^{\infty}\sum_{k=1}^{N}d_k(l+l',t)^2x(l'). 
	\end{align}
	%of the kernel $\sum_{k=1}^{N}d_k(l+l',t)^2$.
Let $N\to\infty$, by equations \eqref{lbva} and \eqref{e2.2}, it follows that
	\begin{align}\label{e2.28}
		\lim_{N\to\infty}\sum_{k=1}^{N}d_k(l+l',t)^2=\frac{1}{2\pi\i}\int_{\Sigma_1}r(s)(-\i s)^{-l-l'-2}e^{\i t(s+s^{-1})}\ddd s,
	\end{align}
which implies that sequence of operators $\mathcal{K}^{N,e}_{n,t}$ converges to the operator $\mathcal{K}^e_{n,t}$ on $ \ell^2(\{l\}_{l=n}^\infty)$ whose kernel is given by the right-hand side of \eqref{e2.28}. Consequently, 
	\begin{align}
		\lim_{N\to\infty}\det(\identity_{\ell^2(\{l\}_{l=n}^{\infty})}+ \mathcal{K}^{N,e}_{n,t})=\det(\identity_{\ell^2(\{l\}_{l=n}^{\infty})}+ \mathcal{K}^e_{n,t}),
	\end{align}
	where $\mathcal{K}^e_{n,t}=B^e_{n,t}\circ A^e_{n,t}$ with
	\begin{align}
		&A^e_{n,t}[f](s)=\sqrt{r(s)}\sum_{l=n}^{\infty}(-\i s)^{-l-1}\E^{\frac{\i t}{2}(s+s^{-1})}f(l),\\
		&B^e_{n,t}[g](l)=-\frac{1}{2\pi}\int_{\Sigma_1}\sqrt{r(s)}(-\i s)^{-l-1}\E^{\frac{\i t}{2}(s+s^{-1})}g(s)\ddd s.
	\end{align}
Again using $\det(\identity_{\ell^2(\{l\}_{l=n}^{\infty})}+B^e_{n,t}\circ A^e_{n,t})=\det(\identity_{L^2(\Sigma_1)}+A^e_{n,t}\circ B^e_{n,t})$, we arrive at \eqref{e1.4} for even $n$.
%the Fredholm determinant representation for the soliton-gas solution stated in Theorem \ref{thm1} where $\mathcal{K}_{2n,t}:=A^e_{n,t}\circ B^e_{n,t}$.

The same argument applies for the tau-function associated with $\Psi(2n-1,t)$. Replacing $A^{N,e}_{n,t}$ and $B^{N,e}_{n,t}$ in \eqref{def:evenop} by $A^{N,o}_{n,t}$ and $B^{N,o}_{n,t}$ with 
\begin{align*}
    (A^{N,o}_{n,t}[x])_j=\sum_{l=n}^{\infty}d_j(2l-1,t)x(l),\quad B^{N,o}_{n,t}[u](l)=-\i\sum_{j=1}^{N}d_j(2l-1,t)u_j.
\end{align*}
Taking the limit $N \to \infty$ yields \eqref{e1.4} for odd $n$, 
where
$\mathcal{K}_{2n-1,t}:=A^o_{n,t}\circ B^o_{n,t}$ with
\begin{align*}
    &A^o_{n,t}[f](s)=\sqrt{r(s)}\sum_{l=n}^{\infty}(-\i s)^{-l-\frac{1}{2}}\E^{\frac{\i t}{2}(s+s^{-1})}f(l),\\
		&B^o_{n,t}[g](l)=-\frac{1}{2\pi}\int_{\Sigma_1}\sqrt{r(s)}(-\i s)^{-l-\frac{1}{2}}\E^{\frac{\i t}{2}(s+s^{-1})}g(s)\ddd s.
\end{align*}

The formula \eqref{fcn} for $c_n(t)=\prod_{k=n}^\infty(1+\abs{q_n}^2)$  follows directly  by expressing $c_n^{[N]}(t)$ in \eqref{def cNnt} as a  Fredholm determinant and setting $N \to\infty$.

This completes the proof of Theorem \ref{thm1}. \qed

\section{Large-$n$ asymptotic analysis of the RH problem for $Z(\lb; n,0)$}\label{sec 3}
From the reconstruction formula \eqref{rec}, we have 
\begin{align}
		q_n(0)= Z_{12}(0;n+1,0),
	\end{align}
where $Z$ solves RH problem \ref{RHP0}. This inspires us to analyze the large-$n$ behavior of $Z(\lb;n+1,0)$.

If $t=0$, one has 
\begin{align*}
	\phi(\lb;n,0)=n\ln\lb,
\end{align*}
which we denote by $\phi(\lb)$ in this section for brevity.  
% In this section, we consider the asymptotic behavior of $q_n(0)$ as $n\to\pm\infty$.
% The reconstruction formula \eqref{rec} inspires us to analyze the large-$n$ behaviors of $Z(\lb;n+1,0)$.
% Recall the phase function for $t=0$,
% \begin{align*}
% 	\phi(\lb;n,0)=n\ln\lb.
% \end{align*}
% For brevity, we shall write simply $\phi(\lb)$ in place of $\phi(\lb;n,0)$ in this section.
% Moreover, the properties of $\phi(\lb)$ imply that in RH problem \ref{RHP0},
Thus, the jump matrix $J$ of $Z$ decays exponentially to the identity matrix as $n\to+\infty$ in this case, so does $Z(\lb)$ for large positive $n$.
%It then follows \hl{that} as $n\to+\infty$,
%\begin{align}
%    q_n(0)=\mathcal{O}(\E^{-cn}),
%\end{align}
%for some positive constant $c$.
%go to the proof
In what follows, we analyze the RH problem \ref{RHP0} for $Z(\lb; n,0)$  as $n\to-\infty$ and begin with the introduction of some auxiliary functions, which particularly include the so-called $g$-function \cite{gfunction, G2001,GT2002}  to control the exponentially growing off-diagonal factors in the jump matrix.  
% Firstly, we introduce the so-called $g$-function to 
% control the exponentially growing off-diagonal factors in the jump matrix .

\subsection{The auxiliary functions}\label{aux1}
To define the $g$-function, we first introduce
\begin{align}\label{eRn}
	R(\lb):=\sqrt{(\lb-\ii\eta_1)(\lb-\ii\eta_2)(\lb-\ii\eta_1^{-1})(\lb-\ii\eta_2^{-1})},
\end{align}
which determines a genus-1 Riemann surface $\mathcal{R}$ with homology basis $\{\mathfrak{a},\mathfrak{b}\}$ as shown in Figure \ref{f4}.
Note that the branch cuts of $R(\lb)$ are $\Sigma_1\cup\Sigma_2$, we choose the principal sheet such that, as $\lb\to\infty$, $R(\lb)=-\lb^2+\oo(\lb)$, and $R(\lb)$ satisfies the symmetry relation
\begin{align}\label{esR}
	R(\lb)=-\lb^2\overline{R(\bar\lambda^{-1})}.
\end{align}
Thus, we have $R(0)=1$ and $R(\lb)>0$ for $\lb\in\ii(-\infty,\eta_2^{-1})\cup\ii(\eta_2,+\infty)$.
In addition, $R_\pm(\lb)\in\ii\mathbb{R}^\pm$ for $\lb\in\Sigma_1$ while $R_\pm(\lb)\in\ii\mathbb{R}^\mp$ for $\lb\in\Sigma_2$.

\begin{figure}
	\centering
	\tikzset{every picture/.style={line width=0.75pt}}
	\begin{tikzpicture}
		\draw [thin] (0,0) -- (0,6);
		\draw [thin] (4,2) -- (4,8);
		\draw [thin] (4,2) -- (0,0);
		\draw [thin] (0,6) -- (4,8);
		\draw (2,2) -- (2,3);
		\draw (2,4) -- (2,6);
		\filldraw (2,6) circle (1pt) node [left] {\small $\ii\eta_2$};
		\filldraw (2,4) circle (1pt) node [left] {\small $\ii\eta_1$};
		\filldraw (2,3) circle (1pt) node [left] {\small $\ii\eta_1^{-1}$};
		\filldraw (2,2) circle (1pt) node [left] {\small $\ii\eta_2^{-1}$};
		\filldraw (7,6) circle (1pt);
		\filldraw (7,4) circle (1pt);
		\filldraw (7,3) circle (1pt);
		\filldraw (7,2) circle (1pt);
		\draw [blue] (2,5) ellipse (1 cm and 1.5 cm);
		\draw [blue,->] (1,5) -- (1,5.01);
		\draw [orange] (2,4.2) arc (150:210:1.4cm);
		\draw [orange, ->] (1.815,3.5) -- (1.815,3.51);
		\draw [thin] (5,0) -- (5,6);
		\draw [thin] (9,2) -- (9,8);
		\draw [thin] (9,2) -- (5,0);
		\draw [thin] (5,6) -- (9,8);
		\draw (7,2) -- (7,3);
		\draw (7,4) -- (7,6);
		\draw [orange] (7,2.8) arc (-30:30:1.4cm);
		\draw [dashed,orange] (2,2.8) -- (7,2.8);
		\draw [dashed,orange] (2,4.2) -- (7,4.2);
		\draw [dashed] (2,2) -- (7,2);
		\draw [dashed] (2,4) -- (7,4);
		\draw [dashed] (2,3) -- (7,3);
		\draw [dashed] (2,6) -- (7,6);
		\node at (3,0.5) {\text{Sheet I}};
		\node at (8,0.5) {\text{Sheet II}};
		\node at (2.6,6.5) [blue] {\text{$\mathfrak{b}$}};
		\node at (0.5,3.25) [orange] {\text{$\mathfrak{a}$}};
		\draw [lightgray,dashed,->] (0.7,3.25)--(1.7,3.25);
	\end{tikzpicture}
	\caption{\footnotesize  The genus-1 Riemann surface $\mathcal{R}$  with  homology basis $\{\mathfrak{a}, \mathfrak{b}\}$ associated with $R(\lambda)$ in \eqref{eRn}, where Sheet I is the principal sheet, and the curves $\Sigma_1$ and $\Sigma_2$ are the branch cuts.}\label{f4}
\end{figure}
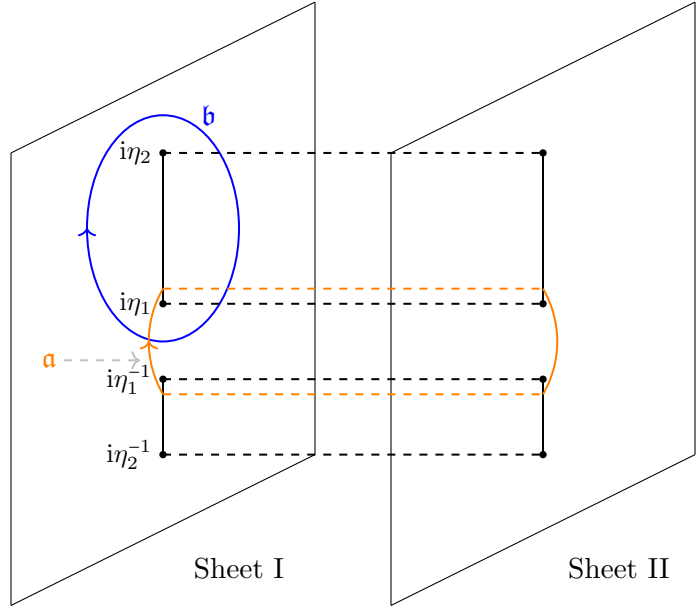

Next we define 
\begin{align}\label{evarphin}
	g(\lb):=-\frac{1}{2}\int_{\ii\eta_2}^{\lb}\frac{s-s^{-1}-\ii\kappa_1}{R(s)}\ddd  s,
\end{align}
 where 
\begin{align}
\kappa_1=2\left(\frac{2\Pi(l_1^2,k)}{K(k)}-1\right),\quad l_j=\frac{\eta_j-1}{\eta_j+1},\quad k=\frac{l_1}{l_2},\quad j=1,2,\label{def k}
\end{align}
with $K$ and $\Pi$ being the elliptic integrals defined in \eqref{ei1} and  \eqref{ei2}, respectively.
% and the integral path for $g(\lb)$ does not cross the interval $(\i\eta_1^{-1},\i\eta_1)$.
% Denote the interval
% \begin{align}
%    \Sigma= (\ii\eta_2^{-1},\ii\eta_2),\label{def Sigma}
% \end{align}
% oriented upwards.
% This phase function satisfies the following properties.
\begin{prop}\label{png}
The $g$-function defined in \eqref{evarphin} satisfies the following properties.    
    \begin{itemize}
        \item[\rm{(a)}] Let $\ddd g$ be the differential of $g(\lb)$, it then follows that
\begin{align*}
    \int_{\ii\eta_1^{-1}}^{\ii\eta_1}\ddd g=0.
\end{align*}
\item[\rm{(b)}]  $g(\lb)-\frac{\ln \lb}{2}$ is analytic in  $\C \setminus\Sigma$, where
\begin{align}
   \Sigma : = \ii(\eta_2^{-1}, \eta_2)
   \label{def Sigma}
\end{align}
with upward orientation and the branch cut of the logarithmic function is taken along the negative imaginary axis. Moreover, one has
\begin{align}\label{eq:gjump}
	g_+(\lb)=
	\begin{cases}
		-g_-(\lb),&\lb\in\Sigma_1\cup\Sigma_2,\\
		g_-(\lb)-\ii\Omega,&\lb\in\ii(\eta_1^{-1},\eta_1),
	\end{cases}
\end{align}
%\begin{align*}
%	\begin{aligned}
%		& g_+(\lb) + g_-(\lb) = 0, & \lb\in\Sigma_1\cup\Sigma_2,  \\
%	& g_+ (\lb) - g_-(\lb) = -\i\Omega, & \lb \in \ii(\eta_1^{-1},\eta_1),
%	\end{aligned} 
%\end{align*}
where
\begin{align}\label{eOmega}
	\Omega=\ii\oint_{\mathfrak{b}}\ddd g\in\mathbb{R}.
\end{align}

\item[\rm{(c)}] $g(\lb)$ satisfies the symmetry relation
\begin{align}\label{e3.6}
	g(\bar\lb^{-1})=-\overline{g(\lb)}.
\end{align}
\item[\rm{(d)}] 
$g(\lb)$ admits the asymptotic properties:
\begin{align}
    g(\lb)&=\frac{\ln\lb}{2}+\oo(1),\qquad \lambda\to \infty,\label{e4.10}\\
	g(\lambda)&=\frac{(\eta_2+\eta_2^{-1}-\kappa_1)(-\ii\lb-\eta_2)^{1/2}}{\sqrt{(\eta_2-\eta_1)(\eta_2-\eta_1^{-1})(\eta_2-\eta_2^{-1})}}(1+\oo(\lambda-\ii\eta_2)), \qquad \lambda\to\ii\eta_2, \label{asy g eta2}
\end{align}
and as $\lambda\to\ii\eta_1$,
\begin{align}
    g(\lambda)&=\mp\frac{\ii\Omega}{2}+\frac{(\eta_1+\eta_1^{-1}-\kappa_1)(\eta_1+\ii\lb)^{1/2}}{\sqrt{(\eta_2-\eta_1)(\eta_1-\eta_1^{-1})(\eta_1-\eta_2^{-1})}}(1+\oo(\lambda-\ii\eta_1)), \qquad \mp\re \lb>0.
    \label{asy g eta21}
\end{align}

       \end{itemize}
\end{prop}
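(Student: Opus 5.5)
The plan is to reduce everything to properties of the differential
\[
\ddd g=-\tfrac12\,\frac{\lb-\lb^{-1}-\ii\kappa_1}{R(\lb)}\,\ddd\lb
\]
on the Riemann surface $\mathcal R$, and to use the specific value of $\kappa_1$ in \eqref{def k} only for part (a).

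\emph{Part (a).} Substitute $\lb=\ii s$ with $s\in(\eta_1^{-1},\eta_1)$, so that $R(\ii s)$ is real there. The condition $\int_{\ii\eta_1^{-1}}^{\ii\eta_1}\ddd g=0$ then becomes a linear equation in $\kappa_1$:
\[
\int_{\eta_1^{-1}}^{\eta_1}\frac{s+s^{-1}-\kappa_1}{\sqrt{\lvert(s-\eta_1)(s-\eta_2)(s-\eta_1^{-1})(s-\eta_2^{-1})\rvert}}\,\ddd s=0 .
\]
Next apply the Möbius change of variables $s\mapsto u=\frac{s-1}{s+1}$. It sends the four branch points $\eta_2^{-1},\eta_1^{-1},\eta_1,\eta_2$ to $-l_2,-l_1,l_1,l_2$ and sends $s\mapsto s^{-1}$ to $u\mapsto-u$. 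After this substitution the quartic becomes even in $u$, and the integral over $(-l_1,l_1)$ reduces, via $u=l_1 v$, to Legendre normal form with modulus $k=l_1/l_2$.

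The term $s+s^{-1}=2\frac{1+u^2}{1-u^2}$ splits into a constant plus a multiple of $\frac{1}{1-u^2}$. The constant produces $K(k)$, and the remaining piece produces $\Pi(l_1^2,k)$. Solving the linear equation should give $\kappa_1=2(2\Pi/K-1)$. This bookkeeping, including the Jacobian factor $(1+s)^{-2}$ and the normalization constants, is the step most likely to go wrong, so I would check it carefully against the stated formula.

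\emph{Part (b).} At infinity, $\ddd g=\left(\tfrac{1}{2\lb}+\oo(\lb^{-2})\right)\ddd\lb$ because $R=-\lb^2+\oo(\lb)$. Near $0$, $R(0)=1$ and $\ddd g\sim\tfrac{1}{2\lb}\ddd\lb$. So the residues are $\tfrac12$ at both $0$ and $\infty$, and $g-\tfrac12\ln\lb$ is single valued away from the cuts. The only possible additional multivaluedness comes from the $\mathfrak a$- and $\mathfrak b$-periods.

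Across $\Sigma_1\cup\Sigma_2$ we have $R_+=-R_-$, and $g$ vanishes at $\ii\eta_2$. Hence $g_++g_-$ is constant on $\Sigma_1$ and equal to $0$ there. Moving to $\Sigma_2$ requires passing through the gap $\ii(\eta_1^{-1},\eta_1)$; part (a) guarantees that no constant is picked up, so $g_++g_-=0$ on $\Sigma_2$ as well.

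In the gap, $g_+-g_-$ is constant, and it equals the integral of $\ddd g$ around the loop through the gap, which is $-\oint_{\mathfrak b}\ddd g=-\ii\Omega$. To see that $\Omega$ is real, evaluate the $\mathfrak b$-period by collapsing the loop onto $\Sigma_1$. This gives $2\int_{\Sigma_1}$ of $\ddd g$ with $R_+\in\ii\R$ and $\ddd\lb\in\ii\R$, so the integrand is real after the factor $\ii$. The $\ln\lb$ cut along the negative imaginary axis accounts for the residue at $0$, and outside $\Sigma$ no jump remains.

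\emph{Part (c).} Substitute $s\mapsto\bar s^{-1}$ in the integral and use \eqref{esR}. The form $\frac{s-s^{-1}-\ii\kappa_1}{R(s)}\ddd s$ pulls back to minus the conjugate of itself, since $\kappa_1$ is real and $R(\bar s^{-1})$ is related to $\overline{R(s)}$ through \eqref{esR}. The base point $\ii\eta_2$ maps to $\ii\eta_2^{-1}$. The resulting difference of constants, $\int_{\ii\eta_2^{-1}}^{\ii\eta_2}\ddd g$, vanishes by part (a) combined with $g_+=-g_-$ on the cuts. This gives $g(\bar\lb^{-1})=-\overline{g(\lb)}$.

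\emph{Part (d).} The behaviour \eqref{e4.10} follows from the residue $\tfrac12$ at infinity computed in (b). For \eqref{asy g eta2} and \eqref{asy g eta21}, integrate the local expansion $R(s)\approx c\,(s-\ii\eta_j)^{1/2}$ near each endpoint. Near $\ii\eta_1$, the constant $\mp\ii\Omega/2$ is the average of the boundary values of $g$ across the gap jump, taken on the two sides $\mp\re\lb>0$. The prefactors come from evaluating $s-s^{-1}-\ii\kappa_1$ at $\ii\eta_j$, which gives $\ii(\eta_j+\eta_j^{-1}-\kappa_1)$, together with the other three factors of $R$. The only care needed in this part is tracking the branches.
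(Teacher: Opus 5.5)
Your proposal is correct and follows the paper's proof essentially step by step. It uses the same M\"obius substitution $u=\frac{s-1}{s+1}$, reducing the two gap integrals to $K(k)$ and $2\Pi(l_1^2,k)-K(k)$ to fix $\kappa_1$ in (a), the jump relations from $R_+=-R_-$ plus the vanishing gap integral in (b), the symmetry \eqref{esR} in (c), and local square-root expansions at $\ii\eta_2$ and $\ii\eta_1$ in (d).

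There is one small sign slip. The gap jump $g_+-g_-$ is the counterclockwise integral of $\ddd g$ around $\Sigma_1$, and $\Omega=\ii\oint_{\mathfrak b}\ddd g$ gives $-\oint_{\mathfrak b}\ddd g=\ii\Omega$, not $-\ii\Omega$. So you should write $g_+-g_-=+\oint_{\mathfrak b}\ddd g$ with $\mathfrak b$ oriented counterclockwise around $\Sigma_1$ to obtain $g_+-g_-=-\ii\Omega$.
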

\begin{proof}
By direct computations, we have
\begin{align}\label{int 1/R(s)}
 \int_{\ii\eta_1^{-1}}^{\ii\eta_1} \frac{1}{R(s)}\ddd  s &=-\ii \int_{\eta_1^{-1}}^{\eta_1} \frac{1}{\sqrt{(s-\eta_1)(s-\eta_1^{-1})(s-\eta_2)(s-\eta_2^{-1})}}\ddd  s\nonumber\\
    &=-\frac{\ii}{2} \int_{-l_1}^{l_1}\sqrt{\frac{(1-l_1^2)(1-l_2^2)}{(l_1^2-x^2)(l_2^2-x^2)}}\ddd x=\frac{-\ii\sqrt{(1-l_1^2)(1-l_2^2)}}{l_2}K(k),
\end{align}
where $K$ is defined in \eqref{ei1},
    and 
    \begin{align}
     \int_{\ii\eta_1^{-1}}^{\ii\eta_1} \frac{s-s^{-1}}{R(s)}\ddd  s&= \int_{\eta_1^{-1}}^{\eta_1} \frac{s+s^{-1}}{\sqrt{(s-\eta_1)(s-\eta_1^{-1})(s-\eta_2)(s-\eta_2^{-1})}}\ddd  s\nonumber\\
    &= \int_{-l_1}^{l_1}\sqrt{\frac{(1-l_1^2)(1-l_2^2)}{(l_1^2-x^2)(l_2^2-x^2)}}\cdot\frac{(1+x^2)\ddd x}{1-x^2} \nonumber \\
    & =\frac{2\sqrt{(1-l_1^2)(1-l_2^2)}}{l_2}(2\Pi(l_1^2,k)-K(k)).\label{int s-1/s/R(s)}
\end{align}
Thus, the choice of $\kappa_1$ in \eqref{def k} ensures item  (a). Item (b) follows immediately the definition of $g(\lb)$,  \eqref{int 1/R(s)} and \eqref{int s-1/s/R(s)}. The symmetry relation of $g$ in item (c) follows directly from that of $R$ given in \eqref{esR}.
For item (d), it is readily seen from the definition of $g$ in \eqref{evarphin} that $g(\lb)-\frac{\ln\lb}{2}$ is bounded as $\lb\to\infty$.
%As a consequence, by the analyticity of $g(\lb)-\frac{\ln\lb}{2}$, it is easy to verify that $g(\lb)$ is analytic in the neighborhood of $\Sigma$. 
The remaining statements, namely \eqref{asy g eta2} and \eqref{asy g eta21}, are verified as follows. From the jump condition for $g(\lb)$ given in \eqref{eq:gjump}, it follows that $g(\lb)(\lb-\i\eta_2)^{-1/2}$ is holomorphic at $\lb=\ii\eta_2$, where the branch cut of $(\lb-\i\eta_2)^{1/2}$ is taken along $(\ii\eta_2, -\ii \infty)$. A direct computation using L’Hôpital’s rule yields
\begin{align*}
    \lim_{\lb\to\ii\eta_2}g(\lb)(-\ii\lb -\eta_2)^{-1/2}=\frac{\eta_2+\eta_2^{-1}-\kappa_1}{\sqrt{(\eta_2-\eta_1)(\eta_2-\eta_1^{-1})(\eta_2-\eta_2^{-1})}},
\end{align*}
which is equivalent to \eqref{asy g eta2}.
%\begin{align*}
%	g(\lambda)&=\frac{(\eta_2+\eta_2^{-1}-\kappa_1)(-\ii\lb-\eta_2)^{1/2}}{\sqrt{(\eta_2-\eta_1)(\eta_2-\eta_1^{-1})(\eta_2-\eta_2^{-1})}}(1+\oo(\lambda-\ii\eta_2)).
%\end{align*}
A similar analysis leads to \eqref{asy g eta21}. 

This completes the proof of Proposition \ref{png}.
%Besides, this also prove \eqref{e4.10} in $(d)$.
%We have the symmetry for $g(\lb)$
%These complete the proof.
%for $\lambda\to\ii\eta_1$ and $\pm\re\lambda>0$, we have the similar asymptotic properties:
%\begin{align*}
%	g(\lambda)&=\mp\frac{\ii\Omega}{2}+\frac{(\eta_1+\eta_1^{-1}-\kappa_1)(\eta_1+\ii\lb)^{1/2}}{\sqrt{(\eta_2-\eta_1)(\eta_1-\eta_1^{-1})(\eta_1-\eta_2^{-1})}}(1+\oo(\lambda-\ii\eta_1)).
%\end{align*}
\end{proof}
The signature table of $\Re [g]$ is illustrated in Figure \ref{fig reg t=0} and we also set
\begin{align}\label{def:ginfty}
	g^{(\infty)}:=\lim_{\lb\to\infty}\left(g(\lb)-\frac{\ln\lb}{2}\right).
\end{align} 
for later use. 

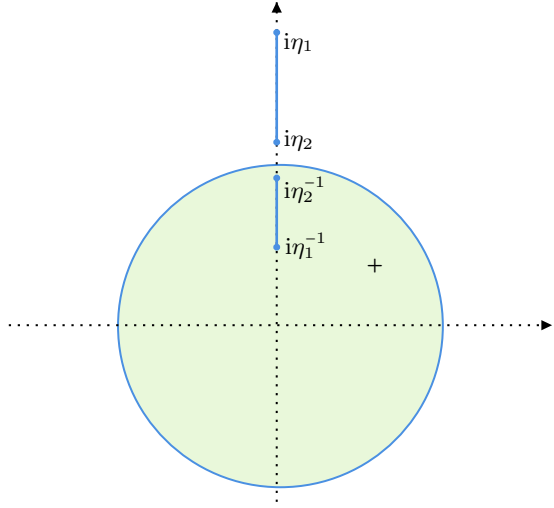
\begin{figure}

\centering
\tikzset{every picture/.style={line width=0.75pt}} %set default line width to 0.75pt

\begin{tikzpicture}[x=0.75pt,y=0.75pt,yscale=-0.6,xscale=0.6]
%Shape: Circle [id:dp7230156522170901] 
\draw  [color={rgb, 255:red, 74; green, 144; blue, 226 }  ,draw opacity=1 ][fill={rgb, 255:red, 184; green, 233; blue, 134 }  ,fill opacity=0.3 ] (185.99,318.72) .. controls (185.99,244.1) and (246.48,183.61) .. (321.1,183.61) .. controls (395.71,183.61) and (456.2,244.1) .. (456.2,318.72) .. controls (456.2,393.33) and (395.71,453.82) .. (321.1,453.82) .. controls (246.48,453.82) and (185.99,393.33) .. (185.99,318.72) -- cycle ;
%Straight Lines [id:da9493975688480626] 
\draw  [dash pattern={on 0.84pt off 2.51pt}]  (318,466.08) -- (318,45) ;
\draw [shift={(318,47)}, rotate = 90] [fill={rgb, 255:red, 0; green, 0; blue, 0 }  ][line width=0.08]  [draw opacity=0] (8.93,-4.29) -- (0,0) -- (8.93,4.29) -- cycle    ;
%Straight Lines [id:da8901688652077531] 
\draw [color={rgb, 255:red, 74; green, 144; blue, 226 }  ,draw opacity=1 ][line width=1]    (318,72.58) -- (318,164.58) ;
\draw [shift={(318,164.58)}, rotate = 89.69] [color={rgb, 255:red, 74; green, 144; blue, 226 }  ,draw opacity=1 ][fill={rgb, 255:red, 74; green, 144; blue, 226 }  ,fill opacity=1 ][line width=1]      (0, 0) circle [x radius= 1.74, y radius= 1.74]   ;
\draw [shift={(318,72.58)}, rotate = 89.69] [color={rgb, 255:red, 74; green, 144; blue, 226 }  ,draw opacity=1 ][fill={rgb, 255:red, 74; green, 144; blue, 226 }  ,fill opacity=1 ][line width=1]      (0, 0) circle [x radius= 1.74, y radius= 1.74]   ;
%Straight Lines [id:da8405759072744595] 
\draw [color={rgb, 255:red, 0; green, 0; blue, 0 }  ,draw opacity=1 ] [dash pattern={on 0.84pt off 2.51pt}]  (95.1,318) -- (544.1,318) ;
\draw [shift={(547.1,318)}, rotate = 180] [fill={rgb, 255:red, 0; green, 0; blue, 0 }  ,fill opacity=1 ][line width=0.08]  [draw opacity=0] (8.93,-4.29) -- (0,0) -- (8.93,4.29) -- cycle    ;
%Straight Lines [id:da7246130170784083] 
\draw [color={rgb, 255:red, 74; green, 144; blue, 226 }  ,draw opacity=1 ][line width=1]    (318,194.58) -- (318,252.58) ;
\draw [shift={(318,252.58)}, rotate = 90] [color={rgb, 255:red, 74; green, 144; blue, 226 }  ,draw opacity=1 ][fill={rgb, 255:red, 74; green, 144; blue, 226 }  ,fill opacity=1 ][line width=1]      (0, 0) circle [x radius= 1.74, y radius= 1.74]   ;
\draw [shift={(318,194.58)}, rotate = 90] [color={rgb, 255:red, 74; green, 144; blue, 226 }  ,draw opacity=1 ][fill={rgb, 255:red, 74; green, 144; blue, 226 }  ,fill opacity=1 ][line width=1]      (0, 0) circle [x radius= 1.74, y radius= 1.74]   ;

% Text Node
\draw (322.42,238.19) node [anchor=north west][inner sep=0.75pt]  [font=\footnotesize]  {$\mathrm{i} \eta _{1}^{-1}$};
% Text Node
\draw (321.3,152.14) node [anchor=north west][inner sep=0.75pt]  [font=\footnotesize]  {$\mathrm{i} \eta _{2}$};
% Text Node
\draw (321.52,191.13) node [anchor=north west][inner sep=0.75pt]  [font=\footnotesize]  {$\mathrm{i} \eta _{2}^{-1}$};
% Text Node
\draw (321.2,70.38) node [anchor=north west][inner sep=0.75pt]  [font=\footnotesize]  {$\mathrm{i} \eta _{1}$};
% Text Node
\draw (390.03,259.97) node [anchor=north west][inner sep=0.75pt]    {$+$};
\end{tikzpicture}
%	\centering	\tikzset{every picture/.style={line width=0.75pt}}
%	\resizebox{0.35\textwidth}{!}{
%	\begin{tikzpicture}
%		\draw [thin] (0,0) circle (2cm);
%		\filldraw [thin,lightgray] (0,0) circle (4cm);
%		\draw [->] (0,2) -- (0,2.7);
%		\draw [->] (0,16/3) -- (0,6.4);
%		\draw (0,2) -- (0,3);
%		\draw  (0,16/3) -- (0,8);
%		\filldraw (0,0) circle (1pt) node [left] {0};
%		\filldraw (0,8) circle (0.5pt) node [right] {$\ii\eta_2$};
%		\filldraw (0,16/3) circle (0.5pt) node [right] {$\ii\eta_1$};
%		\filldraw (0,3) circle (0.5pt) node [right] {$\ii\eta_1^{-1}$};
%		\filldraw (0,2) circle (0.5pt) node [right] {$\ii\eta_2^{-1}$};
%		\node at (-0.3,6) {$\Sigma_1$};
%		\node at (-0.3,2.5) {$\Sigma_2$};
%	\end{tikzpicture}
%}
	\caption{\footnotesize  The signature table of $\Re [g]$. $\Re [g]>0$ in the green region,  while $\Re[g]<0$ in the white region, and $\Re[g]=0$ on the blue curve.}
	\label{fig reg t=0}
\end{figure}

% For $t=0$ and $n\to-\infty$,  the jump matrix $J(\lb)$ in RH problem \ref{RHP0} does not decay exponentially to $I$, and thus we need to open the lenses.
% To open these lenses, we have introduced the phase function $g(\lb)$ with the signature table shown in Figure \ref{fig reg t=0}, and 
Besides the  $g$-function defined in \eqref{evarphin}, we further introduce an auxiliary function
\begin{align}\label{symdeltan}
	\delta(\lb):=&\exp\left(\frac{R(\lb)}{2\pi\ii}\left(\int_{\ii\eta_1}^{\ii\eta_2}\frac{\ln r(s)}{R_+(s)}\frac{\ddd  s}{s-\lb}-\int_{\ii\eta_2^{-1}}^{\ii\eta_1^{-1}}\frac{\ln r(\bar s^{-1})}{R_+(s)}\frac{\ddd  s}{s-\lb}-\int_{\ii\eta_1^{-1}}^{\ii\eta_1}\frac{\ii\Delta}{R(s)}\frac{\ddd  s}{s-\lb}\right)\right),
\end{align}
where 
\begin{align}\label{delta1}
	\Delta:&=2\ii\int_{\ii\eta_1}^{\ii\eta_2}\frac{\ln r(s)}{R_+(s)}\ddd  s\Bigg/\int_{\ii\eta_1^{-1}}^{\ii\eta_1}\frac{\ddd  s}{R(s)}=\frac{-2l_2\int_{\eta_1}^{\eta_2}\frac{\ln r(\ii s)}{|R_+(s)|}\ddd  s}{\sqrt{(1-l_1^2)(1-l_2^2)}K(k)}
\end{align}
is a real constant to guarantee the boundedness of $\delta(\lb)$ as $\lb\to\infty$. It is readily seen from its definition that 
$\delta(\lb)$ is analytic for $\lb\in \ccc\setminus\Sigma$ and admits the jump relations
\begin{align}\label{delta0}
	\delta_+(\lb) =\begin{cases}
		\delta_-(\lb)^{-1}r(\lb),&\lb\in\Sigma_1,\\
		\delta_-(\lb)^{-1}r(\bar\lb^{-1})^{-1},&\lb\in\Sigma_2,\\
		\delta_-(\lb)\E^{-\ii\Delta},&\lb\in\ii(\eta_1^{-1},\eta_1).
	\end{cases}
\end{align}
% where $\Delta$ is real constant to guarantee the boundedness of $\delta(\lb)$ as $\lb\to\infty$,
% \begin{align*}
% 	\Delta&=2\ii\int_{\ii\eta_1}^{\ii\eta_2}\frac{\ln r(s)}{R_+(s)}\ddd  s\Bigg/\int_{\ii\eta_1^{-1}}^{\ii\eta_1}\frac{\ddd  s}{R(s)}=\frac{-2l_2\int_{\eta_1}^{\eta_2}\frac{\ln r(\ii s)}{|R_+(s)|}\ddd  s}{\sqrt{(1-l_1^2)(1-l_2^2)}K(k)}.
% \end{align*}
Substituting $\lb\to\bar\lb^{-1}$ into \eqref{symdeltan}, one immediately verifies the symmetry relation 
\begin{align}\label{edeltasym}
	\delta(\bar\lb^{-1})=\overline{\delta(\lb)^{-1}}.
\end{align}
Moreover, one has
\begin{align}
&\delta(\infty):=\lim_{\lb\to\infty}	\delta(\lb)\nonumber\\
&=\exp\left\{\frac{1}{2\pi\ii}\left[\int_{\ii\eta_{1}}^{\ii \eta_{2}}  \frac{s\ln r(s) }{R_+(s)} \d s-
\int_{\ii\eta_{2}^{-1}}^{\ii\eta_{1}^{-1}}  \frac{s\ln r(\bar{s}^{-1}) }{R_+(s)} \d s-\int_{\ii\eta_{1}^{-1}} ^{\ii\eta_{1}} \frac{\ii \Delta s}{R(s)} \d s\right]
\right\}\in\mathbb{R}. \label{deltainfty0}
\end{align}

\subsection{Lenses opening}
With the functions $g$ and $\delta$ defined in \eqref{evarphin} and  \eqref{symdeltan}, we now define
\begin{align}\label{T1}
	T(\lb)=T(\lb; n)=\delta(\infty)^{\sigma_3}\E^{n g^{(\infty)}\sigma_3}Z(\lb)\E^{-n(g(\lb)-\frac{\ln\lb}{2})\sigma_3}\delta(\lb)^{-\sigma_3},
\end{align}
where $g^{(\infty)}$ is given in \eqref{def:ginfty}. It is then readily seen from RH problem \ref{RHP0}, Proposition \ref{png}, \eqref{delta0} and \eqref{deltainfty0} that $T$ solves the following RH problem. 
% With the aid of Proposition \ref{png} for $g(\lb)$ and the properties \eqref{delta0} and \eqref{deltainfty0} for $\delta(\lb)$,  we transform $Z(\lb)$ into the following new vector function 
% \begin{align}\label{T1}
% 	T(\lb)=T(\lb; n)=\E^{n g^{(\infty)}\sigma_3}\delta(\infty)^{\sigma_3}Z(\lb)\E^{-n(g(\lb)-\frac{\ln\lb}{2})\sigma_3}\delta(\lb)^{-\sigma_3},
% \end{align}
% which in turn satisfies the following RH problem:
\begin{Rhp}\ \label{r3.1}
	\begin{itemize}
		\item $T(\lb)$ is analytic in $\ccc\setminus\Sigma$.
		\item For $\lb\in\Sigma$, $T$ satisfies the jump condition
			$T_+(\lb)=T_-(\lb)J^{(T)}(\lb),$
		where
		\begin{align}
			&J^{(T)}(\lb)=J^{(T)}(\lb;n)\nonumber\\
			&:=\begin{cases}
				\left(\begin{matrix}
					r(\lb)^{-1}\delta_-(\lb)^2\E^{2n g_-(\lb)}&0\\\ii&r(\lb)^{-1}\delta_+(\lb)^2\E^{2n g_+(\lb)}
				\end{matrix}\right),&\lb\in\Sigma_{1},\\
				\left(\begin{matrix}
					r(\bar\lb^{-1})^{-1}\delta_+(\lb)^{-2}\E^{-2n g_+(\lb)}&\ii \\0&r(\bar\lb^{-1})^{-1}\delta_-(\lb)^{-2}\E^{-2n g_-(\lb)}
				\end{matrix}\right),&\lb\in\Sigma_{2},\\
				\E^{\ii(n\Omega+\Delta)\sigma_3},&\lb\in\ii(\eta_1^{-1},\eta_1).
			\end{cases}
		\end{align}
		\item As $\lb\to\infty$, we have
		$	T(\lb)=I+\oo(\lb^{-1}).$
	\end{itemize}
\end{Rhp}
For $\lb\in\Sigma_1\cup\Sigma_2$, it is noticed  that $J^{(T)}$ admits the following factorizations:
\begin{align*}
	J^{(T)}(\lb)=\begin{cases}
		\left(\begin{matrix}
			1&-\ii r(\lb)^{-1}\delta_-(\lb)^2\E^{2n g_-(\lb)}\\0&1
		\end{matrix}\right)\left(\begin{matrix}
			0&\ii\\\ii&0
		\end{matrix}\right)\left(\begin{matrix}
			1&-\ii r(\lb)^{-1}\delta_+(\lb)^2\E^{2n g_+(\lb)}\\0&1
		\end{matrix}\right),&\lb\in\Sigma_{1},\\
		\left(\begin{matrix}
		1&0\\-\ii r(\bar\lb^{-1})^{-1} \delta_-(\lb)^{-2} \E^{-2n g_-(\lb)}&1
	\end{matrix}\right)\left(\begin{matrix}
	0&\ii\\\ii&0
\end{matrix}\right)\left(\begin{matrix}
1&0\\-\ii r(\bar\lb^{-1})^{-1} \delta_+(\lb)^{-2} \E^{-2n g_+(\lb)}&1
\end{matrix}\right),&\lb\in\Sigma_{2}.\\
	\end{cases}
\end{align*}
This invokes us to open lenses around the interval $\Sigma_1\cup\Sigma_2$. To proceed, let $\Omega_{1,\pm}$ be two regions surrounding $\Sigma_1$ and set
\begin{align*}
	\Omega_{2,\pm}=\overline{\Omega_{1,\mp}^{-1}};
\end{align*}
see Figure  \ref{f3} for an illustration. 
% Moreover, to open the lenses, we make the following assumption for the reflection coefficient $r(\lb)$.
% \begin{assumption}\label{assum1}
%     Given the positive constants $\eta_2>\eta_1>1$, $r(\lb)$  is an analytic function with respect to $\lb$ near the interval $ (\i\eta_1, \i\eta_2)$ and takes positive values on the closed interval $ [\i\eta_1, \i\eta_2]$. 
%     %Moreover, $r(\lambda)\in H^1\left((\i\eta_1,\i\eta_2)\right)$.
% \end{assumption}
% To split the lenses, we take the regions $\Omega_{1,\pm}$ as shown in Figure \ref{f3}, and the regions $\Omega_{2,\pm}$ satisfying that
% \begin{align*}
% 	\Omega_{2,\pm}=\overline{\Omega_{1,\mp}^{-1}},
% \end{align*}
By introducing a matrix-valued function
\begin{align}
	G(\lb)=G(\lb;n):=\begin{cases}
		\left(\begin{matrix}
			1&\pm \ii \E^{2n g(\lb)}\delta(\lb)^2r(\lb)^{-1}\\0&1
		\end{matrix}\right),&\lb\in\Omega_{1,\pm},\\
	\left(\begin{matrix}
		1&0\\\pm\ii \E^{-2n g(\lb)}\delta(\lb)^{-2}r(\bar\lb^{-1})^{-1}&1
	\end{matrix}\right),&\lb\in\Omega_{2,\pm},\\
I,&\text{otherwise},
	\end{cases}
\end{align}
we define
%Then, under Assumption \ref{assum1}, we  transform $T(\lb)$ into
\begin{align}\label{T2}
	Z^{(1)}(\lb)=Z^{(1)}(\lb;n)=T(\lb)G(\lb).
\end{align}
Recall that $r(\lb)$ is analytic and positive in a neighborhood of $\Sigma_1 \cup \Sigma_2$, we have that $Z$ satisfies the following RH problem.
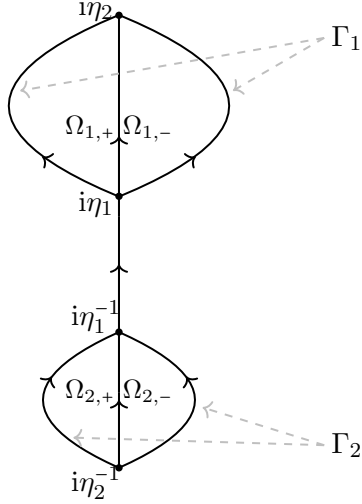
\begin{figure}
	\centering
	\tikzset{every picture/.style={line width=0.75pt}}
	\begin{tikzpicture}
		\draw [->] (0,0) -- (0,0.9);
		\draw [->] (0,1) -- (0,2.7);
		\draw [->] (0,10/3) -- (0,4.4);
		\draw [->] (-1,4.12) -- (-1.01,4.1265);
		\draw [->] (1,4.12) -- (1.01,4.1265);
		\draw [->] (-0.89,1.2) -- (-0.884,1.209);
		\draw [->] (0.89,1.2) -- (0.884,1.209);
		\draw (0,0) -- (0,1);
		\draw (0,1) -- (0,10/3);
		\draw (0,10/3) -- (0,6);
		\draw [domain=3.6:6] plot ({(\x-4.8)*(\x-4.8)-1.45},\x);
		\draw [domain=3.6:6] plot ({-(\x-4.8)*(\x-4.8)+1.45},\x);
		\draw [domain=0:2] plot ({(\x-1)*(\x-1)-1},{0.9*\x});
		\draw [domain=0:2] plot ({-(\x-1)*(\x-1)+1},{0.9*\x});
		\filldraw (0,0) circle (1pt);
		\filldraw (0,1.8) circle (1pt);
		\filldraw (0,3.6) circle (1pt);
		\filldraw (0,6) circle (1pt);
		\draw [->,dashed,lightgray] (2.7,5.7) -- (1.5,5);
		\draw [->,dashed,lightgray] (2.7,5.7) -- (-1.3,5);
		\draw [->,dashed,lightgray] (2.7,0.3) -- (1.1,0.8);
		\draw [->,dashed,lightgray] (2.7,0.3) -- (-0.6,0.4);
		\node at (-0.38,4.5) {\small $\Omega_{1,+}$};
		\node at (0.38,4.5) {\small $\Omega_{1,-}$};
		\node at (-0.38,1) {\small $\Omega_{2,+}$};
		\node at (0.38,1) {\small $\Omega_{2,-}$};
		\node at (3,5.7) {$\Gamma_1$};
		\node at (3,0.3) {$\Gamma_2$};
		\node at (-0.3,-0.2) {$\ii\eta_2^{-1}$};
		\node at (-0.3,2) {$\ii\eta_1^{-1}$};
		\node at (-0.3,3.5) {$\ii\eta_1$};
		\node at (-0.3,6.1) {$\ii\eta_2$};
	\end{tikzpicture}
	\caption{\footnotesize The regions $\Omega_{i,\pm}$, $i=1,2$, and the jump contour $\Sigma^{(1)}$ of $Z^{(1)}$.}\label{f3}
\end{figure}
\begin{Rhp}\label{r3}\
\begin{itemize}
	\item $Z^{(1)}(\lb)$ is analytic in $\ccc\setminus\Sigma^{(1)}$, where the jump contour $\Sigma^{(1)}$ is shown in Figure \ref{f3}.
	\item For $\lambda\in\Sigma^{(1)}$, $Z^{(1)}(\lb)$ satisfies the  jump condition $Z^{(1)}_+(\lb)=Z^{(1)}_-(\lb)J^{(1)}(\lb),$
	where
	\begin{align}\label{def:J1}
		J^{(1)}(\lb)=\begin{cases}
			\left(\begin{matrix}
				1&-\ii \E^{2n g(\lb)}\delta(\lb)^2r(\lb)^{-1}\\0&1
			\end{matrix}\right),&\lb\in\Gamma_1,\\
		\left(\begin{matrix}
			1&0\\-\ii \E^{-2n g(\lb)}\delta(\lb)^{-2}r(\bar\lb^{-1})^{-1}&1
		\end{matrix}\right),&\lb\in\Gamma_2,\\
	\left(\begin{matrix}
		0&\ii\\\ii&0
	\end{matrix}\right),&\lb\in\Sigma_1\cup\Sigma_2,\\
\E^{\ii(n\Omega+\Delta)\sigma_3},&\lb\in\ii(\eta_1^{-1},\eta_1).
		\end{cases}
	\end{align}
	\item As $\lb\to\infty$, we have  $Z^{(1)}(\lambda)=I+\oo(\lambda^{-1})$.
\end{itemize}
\end{Rhp}
From the signature table of Re$[g]$ in Figure \ref{fig reg t=0}, it is easily seen  that the jump matrix $J^{(1)}$ decays exponentially to $I$ on $\Gamma_{1}\cup\Gamma_{2}$ as $n\to-\infty$ except the endpoints. It inspires us to ignore the jump away from $\Sigma=\ii(\eta_2^{-1},\eta_2)$ and approximate $Z^{(1)}$ in global and local manners. More precisely, let
\begin{align}\label{def:Up}
    U(p):=\{\lb:|\lb-p|\leq \epsilon\}, \qquad p=\i\eta_{j}, \quad j=1,2,
\end{align}
where $\epsilon$ is a sufficiently small positive constant,
and 
\begin{align}\label{def:Up2}
    U(p):= \overline{U(\bar p^{-1})^{-1}},  \qquad p=\i\eta_{j}^{-1}, \quad j=1,2.
\end{align}
%Moreover, let $U:=\bigcup_{p=\ii\eta_1^{\pm1},\ii\eta_2^{\pm1}}U(p)$.
In what follows, we will construct  the global parametrix $Z^{(\infty)}(\lb)$ in Section \ref{1ng} and the local parametrix $Z^{(p)}(\lb)$ near $p=\ii \eta_1^{\pm 1}, \ii \eta_2^{\pm 1}$, in Section \ref{1nl} such that
\begin{align}\label{T3}
	Z^{(1)}(\lb)=\begin{cases}
	E(\lb)Z^{(\infty)}(\lb),&\lb\in  \mathbb{C}\setminus U,\\
	E(\lb)Z^{(p)}(\lb),&\lb\in  U(p),\quad p=\ii \eta_1^{\pm 1}, \ii \eta_2^{\pm 1},
	\end{cases}
\end{align}
where $$U:=\bigcup_{p=\ii\eta_1^{\pm1},\ii\eta_2^{\pm1}}U(p)$$ and $E(\lb)$ is an error function satisfying a small norm RH problem as shown in Section \ref{1ne}.

\subsection{Global parametrix}\label{1ng}
The global parametrix is obtained by ignoring the jump of $Z^{(1)}$ off $\Sigma$, which reads as follows.
% As $n\to-\infty$, the jump matrix $J^{(1)}$ decays exponentially to the identity off the contour $\Sigma$, and the global parametrix $Z^{(\infty)}(\lb)$ exactly solves the following RH problem.
\begin{Rhp}\label{r3.2}\
	\begin{itemize}
		\item $Z^{(\infty)}(\lb)=Z^{(\infty)}(\lb; n)$ is analytic in $\ccc\setminus\Sigma$.
        %with $\Sigma$ defined in \eqref{def Sigma}.
		\item For $\lb\in\Sigma$, $Z^{(\infty)}(\lambda)$ satisfies the jump conditions $Z_+^{(\infty)}(\lambda)=Z_-^{(\infty)}(\lambda) J^{(\infty)}(\lambda)$, where
		\begin{align*}
		    J^{(\infty)}(\lb)&=\begin{cases}
				\left(\begin{matrix}
					0&\ii\\\ii&0
				\end{matrix}\right),&\lb\in\Sigma_1\cup\Sigma_2,\\
			\E^{\ii(n\Omega+\Delta)\sigma_3},&\lb\in\ii(\eta_1^{-1},\eta_1).
			\end{cases}
		\end{align*}
		\item As $\lb\to\infty$,  we have $Z^{(\infty)}(\lb)=I+\oo(\lb^{-1})$.
	\end{itemize}
\end{Rhp}
%The above RH problem can be exactly solved.
% To construct the solution for RH problem \ref{r3.2}, firstly, we need to examine the Abel differential associated to the Riemann surface $\mathcal{R}$.
%In fact, by direct computation, it follows that
%\begin{align*}
 %  \int_{\ii \eta_1^{-1}}^{\ii \eta_1} \frac{\ddd \lb}{R(\lb)}=\frac{K(k)\sqrt{(1-l_1^2)(1-l_2^2)}}{\ii l_2},
%\end{align*}
%where $k$ is defined in \eqref{def k}. Consequently,
To solve the above RH problem, we start with the normalized Abel differential given by
%In view of \eqref{int 1/R(s)}, it is readily seen that the normalized Abel differential reads
\begin{align*}
	\omega(\lb)=\frac{\ii l_2}{2K(k)\sqrt{(1-l_1^2)(1-l_2^2)}}\frac{\ddd \lb}{R(\lb)},
\end{align*}
where $k$ is shown in \eqref{def k}, and define the Abel-Jacobi map
\begin{align}\label{defA}
	\mathcal{A}(\lb)=\int_{\ii\eta_2}^{\lb}\omega, \quad \lb \in \C \setminus\Sigma,
\end{align}
where the path of integration is on any simple arc from $\i \eta_2$ to $\lb$ which does not intersect $\Sigma$.
By direct calculations, we have 
%the values of $\mathcal{A}(\lb)$ at some special points:
\begin{align*}
	&\AAA_+(\ii\eta_1)=-\frac{\tau}{2},\qquad\AAA_+(\ii\eta_1^{-1})=-\frac{\tau}{2}-\frac{1}{2}, \\
	&\AAA_+\left(\frac{\ii(\eta_1+\eta_2)}{1+\eta_1\eta_2}\right)=\AAA(0)-\frac{\tau}{2},\qquad\AAA(0)=-\frac{1}{4}\left(1+\frac{F(\arcsin l_2,k)}{K(k)}\right),
\end{align*}
where
\begin{align*}
	\tau:=\oint_{\mathfrak{b}}\omega=\frac{\ii l_2}{K(k)\sqrt{(1-l_1^2)(1-l_2^2)}}\int_{\ii\eta_1}^{\ii\eta_2}\frac{\ddd  s}{R(s)}=\frac{\ii K(\sqrt{1-k^2})}{2K(k)}\in\ii\mathbb{R}_+
\end{align*}
is the $\mathfrak{b}$-period and $F$ is defined in \eqref{ei2}. In addition, it is straightforward to check from the definition of $\AAA$ that
%For any $\lb\in\ccc\setminus\Sigma$,  a direct calculation at $\bar\lb^{-1}$ shows that
\begin{align}
	\overline{\AAA(\bar\lb^{-1})}+\AAA(\lb)=-\frac{1}{2}, \label{sym AAA}
\end{align}
and 
%In addition, for $\lambda\in\Sigma$,  evaluating the defining integrals on each side of the contour gives the jump relations
\begin{align}\label{jumpA}
	\AAA_+(\lb)=\begin{cases}
		-\AAA_-(\lb),&\lb\in\Sigma_1,\\
		\AAA_-(\lb)-\tau,&\lb\in\ii(\eta_1^{-1},\eta_1),\\
		-\AAA_-(\lb)+1,&\lb\in\Sigma_2.
	\end{cases}
\end{align}
Next, we introduce a scalar function
\begin{align}\label{kappa}
	\kappa(\lb):=\left(\frac{\lb-\ii\eta_2}{\lb-\ii\eta_1}\right)^{\frac{1}{4}}\left(\frac{\lb-\ii/\eta_1}{\lb-\ii/\eta_2}\right)^{\frac{1}{4}}, \qquad \lb\in\ccc\setminus(\Sigma_1\cup\Sigma_2),
\end{align}
where the branch cut is chosen such that $\kappa(\lb) \to 1$ as $\lb \to \infty$.
Moreover, it is noted that $\kappa(\lb)-\kappa(\lb)^{-1}$ has a simple zero at $\lb=\ii\frac{\eta_1+\eta_2}{1+\eta_1\eta_2}$, and one has
\begin{align}\label{jumpkap}
	\kappa_+(\lb)=\ii\kappa_-(\lb),\qquad \lb\in\Sigma_1\cup\Sigma_2.
\end{align}

With the aid of the scalar functions $\AAA(\lb)$ and $\kappa(\lb)$, we define
%\begin{align}\label{def:Zinfty}
%  Z^{(\infty)}(\lb) := \begin{pmatrix}
%    Z^{(\infty)}_{11}(\lb) &  Z^{(\infty)}_{12}(\lb) \\
%    Z^{(\infty)}_{21}(\lb) &  Z^{(\infty)}_{22}(\lb)
%\end{pmatrix}
%\end{align}
%with 
\begin{subequations}\label{zinfty}
\begin{align}
	&Z^{(\infty)}_{11}(\lb)=\frac{(\kappa(\lb)+1/\kappa(\lb))\vartheta(\AAA(\lb)+\AAA(0)+\frac{1}{2}+\frac{n\Omega+\Delta}{2\pi},\tau)\vartheta(0,\tau)}{2\vartheta(\AAA(\lb)+\AAA(0)+\frac{1}{2},\tau)\vartheta(\frac{n\Omega+\Delta}{2\pi},\tau)},\\
	&Z^{(\infty)}_{12}(\lb)=\frac{(\kappa(\lb)-1/\kappa(\lb))\vartheta(-\AAA(\lb)+\AAA(0)+\frac{1}{2}+\frac{n\Omega+\Delta}{2\pi},\tau)\vartheta(0,\tau)}{2\vartheta(-\AAA(\lb)+\AAA(0)+\frac{1}{2},\tau)\vartheta(\frac{n\Omega+\Delta}{2\pi},\tau)}, \label{def:Zinfty12}\\
	&Z^{(\infty)}_{21}(\lb)=\frac{(\kappa(\lb)-1/\kappa(\lb))\vartheta(\AAA(\lb)-\AAA(0)-\frac{1}{2}+\frac{n\Omega+\Delta}{2\pi},\tau)\vartheta(0,\tau)}{2\vartheta(\AAA(\lb)-\AAA(0)-\frac{1}{2},\tau)\vartheta(\frac{n\Omega+\Delta}{2\pi},\tau)},\\
	&Z^{(\infty)}_{22}(\lb)=\frac{(\kappa(\lb)+1/\kappa(\lb))\vartheta(-\AAA(\lb)-\AAA(0)-\frac{1}{2}+\frac{n\Omega+\Delta}{2\pi},\tau)\vartheta(0,\tau)}{2\vartheta(-\AAA(\lb)-\AAA(0)-\frac{1}{2},\tau)\vartheta(\frac{n\Omega+\Delta}{2\pi},\tau)},
\end{align}
\end{subequations}
where $\vartheta(\lb,\tau)$ is the Riemann theta function defined by the Fourier series
\begin{align}\label{fou}
    \vartheta(\lb,\tau)=\sum_{l\in\mathbb{Z}}e^{2\pi\ii l\lb+\pi\ii l^2\tau},\quad \lb \in \C
\end{align}
associated with $\tau$. Note that $\vartheta(\lb,\tau)$ has a simple zero at $\lb=\frac{\tau+1}{2}$. From \eqref{defA}, \eqref{kappa} and \eqref{fou}, it follows that $Z^{(\infty)}(\lb)$ is analytic in $\C\setminus \Sigma$, and by \eqref{jumpA} and \eqref{jumpkap}, $Z^{(\infty)}(\lb)$ satisfies the jump condition of RH problem \ref{r3.2}. Since $\kappa \to 1$ as $\lb \to \infty$, this, together with the symmetry relation \eqref{sym AAA}, implies that $Z^{(\infty)}(\lb) \to I$. Thus, $Z^{(\infty)}(\lb)$ in \eqref{zinfty} indeed solves RH problem \ref{r3.2} and, in addition, satisfies the symmetry relation
\begin{align}\label{e3.20}
	Z^{(\infty)}(\lb)=\sigma_2 \overline{Z^{(\infty)}(0)^{-1} Z^{(\infty)} (\bar{\lb}^{-1})} \sigma_2.
\end{align}

\subsection{Local parametrices}\label{1nl}
Let $p=\ii\eta_j^{\pm1}$, $j=1,2$, and $U(p)$ defined in \eqref{def:Up} and \eqref{def:Up2}
be a small disc centered at $p$. The local paramertix in each $U(p)$ reads as follows. 
% In this part, we will give the local parametrices near the end points of $\Sigma_1$ and $\Sigma_2$, namely, for endpoints $p=\ii\eta_j^{\pm1}$, $j=1,2$. As $n\to-\infty$, we will show that  $Z^{(p)}(\lb)$ is described in terms of the Bessel function of index $0$ with a detailed introduction in Appendix \ref{app bessel}. In each $U(p)$, the jump conditions inspire us to construct $Z^{(p)}(\lb)$ satisfying the following RH problem.
\begin{Rhp}\  \label{r6}
\begin{itemize}
\item $Z^{(p)}(\lb)=Z^{(p)}(\lb; n)$ is analytic in $U(p)\setminus(\Gamma_1\cup\Gamma_2\cup\Sigma)$.
	\item For $\lb\in U(p)\cap(\Gamma_1\cup\Gamma_2\cup\Sigma)$, $Z^{(p)}(\lb)$ satisfies the jump condition
	\begin{align}
		Z^{(p)}_+(\lb)=Z^{(p)}_-(\lb)\times\begin{cases}
			\left(\begin{matrix}
				1&-\ii \E^{2n g(\lb)}\delta(\lb)^2r(\lb)^{-1}\\0&1
			\end{matrix}\right),&\lb\in\Gamma_1\cap U(p),\\
			\left(\begin{matrix}
				1&0\\-\ii \E^{-2n g(\lb)}\delta(\lb)^{-2}r(\bar\lb^{-1})^{-1}&1
			\end{matrix}\right),&\lb\in\Gamma_2\cap U(p),\\
			\left(\begin{matrix}
				0&\ii\\\ii&0
			\end{matrix}\right),&\lb\in(\Sigma_1\cup\Sigma_2)\cap U(p),\\
			\E^{\ii(n\Omega+\Delta)\sigma_3},&\lb\in\ii(\eta_1^{-1},\eta_1)\cap U(p).
		\end{cases}
	\end{align}
	\item As $n\to-\infty$, $Z^{(p)}(\lb)$ matches $Z^{(\infty)}(\lb)$ on the boundary $\partial U(p)$.
\end{itemize}
\end{Rhp}
The RH problem for $Z^{(p)}$ can be solved explicitly with the aid of the Bessel parametrix introduced in Appendix \ref{app bessel}. We give a sketch of the construction in what follows. 

%Here and the rest of our paper, we  denote the clockwise oriented boundary of $ U(p)$ as $\partial U(p)$.
% In the following, we need to establish the locally conformal maps with $\zeta(p)=0$,
% \begin{align*}
% 	\zeta:\ \lb\mapsto\zeta(\lb).
% \end{align*}

For $p=\ii\eta_2$, define
\begin{align}
\zeta=	\zeta(\lb):=g(\lb)^2.
\end{align}
By \eqref{asy g eta2}, it is easily seen that $\zeta(\lb)$ is analytic in $ U(\ii\eta_2)$ and $\zeta(\ii\eta_2)=0$.
%then in view of the asymptotic behaviors of $g(\lb)$ near $\ii\eta_2$ in \eqref{asy g eta2}, it follows that $\zeta(\lb)$ is analytic in $ U(\ii\eta_2)$ and naturally maps $\ii\eta_2$ to the origin point.
The local paramatrix around $\lb=\i \eta_2$ is given by
\begin{align}\label{1neta2}
	Z^{(\ii \eta_2)}(\lb)=A(\lb)\Psi_{\mathrm{Bes}}\left(\frac{n^2\zeta}{4}\right)\sigma_1\E^{n\sqrt{\zeta}\sigma_3}\left(\frac{\delta(\lb)\E^{-\frac{\pi\ii}{4}}}{\sqrt{r(\lb)}}
	\right)^{-\sigma_3},
\end{align}
where $\Psi_{\mathrm{Bes}}$ defined in \eqref{def:Bes} is the Bessel parametrix and
%is the solution of the RH problem related to the Bessel function of index 0 given in Appendix \ref{app bessel}, and 
\begin{equation*}
	A(\lb)=\frac{Z^{(\infty)}(\lb)}{\sqrt{2}}\left(\frac{\delta(\lb)\E^{-\frac{\pi\ii}{4}}}{\sqrt{r(\lb)}}
	\right)^{\sigma_3}\left(\begin{matrix}
		-\ii&1\\1&-\ii
	\end{matrix}\right)(-n\pi\sqrt{\zeta})^{\frac{\sigma_3}{2}}
\end{equation*}
with $Z^{\infty}$ and $\delta$ given in \eqref{zinfty} and \eqref{symdeltan}, respectively. 
Since $A(\lb)$ has no jump in $U(\i\eta_2)$ and admits at most $-\frac{1}{4}$-singularity at $\lb=\i\eta_2$, it follows that $A(\lb)$ is an analytic prefactor. 
In view of RH problem \ref{rhp:Bes} for $\Psi_{\mathrm{Bes}}$, one can check directly that $Z^{(\ii \eta_2)}$ in \eqref{1neta2} solves RH problem  \ref{r6} with $p=\ii \eta_2$.
%Using the global RH problem \ref{r3.2} and the Bessel parametrix in Appendix \ref{app bessel}, it is easily checked that $Z^{(\ii\eta_2)}(\lb)$ meets the requirements in RH problem \ref{r6}.

Similarly, for $\lb \in U(\ii \eta_1)$, we introduce
\begin{align}
	\zeta=\zeta(\lb):=\left(g(\lb)\pm\frac{\ii\Omega}{2}\right)^2,\quad \pm\re\lb>0,
\end{align}
and 
%Then, the local paramatrix around the endpoint $\lb=\i \eta_1$  can be constructed by
\begin{align}\label{def Zeta1}
	Z^{(\ii \eta_1)}(\lb):=A(\lb)\Psi_{\mathrm{Bes}}\left(\frac{n^2\zeta}{4}\right)\sigma_1\E^{(-n\sqrt{\zeta}\pm\frac{\ii\Omega}{2
		})\sigma_3}\left(\frac{\delta(\lb)\E^{-\frac{\pi\ii}{4}}}{\sqrt{r(\lb)}}\right)^{-\sigma_3},
\end{align}
where 
%$A(\lb)$ is analytic in $U(\i\eta_1)$ that
\begin{align*}
	A(\lb)=\frac{Z^{(\infty)}(\lb)}{\sqrt{2}}\E^{\mp\frac{\ii\Omega}{2}\sigma_3}\left(\frac{\delta(\lb)\E^{-\frac{\ii\pi}{4}}}{\sqrt{r(\lb)}}\right)^{\sigma_3}\left(\begin{matrix}
		-\ii&1\\1&-\ii
	\end{matrix}\right)(-n\pi\sqrt{\zeta})^{\frac{\sigma_3}{2}},
\end{align*}
solves RH problem  \ref{r6} with $p=\ii \eta_1$.
%Also,  using RH problem \ref{r3.2} and the Bessel parametrix in Appendix \ref{app bessel}, it follows that $Z^{(\ii\eta_1)}(\lb)$ satisfies the requirements in RH problem \ref{r6}.
Meanwhile, near $\ii\eta_j^{-1}$, $j=1,2$,  the local parametrix can be constructed through the symmetry relation
\begin{align}\label{1ni}
	Z^{(\ii \eta_j^{-1})}(\lb)=\sigma_2\overline{Z^{(\infty)}(0)^{-1}Z^{(\ii \eta_j)}(\bar\lb^{-1})}\sigma_2.
\end{align}
%Additionally, by the symmetry \eqref{e3.20}, since $Z^{(\ii\eta_j)}(\lb)$ have fulfilled the requirements, it is easy to confirm that $Z^{(\ii\eta_j^{-1})}(\lb)$ also fulfill them.
Finally, we note from \eqref{zinfty},
\eqref{1neta2}, \eqref{def Zeta1}, \eqref{1ni} and \eqref{asy bessel} that as $n\to-\infty$,
\begin{align}\label{asyZ2n}
    Z^{(p)}(\lb)Z^{(\infty)}(\lb)^{-1}=I+\oo(n^{-1}),\qquad \lb\in \partial U(p). 
\end{align}
%Therefore, we conclude that $Z^{(p)}(\lb)$ defined in \eqref{1neta2}, \eqref{def Zeta1} and \eqref{1ni} admits the local RH problem \ref{r6} for $	p=	\ii\eta_j^{\pm1}$, $j=1,2$.	

\subsection{The small-norm RH problem}\label{1ne}
%Previously,
In view of RH problems for $Z^{(1)}$, $Z^{(\infty)}$ and $Z^{(p)}$, 
it is readily seen that the error function (see \eqref{T3}) 
\begin{align}\label{defEc1}
					E(\lb)=\left\{\begin{array}{ll}
						Z^{(1)}(\lb)Z^{(\infty)}(\lb)^{-1}, &\lb\in \mathbb{C}\setminus U,\\
						Z^{(1)}(\lb)Z^{(p)}(\lb)^{-1},\quad &\lb\in   U(p),\; p=\i \eta_1^{\pm1},\i \eta_2^{\pm1},\\
					\end{array}\right.
				\end{align}
satisfies the following RH problem.
%The small-norm RH problem is uniquely solved, and consequently, the solution of RH problem \ref{r3} is well-defined via
%\begin{align}\label{T3}
%	Z^{(1)}(\lb)=\begin{cases}
%	E(\lb)Z^{(\infty)}(\lb),&\lb\in  \mathbb{C}\setminus U,\\
%	E(\lb)Z^{(p)}(\lb),&\lb\in  U(p),\ p=\ii \eta_1^{\pm 1}, \ii \eta_2^{\pm 1}.
%	\end{cases}
%\end{align}
\begin{Rhp}\label{rerrorn}\
	\begin{itemize}
		\item $E(\lambda)$ is analytic in $\ccc\setminus\Sigma^E$, where the contour $\Sigma^E$ is shown in Figure \ref{F5}.
		\item For $\lambda\in\Sigma^E$, $E(\lambda)$ satisfies the jump condition
		$E_+(\lambda)=E_-(\lambda)J^{(E)}(\lambda)$,
	where
	\begin{align}\label{def:JE}
		J^{(E)}(\lambda)=\begin{cases}
			Z^{(\infty)}(\lambda)J^{(1)}(\lambda)Z^{(\infty)}(\lambda)^{-1},&\lb\in\Gamma_j\setminus U,\ j=1,2,\\
			Z^{(p)}(\lambda)Z^{(\infty)}(\lb)^{-1},&\lb\in\partial U(p), \ p=\ii \eta_1^{\pm1}, \ii \eta_2^{\pm 1},
		\end{cases}
	\end{align}
    and $J^{(1)}$ is defined in \eqref{def:J1}. 
		\item As $\lambda\to\infty$, we have $E(\lambda)=I+\oo(\lambda^{-1})$.
	\end{itemize}
\end{Rhp}
Since the jump matrix $J^{(1)}$ of $Z^{(1)}$  tends to the identity matrix exponentially fast uniformly for $\lb\in(\Gamma_1\cup\Gamma_2)\setminus U$ as $n\to-\infty$, we have from   \eqref{def:JE} and \eqref{asyZ2n} that
\begin{align}\label{asymVEn1}
    J^{(E)}(\lb)=I+\oo(n^{-1}),\quad n\to-\infty.
\end{align}
By the small-norm RH problem theory \cite{RN10}, we conclude that
\begin{align}\label{eerror}
    E(\lb)=I+\oo(n^{-1}),\qquad n\to-\infty,
\end{align}
uniformly for $\lb \in \ccc\setminus\Sigma^E$. 

% we obtain from  \eqref{def:JE} and \eqref{asyZ2n} that
% \begin{align}\label{eVb1}
%     J^{(E)}(\lb)=I+\oo(n^{-1}),\qquad \lb\in\partial U.
% \end{align}
% In addition, seeing the definition of $J^{(1)}(\lb)$ in RH problem \ref{r3}, we obtain that $J^{(1)}(\lb)-I$ decays exponentially to zero as $n\to-\infty$, that is, for some constant $c$ and $\lb\in(\Gamma_1\cup\Gamma_2)\setminus U$,
% \begin{align}\label{eVb2}
%     J^{(1)}(\lb)=I+\oo(\E^{-cn}).
% \end{align}
% Therefore, since both $Z^{(\infty)}(\lb)$ and its inverse are bounded on $\Gamma_1\cup\Gamma_2$, in view of  \eqref{eVb2}, it is also easily seen that for $\lb\in\Gamma_j\setminus U$,
% \begin{align}\label{asymVEn1}
%     J^{(E)}(\lb)=I+\oo(n^{-1}),\quad n\to-\infty.
% \end{align}
% Using the asymptotic properties \eqref{eVb1} and \eqref{asymVEn1}, it follows from the small-norm RH problem theory \cite{RN10} that there exists a unique solution to RH problem   \ref{rerrorn} and 
% \begin{align}\label{eerror}
%     E(\lb)=I+\oo(n^{-1}),\quad n\to-\infty.
% \end{align}
%Finally, since RH problem \ref{rerrorn} is uniquely solved, the solution of RH problem \ref{r3} is well-defined and we obtain \eqref{T3}. 

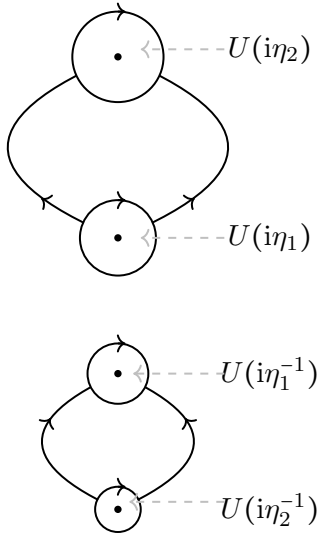
\begin{figure}
	\centering
	\tikzset{every picture/.style={line width=0.75pt}}
	\begin{tikzpicture}
		\draw [->] (-1,4.12) -- (-1.01,4.1265);
		\draw [->] (1,4.12) -- (1.01,4.1265);
		\draw [->] (-0.89,1.2) -- (-0.884,1.209);
		\draw [->] (0.89,1.2) -- (0.884,1.209);
		\draw [->] (0,6.6) -- (0.1,6.6);
		\draw [->] (0,4.1) -- (0.1,4.1);
		\draw [->] (0,2.2) -- (0.1,2.2);
		\draw [->] (0,0.3) -- (0.1,0.3);
		\draw [domain=3.8:5.75] plot ({(\x-4.8)*(\x-4.8)-1.45},\x);
		\draw [domain=3.8:5.75] plot ({-(\x-4.8)*(\x-4.8)+1.45},\x);
		\draw [domain=0.14:1.8] plot ({(\x-1)*(\x-1)-1},{0.9*\x});
		\draw [domain=0.14:1.8] plot ({-(\x-1)*(\x-1)+1},{0.9*\x});
		\draw (0,0) circle (0.3cm);
		\draw (0,1.8) circle (0.4cm);
		\draw (0,3.6) circle (0.5cm);
		\draw (0,6) circle (0.6cm);
		\filldraw (0,0) circle (1pt);
		\filldraw (0,1.8) circle (1pt);
		\filldraw (0,3.6) circle (1pt);
		\filldraw (0,6) circle (1pt);
		\node at (2,6.1) {$U(\ii\eta_2)$};
		\draw [->,dashed,lightgray] (1.4,6.1) -- (0.3,6.1);
		\node at (2,3.6) {$U(\ii\eta_1)$};
		\draw [->,dashed,lightgray] (1.4,3.6) -- (0.3,3.6);
		\node at (2,1.8) {$U(\ii\eta_1^{-1})$};
		\draw [->,dashed,lightgray] (1.4,1.8) -- (0.2,1.8);
		\node at (2,0) {$U(\ii\eta_2^{-1})$};
		\draw [->,dashed,lightgray] (1.4,0.1) -- (0.15,0.1);
	\end{tikzpicture}
	\caption{\footnotesize The jump contour $\Sigma^E$ of $E$.}\label{F5}
\end{figure}

\section{Large-$t$ asymptotic analysis of the RH problem for $Z(\lb;n+1,t)$}\label{sec 4}

To establish the large-$t$ asymptotics of $q_n$, one needs to analyze the RH problem for  $Z(\lb; n+1,t)$ as $t\to +\infty$, which is the main goal of this section. 
% In this section, we consider the asymptotic behavior of $q_n(t)$ as $t\to +\infty.$
% The  reconstruction formula \eqref{rec} inspires us to analyze the large-time behavior of $Z(\lb; n+1,t)$.
Note that, in this case, the phase function $\phi$ defined in \eqref{phi} in the jump of $Z(\lb; n+1,t)$ reads
\begin{align}\label{def phi}
\phi(\lb):=\phi(\lb; n+1,t)=-\ii t(\lb+\lb^{-1}-2)+(n+1)\ln\lb.
\end{align}
% For brevity, we shall write simply $\phi(\lb)$ in place of $\phi(\lb; n+1,t)$ in this section.
% The large-time behavior of $Z(\lb;n+1,t)$ is then determined by the signature table of Re$[\phi(\lb)]$.
A direct calculation shows that
\begin{align*}
	\Re [\phi(\lb)]=t\left( \I\lb(1-|\lb|^{-2})+\frac{n+1}{t}\ln|\lb|\right).
\end{align*}
If 
$
 \xi=\frac{n+1}{t}>-\frac{\eta_{1}-\eta_{1}^{-1}}{\ln\eta_{1}},
$
% we derive that when
%  \begin{align*}
% 	\xi>-\frac{\eta_{1}-\eta_{1}^{-1}}{\ln\eta_{1}},
% \end{align*}
we see from the signature table of $\Re [\phi(\lb)]$ illustrated in Figure \ref{fig phi} and \eqref{J} that the jump matrix $J$ tends to the identity matrix exponentially fast as $t\to+\infty$, which implies that $Z(\lb;n+1,t) \to I $ as $t\to+\infty$. As a consequence, we will only carry out a detailed large-$t$ asymptotic analysis of RH  problem \ref{RHP0} for $Z(\lb;n+1,t)$ with $\xi<-\frac{\eta_{1}-\eta_{1}^{-1}}{\ln\eta_{1}}$, or equivalently, in the regions $T_I, H_{I}, T_{II}$ and $H_{II}$ given in Definition \ref{def region}. The analysis will differ in different regions. We start with the introduction of some auxiliary functions used when $\xi \in H_I \cup T_{II} \cup H_{II}$.

%As in the previous section, we start with the introduction of some auxiliary functions
% focus on the case that 
% $\xi<-\frac{\eta_{1}-\eta_{1}^{-1}}{\ln\eta_{1}}$, we consider the asymptotic behavior of the gas RH  problem \ref{RHP0}. Firstly, we introduce the  so-called $g$-function mechanism \cite{gfunction, G2001,GT2002}.
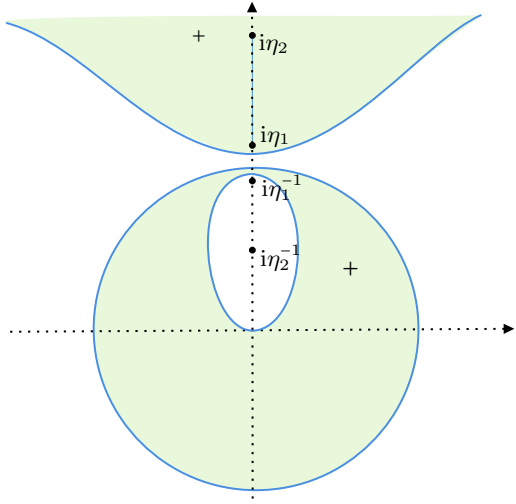
\begin{figure}		
\centering
	\tikzset{every picture/.style={line width=0.75pt}} %set default line width to 0.75pt	
	\begin{tikzpicture}[x=0.75pt,y=0.75pt,yscale=-0.6,xscale=0.6]
%Shape: Polygon Curved [id:ds4117880296419856] 
\draw  [color={rgb, 255:red, 0; green, 0; blue, 0 }  ,draw opacity=0 ][fill={rgb, 255:red, 184; green, 233; blue, 134 }  ,fill opacity=0.3 ] (421.67,114.01) .. controls (383.67,148.01) and (340.41,161.2) .. (317.4,161.36) .. controls (294.39,161.52) and (251.67,144.68) .. (211,112.01) .. controls (170.33,79.34) and (136.67,56.84) .. (113.67,51.34) .. controls (90.67,45.84) and (288.92,45.33) .. (317.35,45.38) .. controls (345.79,45.43) and (505.67,46.01) .. (509,44.68) .. controls (512.33,43.34) and (459.67,80.01) .. (421.67,114.01) -- cycle ;
%Shape: Circle [id:dp5693389500073517] 
\draw  [color={rgb, 255:red, 74; green, 144; blue, 226 }  ,draw opacity=1 ][fill={rgb, 255:red, 184; green, 233; blue, 134 }  ,fill opacity=0.3 ] (186.49,307.99) .. controls (186.49,233.37) and (246.98,172.89) .. (321.6,172.89) .. controls (396.21,172.89) and (456.7,233.37) .. (456.7,307.99) .. controls (456.7,382.61) and (396.21,443.1) .. (321.6,443.1) .. controls (246.98,443.1) and (186.49,382.61) .. (186.49,307.99) -- cycle ;
%Straight Lines [id:da18159634992021056] 
\draw [color={rgb, 255:red, 74; green, 144; blue, 226 }  ,draw opacity=0 ][fill={rgb, 255:red, 74; green, 144; blue, 226 }  ,fill opacity=1 ][line width=0.75]  [dash pattern={on 0.84pt off 2.51pt}]  (318.5,61.86) -- (318.5,154.02) ;
%Straight Lines [id:da26113900821192315] 
\draw [color={rgb, 255:red, 0; green, 0; blue, 0 }  ,draw opacity=1 ] [dash pattern={on 0.84pt off 2.51pt}]  (117,310.28) -- (533.5,307.8) ;
\draw [shift={(536.5,307.78)}, rotate = 179.66] [fill={rgb, 255:red, 0; green, 0; blue, 0 }  ,fill opacity=1 ][line width=0.08]  [draw opacity=0] (8.93,-4.29) -- (0,0) -- (8.93,4.29) -- cycle    ;
%Straight Lines [id:da5042170200967123] 
\draw [color={rgb, 255:red, 255; green, 255; blue, 255 }  ,draw opacity=0 ][fill={rgb, 255:red, 74; green, 144; blue, 226 }  ,fill opacity=0 ][line width=1.5]    (318.5,183.86) -- (318.5,241.86) ;
%Curve Lines [id:da0457622746347256] 
\draw [color={rgb, 255:red, 74; green, 144; blue, 226 }  ,draw opacity=1 ]   (113.67,51.34) .. controls (182.17,69.84) and (236.19,159.55) .. (317.4,161.36) ;
%Curve Lines [id:da6736321485044326] 
\draw [color={rgb, 255:red, 74; green, 144; blue, 226 }  ,draw opacity=1 ]   (509,44.68) .. controls (459,67.34) and (398.61,159.55) .. (317.4,161.36) ;
%Curve Lines [id:da6226795203061964] 
\draw [color={rgb, 255:red, 74; green, 144; blue, 226 }  ,draw opacity=1 ][fill={rgb, 255:red, 255; green, 255; blue, 255 }  ,fill opacity=1 ]   (317.67,178.01) .. controls (374.18,181.72) and (364.18,307.61) .. (319,309.39) ;
%Curve Lines [id:da06582762230718975] 
\draw [color={rgb, 255:red, 74; green, 144; blue, 226 }  ,draw opacity=1 ][fill={rgb, 255:red, 255; green, 255; blue, 255 }  ,fill opacity=1 ]   (317.67,178.01) .. controls (261.16,181.72) and (278.18,309.61) .. (319,309.39) ;
%Straight Lines [id:da6615319433178072] 
\draw  [dash pattern={on 0.84pt off 2.51pt}]  (318.75,450.15) -- (318.25,36.57) ;
\draw [shift={(318.25,33.57)}, rotate = 89.93] [fill={rgb, 255:red, 0; green, 0; blue, 0 }  ][line width=0.08]  [draw opacity=0] (8.93,-4.29) -- (0,0) -- (8.93,4.29) -- cycle    ;
\draw [shift={(318.5,241.86) }, rotate = 90] [color={rgb, 255:red, 0; green, 0; blue, 0}  ,draw opacity=1 ][fill={rgb, 255:red, 0; green, 0; blue, 0}  ,fill opacity=1 ][line width=1]      (0, 0) circle [x radius= 1.74, y radius= 1.74]   ;
\draw [shift={(318.5,183.86)}, rotate = 90] [color={rgb, 255:red, 0; green, 0; blue, 0}  ,draw opacity=1 ][fill={rgb, 255:red, 0; green, 0; blue, 0}  ,fill opacity=1 ][line width=1]      (0, 0) circle [x radius= 1.74, y radius= 1.74]   ;
\draw [shift={(318.5,61.86) }, rotate = 90] [color={rgb, 255:red, 0; green, 0; blue, 0}  ,draw opacity=1 ][fill={rgb, 255:red, 0; green, 0; blue, 0}  ,fill opacity=1 ][line width=1]      (0, 0) circle [x radius= 1.74, y radius= 1.74]   ;
\draw [shift={(318.5,154.02)}, rotate = 90] [color={rgb, 255:red, 0; green, 0; blue, 0}  ,draw opacity=1 ][fill={rgb, 255:red, 0; green, 0; blue, 0}  ,fill opacity=1 ][line width=1]      (0, 0) circle [x radius= 1.74, y radius= 1.74]   ;

% Text Node
\draw (323.6,176.29) node [anchor=north west][inner sep=0.75pt]  [font=\footnotesize]  {$\mathrm{i} \eta _{1}^{-1}$};
% Text Node
\draw (321.3,58.25) node [anchor=north west][inner sep=0.75pt]  [font=\footnotesize]  {$\mathrm{i} \eta _{2}$};
% Text Node
\draw (322.5,235) node [anchor=north west][inner sep=0.75pt]  [font=\footnotesize]  {$\mathrm{i} \eta _{2}^{-1}$};
% Text Node
\draw (322.5,135) node [anchor=north west][inner sep=0.75pt]  [font=\footnotesize]  {$\mathrm{i} \eta _{1}$};
% Text Node
\draw (265.37,54.41) node [anchor=north west][inner sep=0.75pt]  [font=\footnotesize]  {$+$};
% Text Node
\draw (390.53,249.24) node [anchor=north west][inner sep=0.75pt]    {$+$};
\end{tikzpicture}
	\caption{\footnotesize The signature table of $\Re[\phi]$ when $	\xi>-\frac{\eta_{1}-\eta_{1}^{-1}}{\ln\eta_{1}}$. $\Re [\phi]>0$ in the green region,  while $\Re[\phi]<0$ in the white region, and $\Re[\phi]=0$ on the blue curve. }\label{fig phi}
\end{figure}

\subsection{The auxiliary functions}\label{aux2}
% In region $T_I$, the off-diagonal entries of the jump matrix $J$ given in \eqref{J} are exponentially small as $t \to +\infty$;  hence no auxiliary functions are required. By contrast, in regions $H_I \cup T_{II} \cup H_{II}$ (see  Definition \ref{def region}), we introduce $g$-functions to control the exponentially growing entries in the jump matrix $J$.
%For $\xi \in T_I$, we denote
%\begin{align*}
%	g(\lb)=\frac{\phi(\lb)}{2t},
%\end{align*}
%while for $\xi \in H_I \cup T_{II} \cup H_{II}$ given in Definition \ref{def region}, we need a new phase function.
For $\lb\in \mathbb{C}\setminus (\i[\eta_1,\alpha(\xi) ]\cup \i [\alpha(\xi)^{-1},\eta_1^{-1}])$, we define
\begin{align}
R(\lb)=R(\lb;\xi):=\sqrt{(\lb-\i\eta_{1})(\lb-\i\alpha(\xi))(\lb-\i\eta_{1}^{-1})(\lb-\i\alpha(\xi)^{-1})},
\label{def R}
\end{align}
which determines a  genus-1 Riemann surface $\mathcal{R}$  with a homology basis $\{\mathfrak{a}, \mathfrak{b}\}$ as shown in Figure \ref{a&b}, 
% Note that the branch cuts of $R(\lb)$ are $\i(\eta_1,\alpha(\xi) )\cup \i (\alpha(\xi)^{-1},\eta_1^{-1})$.
and we choose the principal sheet so that $R(\lb)\to-\lb^2$ as $\lb\to\infty$. Here, 
\begin{align}\label{def alpha}
		\alpha(\xi)=\left\{\begin{array}{ll}
	\text{solution of \eqref{equ xialpha}},&\xi\in H_I,\\
		\eta_2,& \xi \in T_{II}\cup H_{II},\\	
	\end{array}\right.
\end{align}
where the equation for $\alpha(\xi)$ when $\xi\in H_I$ is given by 
\begin{align}\label{equ xialpha}
	%&\xi=\frac{-\frac{(\alpha(\xi) - 1)^2}{\alpha(\xi)}\left(\frac{(\alpha(\xi) + 1)^2}{\alpha(\xi)}+\frac{(\eta_1 - 1)^2}{\eta_1}\left(\frac{2\Pi(l_1^2,k)}{K(k(\xi))}-1\right)+\frac{(\eta_1 + 1)^2}{\eta_1}\left(\frac{E(k(\xi))}{K(k(\xi))}-1\right)\right)}{\alpha(\xi)+\alpha(\xi)^{-1}+2 - 4\frac{\Pi(l_1^2,k(\xi))}{K(k(\xi))}},\\
    &\xi=\frac{-\alpha(\xi)^2-\frac{1}{\alpha(\xi)^2}+\frac{(\alpha(\xi) +\frac{1}{\alpha(\xi)}+\eta_1 +\frac{1}{\eta_1})}{2}(\alpha(\xi) +\frac{1}{\alpha(\xi)}+2-\frac{4\Pi(l_1^2,k(\xi))}{K(k(\xi))})+\frac{\int_{\eta_1}^{\eta_1^{-1}}\frac{s^2+s^{-2}}{R(s)}	\ddd  s}{\int_{\eta_1}^{\eta_1^{-1}}\frac{1}{R(s)}	\ddd  s}}{\alpha(\xi)+\alpha(\xi)^{-1}+2 - 4\frac{\Pi(l_1^2,k(\xi))}{K(k(\xi))}}
\end{align}
with
\begin{align}
    &k(\xi)=\dfrac{l_1(\alpha(\xi)+1)}{\alpha(\xi)-1}.\label{equ xialpha2}
\end{align}
In Appendix \ref{app5}, we show that the equation \eqref{equ xialpha} admits a unique solution $\alpha(\xi)\in(\eta_1,\eta_2)$ for each $\xi\in (\xi_{\mathrm{crit}},-\frac{\eta_1-\eta_1^{-1}}{\ln\eta_1})$, where $\xi_{\mathrm{crit}}$ is defined in \eqref{def xi}.
% Consequently, for each $\xi\in (\xi_{\mathrm{crit}},-\frac{\eta_1-\eta_1^{-1}}{\ln\eta_1})$, there exists a unique $\alpha(\xi)\in(\eta_1,\eta_2)$ as a solution of . 
A direct calculation shows that
    \begin{align}\label{sym R}
     R(\lb)=-\lb^2\overline{R(\bar{\lb}^{-1})}.
    \end{align}

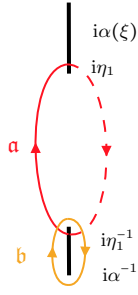
\begin{figure}[h]

\centering
\tikzset{every picture/.style={line width=0.75pt}} %set default line width to 0.75pt

\begin{tikzpicture}[x=0.75pt,y=0.75pt,yscale=-0.6,xscale=0.6]
	%uncomment if require: \path (0,530); %set diagram left start at 0, and has height of 530	
	%Straight Lines [id:da20069539428134475]
	\draw [line width=1.5]    (333.4,15.78) -- (333.58,75.35) ;
	%Straight Lines [id:da7511472562327318]
	\draw [line width=1.5]    (333.4,203.71) -- (333.64,244.47) ;
	%Curve Lines [id:da6927157975055704]
	\draw [color={rgb, 255:red, 254; green, 19; blue, 34 }  ,draw opacity=1 ]   (333.22,209.96) .. controls (299.88,205.96) and (293.88,72.63) .. (331.88,67.63) ;
	\draw [shift={(305.73,132.97)}, rotate = 89.37] [fill={rgb, 255:red, 254; green, 19; blue, 34 }  ,fill opacity=1 ][line width=0.08]  [draw opacity=0] (8.93,-4.29) -- (0,0) -- (8.93,4.29) -- cycle    ;
	%Curve Lines [id:da8333773003600226]
	\draw [color={rgb, 255:red, 254; green, 19; blue, 34 }  ,draw opacity=1 ] [dash pattern={on 4.5pt off 4.5pt}]  (331.88,67.63) .. controls (374.88,67.96) and (376.22,216.96) .. (333.22,209.96) ;
	\draw [shift={(364.8,144.82)}, rotate = 269.64] [fill={rgb, 255:red, 254; green, 19; blue, 34 }  ,fill opacity=1 ][line width=0.08]  [draw opacity=0] (8.93,-4.29) -- (0,0) -- (8.93,4.29) -- cycle    ;
	%Curve Lines [id:da45061017229372224]
	\draw [color={rgb, 255:red, 245; green, 166; blue, 35 }  ,draw opacity=1 ]   (333.55,252.5) .. controls (313.22,253.84) and (317.22,197.38) .. (332.55,197.17) ;
	\draw [shift={(320,220.4)}, rotate = 93.16] [fill={rgb, 255:red, 245; green, 166; blue, 35 }  ,fill opacity=1 ][line width=0.08]  [draw opacity=0] (8.93,-4.29) -- (0,0) -- (8.93,4.29) -- cycle    ;
	%Curve Lines [id:da21736018277154978]
	\draw [color={rgb, 255:red, 245; green, 166; blue, 35 }  ,draw opacity=1 ]   (332.55,197.17) .. controls (353.22,196.17) and (350.22,253.17) .. (333.55,252.5) ;
	\draw [shift={(346.8,229.33)}, rotate = 272.53] [fill={rgb, 255:red, 245; green, 166; blue, 35 }  ,fill opacity=1 ][line width=0.08]  [draw opacity=0] (8.93,-4.29) -- (0,0) -- (8.93,4.29) -- cycle    ;
	
	% Text Node
	\draw (359.42,233.2) node [anchor=north west][inner sep=0.75pt]  [font=\footnotesize,xscale=0.8,yscale=0.8]  {$\ii   \alpha^{-1}$};
	% Text Node
	\draw (350,62.02) node [anchor=north west][inner sep=0.75pt]  [font=\footnotesize,xscale=0.8,yscale=0.8]  {$\ii  \eta _{1}$};
	% Text Node
	\draw (351.47,30.82) node [anchor=north west][inner sep=0.75pt]  [font=\footnotesize,xscale=0.8,yscale=0.8]  {$\ii  \alpha (\xi)$};
	% Text Node
	\draw (358.02,204.13) node [anchor=north west][inner sep=0.75pt]  [font=\footnotesize,xscale=0.8,yscale=0.8]  {$\ii  \eta _{1}^{-1}$};
	% Text Node
	\draw (280.33,130.07) node [anchor=north west][inner sep=0.75pt]  [color={rgb, 255:red, 254; green, 19; blue, 34 }  ,opacity=1 ,xscale=0.8,yscale=0.8]  {$\mathfrak{a}$};
	% Text Node
	\draw (286.67,219.07) node [anchor=north west][inner sep=0.75pt]  [color={rgb, 255:red, 245; green, 166; blue, 35}  ,opacity=1 ,xscale=0.8,yscale=0.8]  {$\mathfrak{b}$};		
\end{tikzpicture}
\caption{\footnotesize  The basis of the  homology basis elements $\mathfrak{a}$ (red) and  $\mathfrak{b}$ (orange) of the Riemann surface $\mathcal{R}$ associted with \eqref{def R}. The dashed curves are on the lower sheet of $\mathcal{R}$. } \label{a&b}
	\end{figure}

%	The $\lb$-derivative of $\phi$ given in \eqref{def phi} reads
%	\begin{align}\label{def dphi}
%	\phi_\lb(\lb)=	\frac{n+1}{\lb}-\i t\left( 1-\frac{1}{\lb^2}\right) ,
%	\end{align}
%and it inspires us to  define
We next define 
	\begin{align}
	g(\lb)=g(\lb;\xi):=\frac{1}{2}\int_{\i \alpha(\xi)}^{\lb}\frac{\ii(s^2+s^{-2})+c_1(\xi)(s-s^{-1}) +c_0(\xi)}{R(s)}	\ddd  s,\label{dg}
	\end{align}
	%which is  a holomorphic function on $\mathcal{R}$, and 
where 
\begin{equation}\label{c1c0}
    \begin{aligned}
c_1(\xi)&:=\xi+\frac{\eta_{1}+\eta_{1}^{-1}+\alpha(\xi)+\alpha(\xi)^{-1}}{2}\in\mathbb{R},\qquad
    \\
	c_0(\xi) &:=-\ii\frac{\int_{\eta_1}^{\eta_1^{-1}}\frac{-(s^2+s^{-2})+c_1(\xi)(s+s^{-1}) }{R(s)}	\ddd  s}{\int_{\eta_1}^{\eta_1^{-1}}\frac{1}{R(s)}	\ddd  s}\in \ii\mathbb{R}.
	\end{aligned}
\end{equation}

Then, $g(\lb)$  admits the following proposition, whose proof  is similar to that of Proposition~\ref{png} and is therefore omitted.
%The proof  is similar to that of Proposition~\ref{png} and is therefore omitted.
\begin{prop} \label{propg2}
The function $g(\lb)$ defined in \eqref{dg} satisfies the following properties.
\begin{itemize}
        \item[\rm{(a)}]	Let $\ddd g$ be the differential of $g(\lb)$, it then follows that 
        \begin{align}\label{g a period}
		\oint_{  \mathfrak{a}}\ddd  g=0.
	\end{align}
    \item[\rm{(b)}] $ g(\lb)-\frac{\phi(\lb)}{2t}$ is analytic in $\C\setminus \i (\alpha(\xi)^{-1},\alpha(\xi)) $ and  satisfies the following conditions:
\begin{align}
	g_+(\lb)=
	\begin{cases}
		-g_-(\lb),&\lb\in\ii(\eta_{1},\alpha(\xi))\cup\ii(\alpha(\xi)^{-1},\eta_{1}^{-1}),\\
		g_-(\lb)-\ii\Omega,&\lb\in\ii(\eta_1^{-1},\eta_1),
	\end{cases}
\end{align}  
%\begin{align}
%	\begin{aligned}
%		& g_+(\lb) + g_-(\lb) = 0, & \lb \in \ii(\eta_{1},\alpha(\xi))\cup\ii(\alpha(\xi)^{-1},\eta_{1}^{-1}),  \\
%	& g_+ (\lb) - g_-(\lb) = -\i\Omega, & \lb \in \ii(\eta_{1}^{-1},\eta_{1}),
%	\end{aligned} \label{gconstraint}
%\end{align}
where 
\begin{align}
	\Omega=\i
		\oint_{ \mathfrak{b}}\ddd  g \in \R.\label{def Omegj}
\end{align}
\item[\rm{(c)}] $g(\lb)$ admits the symmetry relation
\begin{align}\label{sym g}
    g(\lb)=-\overline{g(\bar{\lb}^{-1})}.
\end{align}
\item[\rm{(d)}]  
As $\lb \rightarrow  \infty$, we have
\begin{align}
	 2tg(\lb)-\phi(\lb) =\oo(1).
	\end{align}
    
%$g(\lb)$ admits the asymptotic properties:
%\begin{align}
%&2tg(\lb)-\phi(\lb) =\oo(1), \qquad \lambda\to \infty,\label{e5.10}\\
%	&g(\lambda)=\frac{(\eta_2+\eta_2^{-1}-\kappa_1)(-\ii\lb-\eta_2)^{1/2}}{\sqrt{(\eta_2-\eta_1)(\eta_2-\eta_1^{-1})(\eta_2-\eta_2^{-1})}}(1+\oo(\lambda-\ii\eta_2)), \qquad \lambda\to\ii\eta_2, \label{asy g2 eta2}
%\end{align}

\end{itemize}

\end{prop}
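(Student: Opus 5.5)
The proof will mirror that of Proposition \ref{png}, working directly with the integrand
\begin{equation*}
P(s):=\ii(s^2+s^{-2})+c_1(\xi)(s-s^{-1})+c_0(\xi)
\end{equation*}
of \eqref{dg}. I would establish item (d) first, since it fixes what $c_1$ must be. As $s\to\infty$ we have $R(s)=-s^2\bigl(1-\tfrac{\ii}{2}(\eta_1+\eta_1^{-1}+\alpha+\alpha^{-1})s^{-1}+\oo(s^{-2})\bigr)$, so
\begin{equation*}
\frac{P(s)}{2R(s)}=-\frac{\ii}{2}-\frac{1}{2s}\Bigl(c_1-\frac{\eta_1+\eta_1^{-1}+\alpha+\alpha^{-1}}{2}\Bigr)+\oo(s^{-2}).
\end{equation*}
By \eqref{c1c0} the coefficient of $s^{-1}$ is $-\xi/2$. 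Comparing with $\phi'(s)/(2t)=-\tfrac{\ii}{2}(1-s^{-2})+\tfrac{\xi}{2s}$ from \eqref{def phi}, the constant and $s^{-1}$ terms differ only by sign conventions that should be checked against the principal-sheet normalization $R\sim-\lb^2$. The check near $s=0$ is analogous: by \eqref{sym R}, $R(s)=-s^2\overline{R(\bar s^{-1})}$, and $R(0)=1$ gives $R(s)\approx 1$ there. So $P/(2R)\approx \tfrac{\ii}{2}s^{-2}+\dots$, whose $s^{-2}$ and $s^{-1}$ terms must match those of $\phi'/(2t)$ at $0$; the matching of the $s^{-1}$ coefficient is where $c_1$ again enters. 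Once the local expansions of $g'-\phi'/(2t)$ at $0$ and $\infty$ are shown to have no $s^{-1}$ residue mismatch, $2tg-\phi$ is single-valued and bounded near $\infty$, which gives (d). It also gives the analyticity claim in (b), since $\ln\lb$ produces the cut along the negative imaginary axis in both $g$ and $\phi/(2t)$.

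For (a), I would collapse $\mathfrak{a}$ onto the gap $\ii(\eta_1^{-1},\eta_1)$ traversed on both sheets, so that $\oint_{\mathfrak{a}}\ddd g=\pm\int_{\ii\eta_1^{-1}}^{\ii\eta_1}P(s)/R(s)\,\ddd s$. Substituting $s=\ii u$ turns this into real integrals, up to the orientation convention written in \eqref{c1c0} as $\int_{\eta_1}^{\eta_1^{-1}}$. The definition of $c_0$ is precisely the statement that this integral vanishes, which proves (a). Reality of $c_1$ is immediate, and $c_0\in\ii\R$ follows because $R(\ii u)$ is real on the gap.

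For the jumps in (b): $R_+=-R_-$ on the cuts $\ii(\eta_1,\alpha)\cup\ii(\alpha^{-1},\eta_1^{-1})$ and the base point is the branch point $\ii\alpha$, so $g_+=-g_-$ on $\ii(\eta_1,\alpha)$. Carrying this past the gap requires $g_\pm(\ii\eta_1)$ to agree up to the $\mathfrak{b}$-period. In the gap, $g_+-g_-$ is constant and equals $-\oint_{\mathfrak{b}}\ddd g$, which is $-\ii\Omega$ by \eqref{def Omegj}. Across $\ii(\alpha^{-1},\eta_1^{-1})$, (a) ensures that the extra contribution vanishes, so $g_+=-g_-$ holds there as well. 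Reality of $\Omega$ follows from $P/R$ being purely imaginary times $\ddd s$ along the imaginary axis on the cut.

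For (c), I would use \eqref{sym R} together with the fact that $P$ satisfies $P(\bar s^{-1})=\overline{P(s)}$. This holds because $c_1\in\R$, $c_0\in\ii\R$, and $\ii(s^2+s^{-2})$ and $s-s^{-1}$ have the required behavior when evaluated at $\bar s^{-1}$ and conjugated. Changing variables $s\mapsto\bar s^{-1}$ in \eqref{dg}, with $\ddd(\bar s^{-1})=-\bar s^{-2}\,\ddd\bar s$, gives $g(\bar\lb^{-1})=-\overline{g(\lb)}+\mathrm{const}$. The constant is fixed by evaluating at $\lb=\ii\alpha$, where $g=0$ and $\bar\lb^{-1}=\ii\alpha^{-1}$; one has $g(\ii\alpha^{-1})=0$ because the integral from $\ii\alpha$ to $\ii\alpha^{-1}$ is a combination of a half $\mathfrak{a}$-period and jumps that cancel. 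The main obstacle is this bookkeeping of the integration constant and the sign and orientation conventions, as in Proposition \ref{png}(c), together with verifying the $s^{-1}$ residue matching at $0$ and $\infty$ in (d).
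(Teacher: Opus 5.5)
Your plan is the paper's own route (the paper omits this proof as parallel to Proposition~\ref{png}), but the step you deferred is exactly where the argument breaks. Your expansion is right. With $R\sim-\lb^2$ and $c_1$ as printed in \eqref{c1c0},
\begin{equation*}
\frac{P(s)}{2R(s)}-\frac{\phi'(s)}{2t}=-\frac{\xi}{s}+\oo(s^{-2}),\qquad s\to\infty .
\end{equation*}
This is not a matter of sign conventions. The constant terms already agree ($-\ii/2$ on both sides), and passing to the other sheet ($R\sim+\lb^2$) would destroy that agreement, so nothing is left to adjust. The same happens at $0$. Set $S:=\eta_1+\eta_1^{-1}+\alpha+\alpha^{-1}$; then $R(s)=1+\tfrac{\ii}{2}Ss+\oo(s^2)$ near $0$. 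Hence the residue of $g'$ at $0$ is $\tfrac12(\tfrac S2-c_1)=-\tfrac{\xi}{2}$, while $\phi'/(2t)$ has residue $+\tfrac{\xi}{2}$. So with the printed $c_1$, $2tg-\phi$ grows like $-2t\xi\ln\lb$, and $g-\phi/(2t)$ jumps by a nonzero constant across the logarithmic cut. Your argument therefore cannot deliver (d), or the analyticity claim in (b), for $\xi\neq0$.

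The missing observation is that the normalization at $\infty$ forces $c_1(\xi)=\tfrac S2-\xi$. The sign of $\xi$ in \eqref{c1c0} is a misprint, and the corrected sign is the one the paper uses elsewhere. First, impose $P(\ii\alpha)=0$ with $c_0$ from \eqref{c1c0}. This gives $c_1\bigl(\alpha+\alpha^{-1}+2-4\Pi(l_1^2,k(\xi))/K(k(\xi))\bigr)=\alpha^2+\alpha^{-2}-Q$, where $Q$ is the ratio of gap integrals in \eqref{equ xialpha}. This coincides with \eqref{equ xialpha} exactly when $c_1=\tfrac S2-\xi$. Second, $P'(\ii\alpha)=(1-\alpha^{-2})\bigl(c_1-2(\alpha+\alpha^{-1})\bigr)$ reproduces the factor $2\xi+3(\alpha+\alpha^{-1})-\eta_1-\eta_1^{-1}$ in the expansion of $g$ at $\ii\alpha(\xi)$ only with this sign. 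With the corrected $c_1$, the $s^{-1}$ coefficients of $g'$ and $\phi'/(2t)$ agree at both $0$ and $\infty$ (both equal $\xi/2$), and the rest of your argument goes through.

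Three smaller slips should also be fixed.
\begin{itemize}
\item In (c), every term of $P$, including $c_0=-\overline{c_0}$, satisfies $f(\bar s^{-1})=-\overline{f(s)}$. So $P(\bar s^{-1})=-\overline{P(s)}$; with the sign you stated, the substitution would produce $+\overline{g(\lb)}$.
\item By \eqref{def Omegj}, $-\ii\Omega=\oint_{\mathfrak b}\ddd g$, not $-\oint_{\mathfrak b}\ddd g$.
\item The reason $g(\ii\alpha^{-1})=0$ is more specific than you wrote. The gap part vanishes by (a), and the two cut contributions cancel. The map $s\mapsto\bar s^{-1}$ sends $\ii(\eta_1,\alpha)$ onto $\ii(\alpha^{-1},\eta_1^{-1})$ and $\frac{P}{R}\ddd s$ to $-\overline{\frac{P}{R}\ddd s}$. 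Since $\int_{\ii\eta_1}^{\ii\alpha}(P/R)_+\ddd s\in\ii\mathbb{R}$, the two contributions are negatives of each other.
\end{itemize}
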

\begin{figure}
	\centering
	\subfigure[]{		
		\tikzset{every picture/.style={line width=0.75pt}} %set default line width to 0.75pt        		
		\begin{tikzpicture}[x=0.75pt,y=0.75pt,yscale=-0.58,xscale=0.58]
			%uncomment if require: \path (0,530); %set diagram left start at 0, and has height of 530
%Shape: Polygon Curved [id:ds4117880296419856] 
\draw  [color={rgb, 255:red, 0; green, 0; blue, 0 }  ,draw opacity=0 ][fill={rgb, 255:red, 184; green, 233; blue, 134 }  ,fill opacity=0.3 ] (422.77,106.67) .. controls (384.77,140.67) and (341.51,153.86) .. (318.5,154.02) .. controls (295.49,154.18) and (252.77,137.34) .. (212.1,104.67) .. controls (171.43,72) and (137.77,49.5) .. (114.77,44) .. controls (91.77,38.5) and (290.02,37.99) .. (318.45,38.04) .. controls (346.89,38.09) and (506.77,38.67) .. (510.1,37.34) .. controls (513.43,36) and (460.77,72.67) .. (422.77,106.67) -- cycle ;
%Shape: Circle [id:dp5693389500073517] 
\draw  [color={rgb, 255:red, 74; green, 144; blue, 226 }  ,draw opacity=1 ][fill={rgb, 255:red, 184; green, 233; blue, 134 }  ,fill opacity=0.3 ] (186.49,307.99) .. controls (186.49,233.37) and (246.98,172.89) .. (321.6,172.89) .. controls (396.21,172.89) and (456.7,233.37) .. (456.7,307.99) .. controls (456.7,382.61) and (396.21,443.1) .. (321.6,443.1) .. controls (246.98,443.1) and (186.49,382.61) .. (186.49,307.99) -- cycle ;
%Straight Lines [id:da18159634992021056] 
\draw [color={rgb, 255:red, 74; green, 144; blue, 226 }  ,draw opacity=0 ][fill={rgb, 255:red, 74; green, 144; blue, 226 }  ,fill opacity=1 ][line width=0.75]  [dash pattern={on 0.84pt off 2.51pt}]  (318.5,61.86) -- (318.5,154.02) ;
%Straight Lines [id:da26113900821192315] 
\draw [color={rgb, 255:red, 0; green, 0; blue, 0 }  ,draw opacity=1 ] [dash pattern={on 0.84pt off 2.51pt}]  (117,310.28) -- (533.5,307.8) ;
\draw [shift={(536.5,307.78)}, rotate = 179.66] [fill={rgb, 255:red, 0; green, 0; blue, 0 }  ,fill opacity=1 ][line width=0.08]  [draw opacity=0] (8.93,-4.29) -- (0,0) -- (8.93,4.29) -- cycle    ;
%Straight Lines [id:da5042170200967123] 
\draw [color={rgb, 255:red, 255; green, 255; blue, 255 }  ,draw opacity=0 ][fill={rgb, 255:red, 74; green, 144; blue, 226 }  ,fill opacity=0 ][line width=1.5]    (318.5,183.86) -- (318.5,241.86) ;
%Curve Lines [id:da0457622746347256] 
\draw [color={rgb, 255:red, 74; green, 144; blue, 226 }  ,draw opacity=1 ]   (114.77,44) .. controls (183.27,62.5) and (237.29,152.21) .. (318.5,154.02) ;
%Curve Lines [id:da6736321485044326] 
\draw [color={rgb, 255:red, 74; green, 144; blue, 226 }  ,draw opacity=1 ]   (510.1,37.34) .. controls (460.1,60) and (399.71,152.21) .. (318.5,154.02) ;
%Curve Lines [id:da6226795203061964] 
\draw [color={rgb, 255:red, 74; green, 144; blue, 226 }  ,draw opacity=1 ][fill={rgb, 255:red, 255; green, 255; blue, 255 }  ,fill opacity=1 ]   (318.5,183.86) .. controls (375.01,187.57) and (364.18,307.61) .. (319,309.39) ;
%Curve Lines [id:da06582762230718975] 
\draw [color={rgb, 255:red, 74; green, 144; blue, 226 }  ,draw opacity=1 ][fill={rgb, 255:red, 255; green, 255; blue, 255 }  ,fill opacity=1 ]   (318.5,183.86) .. controls (261.99,187.57) and (278.18,309.61) .. (319,309.39) ;
%Straight Lines [id:da6615319433178072] 
\draw  [dash pattern={on 0.84pt off 2.51pt}]  (318.75,450.15) -- (318.34,31.68) ;
\draw [shift={(318.33,28.68)}, rotate = 89.94] [fill={rgb, 255:red, 0; green, 0; blue, 0 }  ][line width=0.08]  [draw opacity=0] (8.93,-4.29) -- (0,0) -- (8.93,4.29) -- cycle    ;
\draw [shift={(318.5,241.86) }, rotate = 90] [color={rgb, 255:red, 0; green, 0; blue, 0}  ,draw opacity=1 ][fill={rgb, 255:red, 0; green, 0; blue, 0}  ,fill opacity=1 ][line width=1]      (0, 0) circle [x radius= 1.74, y radius= 1.74]   ;
\draw [shift={(318.5,183.86)}, rotate = 90] [color={rgb, 255:red, 0; green, 0; blue, 0}  ,draw opacity=1 ][fill={rgb, 255:red, 0; green, 0; blue, 0}  ,fill opacity=1 ][line width=1]      (0, 0) circle [x radius= 1.74, y radius= 1.74]   ;
\draw [shift={(318.5,61.86) }, rotate = 90] [color={rgb, 255:red, 0; green, 0; blue, 0}  ,draw opacity=1 ][fill={rgb, 255:red, 0; green, 0; blue, 0}  ,fill opacity=1 ][line width=1]      (0, 0) circle [x radius= 1.74, y radius= 1.74]   ;
\draw [shift={(318.5,154.02)}, rotate = 90] [color={rgb, 255:red, 0; green, 0; blue, 0}  ,draw opacity=1 ][fill={rgb, 255:red, 0; green, 0; blue, 0}  ,fill opacity=1 ][line width=1]      (0, 0) circle [x radius= 1.74, y radius= 1.74]   ;
% Text Node
\draw (322,180) node [anchor=north west][inner sep=0.75pt]  [font=\footnotesize,xscale=0.8,yscale=0.8] {$\mathrm{i} \eta _{1}^{-1}$};
% Text Node
\draw (322,55) node [anchor=north west][inner sep=0.75pt] [font=\footnotesize,xscale=0.8,yscale=0.8] {$\mathrm{i} \eta _{2}$};
% Text Node
\draw (322,232) node [anchor=north west][inner sep=0.75pt][font=\footnotesize,xscale=0.8,yscale=0.8] {$\mathrm{i} \eta _{2}^{-1}$};
% Text Node
\draw (322,132) node [anchor=north west][inner sep=0.75pt]  [font=\footnotesize,xscale=0.8,yscale=0.8] {$\mathrm{i} \eta _{1}$};
% Text Node
\draw (265.37,54.41) node [anchor=north west][inner sep=0.75pt] [font=\footnotesize,xscale=0.8,yscale=0.8] {$+$};
% Text Node
\draw (390.53,249.24) node [anchor=north west][inner sep=0.75pt]    {$+$};
	\end{tikzpicture}}
\subfigure[]{
	
	\tikzset{every picture/.style={line width=0.75pt}} %set default line width to 0.75pt
	
	\begin{tikzpicture}[x=0.75pt,y=0.75pt,yscale=-0.58,xscale=0.58]

%Shape: Polygon Curved [id:ds5380416869389384] 
\draw  [color={rgb, 255:red, 0; green, 0; blue, 0 }  ,draw opacity=0 ][fill={rgb, 255:red, 184; green, 233; blue, 134 }  ,fill opacity=0.3 ] (457.5,82.27) .. controls (407,101.77) and (380.21,100.26) .. (341.5,100.27) .. controls (302.79,100.27) and (280,103.27) .. (228.5,82.77) .. controls (177,62.27) and (170.5,43.57) .. (146.5,42.77) .. controls (122.5,41.97) and (312.06,39.05) .. (340.5,39.11) .. controls (368.94,39.16) and (557,41.77) .. (539,42.27) .. controls (521,42.77) and (508,62.77) .. (457.5,82.27) -- cycle ;
%Shape: Circle [id:dp11356733424717913] 
\draw  [color={rgb, 255:red, 74; green, 144; blue, 226 }  ,draw opacity=1 ][fill={rgb, 255:red, 184; green, 233; blue, 134 }  ,fill opacity=0.3 ] (209.49,288.74) .. controls (209.49,214.12) and (269.98,153.63) .. (344.6,153.63) .. controls (419.21,153.63) and (479.7,214.12) .. (479.7,288.74) .. controls (479.7,363.35) and (419.21,423.84) .. (344.6,423.84) .. controls (269.98,423.84) and (209.49,363.35) .. (209.49,288.74) -- cycle ;
%Curve Lines [id:da8146597464263361] 
\draw [color={rgb, 255:red, 74; green, 144; blue, 226 }  ,draw opacity=1 ][fill={rgb, 255:red, 255; green, 255; blue, 255 }  ,fill opacity=1 ]   (341,188.77) .. controls (397.51,192.48) and (387.18,288.36) .. (342,290.14) ;
%Curve Lines [id:da78429901848814] 
\draw [color={rgb, 255:red, 74; green, 144; blue, 226 }  ,draw opacity=1 ][fill={rgb, 255:red, 255; green, 255; blue, 255 }  ,fill opacity=1 ]   (341,188.77) .. controls (284.49,192.48) and (301.18,290.36) .. (342,290.14) ;
%Straight Lines [id:da32702902515751475] 
\draw  [dash pattern={on 0.84pt off 2.51pt}]  (341.5,436.11) -- (341.5,15.61) ;
\draw [shift={(341.5,12.61)}, rotate = 90] [fill={rgb, 255:red, 0; green, 0; blue, 0 }  ][line width=0.08]  [draw opacity=0] (8.93,-4.29) -- (0,0) -- (8.93,4.29) -- cycle    ;
%Straight Lines [id:da8826944206520135] 
%\draw [line width=1.5]    (341.5,42.61) -- (341.5,134.77) ;
%Straight Lines [id:da5671982930286904] 
\draw [color={rgb, 255:red, 0; green, 0; blue, 0 }  ,draw opacity=1 ] [dash pattern={on 0.84pt off 2.51pt}]  (129,289.27) -- (567.6,288.75) ;
\draw [shift={(570.6,288.74)}, rotate = 179.93] [fill={rgb, 255:red, 0; green, 0; blue, 0 }  ,fill opacity=1 ][line width=0.08]  [draw opacity=0] (8.93,-4.29) -- (0,0) -- (8.93,4.29) -- cycle    ;
%Straight Lines [id:da7486990463825082] 
%\draw [line width=1.5]    (341.5,164.61) -- (341.5,222.61) ;
%Curve Lines [id:da9351088711883422] 
\draw [color={rgb, 255:red, 74; green, 144; blue, 226 }  ,draw opacity=1 ]   (146.5,42.77) .. controls (215.5,62.77) and (204.79,98.27) .. (341.5,100.27) ;
%Curve Lines [id:da6400643526915221] 
\draw [color={rgb, 255:red, 74; green, 144; blue, 226 }  ,draw opacity=1 ]   (539,42.27) .. controls (474,63.27) and (478.21,98.27) .. (341.5,100.27) ;
%Straight Lines [id:da1564671730226621] 
\draw [color={rgb, 255:red, 74; green, 144; blue, 226 }  ,draw opacity=1 ][line width=1.5]    (341.5,100.27) -- (341.5,134.77) ;
%Straight Lines [id:da4726106857326563] 
\draw [color={rgb, 255:red, 74; green, 144; blue, 226 }  ,draw opacity=1 ][line width=1.5]    (341.5,164.61) -- (341.5,188.77) ;
\draw [shift={(341.5,164.61) }, rotate = 90] [color={rgb, 255:red, 74; green, 144; blue, 226 }  ,draw opacity=1 ][fill={rgb, 255:red, 74; green, 144; blue, 226 }  ,fill opacity=1 ][line width=1]      (0, 0) circle [x radius= 1.74, y radius= 1.74]   ;
\draw [shift={(341.5,188.77)}, rotate = 90] [color={rgb, 255:red, 74; green, 144; blue, 226 }  ,draw opacity=1 ][fill={rgb, 255:red, 74; green, 144; blue, 226 }  ,fill opacity=1 ][line width=1]      (0, 0) circle [x radius= 1.74, y radius= 1.74]   ;
\draw [shift={ (341.5,100.27) }, rotate = 90] [color={rgb, 255:red, 74; green, 144; blue, 226 }  ,draw opacity=1 ][fill={rgb, 255:red, 74; green, 144; blue, 226 }  ,fill opacity=1 ][line width=1]      (0, 0) circle [x radius= 1.74, y radius= 1.74]   ;
\draw [shift={(341.5,134.77)}, rotate = 90] [color={rgb, 255:red, 74; green, 144; blue, 226 }  ,draw opacity=1 ][fill={rgb, 255:red, 74; green, 144; blue, 226 }  ,fill opacity=1 ][line width=1]      (0, 0) circle [x radius= 1.74, y radius= 1.74]   ;
\draw [shift={(341.5,222.61) }, rotate = 90] [color={rgb, 255:red, 0; green, 0; blue, 0}  ,draw opacity=1 ][fill={rgb, 255:red, 0; green, 0; blue, 0}  ,fill opacity=1 ][line width=1]      (0, 0) circle [x radius= 1.74, y radius= 1.74]   ;
\draw [shift={ (341.5,42.61) }, rotate = 90] [color={rgb, 255:red, 0; green, 0; blue, 0}  ,draw opacity=1 ][fill={rgb, 255:red, 0; green, 0; blue, 0}  ,fill opacity=1 ][line width=1]      (0, 0) circle [x radius= 1.74, y radius= 1.74]   ;
% Text Node
\draw (345.92,209.21) node [anchor=north west][inner sep=0.75pt]  [font=\footnotesize,xscale=0.8,yscale=0.8]  {$\mathrm{i} \eta _{2}^{-1}$};
% Text Node
\draw (344.8,125.67) node [anchor=north west][inner sep=0.75pt]  [font=\footnotesize,xscale=0.8,yscale=0.8]  {$\mathrm{i} \eta _{1}$};
% Text Node
\draw (345.02,161.15) node [anchor=north west][inner sep=0.75pt]  [font=\footnotesize,xscale=0.8,yscale=0.8]  {$\mathrm{i} \eta _{1}^{-1}$};
% Text Node
\draw (344.7,40.4) node [anchor=north west][inner sep=0.75pt]  [font=\footnotesize,xscale=0.8,yscale=0.8]  {$\mathrm{i} \eta _{2}$};
% Text Node
\draw (419.53,59.99) node [anchor=north west][inner sep=0.75pt]  [font=\footnotesize,xscale=0.8,yscale=0.8]   {$+$};
% Text Node
\draw (413.53,229.99) node [anchor=north west][inner sep=0.75pt]   [font=\footnotesize,xscale=0.8,yscale=0.8]  {$+$};
% Text Node
\draw (344,77.82) node [anchor=north west][inner sep=0.75pt]  [font=\footnotesize,xscale=0.8,yscale=0.8]  {$\mathrm{i} \alpha ( \xi )$};
% Text Node
\draw (345.5,184.32) node [anchor=north west][inner sep=0.75pt]  [font=\footnotesize,xscale=0.8,yscale=0.8]  {$\mathrm{i} \alpha ( \xi )^{-1}$};
`\end{tikzpicture}}
\subfigure[]{\tikzset{every picture/.style={line width=0.75pt}} %set default line width to 0.75pt        

\begin{tikzpicture}[x=0.75pt,y=0.75pt,yscale=-0.58,xscale=0.58]
%uncomment if require: \path (0,530); %set diagram left start at 0, and has height of 530
%Shape: Polygon Curved [id:ds9539047484717239] 
\draw  [color={rgb, 255:red, 0; green, 0; blue, 0 }  ,draw opacity=0 ][fill={rgb, 255:red, 184; green, 233; blue, 134 }  ,fill opacity=0.3 ] (439.85,61.2) .. controls (379.05,87.2) and (337.08,91.28) .. (314.07,91.44) .. controls (291.05,91.6) and (255.45,88.4) .. (198.25,65.2) .. controls (141.05,42) and (158.35,47.63) .. (135.35,42.13) .. controls (112.35,36.63) and (284.92,38.08) .. (313.35,38.13) .. controls (341.79,38.18) and (487.83,41.73) .. (499.85,38.13) .. controls (511.88,34.53) and (500.65,35.2) .. (439.85,61.2) -- cycle ;
%Shape: Circle [id:dp6790323290259638] 
\draw  [color={rgb, 255:red, 74; green, 144; blue, 226 }  ,draw opacity=1 ][fill={rgb, 255:red, 184; green, 233; blue, 134 }  ,fill opacity=0.3 ] (182.49,300.74) .. controls (182.49,226.12) and (242.98,165.63) .. (317.6,165.63) .. controls (392.21,165.63) and (452.7,226.12) .. (452.7,300.74) .. controls (452.7,375.35) and (392.21,435.84) .. (317.6,435.84) .. controls (242.98,435.84) and (182.49,375.35) .. (182.49,300.74) -- cycle ;
%Curve Lines [id:da47327686883825903] 
\draw [color={rgb, 255:red, 74; green, 144; blue, 226 }  ,draw opacity=1 ][fill={rgb, 255:red, 255; green, 255; blue, 255 }  ,fill opacity=1 ]   (313.33,214.61) .. controls (369.84,218.32) and (360.18,300.36) .. (315,302.14) ;
%Curve Lines [id:da7598178491438858] 
\draw [color={rgb, 255:red, 74; green, 144; blue, 226 }  ,draw opacity=1 ][fill={rgb, 255:red, 255; green, 255; blue, 255 }  ,fill opacity=1 ]   (313.33,214.61) .. controls (256.82,218.32) and (274.18,302.36) .. (315,302.14) ;
\draw [shift={(314.25,214.61)}, rotate = 176.24] [color={rgb, 255:red, 74; green, 144; blue, 226 }  ,draw opacity=1 ][fill={rgb, 255:red, 74; green, 144; blue, 226 }  ,fill opacity=1 ][line width=1]      (0, 0) circle [x radius= 2.01, y radius= 2.01]   ;
%Straight Lines [id:da9673787709575596] 
\draw  [dash pattern={on 0.84pt off 2.51pt}]  (314.25,442.61) -- (314.25,29.03) ;
\draw [shift={(314,25.03)}, rotate = 89.93] [fill={rgb, 255:red, 0; green, 0; blue, 0 }  ][line width=0.08]  [draw opacity=0] (8.93,-4.29) -- (0,0) -- (8.93,4.29) -- cycle    ;
%Straight Lines [id:da9224560079224464] 
\draw [color={rgb, 255:red, 74; green, 144; blue, 226 }  ,draw opacity=1 ][fill={rgb, 255:red, 74; green, 144; blue, 226 }  ,fill opacity=1 ][line width=1.5]    (314.25,54.61) -- (314.25,146.77) ;
\draw [shift={(314.25,146.77)}, rotate = 90] [color={rgb, 255:red, 74; green, 144; blue, 226 }  ,draw opacity=1 ][fill={rgb, 255:red, 74; green, 144; blue, 226 }  ,fill opacity=1 ][line width=1.5]      (0, 0) circle [x radius= 1.74, y radius= 1.74]   ;
\draw [shift={(314.25,54.61)}, rotate = 90] [color={rgb, 255:red, 74; green, 144; blue, 226 }  ,draw opacity=1 ][fill={rgb, 255:red, 74; green, 144; blue, 226 }  ,fill opacity=1 ][line width=1.5]      (0, 0) circle [x radius= 1.74, y radius= 1.74]   ;
%Straight Lines [id:da20992960698969332] 
\draw [color={rgb, 255:red, 0; green, 0; blue, 0 }  ,draw opacity=1 ] [dash pattern={on 0.84pt off 2.51pt}]  (113,303.03) -- (529.5,300.55) ;
\draw [shift={(532.5,300.53)}, rotate = 179.66] [fill={rgb, 255:red, 0; green, 0; blue, 0 }  ,fill opacity=1 ][line width=0.08]  [draw opacity=0] (8.93,-4.29) -- (0,0) -- (8.93,4.29) -- cycle    ;
%Straight Lines [id:da1396514151051409] 
\draw [color={rgb, 255:red, 74; green, 144; blue, 226 }  ,draw opacity=1 ][line width=1.5]    (314.25,176.32) -- (314.25,234.32) ;
\draw [shift={(314.25,234.32)}, rotate = 90] [color={rgb, 255:red, 74; green, 144; blue, 226 }  ,draw opacity=1 ][fill={rgb, 255:red, 74; green, 144; blue, 226 }  ,fill opacity=1 ][line width=1.5]      (0, 0) circle [x radius= 1.74, y radius= 1.74]   ;
\draw [shift={(314.25,176.32)}, rotate = 90] [color={rgb, 255:red, 74; green, 144; blue, 226 }  ,draw opacity=1 ][fill={rgb, 255:red, 74; green, 144; blue, 226 }  ,fill opacity=1 ][line width=1.5]      (0, 0) circle [x radius= 1.74, y radius= 1.74]   ;
%Curve Lines [id:da713975623589845] 
\draw [color={rgb, 255:red, 74; green, 144; blue, 226 }  ,draw opacity=1 ]   (135.35,42.13) .. controls (203.85,60.63) and (232.85,89.63) .. (314.07,91.44) ;
\draw [shift={(314.25,91.44)}, rotate = 1.28] [color={rgb, 255:red, 74; green, 144; blue, 226 }  ,draw opacity=1 ][fill={rgb, 255:red, 74; green, 144; blue, 226 }  ,fill opacity=1 ][line width=1]      (0, 0) circle [x radius= 2.01, y radius= 2.01]   ;
%Shape: Boxed Bezier Curve [id:dp8377138690321616] 
\draw [color={rgb, 255:red, 74; green, 144; blue, 226 }  ,draw opacity=1 ]   (492.78,42.13) .. controls (424.28,60.63) and (395.28,89.63) .. (314.07,91.44) ;
% Text Node
\draw (320,162) node [anchor=north west][inner sep=0.75pt]  [font=\footnotesize,xscale=0.8,yscale=0.8]   {$\mathrm{i} \eta _{1}^{-1}$};
% Text Node
\draw (320.3,39) node [anchor=north west][inner sep=0.75pt]  [font=\footnotesize,xscale=0.8,yscale=0.8]   {$\mathrm{i} \eta _{2}$};
% Text Node
\draw (320,226.01) node [anchor=north west][inner sep=0.75pt]  [font=\footnotesize,xscale=0.8,yscale=0.8]   {$\mathrm{i} \eta _{2}^{-1}$};
% Text Node
\draw (320,122.01) node [anchor=north west][inner sep=0.75pt]  [font=\footnotesize,xscale=0.8,yscale=0.8]   {$\mathrm{i} \eta _{1}$};
% Text Node
\draw (288.37,35.16) node [anchor=north west][inner sep=0.75pt]  [font=\footnotesize,xscale=0.8,yscale=0.8]   {$+$};
% Text Node
\draw (413.53,229.99) node [anchor=north west][inner sep=0.75pt]  [font=\footnotesize,xscale=0.8,yscale=0.8]   {$+$};
% Text Node
\draw (320,75) node [anchor=north west][inner sep=0.75pt]  [font=\footnotesize,xscale=0.8,yscale=0.8]   {$\mathrm{i} \lambda _{0}(\xi)$};
% Text Node
\draw (320,190) node [anchor=north west][inner sep=0.75pt]  [font=\footnotesize,xscale=0.8,yscale=0.8]   {$\mathrm{i} \lambda _{0}(\xi)^{-1}$};
\end{tikzpicture}
}
\subfigure[]{
	
	\tikzset{every picture/.style={line width=0.75pt}} %set default line width to 0.75pt
	
	\begin{tikzpicture}[x=0.75pt,y=0.75pt,yscale=-0.58,xscale=0.58]
		%uncomment if require: \path (0,431); %set diagram left start at 0, and has height of 431

%Shape: Triangle [id:dp029180997821849353] 
\draw  [color={rgb, 255:red, 0; green, 0; blue, 0 }  ,draw opacity=0 ][fill={rgb, 255:red, 184; green, 233; blue, 134 }  ,fill opacity=0.3 ] (341.05,42.61) -- (497.5,28.53) -- (189,28.53) -- cycle ;
%Shape: Circle [id:dp3566890221121126] 
\draw  [color={rgb, 255:red, 74; green, 144; blue, 226 }  ,draw opacity=1 ][fill={rgb, 255:red, 184; green, 233; blue, 134 }  ,fill opacity=0.3 ] (209.49,288.74) .. controls (209.49,214.12) and (269.98,153.63) .. (344.6,153.63) .. controls (419.21,153.63) and (479.7,214.12) .. (479.7,288.74) .. controls (479.7,363.35) and (419.21,423.84) .. (344.6,423.84) .. controls (269.98,423.84) and (209.49,363.35) .. (209.49,288.74) -- cycle ;
%Curve Lines [id:da7894264196687798] 
\draw [color={rgb, 255:red, 74; green, 144; blue, 226 }  ,draw opacity=1 ][fill={rgb, 255:red, 255; green, 255; blue, 255 }  ,fill opacity=1 ]   (341.5,222) .. controls (398.01,226.31) and (387.18,288.36) .. (342,290.14) ;
%Curve Lines [id:da7478330350306062] 
\draw [color={rgb, 255:red, 74; green, 144; blue, 226 }  ,draw opacity=1 ][fill={rgb, 255:red, 255; green, 255; blue, 255 }  ,fill opacity=1 ]   (341.5,222) .. controls (284.99,226.31) and (301.18,290.36) .. (342,290.14) ;
%Straight Lines [id:da7124188854487186] 
\draw  [dash pattern={on 0.84pt off 2.51pt}]  (341.5,436.11) -- (341,17) ;
\draw [shift={(341,14)}, rotate = 89.93] [fill={rgb, 255:red, 0; green, 0; blue, 0 }  ][line width=0.08]  [draw opacity=0] (8.93,-4.29) -- (0,0) -- (8.93,4.29) -- cycle    ;
%Straight Lines [id:da7209329240138381] 
\draw [color={rgb, 255:red, 74; green, 144; blue, 226 }  ,draw opacity=1 ][fill={rgb, 255:red, 74; green, 144; blue, 226 }  ,fill opacity=1 ][line width=1.5]    (341.5,42.61) -- (341.5,134.77) ;
\draw [shift={(341.5,42.61)}, rotate = 90] [color={rgb, 255:red, 74; green, 144; blue, 226 }  ,draw opacity=1 ][fill={rgb, 255:red, 74; green, 144; blue, 226 }  ,fill opacity=1 ][line width=1.5]      (0, 0) circle [x radius= 1.74, y radius= 1.74]   ;
\draw [shift={(341.5,134.77) }, rotate = 90] [color={rgb, 255:red, 74; green, 144; blue, 226 }  ,draw opacity=1 ][fill={rgb, 255:red, 74; green, 144; blue, 226 }  ,fill opacity=1 ][line width=1.5]      (0, 0) circle [x radius= 1.74, y radius= 1.74]   ;
%Straight Lines [id:da8115689461936187] 
\draw [color={rgb, 255:red, 0; green, 0; blue, 0 }  ,draw opacity=1 ] [dash pattern={on 0.84pt off 2.51pt}]  (140,291.03) -- (556.5,288.55) ;
\draw [shift={(559.5,288.53)}, rotate = 179.66] [fill={rgb, 255:red, 0; green, 0; blue, 0 }  ,fill opacity=1 ][line width=0.08]  [draw opacity=0] (8.93,-4.29) -- (0,0) -- (8.93,4.29) -- cycle    ;
%Straight Lines [id:da9143949287682055] 
\draw [color={rgb, 255:red, 74; green, 144; blue, 226 }  ,draw opacity=1 ][line width=1.5]    (341.5,164.61) -- (341.5,222.61) ;
\draw [shift={(341.5,164.61)}, rotate = 90] [color={rgb, 255:red, 74; green, 144; blue, 226 }  ,draw opacity=1 ][fill={rgb, 255:red, 74; green, 144; blue, 226 }  ,fill opacity=1 ][line width=1.5]      (0, 0) circle [x radius= 1.74, y radius= 1.74]   ;
\draw [shift={(341.5,222.61)}, rotate = 90] [color={rgb, 255:red, 74; green, 144; blue, 226 }  ,draw opacity=1 ][fill={rgb, 255:red, 74; green, 144; blue, 226 }  ,fill opacity=1 ][line width=1.5]      (0, 0) circle [x radius= 1.74, y radius= 1.74]   ;
%Curve Lines [id:da5207616838297551] 
\draw [color={rgb, 255:red, 74; green, 144; blue, 226 }  ,draw opacity=1 ][fill={rgb, 255:red, 184; green, 233; blue, 134 }  ,fill opacity=0.3 ]   (189,29.03) .. controls (254.5,41.53) and (290.5,40.03) .. (341.5,42.61) ;
%Curve Lines [id:da11263960020857489] 
\draw [color={rgb, 255:red, 74; green, 144; blue, 226 }  ,draw opacity=1 ][fill={rgb, 255:red, 184; green, 233; blue, 134 }  ,fill opacity=0.3 ]   (497.5,28.53) .. controls (415.5,40.03) and (404.5,41.03) .. (341.5,42.61) ;

% Text Node
\draw (343.92,205.71) node [anchor=north west][inner sep=0.75pt]  [font=\footnotesize,xscale=0.8,yscale=0.8]  {$\mathrm{i} \eta _{2}^{-1}$};
% Text Node
\draw (344.3,119.67) node [anchor=north west][inner sep=0.75pt]  [font=\footnotesize,xscale=0.8,yscale=0.8]  {$\mathrm{i} \eta _{1}$};
% Text Node
\draw (344.52,160) node [anchor=north west][inner sep=0.75pt]  [font=\footnotesize,xscale=0.8,yscale=0.8]  {$\mathrm{i} \eta _{1}^{-1}$};
% Text Node
\draw (343.5,46.01) node [anchor=north west][inner sep=0.75pt]  [font=\footnotesize,xscale=0.8,yscale=0.8]  {$\mathrm{i} \eta _{2}$};
% Text Node
\draw (301.03,25.99) node [anchor=north west][inner sep=0.75pt]  [font=\footnotesize,xscale=0.8,yscale=0.8]  {$+$};
% Text Node
\draw (413.53,229.99) node [anchor=north west][inner sep=0.75pt]  [font=\footnotesize,xscale=0.8,yscale=0.8]   {$+$};
\end{tikzpicture}}
\subfigure[]{
	
	\tikzset{every picture/.style={line width=0.75pt}} %set default line width to 0.75pt
	
	\begin{tikzpicture}[x=0.75pt,y=0.75pt,yscale=-0.59,xscale=0.59]
		%uncomment if require: \path (0,431); %set diagram left start at 0, and has height of 431
%Shape: Triangle [id:dp029180997821849353] 
\draw  [color={rgb, 255:red, 0; green, 0; blue, 0 }  ,draw opacity=0 ][fill={rgb, 255:red, 184; green, 233; blue, 134 }  ,fill opacity=0.3 ] (341.05,37.11) -- (497.5,23.03) -- (189,23.03) -- cycle ;
%Shape: Circle [id:dp3566890221121126] 
\draw  [color={rgb, 255:red, 74; green, 144; blue, 226 }  ,draw opacity=1 ][fill={rgb, 255:red, 184; green, 233; blue, 134 }  ,fill opacity=0.3 ] (209.49,288.74) .. controls (209.49,214.12) and (269.98,153.63) .. (344.6,153.63) .. controls (419.21,153.63) and (479.7,214.12) .. (479.7,288.74) .. controls (479.7,363.35) and (419.21,423.84) .. (344.6,423.84) .. controls (269.98,423.84) and (209.49,363.35) .. (209.49,288.74) -- cycle ;
%Curve Lines [id:da7894264196687798] 
\draw [color={rgb, 255:red, 74; green, 144; blue, 226 }  ,draw opacity=1 ][fill={rgb, 255:red, 255; green, 255; blue, 255 }  ,fill opacity=1 ]   (342,236.53) .. controls (398.51,240.24) and (387.18,288.36) .. (342,290.14) ;
%Curve Lines [id:da7478330350306062] 
\draw [color={rgb, 255:red, 74; green, 144; blue, 226 }  ,draw opacity=1 ][fill={rgb, 255:red, 255; green, 255; blue, 255 }  ,fill opacity=1 ]   (342,236.53) .. controls (285.49,240.24) and (301.18,290.36) .. (342,290.14) ;
%Straight Lines [id:da7124188854487186] 
\draw  [dash pattern={on 0.84pt off 2.51pt}]  (341.5,430.61) -- (341,17.03) ;
\draw [shift={(341,14.03)}, rotate = 89.93] [fill={rgb, 255:red, 0; green, 0; blue, 0 }  ][line width=0.08]  [draw opacity=0] (8.93,-4.29) -- (0,0) -- (8.93,4.29) -- cycle    ;
%Straight Lines [id:da7209329240138381] 
\draw [color={rgb, 255:red, 74; green, 144; blue, 226 }  ,draw opacity=1 ][fill={rgb, 255:red, 74; green, 144; blue, 226 }  ,fill opacity=1 ][line width=1.5]    (341.5,42.61) -- (341.5,134.77) ;
\draw [shift={(341.5,134.77) }, rotate = 90] [color={rgb, 255:red, 74; green, 144; blue, 226 }  ,draw opacity=1 ][fill={rgb, 255:red, 74; green, 144; blue, 226 }  ,fill opacity=1 ][line width=1.5]      (0, 0) circle [x radius= 1.74, y radius= 1.74]   ;
\draw [shift={(341.5,42.61) }, rotate = 90] [color={rgb, 255:red, 74; green, 144; blue, 226 }  ,draw opacity=1 ][fill={rgb, 255:red, 74; green, 144; blue, 226 }  ,fill opacity=1 ][line width=1.5]      (0, 0) circle [x radius= 1.74, y radius= 1.74]   ;
%Straight Lines [id:da8115689461936187] 
\draw [color={rgb, 255:red, 0; green, 0; blue, 0 }  ,draw opacity=1 ] [dash pattern={on 0.84pt off 2.51pt}]  (140,291.03) -- (556.5,288.55) ;
\draw [shift={(559.5,288.53)}, rotate = 179.66] [fill={rgb, 255:red, 0; green, 0; blue, 0 }  ,fill opacity=1 ][line width=0.08]  [draw opacity=0] (8.93,-4.29) -- (0,0) -- (8.93,4.29) -- cycle    ;
%Straight Lines [id:da9143949287682055] 
\draw [color={rgb, 255:red, 74; green, 144; blue, 226 }  ,draw opacity=1 ][line width=1.5]    (341.5,164.61) -- (341.5,222.61) ;
\draw [shift={(341.5,164.61) }, rotate = 90] [color={rgb, 255:red, 74; green, 144; blue, 226 }  ,draw opacity=1 ][fill={rgb, 255:red, 74; green, 144; blue, 226 }  ,fill opacity=1 ][line width=1.5]      (0, 0) circle [x radius= 1.74, y radius= 1.74]   ;
\draw [shift={(341.5,222.61) }, rotate = 90] [color={rgb, 255:red, 74; green, 144; blue, 226 }  ,draw opacity=1 ][fill={rgb, 255:red, 74; green, 144; blue, 226 }  ,fill opacity=1 ][line width=1.5]      (0, 0) circle [x radius= 1.74, y radius= 1.74]   ;
%Curve Lines [id:da5207616838297551] 
\draw [color={rgb, 255:red, 74; green, 144; blue, 226 }  ,draw opacity=1 ][fill={rgb, 255:red, 184; green, 233; blue, 134 }  ,fill opacity=0.3 ]   (189,23.53) .. controls (254.5,36.03) and (290.5,34.53) .. (341.5,37.11) ;
%Curve Lines [id:da11263960020857489] 
\draw [color={rgb, 255:red, 74; green, 144; blue, 226 }  ,draw opacity=1 ][fill={rgb, 255:red, 184; green, 233; blue, 134 }  ,fill opacity=0.3 ]   (497.5,23.03) .. controls (415.5,34.53) and (404.5,35.53) .. (341.5,37.11) ;

% Text Node
\draw (343.92,205.71) node [anchor=north west][inner sep=0.75pt]  [font=\footnotesize,xscale=0.8,yscale=0.8]  {$\mathrm{i} \eta _{2}^{-1}$};
% Text Node
\draw (344.3,119.67) node [anchor=north west][inner sep=0.75pt]  [font=\footnotesize,xscale=0.8,yscale=0.8]  {$\mathrm{i} \eta _{1}$};
% Text Node
\draw (344.52,163.65) node [anchor=north west][inner sep=0.75pt]  [font=\footnotesize,xscale=0.8,yscale=0.8]  {$\mathrm{i} \eta _{1}^{-1}$};
% Text Node
\draw (343.5,46.01) node [anchor=north west][inner sep=0.75pt]  [font=\footnotesize,xscale=0.8,yscale=0.8]  {$\mathrm{i} \eta _{2}$};
% Text Node
\draw (301.03,20.49) node [anchor=north west][inner sep=0.75pt]  [font=\footnotesize,xscale=0.8,yscale=0.8]  {$+$};
% Text Node
\draw (413.53,229.99) node [anchor=north west][inner sep=0.75pt]   [font=\footnotesize,xscale=0.8,yscale=0.8] {$+$};	
\end{tikzpicture}}
\caption{\footnotesize The signature table of Re$[g]$ for different $\xi$. $\Re [g]>0$ in the green region,  while $\Re[g]<0$ in the white region, and $\Re[g]=0$ on the blue curve.
 (a) $\xi=-\frac{\eta_{1}-\eta_{1}^{-1}}{\ln\eta_{1}}$; (b) $ 	\xi_{\mathrm{crit}}<\xi<-\frac{\eta_{1}-\eta_{1}^{-1}}{\ln\eta_{1}}$ and $\xi\in H_{I}$; (c) $\xi_{\mathrm{crit}}<\xi<-\frac{\eta_{1}-\eta_{1}^{-1}}{\ln\eta_{1}}$ and $\xi\in T_{II}$;  (d) $\xi=\xi_{\mathrm{crit}}$; (e) $\xi<\xi_{\mathrm{crit}}$, where $\xi_{\mathrm{crit}}$,  $\alpha(\xi)$ and $\lb_0(\xi)$ are defined in \eqref{def xi}, \eqref{def alpha} and \eqref{p0}, respectively.}
\label{fig reg}
% (a)  The critical velocity $\xi=-\frac{\eta_{1}-\eta_{1}^{-1}}{\ln\eta_{1}}$; (b)  The first genus-1 hyperelliptic wave region $H_I$: $ 	\xi_{\mathrm{crit}}<\xi<-\frac{\eta_{1}-\eta_{1}^{-1}}{\ln\eta_{1}}$; (c) The critical velocity $\xi=\xi_{\mathrm{crit}}$; (d) The second  genus-1 hyperelliptic wave region $H_{II}$: $	\xi<\xi_{\mathrm{crit}}$, where $\xi_{\mathrm{crit}}$ is defined in \eqref{def xi}.}
\end{figure}
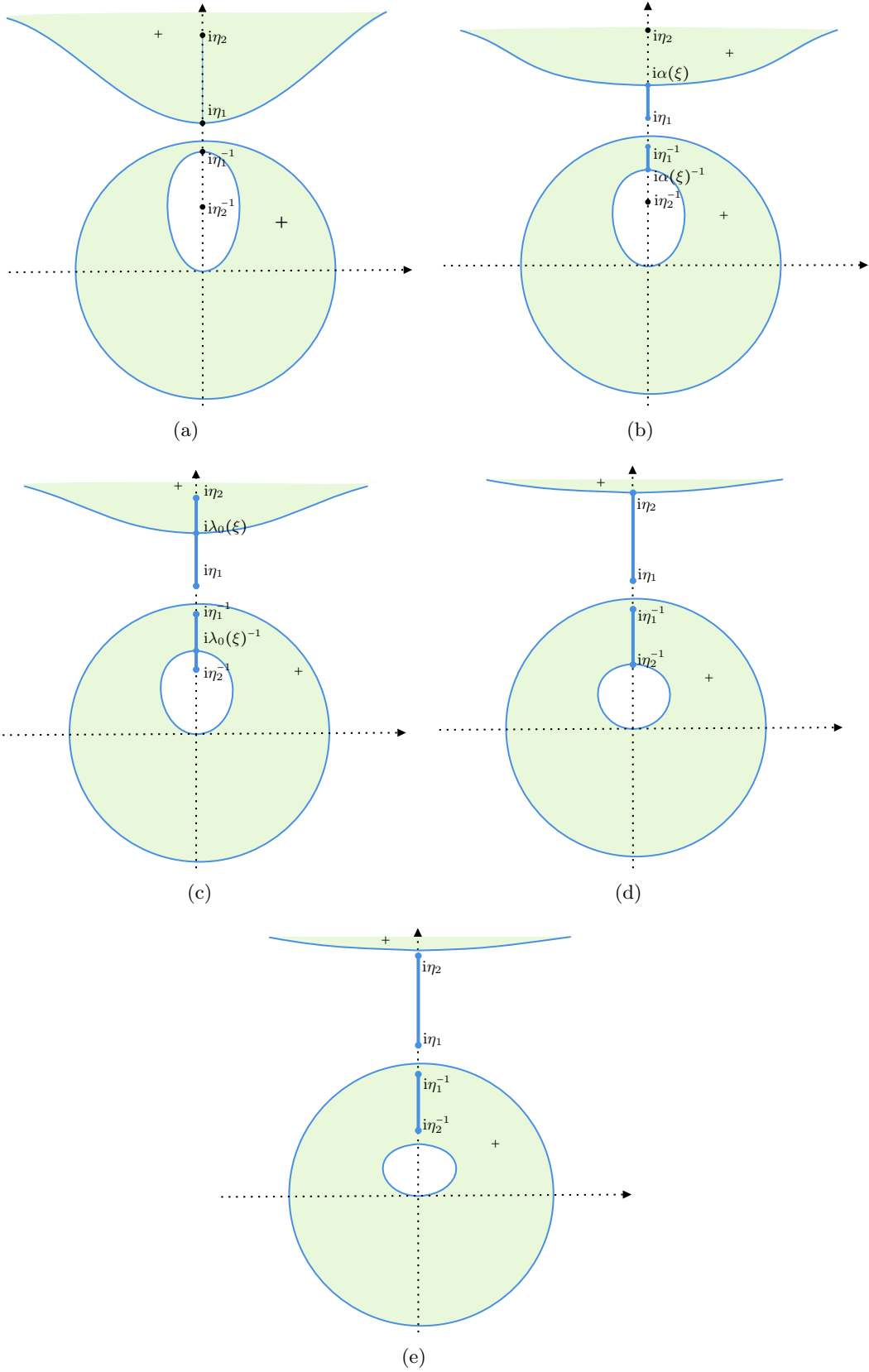

The signature table of Re$[g]$ for $\xi \le -\frac{\eta_{1}-\eta_{1}^{-1}}{\ln\eta_{1}}$ is shown in Figure \ref{fig reg}. It invokes us to make a classification of cases given in  Definition \ref{def region}, which is actually based on the intersection pattern between the critical curve $\re[g]=0$ and the cut $\Sigma_1\cup \Sigma_2$. In Picture (b) of Figure \ref{fig reg}, the critical curve intersects $\Sigma_1\cup \Sigma_2$ at two interior points $\i \alpha(\xi)^{\pm 1}$ away from the endpoints, which corresponds to the first genus-1 hyperelliptic wave region $H_I$. As $\xi$ varies, the intersection points  coalesce with the endpoints $\i \eta_1^{\pm 1}$ or $\i \eta_2^{\pm 1}$,
as illustrated in Pictures (a) and (d), respectively.  In these two cases, the  intersection points are simultaneously the endpoints of the cut and saddle points of the phase function, producing new phenomena locally. In particular, Picture (c)
depicts the configuration of the transition region
$T_{II}$ for the case $\xi>\xi_{\mathrm{crit}}$.
Picture (e) corresponds to the second genus-1 hyperelliptic wave region $H_{II}$, i.e., $\xi<\xi_{\mathrm{crit}}$, where the critical curve moves away from the cut and no intersection occurs.

% we can divide the different asymptotic regions defined in Definition \ref{def region}.  Based on the intersection pattern between the critical curve $\re[g]=0$ and the branch cuts $\Sigma_1\cup \Sigma_2$, we divide  $\xi \le  -\frac{\eta_{1}-\eta_{1}^{-1}}{\ln\eta_{1}}$ into four regions. In plot (b)($ 	\xi_{\mathrm{crit}}<\xi<-\frac{\eta_{1}-\eta_{1}^{-1}}{\ln\eta_{1}}$,  where $\xi_{\mathrm{crit}}$ is defined in \eqref{def xi}), the critical curve intersects the cut at two interior points $\i \alpha(\xi)$ and $\i \alpha(\xi)^{-1}$(distinct from the endpoints); here $\alpha(\xi)$ is modulated by $\xi$, and this regime corresponds to the region $H_I$. When $\alpha(\xi)=\eta_1$, the intersections coalesce with the endpoints $\i \eta_1$ and $\i \eta_1^{-1}$, yielding plot (a). In this case, the  endpoint $\i \eta_1^{\pm1}$ is simultaneously a cut endpoint and a saddle point of the phase function, producing new phenomena in local behavior.  When $\alpha(\xi)=\eta_2$, the intersections occur at the upper endpoints $\i \eta_2$ and $\i \eta_2^{-1}$, yielding plot (c); here the endpoint $\i \eta_2^{\pm 1}$ likewise serves as both a cut endpoint and a saddle point, again leading to new effects. As $\xi$ varies further, i.e. $\xi<\xi_{\mathrm{crit}}$, the critical curve moves away from the cut and no intersection occurs; the branch endpoints are then mere endpoints, and the local behavior near them differs from the previous cases. This configuration corresponds to the region $H_{II}$.

For later use, we also set
\begin{align}
	&g^{(\infty)}:=\lim_{\lb\to\infty}\left( g(\lb)-\frac{\phi(\lb)}{2t}\right).\label{def ginf}
\end{align}
As $\lb \to 0$, using the symmetry realtion \eqref{sym g}, it holds that
\begin{align}
   \lim_{\lb\to0}\left( g(\lb)-\frac{\phi(\lb)}{2t}\right)=-\overline{g^{(\infty)}}.\label{g0}
\end{align}
Moreover,  one can check that
\begin{align}
    \Im{[g^{(\infty)}]}=-\left(1+\frac{\pi\xi}{4}\right).\label{Imginf}
\end{align}

Besides the  $g$-function defined in \eqref{dg}, we further introduce an auxiliary function
\begin{align}
\delta(\lb):=\exp&\left\{\frac{R(\lb)}{2\pi\ii}\left[\int_{\ii\eta_{1}}^{\ii\alpha(\xi)}  \frac{\ln r(s) }{R_+(s)(s-\lb)} \d s-
\int_{\ii\alpha(\xi)^{-1}}^{\ii\eta_{1}^{-1}}  \frac{\ln r(\bar{s}^{-1}) }{R_+(s)(s-\lb)} \d s-\int_{\ii\eta_{1}^{-1}} ^{\ii\eta_{1}} \frac{\ii \Delta}{R(s)(s-\lb)} \d s\right]
\right\},\label{def:f}
\end{align}
where the logarithm takes the principal branch, and by \eqref{int 1/R(s)}, 
\begin{align}\label{def Delta}
	\Delta:=\frac{2\left(\int^{\eta_{1}}_{\alpha(\xi)}  \frac{\ln r(s) }{|R_+(s)|} \d s \right) }{\int_{\eta_{1}^{-1}}^{\eta_{1}}\frac{\d s}{R(s)}}=\frac{l(\alpha(\xi))\left(\int^{\eta_{1}}_{\alpha(\xi)}  \frac{\ln r(s) }{|R_+(s)|} \d s \right) }{\sqrt{(1-l_1^2)(1-l(\alpha(\xi))^2)}(2\Pi(l_1^2,k(\xi))-K(k(\xi)))},
\end{align}
is a real constant to guarantee the boundedness of $\delta(\lb)$ as $\lb\to\infty$.  Here, $k(\xi)$ is given in \eqref{equ xialpha2} and 
	\begin{align}\label{def lk}
	    	l(x)=\frac{x-1}{x+1}.
	\end{align}
It is readily seen from its definition that 
\begin{align}\label{deleta01}
	\delta_+(\lb) =\begin{cases}
		\delta_-(\lb)^{-1}r(\lb),&\lb\in  \ii(\eta_{1},\alpha(\xi)),\\
		\delta_-(\lb)^{-1}r(\bar\lb^{-1})^{-1},&\lb\in \ii(\alpha(\xi)^{-1},\eta_{1}^{-1}),\\
		\delta_-(\lb)\E^{-\ii\Delta},&\lb\in\ii(\eta_1^{-1},\eta_1).
	\end{cases}
\end{align}
Substituting $\lb\to\bar\lb^{-1}$ into \eqref{def:f}, one immediately verifies the symmetry relation 
\begin{align}\label{edeltasym2}
	\delta(\bar\lb^{-1})=\overline{\delta(\lb)^{-1}}.
\end{align}
In addition, one has
\begin{align}
&\delta(\infty):=\lim_{\lb\to\infty}	\delta(\lb)\nonumber\\
&=\exp\left\{\frac{1}{2\pi\ii}\left[\int_{\ii\eta_{1}}^{\ii\alpha(\xi)}  \frac{s\ln r(s) }{R_+(s)} \d s-
\int_{\ii\alpha(\xi)^{-1}}^{\ii\eta_{1}^{-1}}  \frac{s\ln r(\bar{s}^{-1}) }{R_+(s)} \d s-\int_{\ii\eta_{1}^{-1}} ^{\ii\eta_{1}} \frac{\ii \Delta s}{R(s)} \d s\right]
\right\}\in\mathbb{R}. \label{deltainfty}
\end{align}

\subsection{Lenses opening}\label{ol}
If $\xi \in T_I$, the jump matrix of $Z(\lb;n+1,t)$ in \eqref{J} tends to the identity matrix exponentially fast except in small neighbourhoods of   $\lb=\ii\eta_{1}^{\pm 1}$ as $t\to+\infty$. On the other hand, if $\xi \in H_{I} \cup T_{II} \cup H_{II}$, it is not the case and we need to perform lenses opening. To proceed, we first define
% First,   for $\xi \in H_{I} \cup T_{II} \cup H_{II}$, with the functions $g$ and $\delta$ defined in \eqref{dg} and  \eqref{def:f}, we now define
\begin{align}\label{2T1}
	T(\lb)=\delta(\infty)^{\sigma_3}\E^{t g^{(\infty)}\sigma_3}Z(\lb)\E^{-t(g(\lb)-\frac{\phi(\lb)}{2})\sigma_3}\delta(\lb)^{-\sigma_3},
\end{align}
where $\delta(\infty)$ and $g^{(\infty)}$ are given in \eqref{deltainfty} and \eqref{def ginf}, respectively. It is then readily seen from RH problem \ref{RHP0}, Proposition \ref{propg2}, \eqref{deleta01} and \eqref{deltainfty} that $T$ solves the following RH problem.
\begin{Rhp}\ \label{2RHT}
	\begin{itemize}
		\item $T(\lb)$ is analytic in $\ccc\setminus\Sigma$.
		\item For $\lb\in\Sigma$, $T(\lb)$ satisfies the jump condition
			$T_+(\lb)=T_-(\lb)J^{(T)}(\lb),$
		where
		\begin{align}
			&J^{(T)}(\lb)
			:=\begin{cases}
				\left(\begin{matrix}
					r(\lb)^{-1}\delta_-(\lb)^2\E^{2t g_-(\lb)}&0\\\ii&r(\lb)^{-1}\delta_+(\lb)^2\E^{2t g_+(\lb)}
				\end{matrix}\right),&\lb\in\ii(\eta_{1},\alpha(\xi)),\\
				\left(\begin{matrix}
					r(\bar\lb^{-1})^{-1}\delta_+(\lb)^{-2}\E^{-2t g_+(\lb)}&\ii \\0&r(\bar\lb^{-1})^{-1}\delta_-(\lb)^{-2}\E^{-2t g_-(\lb)}
				\end{matrix}\right),&\lb\in\ii(\alpha(\xi)^{-1},\eta_{1}^{-1}),\\
                \begin{pmatrix}
				1&0\\
				\i r(\lb)\delta(\lb)^{-2}\e^{-2tg(\lb)}&1\\
			\end{pmatrix},&\lb\in\ii (\alpha(\xi),\eta_{2}),\\
			\begin{pmatrix}
				1&\i r(\bar{\lb}^{-1})\delta(\lb)^{2}\e^{2tg(\lb)}\\
				0&1\\
			\end{pmatrix},&\lb\in\ii (\eta_{2}^{-1},\alpha(\xi)^{-1}),\\
            \E^{\ii(t\Omega+\Delta)\sigma_3},&\lb\in\ii(\eta_1^{-1},\eta_1),
			\end{cases}
		\end{align}
        where $\Omega$ and $\Delta$ are defined in \eqref{def Omegj} and \eqref{def Delta}, respectively. 
		\item As $\lb\to\infty$, we have
		$ T(\lb)=I+\oo(\lb^{-1}).$
	\end{itemize}
\end{Rhp}
For $\lb\in\ii(\eta_{1},\alpha(\xi))\cup\ii(\alpha(\xi)^{-1},\eta_{1}^{-1})$, it is noticed  that $J^{(T)}$ admits the following factorizations:
\begin{align*}
	J^{(T)}(\lb)=\begin{cases}
		\left(\begin{matrix}
			1&-\ii r(\lb)^{-1}\delta_-(\lb)^2\E^{2t g_-(\lb)}\\0&1
		\end{matrix}\right)\left(\begin{matrix}
			0&\ii\\\ii&0
		\end{matrix}\right)\left(\begin{matrix}
			1&-\ii r(\lb)^{-1}\delta_+(\lb)^2\E^{2t g_+(\lb)}\\0&1
		\end{matrix}\right),&\lb\in\ii(\eta_{1},\alpha(\xi)),\\
		\left(\begin{matrix}
		1&0\\-\ii r(\bar\lb^{-1})^{-1} \delta_-(\lb)^{-2} \E^{-2t g_-(\lb)}&1
	\end{matrix}\right)\left(\begin{matrix}
	0&\ii\\\ii&0
\end{matrix}\right)\left(\begin{matrix}
1&0\\-\ii r(\bar\lb^{-1})^{-1} \delta_+(\lb)^{-2} \E^{-2t g_+(\lb)}&1
\end{matrix}\right),&\lb\in\ii(\alpha(\xi)^{-1},\eta_{1}^{-1}).\\
	\end{cases}
\end{align*}
This, together with Figure \ref{fig reg},
invokes us to open lenses around the interval $\ii(\eta_{1},\eta_0(\xi))\cup\ii(\eta_0(\xi)^{-1},\eta_{1}^{-1})$, where %$\eta_1<\eta_0(\xi)<\eta_2$ is defined as follows:
%is a pair of phase points of function $g$ defined in \eqref{dg}, i.e.,
\begin{align}
    \eta_0(\xi) :=\begin{cases}
        \alpha(\xi), \quad \xi \in  H_{I},\\
        \lb_0(\xi), \quad \xi \in T_{II},\\
        \eta_2, \quad \xi \in H_{II}.
    \end{cases}
\end{align}
Here, $\alpha(\xi)$ is defined in \eqref{def alpha} and $\i \lb_0(\xi)^{\pm 1}$ is a pair of saddle points of the function $g$  with $\eta_1<\lb_0(\xi)<\eta_2$. In fact, from the definition of function $g$ in \eqref{dg}, direct computations show that 
\begin{align}\label{p0}
    \lb_0(\xi) +\lb_0(\xi)^{- 1} = \frac{c_1(\xi) + \sqrt{c_1(\xi)^2+ 4(-\i c_0(\xi)+2) }}{2},
\end{align}
where $c_1(\xi)$ and $c_0(\xi)$ is defined in \eqref{c1c0}.

Let $\Omega_{1,\pm}$ be two regions surrounding $\ii(\eta_{1},\eta_0(\xi))$ and set
\begin{align*}
	\Omega_{2,\pm}=\overline{\Omega_{1,\mp}^{-1}};
\end{align*}
see Figure \ref{figopen1} for an illustration. 
By introducing a matrix-valued function
\begin{align}
	G(\lb)=\left\{\begin{array}{ll}
		\begin{pmatrix}
			1& \pm \ii r(\lb)^{-1} \mathrm{e}^{2tg(\lb)}\\0&1\\
		\end{pmatrix},& \lb\in\Omega_{1,\pm},\\
		\begin{pmatrix}
			1&0\\
			\pm \ii r(\bar{\lb}^{-1})^{-1}  \mathrm{e}^{-2tg(\lb)}&1\\
		\end{pmatrix},& \lb\in\Omega_{2,\pm},\\
        I,&\text{otherwise},
	\end{array}\right.
	\label{def G}
\end{align}
we define
\begin{align}
Z^{(1)}(\lb)=T(\lb)G(\lb).\label{trans1}
\end{align}
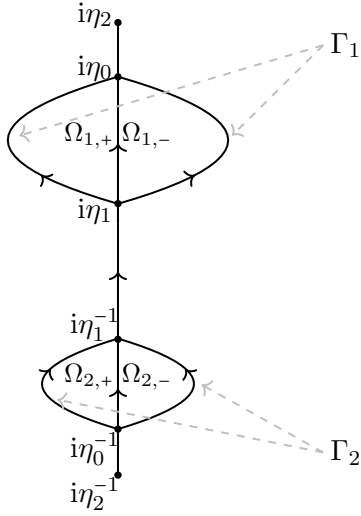
\begin{figure}[h]
	\centering
	\tikzset{every picture/.style={line width=0.75pt}}
	\begin{tikzpicture}
		\draw [->] (0,0) -- (0,1.15);
		\draw [->] (0,1.2) -- (0,2.7);
		\draw [->] (0,10/3) -- (0,4.4);
		\draw [->] (-1,3.97) -- (-1.01,3.98);
		\draw [->] (1,3.97) -- (1.01,3.975);
		\draw [->] (-0.89,1.4) -- (-0.884,1.408);
		\draw [->] (0.89,1.4) -- (0.884,1.408);
		\draw (0,0) -- (0,1);
		\draw (0,1) -- (0,10/3);
		\draw (0,10/3) -- (0,6);
		\draw [domain=3.6:6] plot ({(\x-4.8)*(\x-4.8)-1.45},{(\x-3.6)*0.7+3.6});
		\draw [domain=3.6:6] plot ({-(\x-4.8)*(\x-4.8)+1.45},{(\x-3.6)*0.7+3.6});
		\draw [domain=0:2] plot ({(\x-1)*(\x-1)-1},{0.6*\x+0.61});
		\draw [domain=0:2] plot ({-(\x-1)*(\x-1)+1},{0.6*\x+0.61});
		\filldraw (0,0) circle (1pt);
		\filldraw (0,0.61) circle (1pt);
		\filldraw (0,1.8) circle (1pt);
		\filldraw (0,3.6) circle (1pt);
		\filldraw (0,5.28) circle (1pt);
		\filldraw (0,6) circle (1pt);
		\draw [->,dashed,lightgray] (2.7,5.7) -- (1.5,4.5);
		\draw [->,dashed,lightgray] (2.7,5.7) -- (-1.3,4.5);
		\draw [->,dashed,lightgray] (2.7,0.3) -- (1.1,1.2);
		\draw [->,dashed,lightgray] (2.7,0.3) -- (-0.8,1);
		\node at (-0.38,4.5) {\small $\Omega_{1,+}$};
		\node at (0.38,4.5) {\small $\Omega_{1,-}$};
		\node at (-0.38,1.3) {\small $\Omega_{2,+}$};
		\node at (0.38,1.3) {\small $\Omega_{2,-}$};
		\node at (3,5.7) {$\Gamma_1$};
		\node at (3,0.3) {$\Gamma_2$};
		\node at (-0.3,-0.25) {$\ii\eta_2^{-1}$};
		\node at (-0.3,2) {$\ii\eta_1^{-1}$};
		\node at (-0.3,0.35) {$\ii\eta_0^{-1}$};
		\node at (-0.3,3.5) {$\ii\eta_1$};
		\node at (-0.3,6.1) {$\ii\eta_2$};
		\node at (-0.3,5.4) {$\ii\eta_0$};
	\end{tikzpicture}
	\caption{\footnotesize Lenses around the interval $\ii(\eta_{1},\eta_0(\xi))\cup\ii(\eta_0(\xi)^{-1},\eta_{1}^{-1})$ and the regions $\Omega_{j,\pm}$, $j=1,2$. }\label{figopen1}
\end{figure}
\begin{comment}
Thus,  for $\xi \in H_{I} \cup T_{II} \cup H_{II}$,  with the aid of Proposition \ref{propg2} for $g(\lb)$, properties \eqref{deltainfty} and \eqref{deleta01} for $\delta(\lb)$, and the definition \eqref{def G} of  $G(\lb)$, the first transformation of RH problem \ref{RHP0}  is defined by
\begin{align}
Z^{(1)}(\lb)=\delta(\infty)^{\sigma_{3}}\E^{tg^{(\infty)}\sigma_3}Z(\lb)\E^{- (tg(\lb)-\phi(\lb)/2)\sigma_3}\delta(\lb)^{-\sigma_{3}}G(\lb).\label{trans1}
\end{align}
\end{comment}
It is then readily seen that $Z^{(1)}$ satisfies the following RH problem.
\begin{Rhp} \hfill\label{Rhp 2}
\begin{itemize}
	\item  $Z^{(1)}(\lb)$ is analytic in $\mathbb{C}\setminus(\Gamma_{1}\cup\Gamma_{2}\cup\Sigma)$, where  $\Gamma_{1}$ and $\Gamma_{2}$ are shown in Figure \ref{figopen1}.
	\item For $\lb\in \Gamma_{1}\cup\Gamma_{2}\cup\Sigma$, $Z^{(1)}(\lb)$ satisfies the jump condition $Z^{(1)}_+(\lb)=Z^{(1)}_-(\lb)J^{(1)}(\lb)$, where for $\xi \in H_{I}\cup H_{II}$,
	\begin{align}\label{jump M1}
	J^{(1)}(\lb)=\left\{\begin{array}{ll}
		\begin{pmatrix}
			1&-\ii r(\lb)^{-1}\delta(\lb)^{2} \mathrm{e}^{2tg(\lb)}\\
			0&1\\
		\end{pmatrix},&\lb\in\Gamma_{1},\\
			\begin{pmatrix}
				1&0\\
				-\ii r(\bar{\lb}^{-1})^{-1}\delta(\lb)^{-2} \mathrm{e}^{-2tg(\lb)}&1\\
			\end{pmatrix},&\lb\in\Gamma_{2},\\
			\begin{pmatrix}
				0&\ii  \\
				\ii &0\\
			\end{pmatrix},\quad &\lb\in\ii(\eta_{1},\eta_0(\xi))\cup\ii(\eta_0(\xi)^{-1},\eta_{1}^{-1}),\\
			\begin{pmatrix}
				1&0\\
				\i r(\lb)\delta(\lb)^{-2}\e^{-2tg(\lb)}&1\\
			\end{pmatrix},&\lb\in\ii (\eta_0(\xi),\eta_{2}),\\
			\begin{pmatrix}
				1&\i r(\bar{\lb}^{-1})\delta(\lb)^{2}\e^{2tg(\lb)}\\
				0&1\\
			\end{pmatrix},&\lb\in\ii (\eta_{2}^{-1},\eta_0(\xi)^{-1}),\\
            \mathrm{e}^{\ii (t\Omega+\Delta)\sigma_{3}},\quad&\lb\in\ii(\eta_{1}^{-1},\eta_{1});
		\end{array}\right.
	\end{align}
    for $\xi \in T_{II}$,
    \begin{align}
    	J^{(1)}(\lb)=\left\{\begin{array}{ll}
		\begin{pmatrix}
			1&-\ii r(\lb)^{-1}\delta(\lb)^{2} \mathrm{e}^{2tg(\lb)}\\
			0&1\\
		\end{pmatrix},&\lb\in\Gamma_{1},\\
			\begin{pmatrix}
				1&0\\
				-\ii r(\bar{\lb}^{-1})^{-1}\delta(\lb)^{-2} \mathrm{e}^{-2tg(\lb)}&1\\
			\end{pmatrix},&\lb\in\Gamma_{2},\\
			\begin{pmatrix}
				0&\ii  \\
				\ii &0\\
			\end{pmatrix},\quad &\lb\in\ii(\eta_{1},\eta_0(\xi))\cup\ii(\eta_0(\xi)^{-1},\eta_{1}^{-1}),\\
			\left(\begin{matrix}
					r(\lb)^{-1}\delta_-(\lb)^2\E^{2t g_-(\lb)}&0\\\ii&r(\lb)^{-1}\delta_+(\lb)^2\E^{2t g_+(\lb)}
				\end{matrix}\right),&\lb\in\ii (\eta_0(\xi),\eta_{2}),\\
			\left(\begin{matrix}
					r(\bar\lb^{-1})^{-1}\delta_+(\lb)^{-2}\E^{-2t g_+(\lb)}&\ii \\0&r(\bar\lb^{-1})^{-1}\delta_-(\lb)^{-2}\E^{-2t g_-(\lb)}
				\end{matrix}\right),&\lb\in\ii (\eta_{2}^{-1},\eta_0(\xi)^{-1}),\\
            \mathrm{e}^{\ii (t\Omega+\Delta)\sigma_{3}},\quad&\lb\in\ii(\eta_{1}^{-1},\eta_{1}).
		\end{array}\right.
	\end{align}
	\item  As $ \lb\to\infty$, we have $
	Z^{(1)}(\lb)= I+\oo(\lb^{-1})$.
\end{itemize}
\end{Rhp}
It is noticed from the signature table of Re$[g]$ in Figure \ref{fig reg}  that the jump matrix $J^{(1)}$ on $\Gamma_{1}\cup\Gamma_{2}$  decays exponentially to $I$ as $t\to+\infty$ except around the endpoints. It inspires us to ignore the jump away from the endpoints and approximate $Z^{(1)}$ in global and local manners. In order to achieve this purpose, 
let
\begin{align}\label{def Up}
    U(p):=\{\lb:|\lb-p|\leq \epsilon\},\quad p=\ii\eta_{1},\ \ii\alpha(\xi),
\end{align}
where  $\epsilon>0$ is a sufficiently small constant,
and define 
\begin{align}\label{def Up1}
    U(p):= \overline{U(\bar p^{-1})^{-1}}, \quad p=\ii\eta_{1}^{-1},\ \ii\alpha(\xi)^{-1}.
\end{align}

% \begin{remark}
% From \eqref{p0}, we obtain the estimate $|\lambda_0(\xi)-\eta_2| \lesssim |\xi -\xi_{\mathrm{crit}}|$ as $\xi \to \xi_{\mathrm{crit}}$. From Definition \ref{def region}, for $\xi \in T_{II}$ we have $\lambda_0(\xi) \in U(\i\eta_2)$. Together with the definition of $\alpha(\xi)$ in \eqref{def alpha}, it suffices to analyze the local behavior of $\i\eta_1^{\pm 1}$ and $\i\alpha(\xi)^{\pm 1}$.

% \end{remark}
%Moreover, denote 
%\begin{align}\label{defun}
%    U:=\bigcup_{p=\ii\eta_1^{\pm1},\ii\alpha(\xi)^{\pm1}}U(p).
%\end{align}

In what follows, we will construct  the global parametrix $Z^{(\infty)}$ in Section \ref{sebsec glo} and the local parametrix $Z^{(p)}$ in each  $ U(p)$ in Section \ref{sebsec loc} such that
\begin{align}\label{def:err}
Z^{(1)}(\lb)=\left\{\begin{array}{ll}
	E(\lb)Z^{(\infty)}(\lb),\quad &\lb\in \mathbb{C}\setminus U,\\
	E(\lb)Z^{(p)}(\lb),\quad &\lb\in U(p),\\
\end{array}\right.
\end{align}
where 
\begin{align}\label{defun}
    U:=\bigcup_{p=\ii\eta_1^{\pm1},\ii\alpha(\xi)^{\pm1}}U(p),
\end{align}
and $E$ is an error function satisfying a small norm RH problem as shown in Section \ref{sebsec E}. Note that for $\xi\in T_{II}$, although the lenses start from $ \lambda_0(\xi)^{\pm 1}$, we have $\lambda_0(\xi) \to \eta_2$ for large positive $t$ and $\alpha(\xi)=\eta_2$ (see \eqref{def alpha}) in this case, it thus suffices to analyze the local behavior near $\lambda=\i\eta_1^{\pm 1}$ and $\lambda=\i\alpha(\xi)^{\pm 1}$.

% From \eqref{p0}, we obtain the estimate $|\lambda_0(\xi)-\eta_2| \lesssim |\xi -\xi_{\mathrm{crit}}|$ as $\xi \to \xi_{\mathrm{crit}}$. From Definition \ref{def region}, for $\xi \in T_{II}$ we have $\lambda_0(\xi) \in U(\i\eta_2)$. Together with the definition of $\alpha(\xi)$ in \eqref{def alpha}, it suffices to analyze the local behavior of $\i\eta_1^{\pm 1}$ and $\i\alpha(\xi)^{\pm 1}$.

To maintain consistency in the narrative, we define for $\xi \in T_{I}$,
\begin{align}\label{defZ1T1}
Z^{(1)}(\lb)=Z(\lb),
\end{align}
which can also be approximated in the way of \eqref{def:err} with $p=\ii \eta_{1}^{\pm 1}$ and $U:=\bigcup_{p=\ii\eta_1^{\pm1}}U(p)$.

\subsection{Global parametrix}\label{sebsec glo}
For $\xi \in H_{I} \cup T_{II} \cup H_{II}$,  the global parametrix $Z^{(\infty)}$  satisfies the following  RH problem.
\begin{Rhp} \hfill\label{Rhp glo}
\begin{itemize}
	\item $Z^{(\infty)}(\lb)$ is analytic in $\mathbb{C}\setminus\ii[\alpha(\xi)^{-1},\alpha(\xi)]$.
	\item For $\lb\in \ii(\alpha(\xi)^{-1},\alpha(\xi))$, $Z^{(\infty)}(\lb)$ satisfies the  jump condition \begin{align}
		Z^{(\infty)}_+(\lb)=
		Z^{(\infty)}_-(\lb)\left\{\begin{array}{ll}		
			\begin{pmatrix}
				0&\ii  \\
				\ii &0\\
			\end{pmatrix},\quad &\lb\in\ii(\eta_{1},\alpha(\xi))\cup\ii(\alpha(\xi)^{-1},\eta_{1}^{-1}),\\
			\mathrm{e}^{\ii (t\Omega+\Delta)\sigma_{3}},\quad&\lb\in\ii(\eta_{1}^{-1},\eta_{1}).
		\end{array}\right.
	\end{align}
	\item As $\lb\to\infty$, we have   $ Z^{(\infty)}(\lb)=I+\oo(\lb^{-1}).$
\end{itemize}
\end{Rhp}
Similar to RH problem \ref{r3.2},  one can solve the above RH problem with
\begin{subequations}\label{zinftyc2}
\begin{align}
	&Z^{(\infty)}_{11}(\lb)=\frac{(\kappa(\lb)+1/\kappa(\lb))\vartheta(\AAA(\lb)+\AAA(0)+\frac{1}{2}+\frac{t\Omega+\Delta}{2\pi},\tau)\vartheta(0,\tau)}{2\vartheta(\AAA(\lb)+\AAA(0)+\frac{1}{2},\tau)\vartheta(\frac{t\Omega+\Delta}{2\pi},\tau)},\\
	&Z^{(\infty)}_{12}(\lb)=\frac{(\kappa(\lb)-1/\kappa(\lb))\vartheta(-\AAA(\lb)+\AAA(0)+\frac{1}{2}+\frac{t\Omega+\Delta}{2\pi},\tau)\vartheta(0,\tau)}{2\vartheta(-\AAA(\lb)+\AAA(0)+\frac{1}{2},\tau)\vartheta(\frac{t\Omega+\Delta}{2\pi},\tau)}, \label{eq:Zinf12}\\
	&Z^{(\infty)}_{21}(\lb)=\frac{(\kappa(\lb)-1/\kappa(\lb))\vartheta(\AAA(\lb)-\AAA(0)-\frac{1}{2}+\frac{t\Omega+\Delta}{2\pi},\tau)\vartheta(0,\tau)}{2\vartheta(\AAA(\lb)-\AAA(0)-\frac{1}{2},\tau)\vartheta(\frac{t\Omega+\Delta}{2\pi},\tau)},\\
	&Z^{(\infty)}_{22}(\lb)=\frac{(\kappa(\lb)+1/\kappa(\lb))\vartheta(-\AAA(\lb)-\AAA(0)-\frac{1}{2}+\frac{t\Omega+\Delta}{2\pi},\tau)\vartheta(0,\tau)}{2\vartheta(-\AAA(\lb)-\AAA(0)-\frac{1}{2},\tau)\vartheta(\frac{t\Omega+\Delta}{2\pi},\tau)},
\end{align}
\end{subequations}
where
%The above RH problem is explicitly solvable.
%To present its explicit solution, we first  introduce the notations that will appear in the expression as follows. The analytic function $\kappa=\kappa(\lb)$, defined for  $\lb\in\mathbb{C}\setminus\ii(\eta_{1},\alpha(\xi))\cup\ii(\alpha(\xi)^{-1},\eta_{1}^{-1})$, is given by
\begin{align*}
	\kappa(\lb)= \left( \frac{\lb-\ii\alpha(\xi)}{\lb-\ii\eta_{1}}\right)^{\frac{1}{4}} \left(\frac{\lb-\ii\eta_{1}^{-1}}{\lb-\ii\alpha(\xi)^{-1}}\right)^{\frac{1}{4}},\quad \lb \in \C \setminus (\i(\eta_1, \alpha(\xi) )\cup \i (\eta_1^{-1},\alpha(\xi)^{-1})),
\end{align*}  
and the branch is chosen such that $  \kappa(\lb)=1+\oo(\lb^{-1})$ as $ \lb\rightarrow\infty$.
Moreover, by \eqref{int 1/R(s)}, the normalized Abel differential now becomes
\begin{align*}
	\omega(\lb)=\frac{\ii l(\alpha(\xi))}{2K(k(\xi))\sqrt{(1-l(\eta_1)^2)(1-(\alpha(\xi))^2)}}\frac{\ddd \lb}{R(\lb)},
\end{align*}
where $k(\xi)=l(\eta_1)/l(\alpha(\xi))$ with $l(x)$ defined in \eqref{def lk}, $K$ and $R$ are given in \eqref{ei1} and \eqref{def R} respectively. The corresponding Abel-Jacobi map is given by
\begin{align*}
    \AAA(\lb) =\int_{\i \alpha(\xi)}^\lb \omega, \quad \lb \in \C \setminus \i(\alpha(\xi)^{-1},\alpha(\xi)),
\end{align*}
where the path of integration is on any simple arc from $\i\alpha(\xi)$ to $\lb$ which does not intersect $\i(\alpha(\xi)^{-1},\alpha(\xi))$ with 
the $\mathfrak{b}$-period defined by
\begin{align}\label{def tau}
	\tau:=\int_{\mathfrak{b}}\omega=\frac{\ii K(\sqrt{1-k(\xi)^2})}{2K(k(\xi))}.
\end{align}

For $\xi \in T_{I}$, the global paramertix is given by 
\begin{align}\label{T1glo}
	    Z^{(\infty)}(\lb) = \left(\frac{\lb- \ii \eta_1}{\lb- \ii \eta_1^{-1}}\right)^{\tilde{m} \sigma_3},
	\end{align}
where $\tilde{m} \in \mathbb{N}$ is an integer to be determined later.

% we have the different global parametrix. Let $\tilde{m} \in \mathbb{N}$ be an integer to be determined later. Then for $\lb \in \C\setminus U$, 
% 	\begin{align}\label{T1glo}
% 	    Z^{(\infty)}(\lb) = \left(\frac{\lb- \ii \eta_1}{\lb- \ii \eta_1^{-1}}\right)^{\tilde{m} \sigma_3}.
% 	\end{align}
%  which can be induced by matching with the generalized Laguerre polynomials in $U(p),\ p= \ii \eta_{1}^{\pm 1}$. See more details in the following subsection.

\subsection{Local parametrices}\label{sebsec loc}	
In this section, we build local parametrix $Z^{(p)}$ in each $U(p)$. As aforementioned, the constructions differ for $\xi$ belonging to different regions, which will be treated separately. In what follows, we shall state the local RH problem in a unified manner, and it is understood that the jump condition occurs only when the corresponding set is not an empty set.  
			% In this subsection, we present the construction of the local parametrices near the critical points. 
   %          We show that, depending on the behaviors of the phase functions $\phi$ and $g$, the local RH problem $Z^{(p)}$ for each points $p$ can be  described in terms of
		 % local parametrices associated with different special functions. 
% For $\xi \in T_I$, the local parametrices near $p=\ii \eta_{1}^{\pm1}$ are expressed in terms of the generalized Laguerre polynomial of index $0$;   for $\xi \in H_I$, the local parametrices near $p=\i \eta_1^{\pm1}$ and $p= \i \alpha(\xi)^{\pm1}$ are described by the Bessel function of index $0$ and the Airy function, respectively;  for $\xi \in T_{II}$, the local parametrices near $p=\ii \eta_1^{\pm1}$ and $p=\ii \eta_2^{\pm1}$ are expressed through the Bessel function of index $0$ and the Painlev\'e XXXIV equation, respectively; while for $\xi \in H_{II}$, the local parametrices near $p=\ii \eta_1^{\pm1}$ and $p=\ii \eta_2^{\pm1}$ are described solely by the Bessel function of index $0$.
    
\paragraph{Local parametrices for $\xi \in T_{I}$}
 
%  Now we present the local parametrices for $\xi \in T_{I}$.
By  Definition \ref{def region},  the region $T_I$  is divided into a family of subregions labeled by a parameter $m\in \{0\}\cup \mathbb{N}$:
\begin{align*}
 T_I=\left\{(n+1,t):\  \bigcup_{m=0}^{\infty}T^{(m)}_I\right\}, \  T^{(m)}_I=\left\{	 (n+1,t):\ -\frac{2m+1}{\ln \eta_{1}}\frac{\ln t}{t}<\xi+\frac{\eta_{1}-\eta_{1}^{-1}}{\ln\eta_{1}}< - \frac{2m-1}{\ln \eta_{1}}\frac{\ln t}{t}\right\}. 
\end{align*}
Local parametrices are constructed separately within each subregion $T^{(m)}_I$. For a fixed $m$,   the construction proceeds as follows.

For	$p=\ii \eta_{1}^{\pm 1}$,  the local RH problem in each small neighborhood $U(p)$ defined in  \eqref{def Up} and \eqref{def Up1}, is formulated as follows.

			\begin{Rhp}\hfill\label{Rhp loc0}
				\begin{itemize}
					\item $Z^{(p)}(\lb)$ is analytic in $ U(p)\setminus\left( \Sigma_1 \cup \Sigma_2 \right)$.
					\item $Z^{(p)}(\lb)$  satisfies the jump condition $$Z^{(p)}_+(\lb)=Z^{(p)}_-(\lb)\left\{\begin{array}{ll}
					\begin{pmatrix}
						1&0\\
						\i r(\lb)\e^{-\phi(\lb)}&1\\
					\end{pmatrix},&\lb\in \Sigma_1 \cap U(p),\\
					\begin{pmatrix}
						1&\i r(\bar{\lb}^{-1})\e^{\phi(\lb)}\\
						0&1\\
					\end{pmatrix},&\lb\in \Sigma_2 \cap U(p),
					\end{array}\right.$$
                    where $\Phi$ is defined in \eqref{phi}. 
					\item As $t\to+\infty$, $Z^{(p)}(\lb)$ matches $Z^{(\infty)}(\lb)$ on the boundary $\partial U(p)$ of $U(p)$.
				\end{itemize}
			\end{Rhp}	
The above RH problem can be solved explicitly with the aid of the generalized Laguerre polynomial parametrix introduced  in Appendix \ref{Lague}.  We give a sketch of the construction in what follows. 

For $p=\i \eta_1^{\pm 1}$, we define an analytic function in $U(p)$ by
          \begin{align}\label{defz0}
%        \zeta:=\zeta(\lb) = \pm t\left(-\frac{\eta_1-\eta_1^{-1}}{ \ln \eta_1} \ln \lb -\ii (\lb+\lb^{-1}-2) -\ii \left(2-\frac{\pi}{2} \frac{\eta_1-\eta_1^{-1}}{ \ln \eta_1}  \right)           \right).\\
         \zeta=\zeta(\lb) := \mp t\left(\frac{\eta_1-\eta_1^{-1}}{ \ln \eta_1} \ln \lb +\ii (\lb+\lb^{-1}) -\frac{\pi\ii}{2} \frac{\eta_1-\eta_1^{-1}}{ \ln \eta_1}  \right).
          \end{align}
It is readily seen that 
% follows that as $\lb \to p$,
\begin{align} \label{zetaT1}
    \zeta= \mp t \left( \frac{\eta_1-\eta_1^{-1}}{ \ln \eta_1} p^{-1}  + \i (1-p^{-2})
    \right) (\lb-p) + \oo((\lb- p)^2), \quad \lambda \to p.
\end{align}
Thus, we have
		 \begin{align*}
		     \zeta(p)=0, \qquad \arg( \zeta'(p))=-\frac{\pi}{2}.
		 \end{align*}
The jump contour from the $\lb$-plane to the $\zeta$-plane under the mapping \eqref{defz0}  for  $\lb \in U(\ii \eta_1)$ is depicted in Figure \ref{scalingzk}.

			\begin{figure}
 \begin{center}
   \begin{tikzpicture}[scale=1]
 %   \draw[dotted,thick,teal](-3,0) circle (2);
       \node[shape=circle,fill=black, scale=0.13]  at (-3,0){0};
        \node[below] at (-3,0) { $\ii \eta_1$};
           \draw[thick] (-3,0) --(-3,2);
         \draw[thick,->] (-3,0) --(-3,1);
        \node  at (-2.3,0.86) { $\Sigma_{1}$};
        \node[]  at (-3,2.3) { $\lb$-plane};
   %     \draw [dotted,-latex,teal ]  (-1, 0)  to  [out=90,  in=0] (-3, 2);
  %   \draw[dotted,thick,teal](3,0) circle (2);
        \node[shape=circle,fill=black, scale=0.13]  at (2.5,0){0};
        \node[below] at (2.5,0) { $0$};
           \draw[thick] (2.5,0) --(4.5,0);
         \draw[thick, ->] (2.5,0) --(3.6,0);
      %  \node  at (5,0.86) { $\Sigma_{1}$};
        \node[]  at (3.3,2.3) { $\zeta$-plane};

 %   \node    at (-3, -3 )  {  $ (a)$};
 %   \node    at (4, -3 )  {  $ (b)$};
 \draw [-latex] (-0.5,0)--(0.5,0);
    \end{tikzpicture}
    \caption{\footnotesize{The jump contour of $Z^{(\ii \eta_1)}$ from the $\lb$-plane (left) to the $\zeta$-plane (right) under the mapping \eqref{defz0}. }}
      \label{scalingzk}
  \end{center}
\end{figure}

For $\lb \in U(\i \eta_1)$, we then define
%the following local parametrix  for the endpoints $p=\ii \eta_{1}^{\pm1}$, $\lb \in U(p)$,
	\begin{align}\label{2tt1}
	    Z^{(\i \eta_1)}(\lb)= A(\lb)
	L(\zeta(\lb))G^{(\i \eta_1)}(\lb)^{-1}, 
	\end{align}	
where $L$ depending on $m$, is  the generalized Laguerre polynomial parametrix given in \eqref{def:LagSol},
\begin{align}
    A(\lb)= Z^{(\infty)}(\lb) \zeta^{m\sigma_3} G^{(\i \eta_1)}(\lb)
    \end{align}
 with
    \begin{align}
    G^{(\i \eta_1)}(\lb) =
         \left(\ii r(\lb) \right)^{-\frac{\sigma_3}{2}} \E^{\frac{t}{2}\left( \left( \xi + \frac{\eta_1-\eta_1^{-1}}{\ln \eta_1} \right)\ln \lb + \ii  \left(2-\frac{\pi}{2} \frac{\eta_1-\eta_1^{-1}}{\ln \eta_1}\right)\right)\sigma_3}.
\end{align}

To avoid pole singularity near at $\lb=\i \eta_1$, we choose $\tilde{m}$ in \eqref{T1glo} to be $-m$, which also yields that $A$ is analytic in $ U(\i \eta_1)$. 
For $\lb \in U(\ii\eta_{1}^{-1})$, the local parametrix is simply defined through the symmetry relation 
				\begin{align}\label{locetaT12}
	Z^{(\i \eta_1^{-1})}(\lb)=\sigma_2\overline{Z^{(\infty)}(0)^{-1}Z^{(\i \eta_1)}(\bar\lb^{-1})}\sigma_2.
\end{align}

% In order not to have poles at $\lb=\i \eta_1^{\pm1}$, we henceforth choose $\tilde{m}$ in \eqref{T1glo} equals to $-m$. Then, in view of RH problem \ref{Rhp glo} for $Z^{(\infty)}$, a direct calculation shows that $A$ is analytic in $ U(p)$.
%Also,  using RH problem \ref{T1glo} and the generalized Laguegrre polynomial parametrix in Appendix \ref{Lague}, it follows that $Z^{(p)}(\lb)$ satisfies the requirements in RH problem \ref{Rhp loc0}.

From RH problem \ref{Lzeta} for $L$, one can check that $Z^{(p)}$ defined in \eqref{2tt1} and \eqref{locetaT12} indeed solve RH problem \ref{Rhp loc0} for $p=\i \eta_1^{\pm 1}$.  In particular, 
 we obtain from  \eqref{T1glo}, \eqref{zetaT1} and \eqref{Linf} that, as $t\to+\infty$,
		% we find that for $p=\ii \eta_1^{\pm 1}$, $Z^{(p)}(\lb)$ satisfies the asymptotic behavior for $\lb\in \partial U(p)$  as:
				\begin{align*}					
		Z^{(p)}(\lb)Z^{(\infty)}(\lb)^{-1}=\begin{cases}
	I+\oo(\min (\zeta^{-1},  \zeta^{2m-1}\E^{t(\xi+\frac{\eta_1-\eta_1^{-1}}{\ln \eta_1})\ln \eta_1}), \zeta^{-2m-1}\E^{-t(\xi+\frac{\eta_1-\eta_1^{-1}}{\ln \eta_1})\ln \eta_1})),  \ &m\ge 1,
    \\
		    I+\oo( \min( \zeta^{-1} \E^{-t(\xi +\frac{\eta_1-\eta_1^{-1}}{\ln \eta_1} ) \ln \eta_1}, \zeta^{-1} \E^{t(\xi +\frac{\eta_1-\eta_1^{-1}}{\ln \eta_1} ) \ln \eta_1} )), \ &m=0,
		\end{cases}
					\end{align*}
 for $\xi \in T_I^{(m)}$ and $\lb\in\partial U(p)$. Combining with \eqref{defz0}, we arrive at
				 \begin{align}\label{asy:loc y0}
		& Z^{(p)}(\lb)Z^{(\infty)}(\lb)^{-1}\nonumber 
        \\
        &=\begin{cases}I+\oo (\min( t^{-1}, \E^{(2m-1)\ln t +  t(\xi+\frac{\eta_1-\eta_1^{-1}}{\ln \eta_1})\ln \eta_1},\E^{-(2m+1)\ln t - t(\xi+\frac{\eta_1-\eta_1^{-1}}{\ln \eta_1})\ln \eta_1} )),\ &m \ge 1,\\
		I+\oo( \min( t^{-1} \E^{-t(\xi +\frac{\eta_1-\eta_1^{-1}}{\ln \eta_1} ) \ln \eta_1}, t^{-1} \E^{t(\xi +\frac{\eta_1-\eta_1^{-1}}{\ln \eta_1} ) \ln \eta_1} )), \ &m=0.
		\end{cases}
	    \end{align}
        % In view of RH problem \ref{Lzeta} for $L$, we conclude that $Z^{(p)}$ defined in \eqref{2tt1} solves RH problem \ref{Rhp loc0} with $p=	\ii\eta_1^{\pm1}$.	

\paragraph{Local parametrices for $\xi \in H_{I}$}

%Now we present the local parametrices for $\xi \in H_{I}$. 
For $p=\i \eta_1^{\pm1}, \i \alpha(\xi)^{\pm1}$, 
the local RH problem in each small neighborhood $U(p)$ defined in \eqref{def Up} and \eqref{def Up1}  reads as follows.		
				\begin{Rhp}\hfill\label{Rhp loc}
				\begin{itemize}
					\item $Z^{(p)}(\lb)$ is  analytic in $ U(p)\setminus\left(\Gamma_{1}\cup\Gamma_{2}\cup \Sigma\right)$.
					\item $Z^{(p)}(\lb)$  satisfies the jump condition $$Z^{(p)}_+(\lb)=Z^{(p)}_-(\lb)\left\{\begin{array}{ll}
						\begin{pmatrix}
						1&-\ii r(\lb)^{-1}\delta(\lb)^2 \mathrm{e}^{2tg(\lb)}\\
						0&1\\
					\end{pmatrix},&\lb\in\Gamma_{1}\cap U(p),\\
					\begin{pmatrix}
						1&0\\
						-\ii r(\bar{\lb}^{-1})^{-1}\delta(\lb)^{-2} \mathrm{e}^{-2tg(\lb)}&1\\
					\end{pmatrix},&\lb\in\Gamma_{2}\cap U(p),\\
					\begin{pmatrix}
						0&\ii  \\
						\ii &0\\
					\end{pmatrix},\quad &\lb\in(\ii(\eta_{1},\alpha(\xi))\cup\ii(\alpha(\xi)^{-1},\eta_{1}^{-1}))\cap U(p),\\
					\mathrm{e}^{\ii (t\Omega+\Delta)\sigma_{3}},\quad&\lb\in\ii(\eta_{1}^{-1},\eta_{1})\cap U(p),\\
					\begin{pmatrix}
						1&0\\
						\i r(\lb)\delta(\lb)^{-2}\e^{-2tg(\lb)}&1\\
					\end{pmatrix},&\lb\in\ii (\alpha(\xi),\eta_{2})\cap U(p),\\
					\begin{pmatrix}
						1&\i r(\bar{\lb}^{-1})\delta(\lb)^2\e^{2tg(\lb)}\\
						0&1\\
					\end{pmatrix},&\lb\in\ii (\eta_{2}^{-1},\alpha(\xi)^{-1})\cap U(p).
					\end{array}\right.$$
					\item As $t\to+\infty$, $Z^{(p)}(\lb)$ matches $Z^{(\infty)}(\lb)$ on the boundary $\partial U(p)$ of $U(p)$.
				\end{itemize}
			\end{Rhp}

            The above problem  for $p=\i \eta_1^{\pm1}$ can be solved explicitly with the aid of the Bessel parametrix introduced in Appendix \ref{app bessel}, while for $p=\i \alpha(\xi)^{\pm 1}$ it can be solved explicitly using the Airy parametrix introduced in Appendix \ref{app airy}. We give a sketch of the construction in what follows. 
            
For $p=\i \eta_1^{\pm1} $, it follows from the definition of $g(\lb)$ in \eqref{dg} that as $\lb\to p$ and $\pm \Re \lb>0$, 
			\begin{align}
					g(\lb)=g_\pm(p)+\sqrt{(\lb-p)\frac{ (-(\eta_1^2+\eta_1^{-2})+c_1(\xi)(\eta_1+\eta_1^{-1})-\ii c_0(\xi))^2}{(p-\ii\alpha(\xi))(p-\ii\alpha(\xi)^{-1})(p+p^{-1})}}+\oo\left((\lb-p)^{\frac{3}{2}}\right). 
				\end{align}
			%where the sign $\pm$ is contingent upon whether $\lb$ converges to $p$ from the $+$ or $-$ direction. It also holds 
            Note that $g_+(p)=-g_-(p)$ and
				 \begin{align}
				 		g_+(\ii\eta_{1}^{\pm1})=\frac{-
				 			\ii\Omega}{2},				
				\end{align}
			with $\Omega$ given in \eqref{def Omegj}.
				We then define a local conformal map in  $U(p)$ by
                \begin{align}\label{T2zeta1}
					\zeta=\zeta(\lb):=\frac{t^2}{4}(g(\lb)-g_\pm(p))^2,\quad \mp\Re\lb>0 
				\end{align}
				with $\zeta(p)=0$ and $$\zeta'(p) = \frac{t^2 (-(\eta_1^2+\eta_1^{-2})+c_1(\xi)(\eta_1+\eta_1^{-1})-\ii c_0(\xi))^2}{4(p-\ii\alpha(\xi))(p-\ii\alpha(\xi)^{-1})(p+p^{-1})}\in\ii\mathbb{R}^{\pm},\quad p=\ii\eta_1^{\pm1}.$$
                         
				For $\lb\in U(\ii\eta_{1})$,  the local parametrix thereby is given by
				\begin{align}\label{loc1}
					Z^{(\ii\eta_{1})}(\lb)=A(\lb)\Psi_{\Bes}(\zeta(\lb))\sigma_{1}\mathrm{e}^{2\zeta(\lb)^{\frac{1}{2}}\sigma_{3}}(\E^{\frac{\pi\ii}{4}}\delta(\lb)r(\lb)^{-\frac{1}{2}})^{-\sigma_{3}} \mathrm{e}^{\pm\frac{ \ii  t\Omega\sigma_{3}}{2}},\quad \zeta\in \C^{\pm},
				\end{align}
				where	
				\begin{align*}
				&	A(\lb)=Z^{(\infty)}(\lb)\mathrm{e}^{\mp\frac{ \ii  t\Omega\sigma_{3}}{2}}(\E^{\frac{\pi\ii}{4}}\delta(\lb)r(\lb)^{-\frac{1}{2}})^{\sigma_{3}}\frac{1}{\sqrt{2}}\begin{pmatrix}
						-\ii &1\\1&-\ii \\
					\end{pmatrix}(2\pi\zeta^{\frac{1}{2}})^{\frac{\sigma_{3}}{2}},
				\end{align*}
				and $\Psi_{\Bes}$ is the Bessel parametrix defined in \eqref{def:Bes}. In view of RH problem \ref{Rhp glo} for $Z^{(\infty)}$, a direct calculation shows that $A$ is analytic in $ U(\i \eta_1)$.
                %Also,  using RH problem \ref{Rhp glo} and the Bessel parametrix in Appendix \ref{app bessel}, it follows that $Z^{(\ii\eta_1)}(\lb)$ satisfies the requirements in RH problem \ref{Rhp loc}.
				For $\lb \in U(\ii\eta_{1}^{-1})$, the local parametrix is simply defined through the symmetry relation 
				\begin{align}\label{loceta12}
	Z^{(\i \eta_1^{-1})}(\lb)=\sigma_2\overline{Z^{(\infty)}(0)^{-1}Z^{(\i \eta_1)}(\bar\lb^{-1})}\sigma_2.
\end{align}

From RH problem \ref{rhp:Bes} for $\Psi_{\mathrm{Bes}}$,  one can check that $Z^{(p)}$ defined in \eqref{loc1} and  \eqref{loceta12} solves   RH problem \ref{Rhp loc} for $p=\ii\eta_1^{\pm1}$.	
%Additionally,  since $Z^{(\ii\eta_1)}(\lb)$ have fulfilled the requirements, it is easy to confirm that $Z^{(\ii\eta_1^{-1})}(\lb)$ also fulfill them in RH problem \ref{Rhp loc}.
Moreover, as $t\to +\infty$, we have  
\begin{align*}					
Z^{(p)}(\lb)Z^{(\infty)}(\lb)^{-1}=I+\oo(\zeta^{-\frac{1}{2}}),\qquad \lb\in \partial U(p).
\end{align*}
Combining with \eqref{T2zeta1}, this leads to
					 \begin{align}\label{eq:asyBoundary}
				Z^{(p)}(\lb)Z^{(\infty)}(\lb)^{-1}=I+\oo(t^{-1}),
                \qquad \lb\in \partial U(p).
					\end{align}

We next turn to the local parametrix for $p=\i \alpha(\xi)^{\pm 1}$. 
From the definition of $g(\lb)$ in \eqref{dg}, it follows that as $\lb\to p$,
			\begin{align}
			g(\lb)&=(\lb-p)^{\frac{3}{2}}\sqrt{\frac{p+p^{-1}}{(p-\ii\eta_1)(p-\ii\eta_1^{-1})}}\frac{2\xi+3(\alpha(\xi)+\alpha(\xi)^{-1})-\eta_1-\eta_1^{-1}}{4p}\times \left(1+\oo\left( \lb-p\right) \right).
			\end{align}
%					Therefore, we 
        Define  a local conformal map in $U(p)$, $p=\ii\alpha(\xi)^{\pm1}$, by 
        \begin{align}
						\zeta=\zeta(\lb):=\left(\pm\dfrac{3tg(\lb)}{2} \right)^{\frac{2}{3}}.\label{def zeta2}
					\end{align}
					A direct calculation shows that $	\zeta(p)=0$ and
					\begin{align*}
						\zeta'(p)= t^{\frac{2}{3}}\left[\frac{6\xi+9(\alpha(\xi)+\alpha(\xi)^{-1})-3(\eta_1+\eta_1^{-1})}{-8p}\left(\frac{p+p^{-1}}{(p-\ii\eta_1)(p-\ii\eta_1^{-1})}\right)^{\frac{1}{2}} \right]^{\frac{2}{3}}\in\ii\mathbb{R}^{\mp}.
                        %,\quad p=\ii\alpha(\xi)^{\pm1}.
					\end{align*}
		For $p=\ii\alpha(\xi)$, the local parametrix is then given by
					\begin{align}\label{1th1}
						Z^{(\ii\alpha(\xi))}(\lb)=A(\lb)\Psi_{\Ai}(\zeta(\lb)) \e^{\frac{2}{3} \zeta(\lb)^{\frac{3}{2}}\sigma_3}G^{(\ii\alpha(\xi))}(\zeta(\lb)),
					\end{align}
					where	
                      % \begin{align*}
					%	&G^{(\ii\alpha(\xi))}(\zeta)=
					%	\left\{	\begin{array}{ll}
					%		\sigma_{1}(\E^{\frac{\pi\ii}{4}}\delta(\lb)r(\lb)^{-\frac{1}{2}})^{-\sigma_{3}}, & p=\ii\alpha(\xi),\\
					%		(\E^{\frac{\pi\ii}{4}}\delta(\lb)r(\bar{\lb}^{-1})^{\frac{1}{2}})^{-\sigma_{3}}, & p=\ii\alpha(\xi)^{-1},
					%	\end{array}\right.\\
					%	&	A(\lb)=Z^{(\infty)}(\lb)G^{(p)}(\lb)^{-1}\frac{1}{\sqrt{2}}\begin{pmatrix}
				%	1	&	-\ii \\
				%		-	\ii &1\\
				%		\end{pmatrix}\zeta(\lb)^{\frac{\sigma_{3}}{4}},
				%	\end{align*} 
					  \begin{align*}
						&G^{(\ii\alpha(\xi))}(\zeta)=					
							\sigma_{1}(\E^{\frac{\pi\ii}{4}}\delta(\lb)r(\lb)^{-\frac{1}{2}})^{-\sigma_{3}},\\
						&	A(\lb)=Z^{(\infty)}(\lb)G^{(p)}(\lb)^{-1}\frac{1}{\sqrt{2}}\begin{pmatrix}
					1	&	-\ii \\
						-	\ii &1\\
						\end{pmatrix}\zeta(\lb)^{\frac{\sigma_{3}}{4}},
					\end{align*} 
					and $\Psi_{\Ai}$ is the Airy parametrix defined in \eqref{def:Airy}.  In view of RH problem \ref{Rhp glo} for $Z^{(\infty)}$, a direct calculation shows that $A$ is analytic in  $U(\i \alpha(\xi))$.  
                    %Also,  using RH problem \ref{Rhp glo} and the Airy parametrix in Appendix \ref{app airy}, it follows that $Z^{(p)}(\lb)$ satisfies the requirements in RH problem \ref{Rhp loc}.
                    And for $p=\ii\alpha(\xi)^{-1}$, the local parametrix is then given by
                    \begin{align}\label{loceta13}
	Z^{(\i \alpha(\xi)^{-1})}(\lb)=\sigma_2\overline{Z^{(\infty)}(0)^{-1}Z^{(\i \alpha(\xi))}(\bar\lb^{-1})}\sigma_2.
\end{align}                 
                    Consequently, for the endpoints
					$	p=	\ii\alpha(\xi)^{\pm1}$, we analogously obtain from  \eqref{asymlo} that $Z^{(p)}$ satisfies the asymptotic behavior for $\lb\in \partial U(p)$  as \begin{align*}					
						Z^{(p)}(\lb)Z^{(\infty)}(\lb)^{-1}=I+\oo(\zeta^{-\frac{3}{2}}),\quad t\rightarrow +\infty,
					\end{align*}
					or equivalently, by \eqref{def zeta2}, we have
					\begin{align}\label{asy:loc y2}
						Z^{(p)}(\lb)Z^{(\infty)}(\lb)^{-1}=I+\oo(t^{-1}), \qquad \lb\in \partial U(p).
					\end{align}
	This, together RH problem \ref{RHairy} for $\Psi_{\Ai}$, implies  that $Z^{(p)}$ defined in  \eqref{1th1} solves RH problem \ref{Rhp loc} with $	p=	\ii\alpha(\xi)^{\pm1}$.			

\paragraph{Local parametrices for $\xi \in H_{II}$}
In this case, all the local parametrices can be built with the aid of the Bessel parametrix. As the construction is similar to that of $Z^{(p)}$ with $p=\i \eta_1^{\pm 1}$ for $\xi \in H_{I}$, we omit the details here. Moreover, we note that the estimate 
\eqref{eq:asyBoundary} still holds in this case.

\paragraph{Local parametrices for $\xi \in T_{II}$} \label{SubsecP34loc}

%Now we present the local parametricesfor $\xi \in T_{II}$. 
For $p=\ii\eta_{1}^{\pm 1}$, $Z^{(p)}$ can be constructed by \eqref{loc1} and \eqref{loceta12} respectively. For  $p=\ii\alpha(\xi)^{\pm 1}=\ii\eta_2^{\pm}$, it needs a different construction.
%Moreover, in this region, there exists a pair of phase points $ \ii \lb_0^{\pm 1}$, $\lb_0>0$ of $g$ which satisfies that: When $\xi<\xi_{\mathrm{crit}}$,  $\re g(\ii \lb_0)=0$; When  $\xi>\xi_{\mathrm{crit}}$, $g_{\lb}(\ii \lb_0)=0$. As $\xi \to \xi_{\mathrm{crit}}$, $|\lb_0-\eta_2|\sim |\xi -\xi_{\mathrm{crit}} |$. Thus, when  $\xi>\xi_{\mathrm{crit}}$ in this region, we should open the lenses at the point $ \ii \lb_0^{\pm 1}$ instead of $\ii \alpha(\xi)^{\pm 1}$ in Subsection \ref{ol}.  
 For convenience,  we focus on the region $\left\{(n+1,t):0\le\xi-\xi_{\mathrm{crit}}\le Ct^{-2/3}\right\}$ in what follows, as the discussions for the other half region is similar.

For $p=\ii \eta_2^{\pm1}$, the local RH problem in each small neighborhood $U(p)$ defined in \eqref{def Up} and \eqref{def Up1}  reads as follows.		
				\begin{Rhp}\hfill\label{Rhp locp34}
				\begin{itemize}
					\item $Z^{(p)}(\lb)$ is analytic in $ U(p)\setminus\left(\Gamma_{1}\cup\Gamma_{2}\cup\ \Sigma\right)$.
					\item $Z^{(p)}(\lb)$   satisfies the jump condition $Z^{(p)}_+(\lb)=Z^{(p)}_-(\lb) J^{(p)}(\lb)$, where  {\small$$J^{(p)}(\lb)=\left\{\begin{array}{ll}
						\begin{pmatrix}
						1&-\ii r(\lb)^{-1}\delta(\lb)^2 \mathrm{e}^{2tg(\lb)}\\
						0&1\\
					\end{pmatrix},&\lb\in\Gamma_{1}\cap U(p),\\
					\begin{pmatrix}
						1&0\\
						-\ii r(\bar{\lb}^{-1})^{-1}\delta(\lb)^{-2} \mathrm{e}^{-2tg(\lb)}&1\\
					\end{pmatrix},&\lb\in\Gamma_{2}\cap U(p),\\
					\begin{pmatrix}
						0&\ii  \\
						\ii &0\\
					\end{pmatrix},\quad &\lb\in(\ii(\eta_{1},\lb_0(\xi))\cup\ii(\lb_0(\xi)^{-1},\eta_{1}^{-1}))\cap U(p),\\
					\mathrm{e}^{\ii (t\Omega+\Delta)\sigma_{3}},\quad&\lb\in\ii(\eta_{1}^{-1},\eta_{1})\cap U(p),\\
					\begin{pmatrix}
						r(\lb)^{-1}\delta_-(\lb)^{-2}\e^{2tg_-(\lb)}&0\\
						\ii &r(\lb)^{-1}\delta_+(\lb)^{-2}\e^{2tg_+(\lb)}\\
					\end{pmatrix},&\lb\in\ii (\lb_0(\xi),\eta_{2})\cap U(p),\\
					\begin{pmatrix}
					r(\bar{\lb}^{-1})^{-1}\delta_+(\lb)^{-2}\e^{-2tg_+(\lb)}&\ii \\
						0&	r(\bar{\lb}^{-1})^{-1}\delta_-(\lb)^{-2}\e^{-2tg_-(\lb)}\\
					\end{pmatrix},&\lb\in\ii (\eta_{2}^{-1},\lb_0(\xi)^{-1})\cap U(p),
					\end{array}\right.$$}
                    with $\lb_0(\xi)$ given in \eqref{p0}.
					\item As $t\to+\infty$, $Z^{(p)}(\lb)$ matches $Z^{(\infty)}(\lb)$ on the boundary $\partial U(p)$ of $U(p)$.
				\end{itemize}
			\end{Rhp}
The above RH problem  can be solved explicitly with the aid of the Painlev\'e XXXIV parametrix introduced in Appendix \ref{app4}.  We give a sketch of the construction in what follows. 	
				
From the definition of $g$ given in \eqref{dg}, it is readily seen that,  as $\lb \to \ii \eta_2$,
				\begin{align}\label{eq:gexpan}
				    g(\lb) = s_0 d_0^{\frac{1}{2}}  (\xi-\xi_{\mathrm{crit}}) \E^{-\frac{\pi \ii}{4}} (\lb-\ii\eta_2)^{\frac{1}{2}}+\frac{2}{3}d_0^{\frac{3}{2}}\E^{-\frac{3\pi\ii}{4}}(\lb-\ii\eta_2)^{\frac{3}{2}}+\oo((\xi-\xi_{\mathrm{crit}})(\lb-\ii\eta_2)^{\frac{3}{2}}),
				\end{align}
				where
				\begin{align}
				   & d_0 =\frac{2}{3} \left( \frac{P(\ii \eta_2)}{2\eta_2^2 \sqrt{(\eta_2-\eta_1)(\eta_2-\eta_1^{-1})(\eta_2-\eta_2^{-1})}}\right)^{\frac{2}{3}},\\
				   & s_0= \frac{2 d_0 \eta_2^2}{P(\ii \eta_2)} \left(\eta_2 +\eta_2^{-1} -2 \left(\frac{2\Pi(l_1^2,k)}{K(k)}-1 \right)\right), \\
				   &   P(\lb) = \ii \lb^3 + \left( c_1(\xi_{\mathrm{crit}}) -\eta_2\right) \lb^2 + \left[c_0(\xi_{\mathrm{crit}}) + \ii \eta_2 \left( c_1(\xi_{\mathrm{crit}}) -\eta_2\right)\right] \lb -\eta_2^{-1},
				\end{align}
                and where 
                \begin{align*}
                   l_1=\frac{\eta_1-1}{\eta_1+1},\qquad k=\frac{(\eta_1-1)(\eta_2+1)}{(\eta_1+1)(\eta_2-1)}, 
                \end{align*}
                 $c_0$ and $c_1$ are given in \eqref{c1c0}. 	We emphasize that $P(\ii \eta_2)\in \mathbb{R}$, as do $d_0$ and $s_0$.
%for $\lb \in \ii\mathbb{R}$, the function $P(\lb)$ is real. With the branch of $R(\lb)$ chosen in \eqref{def R}, it follows that $P(\ii \eta_2)<0$.
			As a consequence of \eqref{eq:gexpan},
%Therefore, we define  a local conformal map in the neighborhood $U(p)$ as 
\begin{align}
						\zeta=\zeta(\lb):=\left(\frac{3t}{2}g(\lb;\xi_{\mathrm{crit}})  \right)^{\frac{2}{3}}
                        %\quad p=\ii \eta_{2}^{\pm 1}.
                        \label{def zeta3}
					\end{align}
defines a local conformal map in $U(\ii \eta_2)$, which satisfies					
%For $p= \ii \eta_2$, by \eqref{dg}, we obtain that $\zeta$ is analytic in $U(\ii \eta_2)$, and
					\begin{align*}
					    \zeta(\i \eta_2)=0, \qquad \zeta'(\i \eta_2)= -\ii t^{\frac{2}{3}}d_0.
					\end{align*}
Under the mapping \eqref{def zeta3}, the jump contours of $Z^{(\ii \eta_2)}$ is mapped from the left picture in Figure \ref{scalingzk2} to the right picture  in Figure \ref{scalingzk2}.
					
\begin{figure}[t]
 \begin{center}
   \begin{tikzpicture}[scale=0.8]
]
   % \draw[dotted,thick,teal](-3,0) circle (2);
        \node[shape=circle,fill=black, scale=0.13]  at (-3,0){0};
        \node[below] at (-3.3,0) { $\ii \lb_0$};
           \draw[thick] (-3,-1) --(-3,2);
         \draw[->,thick] (-3,0) --(-3,1);
        \node  at (-2.6,0.86) { $\Sigma_{1}$};
        \node[]  at (-3,2.3) { $\lb$-plane};
       \draw[thick]
    (-3,0) to[bend right=30] (-1.5,2);
 \draw[thick,->] (-1.8,1)--(-1.7,1.17);
    
  \draw[thick]
    (-3,0) to[bend left=30]  (-4.5,2);
 \draw[thick,->] (-4.2,1)--(-4.3,1.17);
 
         \draw[thick]
    (2,-1.2) to[bend right=30] (4,0);
    \draw[thick,->] (3,-1)--(3.1,-0.9);
  \draw[thick]
    (2,1.2) to[bend left=30]  (4,0);
\draw[thick,->] (3,1)--(3.1,0.9);

    \draw[thick] (4,0)--(5,0);

        \node[shape=circle,fill=black, scale=0.13]  at (4,0){0};
        \node[below] at (4.5,0) { $\zeta(\ii\lb_0)$};
           \draw[thick] (1.5,0) --(4,0);
         \draw[thick,->] (2,0) --(3,0);

        \node[]  at (3,2.3) { $\zeta$-plane};

 \draw [-latex] (-0.5,0)--(0.5,0);
\

    \end{tikzpicture}
    \caption{\footnotesize{ The jump contours of $Z^{(\ii \eta_2)}$ from the $\lb$-plane (left) to the $\zeta$-plane (right) under the mapping \eqref{def zeta3}.}}
      \label{scalingzk2}
  \end{center}
\end{figure}
					
To proceed, we define			
				\begin{align}
	\tilde{s}(\lb;\xi):=t g(\lb)\zeta(\lb)^{-\frac{1}{2}}-\frac{2}{3}\zeta(\lb),
\end{align}
which is analytic in  in $U(\i \eta_2)$ and set
		\begin{align}
		    \tilde{s}_0:= \tilde{s}(\ii\eta_2;\xi) = s_0 t^{\frac{2}{3}} (\xi-\xi_{\mathrm{crit}}).
		\end{align}
For $\lb \in U(\i \eta_2)$, the local parametrix is given by
					\begin{align}\label{1tt22}
						Z^{(\i \eta_2)}(\lb)=A(\lb) \sigma_1 \begin{pmatrix}
						 1&0\\
						 \ii a(\tilde{s}_0) &1
						\end{pmatrix}
						M^{\rm{P}_{34}}(\zeta(\lb))G^{(\i \eta_2)}(\zeta(\lb)),
					\end{align}
					where
					\begin{align*}
					    A(\lb)  = Z^{(\infty)}(\lb)  r(\lb)^{-\frac{\sigma_3}{2}} \delta(\lb)^{\sigma_3} \E^{tg(\lb) \sigma_3} \E^{-\frac{\pi \ii}{4} \sigma_3} \sigma_1 G_2(\zeta(\lb) )^{-1} \E^{\left(\frac{2}{3}\zeta^{3/2} +\tilde{s}_0 \zeta^{1/2}\right)\sigma_3} \frac{1}{\sqrt{2}} \begin{pmatrix}
					1	&	-\ii \\
						-	\ii &1\\
						\end{pmatrix} \zeta^{\frac{\sigma_3}{4}}\sigma_1,
					\end{align*}
                    $a(s)$ defined in \eqref{eq:Psi-infinity} is related to the Painlev\'{e} XXXIV equaiton \eqref{def:P34} through the transformation \eqref{def:u} and
					\begin{align*}
					    G^{(\i \eta_2)}(\zeta(\lb)) = G_2(\zeta(\lb)) \sigma_1  \E^{\frac{\pi \ii}{4}\sigma_3} \E^{-tg(\lb) \sigma_3}  \delta(\lb)^{-\sigma_3}r(\lb)^{\frac{\sigma_3}{2}}
                        \end{align*}
                        with
                        \begin{align*}
					     G_2(\zeta(\lb)) = \begin{cases}
					       \begin{pmatrix}
					1	&	0\\
						1	&1\\
						\end{pmatrix}, \qquad &\lb \in \tilde{\Omega}_1,\\
						\begin{pmatrix}
					1	&	0\\
						-1	&1\\
						\end{pmatrix}, \qquad &\lb \in \tilde{\Omega}_2.
				    \end{cases} 
                    %\quad \tilde{\Omega}_1, \tilde{\Omega}_2 \ \text{are shown  in Figure \ref{scalingzk3}. }
					\end{align*}
Here, the regions $\tilde{\Omega}_1$ and $ \tilde{\Omega}_2$ are shown in Figure \ref{scalingzk3},
                    $M^{\rm{P}_{34}}(\zeta)=M^{\rm{P}_{34}}(\zeta; 0, 0, \tilde{s}_0)$ is the Painlev\'e XXXIV parametrix given in Appendix \ref{app4}. Moreover, similar to the previous case, it can be  verified directly that $A$ is analytic in  $U(\i \eta_2)$.  	For  $\lb\in U(\ii\eta_{2}^{-1})$,   the local parametrix is given by the symmetry relation
				\begin{align}\label{1tt222}
	Z^{(\i \eta_2^{-1})}(\lb)=\sigma_2\overline{(Z^{(\infty)}(0))^{-1}Z^{(\i \eta_2)}(\bar\lb^{-1})}\sigma_2.
\end{align}

					\begin{figure}[h]
 \begin{center}
   \begin{tikzpicture}[scale=1]
]
             \draw[dotted, thick]
    (7,-1.2) to[bend right=30] (8.5,0);
 \draw[dotted,thick,->] (7.8,-0.9)--(8,-0.75);

  \draw[dotted,thick]
    (7,1.2) to[bend left=30]  (8.5,0);
     \draw[dotted,thick,->] (7.8,0.9)--(8,0.75);
     
    \draw[thick] (8.5,1.6) --(10,0);
    \draw[thick,->] (9.2,0.853)--(9.29,0.79);
    
    \draw[thick] (8.5,-1.6) --(10,0);
   \draw[thick,->] (9.2,-0.853)--(9.29,-0.79);
    
     \node  at (9,0.4) { $\tilde{\Omega}_1$};
      \node  at (9,-0.4) { $\tilde{\Omega}_2$};

        \node[shape=circle,fill=black, scale=0.13]  at (10,0){0};
        \node[below] at (10,0) { $0$};
         \draw[thick] (7,0) --(10,0);
         
           \node[shape=circle,fill=black, scale=0.13]  at (8.5,0){0};
        \node[below] at (7.8,0) { $\zeta(\ii\lb_0)$};

  \draw[thick, ->] (8.5,0)--(9,0);
  
    \end{tikzpicture}
    \caption{\footnotesize{  Regions $\tilde{\Omega}_1$ and $\tilde{\Omega}_2$ in $\zeta$-plane and the jump contours of $M^{\rm{P}_{34}}$ (solid lines).  }}
      \label{scalingzk3}
  \end{center}
\end{figure}
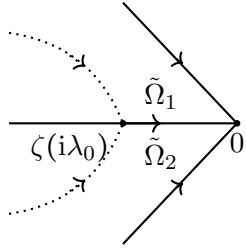

In view of RH problem \ref{RHp34} for $M^{\rm{P}_{34}}$, one can check that $Z^{(p)}$ defined in \eqref{1tt22} and  \eqref{1tt222} solves RH problem \ref{Rhp locp34} with $p=	\ii\eta_2^{\pm1}$.	In addition, we have, as $t\to +\infty$,                    % Moreover, for the endpoints $	p=	\ii \eta_{2}^{\pm1}$,  we analogously obtain that $Z^{(p)}$ satisfies the asymptotic behavior for $\lb\in \partial U(p)$  as
					\begin{align*}					
						Z^{(p)}(\lb)Z^{(\infty)}(\lb)^{-1}=I+\oo(\zeta^{-\frac{1}{2}}),\qquad \lb\in \partial U(p),
					\end{align*}
which, by \eqref{def zeta3}, also implies that
					\begin{align}\label{asy:loc y3}
						Z^{(p)}(\lb)Z^{(\infty)}(\lb)^{-1}=I+\oo(t^{-\frac{1}{3}}),\qquad \lb\in \partial U(p).
					\end{align}

%Finally, for $p=\i \eta_1^{\pm 1}, \i \eta_2^{\pm2}$, using \eqref{asy:loc y} and \eqref{asy:loc y3},  we obtain 
 %\begin{align}
%						Z^{(p)}(\lb)Z^{(\infty)}(\lb)^{-1}=I+\oo(t^{-\frac{1}{3}}).
%					\end{align}

					\subsection{The small-norm RH problem}% the potential $u(x)$}
				\label{sebsec E}
				
				%Consider the following ``remainder" Riemann--Hilbert  problem:
				%\section{RHP for Error}
				%In view of RH problems for $Z^{(1)}$, $Z^{(\infty)}$  and  $Z^{(p)}$, it is readily seen that the error function satisfies (see \eqref{def:err})
From  \eqref{def:err}, it follows that 
                \begin{align}\label{defEc2}
					E(\lb)=\left\{\begin{array}{ll}
						Z^{(1)}(\lb)Z^{(\infty)}(\lb)^{-1}, &\lb\in \mathbb{C}\setminus U,\\
						Z^{(1)}(\lb)Z^{(p)}(\lb)^{-1},\quad &\lb\in   U(p).\\
					\end{array}\right.
				\end{align}
				In what follows, we estimate the error function  $E$ for $\xi$ belonging to different regions.
				%The error function  $E(\lb)$ is analytic on $\mathbb{C}\setminus\Gamma^{(E)}$, denote the anti-clockwise oriented boundary of $ U(p)$ as $\partial U(p)$ for $p=\ii\eta_{1}^{\pm1},\ \ii\alpha(\xi)^{\pm1}, \ \ii \eta_{2}^{\pm 1}$, $\Gamma^{(E)}$  is given by:
			
				\paragraph{Estimate of $E$ for $\xi \in T_{I}$}
				On account of the constructions of  local parametrices for $\xi \in T_I$ above, it suffices to estimate the error function separately in each subregion $T_I^{(m)}$ with $m$ fixed. 	Combining RH problem \ref{Rhp glo} for $Z^{(\infty)}$ with \eqref{defZ1T1},  we obtain the following RH problem for $E$.
    %             Denote
				% \begin{align}
				%     	\Gamma^{(E)}: = \left((\Sigma_1\cup\Sigma_2)\setminus U \right)\cup \left(\bigcup_{p=\ii\eta_{1}^{\pm1}}\partial U(p) \right).
				% \end{align}

	\begin{Rhp} \hfill
					\begin{itemize}
						\item $E(\lb)$ is analytic in  $\mathbb{C}\setminus\Gamma^{(E)}$, where 
                        \begin{align}
				    	\Gamma^{(E)}: = \left((\Sigma_1\cup\Sigma_2)\setminus U \right)\cup \left(\bigcup_{p=\ii\eta_{1}^{\pm1}}\partial U(p) \right).
				\end{align}
						\item   $E(\lb)$ satisfies the jump condition $E_+(\lb)=E_-(\lb)J^{(E)}(\lb)$, where $$J^{(E)}(\lb)=\left\{\begin{array}{ll}
                            Z^{(\infty)}(\lb)J(\lb)Z^{(\infty)}(\lb)^{-1}, & \lb \in (\Sigma_1\cup\Sigma_2)\setminus U,\\
						Z^{(p)}(\lb)Z^{(\infty)}(\lb)^{-1},& \lb\in  \partial U(p), \; p=\i \eta_1^{\pm 1}.
						\end{array}\right.$$
						\item  As $\lb\to\infty$, we have $E(\lb)=I+\oo(\lb^{-1})$.
					\end{itemize}
				\end{Rhp}		
		
			%Denote the above  jump condition of $E(\lb)$ as $J^{(E)}(\lb)$.
			% From  \eqref{asy:loc y0}, the jump matrix on  $\Gamma^{(E)}$  decays to the identity matrix as $t\to+\infty$ as follows:
			% for 
          As $t\to +\infty$, we have the following estimate of the jump matrix  on  $\Gamma^{(E)}$:
			$$J^{(E)}(\lb)-I = \oo(\E^{-ct}),$$
	for $\lb\in(\Sigma_1\cup\Sigma_2)\setminus U$,
        where $c$ is a positive constant, and by \eqref{asy:loc y0},
			\begin{align}
			    & J^{(E)}(\lb)-I
                \nonumber 
                \\
                &=\begin{cases}
			       	\oo (\min( t^{-1}, \E^{(2m-1)\ln t +  t(\xi+\frac{\eta_1-\eta_1^{-1}}{\ln \eta_1})\ln \eta_1},\E^{-(2m+1)\ln t - t(\xi+\frac{\eta_1-\eta_1^{-1}}{\ln \eta_1})\ln \eta_1} )),\quad m\ge1,\\
			       	\oo (\min( \E^{-\ln t +  t(\xi+\frac{\eta_1-\eta_1^{-1}}{\ln \eta_1})\ln \eta_1},\E^{-\ln t - t(\xi+\frac{\eta_1-\eta_1^{-1}}{\ln \eta_1})\ln \eta_1} )),\quad m=0,
			    \end{cases}
			\end{align}
	for $\lb\in \bigcup_{p=\ii\eta_{1}^{\pm1}}\partial U(p)$. We then obtain from the standard small-norm RH problem argument \cite{RN10} that
			\begin{align}\label{asy ET1}
		E(\lb)=\begin{cases}
		I+	\oo (\min( t^{-1}, \E^{(2m-1)\ln t +  t(\xi+\frac{\eta_1-\eta_1^{-1}}{\ln \eta_1})\ln \eta_1}, \E^{-(2m+1)\ln t - t(\xi+\frac{\eta_1-\eta_1^{-1}}{\ln \eta_1})\ln \eta_1} )),\quad m\ge 1, \\
		I+\oo( \min(  \E^{-\ln t+t(\xi +\frac{\eta_1-\eta_1^{-1}}{\ln \eta_1} ) \ln \eta_1},  \E^{-\ln t -t(\xi +\frac{\eta_1-\eta_1^{-1}}{\ln \eta_1} ) \ln \eta_1} )), \quad m=0,
		\end{cases}
		\end{align}				
for large positive $t$.
% Therefore, by the standard small-norm RH problem argument \cite{RN10}, 	we have that as $t\to+\infty$,	
% 		\begin{align}\label{asy ET1}
% 		E(\lb)=\begin{cases}
% 		I+	\oo (\min( t^{-1}, \E^{(2m-1)\ln t +  t(\xi+\frac{\eta_1-\eta_1^{-1}}{\ln \eta_1})\ln \eta_1}, \E^{-(2m+1)\ln t - t(\xi+\frac{\eta_1-\eta_1^{-1}}{\ln \eta_1})\ln \eta_1} )),\quad m\ge 1, \\
% 		I+\oo( \min(  \E^{-\ln t+t(\xi +\frac{\eta_1-\eta_1^{-1}}{\ln \eta_1} ) \ln \eta_1},  \E^{-\ln t -t(\xi +\frac{\eta_1-\eta_1^{-1}}{\ln \eta_1} ) \ln \eta_1} )), \quad m=0.
% 		\end{cases}
% 		\end{align}				
				
\paragraph{Estimate of $E$ for $\xi \in H_{I} \cup H_{II}$}

In view of RH problems \ref{Rhp 2} and \ref{Rhp glo} for $Z^{(1)}$ and  $Z^{(\infty)}$, we obtain from \eqref{defEc2} the following RH problem for $E$.

\begin{figure}
				\centering
				\tikzset{every picture/.style={line width=0.75pt}}
				\begin{tikzpicture}
					\draw [->] (0,0.88) -- (0.03,0.88);
					\draw [->] (0,2.165) -- (0.03,2.165);
					\draw [->] (0,4.02) -- (0.03,4.02);
					\draw [->] (-0.43,5.5) -- (-0.429,5.502);
					\draw [->] (-1,3.97) -- (-1.01,3.975);
					\draw [->] (1,3.97) -- (1.01,3.975);
					\draw [->] (-0.89,1.4) -- (-0.884,1.408);
					\draw [->] (0.89,1.4) -- (0.884,1.408);
					\draw (0,0) -- (0,0.35);
					\draw (0,5.76) -- (0,6);
					\draw [domain=3.77:5.78] plot ({(\x-4.8)*(\x-4.8)-1.45},{(\x-3.6)*0.7+3.6});
					\draw [domain=3.77:5.78] plot ({-(\x-4.8)*(\x-4.8)+1.45},{(\x-3.6)*0.7+3.6});
					\draw [domain=0.15:1.81] plot ({(\x-1)*(\x-1)-1},{0.6*\x+0.61});
					\draw [domain=0.15:1.81] plot ({-(\x-1)*(\x-1)+1},{0.6*\x+0.61});
					\filldraw (0,0) circle (1pt);
					\filldraw (0,0.61) circle (1pt);
					\draw (0,0.61) circle (8pt);
					\filldraw (0,1.8) circle (1pt);
					\draw (0,1.8) circle (10pt);
					\filldraw (0,3.6) circle (1pt);
					\draw (0,3.6) circle (12pt);
					\filldraw (0,5.28) circle (1pt);
					\draw (0,5.28) circle (14pt);
					\filldraw (0,6) circle (1pt);
					\node at (-0.3,-0.2) {$\ii\eta_2^{-1}$};
					\node at (-0.6,2) {$\ii\eta_1^{-1}$};
					\node at (-0.85,0.5) {$\ii\alpha(\xi)^{-1}$};
					\node at (0.7,3.5) {$\ii\eta_1$};
					\node at (-0.3,6.1) {$\ii\eta_2$};
					\node at (1,5.2) {$\ii\alpha(\xi)$};
				\end{tikzpicture}
				\caption{ \footnotesize The jump contours $\Gamma^{(E)}$.} \label{curve:E}	
			\end{figure}
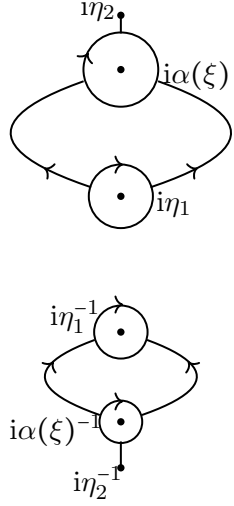
			\begin{Rhp} \label{E}\hfill
					\begin{itemize}
						\item $E(\lb)$ is  analytic in $\mathbb{C}\setminus\Gamma^{(E)}$,
                        where
                        \begin{align*}
\Gamma^{(E)}:=\left((\Gamma_{1}\cup\Gamma_{2}\cup\ii(\eta_2^{-1},\alpha(\xi)^{-1})\cup\ii(\eta_2,\alpha(\xi)))\setminus U\right)\cup\left(  \bigcup_{p=\ii\eta_{1}^{\pm1},\ \ii\alpha(\xi)^{\pm1}}\partial U(p)\right);
				\end{align*}
                see Figure \ref{curve:E}  for an illustration. 
	
						\item   $E(\lb)$ satisfies the jump condition $E_+(\lb)=E_-(\lb)J^{(E)}(\lb)$, where $$J^{(E)}(\lb)=\left\{\begin{array}{ll}
							\begin{pmatrix}
							1&-\ii r(\lb)^{-1}\delta(\lb)^{2} \mathrm{e}^{2tg(\lb)}\\
							0&1\\
			 \end{pmatrix},&\lb\in\Gamma_{1}\setminus U,\\
						\begin{pmatrix}
							1&0\\
							-\ii r(\bar{\lb}^{-1})^{-1}\delta(\lb)^{-2} \mathrm{e}^{-2tg(\lb)}&1\\
						\end{pmatrix},&\lb\in\Gamma_{2}\setminus U,\\				
						\begin{pmatrix}
							1&0\\
							\i r(\lb)\delta(\lb)^{-2}\e^{-2tg(\lb)}&1\\
						\end{pmatrix},&\lb\in\ii (\alpha(\xi),\eta_{2})\setminus U,\\
						\begin{pmatrix}
							1&\i r(\bar{\lb}^{-1})\delta(\lb)^{2}\e^{2tg(\lb)}\\
							0&1\\
						\end{pmatrix},&\lb\in\ii (\eta_{2}^{-1},\alpha(\xi)^{-1})\setminus U,\\
							Z^{(p)}(\lb)Z^{(\infty)}(\lb)^{-1},& \lb\in \partial U(p), \; p=\ii\eta_{1}^{\pm1},\; \ii\alpha(\xi)^{\pm1}.
						\end{array}\right.$$
						\item  As $\lb\to\infty$, we have $E(\lb)=I+\oo(\lb^{-1})$.
					\end{itemize}
				\end{Rhp}
				
				%	For $\xi \in H_{I} \cup H_{II}$,	denote the above  jump condition of $E(\lb)$ as $J^{(E)}(\lb)$.
				Thanks to the estimate \eqref{asy:loc y2}, as $t\to+\infty$, we have
				\begin{align}\label{err E}
					J^{(E)}(\lb)-I=\left\{\begin{array}{lll}
				\oo(\mathrm{e}^{-ct}), \quad& \lb\in(\Gamma_{1}\cup\Gamma_{2}\cup\ii(\eta_2^{-1},\alpha(\xi)^{-1})\cup\ii(\eta_2,\alpha(\xi)))\setminus U,\\ 			
						\oo(t^{-1}), \quad&\lb\in \bigcup_{p=\ii\eta_{1}^{\pm1},\ \ii\alpha(\xi)^{\pm1}} \partial U(p),
					\end{array}\right.
				\end{align}
				where $c$ is a positive constant.
We then conclude from the standard small-norm RH problem argument \cite{RN10} that $t\to+\infty$ that	
		\begin{align}\label{asy E}
		E(\lb)=I+	\oo(t^{-1}),
		\end{align}					
		for large positive $t$.
        
	\paragraph{Estimate of $E$ for $\xi \in  T_{II}$}
    In this case, $E$ still satisfies RH problem  \ref{E}
    but with
				\begin{align}
					\Gamma^{(E)}:=
					\left((\Gamma_{1}\cup\Gamma_{2})\setminus U\right)\cup\left(  \bigcup_{p=\ii\eta_{1}^{\pm1},\ \ii\eta_{2}^{\pm1}}\partial U(p)\right),
				\end{align}
    and $Z^{(p)}$ changed accordingly. 
	% In this case, the RH problem for $E$ yields RH problem \ref{E} with $\alpha(\xi)=\eta_2$ and 
  %     $Z^{(p)}$ established in Subsection \ref{SubsecP34loc}.
		By the estimates \eqref{eq:asyBoundary} and \eqref{asy:loc y3}, one has that as $t \to +\infty$,
				 \begin{align}\label{err ET2}
					J^{(E)}(\lb)-I=\left\{\begin{array}{lll}
				\oo(\mathrm{e}^{-ct}), \quad& \lb\in(\Gamma_{1}\cup\Gamma_{2})\setminus U,\\ 
						\oo(t^{-1}), \quad&\lb\in \bigcup_{p=\ii\eta_{1}^{\pm1}} \partial U(p),\\
							\oo(t^{-1/3}), \quad&\lb\in \bigcup_{p=\ii\eta_{2}^{\pm1}} \partial U(p),
					\end{array}\right.
			\end{align}
				where $c$ is a positive constant.
This leads to 			
		\begin{align}\label{asy ET2}
		E(\lb)=I+\oo(t^{-1/3}),
		\end{align}						
for large positive $t$.

\section{Proofs of the main results}\label{sec:proof}

\subsection{Proof of Theorem \ref{Thm 1}}
As $n\to+\infty$, the jump matrix of $Z(\lb;n,0)$ tends to the identity matrix exponentially fast, which implies that
\begin{equation}
    Z(\lb;n,0)=I+\mathcal{O}(\E^{-cn})
\end{equation}
for some positive constant $c$. This, together with \eqref{rec}, gives us \eqref{eq:qnasy}.

% then it follows that
% \begin{align*}
%     q_n(0)=\mathcal{O}(\E^{-cn}),
% \end{align*}
% for some positive constant $c$.
As $n\to -\infty$, by tracking back the transformations \eqref{T1}, \eqref{T2} and \eqref{T3}, we obtain that near $\lb=0$,
\begin{align}\label{eTrans}
Z(\lambda;n+1,0)=\E^{-(n+1)g^{(\infty)}\sigma_3}\delta(\infty)^{\sigma_3}E(\lambda)Z^{(\infty)}(\lambda)\delta(\lambda)^{-\sigma_3}\E^{(n+1)(g(\lb)-\frac{\ln\lb}{2})\sigma_3},
\end{align}
where 
%$g^{(\infty)}$, $\delta(\infty)$, $\delta$ and $g$ are defined in Subsection \ref{aux1}, and 
$E$ and $Z^{(\infty)}$ are given in \eqref{defEc1} and \eqref{zinfty}, respectively.
%where $g^{(\infty)}$, $\delta(\infty)$, $E$, $Z^{\infty}$, $\delta$ and $g$ are given in 
From the symmetry relations \eqref{e3.6} and \eqref{edeltasym}, it follows that
\begin{align*}
	g^{(\infty)}+\lim_{\lb\to0}\left(g(\lb)-\frac{\ln\lb}{2}\right)=-\frac{\pi\ii}{2},\qquad \delta(\infty)\delta(0)=1.
\end{align*}
A combination of \eqref{eTrans}, \eqref{rec} and the above formulas yields 
\begin{align}\label{eqn}
	q_n(0)=\ii^{n+1}\left(E(0;n+1)Z^{(\infty)}(0;n+1)\right)_{12}.
\end{align}
By \eqref{def:Zinfty12}, we have
\begin{align}\label{1nqn0}
	Z_{12}^{(\infty)}(0;n+1)=\frac{1}{2}\left(\sqrt{\frac{\eta_2}{\eta_1}}-\sqrt{\frac{\eta_1}{\eta_2}}\right)   \Jac\left(\frac{K(k)((n+1)\Omega+\Delta)}{\pi},k\right),
\end{align}
where $\Jac( z,k)$ is the subsidiary  Jacobi elliptic function with modulus $ k=\frac{(\eta_1-1)(\eta_2+1)}{(\eta_2-1)(\eta_1+1)}$ and
\begin{align}\label{ndjac}
\Jac(2K(k) z,k)=\frac{\vartheta(0,\tau)\vartheta(z+\frac{1}{2},\tau)}{\vartheta(\frac{1}{2},\tau)\vartheta(z,\tau)},
\end{align}
$K$ is the complete elliptic integral defined in  \eqref{ei1}, and  $\Omega$ and $\Delta$ are constants 
given by  \eqref{eOmega} and \eqref{delta1}, respectively.
%\begin{align*}
%   \Jac(\frac{K(k)((n+1)\Omega+\Delta)}{\pi},k)=  \frac{\vartheta(\frac{1}{2}+\frac{(n+1)\Omega+\Delta}{2\pi},\tau)\vartheta(0,\tau)}{\vartheta(\frac{1}{2},\tau)\vartheta(\frac{(n+1)\Omega+\Delta}{2\pi},\tau)}.
%\end{align*}
%It is noticed from \cite{dlmf} that
%\begin{align}\label{ndjac}
%    \vartheta_4(z,\tau)=\vartheta(z+\frac{1}{2},\tau),\qquad \Jac(2K(k) z,k)=\frac{\vartheta(0,\tau)\vartheta_4(z,\tau)}{\vartheta_4(0,\tau)\vartheta(z,\tau)},
%\end{align}

%Then we have
%\begin{align}\label{1nqn0}
%	q_n^{(\infty)}(0)&=\frac{1}{2}\left(\sqrt{\frac{\eta_2}{\eta_1}}-\sqrt{\frac{\eta_1}{\eta_2}}\right)\frac{\vartheta_4(\frac{(n+1)\Omega+\Delta}{2\pi},\tau)\vartheta(0,\tau)}{\vartheta_4(0,\tau)\vartheta(\frac{(n+1)\Omega+\Delta}{2\pi},\tau)}\nonumber\\
%	&=\left(\sqrt{\frac{\eta_2}{\eta_1}}-\sqrt{\frac{\eta_1}{\eta_2}}\right)\frac{\Jac(\frac{K(k)((n+1)\Omega+\Delta)}{\pi},k)}{2}.
%\end{align}
Using \eqref{eerror} and \eqref{eqn},
we obtain from \eqref{rec} that
\begin{align*}
	q_n(0)=\ii^{n+1}Z_{12}^{(\infty)}(0;n+1)+\oo(n^{-1}),
\end{align*}
which is \eqref{eq:qn0negn} by  \eqref{1nqn0}.

This completes the proof of Theorem \ref{Thm 1}.
\qed

\subsection{Proof of Theorem \ref{Thm 2}}	
\paragraph{Proof of \eqref{eq:decay}}
For 
$
 \xi >-\frac{\eta_{1}-\eta_{1}^{-1}}{\ln\eta_{1}},
$
% we derive that when
%  \begin{align*}
% 	\xi>-\frac{\eta_{1}-\eta_{1}^{-1}}{\ln\eta_{1}},
% \end{align*}
the jump matrix of $Z(\lb;n+1,t)$ tends to the identity matrix exponentially fast as $t\to+\infty$, which implies that
\begin{equation}
    Z(\lb;n+1,t)=I+\mathcal{O}(\E^{-ct}), \qquad t\to +\infty,
\end{equation}
for some positive constant $c$. This, together with \eqref{rec}, gives us \eqref{eq:decay}.

\paragraph{Proof of \eqref{mr1}}
For $\xi \in T_{I}^{(m)}$ with $m \in \{0\} \cup \mathbb{N}$, by tracking back the transformations  \eqref{def:err} and \eqref{defZ1T1}, we conclude that for $\lb\in \mathbb{C}\setminus U$ with $U=\bigcup_{p=\ii\eta_1^{\pm1}}U(p)$, the solution to  RH problem \ref{RHP0} is given by
\begin{align*}
	Z(\lb;n+1,t)=		E(\lb)Z^{(\infty)}(\lb),
\end{align*}
where $E$ and $Z^{(\infty)}$ are given by \eqref{defEc2} and \eqref{T1glo} respectively.
Substituting  $Z$ into the reconstruction formula \eqref{rec}, we obtain \eqref{mr1} from \eqref{T1glo} and the estimate \eqref{asy ET1}.

% via %explicit expressions \eqref{T1glo} and
% estimate \eqref{asy ET1}, we obtain 

% the asymptotic expression as $t\to+\infty$:
% \begin{align*}
% 		q_n(t)=	\oo (\min( \E^{(2m-1)\ln t +  t(\xi+\frac{\eta_1-\eta_1^{-1}}{\ln \eta_1})\ln \eta_1},\E^{-(2m+1)\ln t - t(\xi+\frac{\eta_1-\eta_1^{-1}}{\ln \eta_1})\ln \eta_1} )).
% \end{align*}

\paragraph{Proof of \eqref{mr2} and \eqref{mr4}}
For $\xi \in H_{I} \cup H_{II}$, by tracking back the transformations \eqref{trans1} and \eqref{def:err}, we conclude that for $\lb\in \mathbb{C}\setminus U$ with $U$ defined in \eqref{defun} (where recall that $\alpha(\xi)=\eta_2$ for $\xi \in H_{II}$), the solution to  RH problem \ref{RHP0} is given by
\begin{align}\label{zlb}
	Z(\lb;n+1,t)=	\delta(\infty)^{-\sigma_{3}}\E^{-tg^{(\infty)}\sigma_3}	E(\lb)Z^{(\infty)}(\lb)\E^{ (tg(\lb)-\phi(\lb)/2)\sigma_3}G(\lb)^{-1}\delta(\lb)^{\sigma_{3}},
\end{align}
where  $g^{(\infty)}$, $\delta(\infty)$, $\delta$ and $g$ can be found in Section \ref{aux2}, and $G$, $E$  and $Z^{\infty}$ are given by \eqref{dg},  \eqref{defEc2} and \eqref{zinftyc2}, respectively.
From the relations \eqref{g0}, \eqref{Imginf}, and \eqref{edeltasym2}, it follows that 
\begin{align*}
	g^{(\infty)}+\lim_{\lb\to0}\left(g(\lb)-\frac{\ln\lb}{2}\right)=\E^{2\ii t(1+\frac{\pi(n+1)}{4t})},\qquad \delta(\infty)\delta(0)=1.
\end{align*}
A combination of \eqref{zlb}, \eqref{rec} and the above formulas yields
\begin{align}\label{qnt2}
		q_n(t)=\E^{2\ii t(1+\frac{\pi(n+1)}{4t})} \left( E(0;n+1,t)Z^{(\infty)}(0;n+1,t)\right)_{12}.
\end{align}
By \eqref{eq:Zinf12}, we have
\begin{align}\label{z12i02}
    Z^{(\infty)}_{12}(0;n+1,t) = \frac{1}{2}\left(\sqrt{\frac{\alpha(\xi)}{\eta_1}}-\sqrt{\frac{\eta_1}{\alpha(\xi)}}\right)	\Jac\left(\frac{K(k(\xi))(t\Omega+\Delta)}{\pi},k(\xi)\right),
\end{align}
where $k(\xi)$ is given in \eqref{equ xialpha2}, $K$ denotes the complete elliptic integral  defined in \eqref{ei1},
$\Jac$ is the Jacobi elliptic function introduced in \eqref{ndjac} ,  and $\Omega$ and $\Delta$ are constants given by \eqref{def Omegj} and \eqref{def Delta}, respectively.
Using \eqref{asy E} and \eqref{z12i02}, \eqref{qnt2}  gives us \eqref{mr2} and \eqref{mr4}. 

%On account of the reconstruction formula \eqref{rec}, we obtain from the estimate \eqref{asy E} that
%\begin{align*}
%		q_n(t)=\E^{-tg^{(\infty)}-t \lim_{\lb\to0}\left( g(\lb)-\frac{\phi(\lb)}{2t}\right)}Z^{(\infty)}_{12}(0;n+1,t)+\oo(t^{-1}).
%\end{align*}
%A combination of \eqref{g0}, \eqref{Imginf}, \eqref{eq:Zinf12} and the above formula gives us \eqref{mr2} and \eqref{mr4}. 

% For convenience, denote $q^{(\infty)}_n(t)$ as the result of substituting $Z^{(\infty)}$      into the reconstruction formula \eqref{rec}, whose explicit expression is given by
% 	\begin{align}\label{uglo}
% 		q^{(\infty)}_n(t)=\frac{1}{2}\left(\sqrt{\frac{\alpha(\xi)}{\eta_1}}-\sqrt{\frac{\eta_1}{\alpha(\xi)}}\right)\Jac\left(\frac{K(k(\xi))(t\Omega+\Delta)}{\pi},k(\xi)\right),
% 	\end{align}		
%     where $\Jac$ is the subsidiary  Jacobi elliptic function,  $\alpha(\xi)$ and $k(\xi)$  are defined in \eqref{def alpha} and  \eqref{equ xialpha2}, respectively, while $\Omega$ and $\Delta$  are constants given in \eqref {def Omegj} and  \eqref{def Delta}.
% Then, substituting  $Z$ into the reconstruction formula \eqref{rec}, via explicit expressions \eqref{asy E} and estimate \eqref{uglo}, we obtain the asymptotic expression as $t\to+\infty$:
% \begin{align*}
% 		q_n(t)=&\E^{-tg^{(\infty)}-t \lim_{\lb\to0}\left( g(\lb)-\frac{\phi(\lb)}{2t}\right)}q^{(\infty)}_n(t)+\oo(t^{-1}).
% \end{align*}
%the exact expression of  $q^{(\infty)}_n(t)$ is given in \eqref{uglo} and
%$g^{(\infty)}$ is defined in \eqref{def ginf}.

\paragraph{Proof of \eqref{mr3}}
For $\xi \in T_{II}$, we still have \eqref{zlb}. Since the estimate of $E$ now is given by \eqref{asy ET2}, we are led to \eqref{mr3}.

\vspace{2mm}
% Then,
% by explicit expressions \eqref{asy ET2} and estimate \eqref{uglo}, we obtain the asymptotic expression as $t\to+\infty$:
%  \begin{align*}
% 		q_n(t)=&\E^{-tg^{(\infty)}-t \lim_{\lb\to0}\left( g(\lb)-\frac{\phi(\lb)}{2t}\right)}q^{(\infty)}_n(t)+\oo(t^{-1/3}).
% \end{align*}
This completes the proof of Theorem \ref{Thm 2}. \qed 

\begin{remark}
   It is noticed that $Z^{(\infty)}_{12}(0;n+1,t)$ is  a genus-$1$ algebraic-geometric solution of the AL system \eqref{e1.2}. Indeed, by a similar proof to Proposition \ref{Pro 1}, $Z^{(\infty)}(\lb)\E^{\frac{\phi(\lb)}{2}\sigma_3}$ admits a Lax pair of the form \eqref{lax Z}. Thus, its compatibility condition yields the AL system \eqref{e1.2}, namely, $Z^{(\infty)}_{12}(0;n+1,t)$ obtained from the reconstruction formula  solves the AL system \eqref{e1.2}.
   % By a similar proof to \hl{Proposition} \ref{Pro 1}, $Z^{(\infty)}(\lb)\E^{\phi(\lb)\sigma_3/2}$ admits the Lax pair like the form \eqref{lax Z}. So $q_n^{(\infty)}(0)$ is the initial data of a genus-$1$ algebraic-geometric solution of the AL system \eqref{e1.2}.
\end{remark}

\appendix

\section{The model RH problems}\label{app1}
In this section, we list the model RH problems used in the construction of local parametrices. 
\subsection{The Bessel parametrix}\label{app bessel}
\begin{figure}[h]
\centering
		\begin{tikzpicture}
		\draw[dashed] (-2,0)--(2,0);
		\draw[thick,->] (-2,0)--(-1,0)node[below]{$\Sigma_0$};
		\draw [thick,] (-2,0)--(0,0);
		\draw[thick,->](-1,1.713)--(-0.5,0.857) node[right]{$\Sigma_+$} ;
		\draw[thick] (-1,1.713)--(0,0);
		\draw[thick,->](-1,-1.713)--(-0.5,-0.857)node[right]{$\Sigma_-$} ;
		\draw[thick] (-1,-1.713)--(0,0);
		\draw (0,-0.1)node [below]{$0$};
	\end{tikzpicture}
	\caption{\footnotesize The jump contour $\Sigma_\Psi=\Sigma_+\cup\Sigma_-\cup\Sigma_0$ of $\Psi_{\Bes}$.}
    \label{fig bessel}
\end{figure}
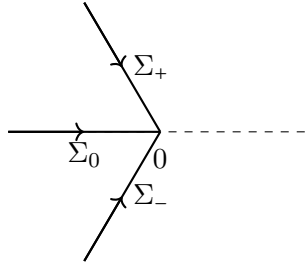
The Bessel parametrix $\Psi_{\Bes}$ is the unique solution of the following RH problem. 
\begin{Rhp} \label{rhp:Bes} 
\ 
\begin{itemize}
	\item $\Psi_\Bes(\zeta)$ is analytic for $\zeta \in \mathbb{C} \setminus \Sigma_{\Psi}$, where $\Sigma_\Psi$ is the union of  three contours $\Sigma_\pm = \left\{\zeta: \arg \zeta = \pm\frac{2\pi }{3}\right\} $ and $\Sigma_0 = \left\{ \zeta: \arg \zeta = \pi \right\}$ as shown in Figure \ref{fig bessel}.
    %, and take the argument of $\zeta$ as $\arg\zeta\in(-\pi,\pi)$.% (see \figurename \ \ref{step2});
	\item $\Psi_\Bes(\zeta)$ satisfies the following jump condition
	\begin{align*}
		\Psi_{\Bes, +}(\zeta) =  \Psi_{\Bes,-}(\zeta) \begin{cases} \begin{pmatrix} 1 & 0 \\ 1 & 1 \end{pmatrix}, & \zeta \in \Sigma_+ \cup \Sigma_-,\\
			\begin{pmatrix} 0 & 1 \\ -1\ & 0 \end{pmatrix}, & \zeta \in \Sigma_0.
		\end{cases}
	\end{align*}
    \item As $\zeta \to \infty$, we have
    \begin{align}\label{asy bessel}
	\Psi_\Bes(\zeta) = \left( 2\pi \zeta^{\frac{1}{2}}\right)^{-\frac{1}{2}\sigma_3} \frac{1}{\sqrt{2}} \begin{pmatrix}
		\displaystyle 1& \ii  \displaystyle \\
		\displaystyle \ii& \displaystyle 1
	\end{pmatrix} \left(I +  \oo \left(\frac{1}{\zeta^{\frac{1}{2}}}\right)\right)\E^{2\zeta^{\frac{1}{2}}\sigma_3}.
 \end{align}
	\item As $\zeta \rightarrow 0$, we have 
	\begin{align*}
		\Psi_\Bes(\zeta) =  \oo \left(\ln |\zeta| \right).
 %        \begin{pmatrix}  & \oo \left(\ln |\zeta| \right) \\ \oo \left( \ln|\zeta| \right)& \oo \left( \ln |\zeta|\right)  \end{pmatrix} \ .
	\end{align*}
\end{itemize}	
\end{Rhp}
By \cite[formul\ae \ (6.16)--(6.20)]{KMcLVAV} (with $\alpha=0$ therein), one has 
\begin{align}\label{def:Bes}
	\Psi_\Bes(\zeta) =
	\begin{cases}
		\begin{pmatrix} \displaystyle I_0 (2\zeta^{\frac{1}{2}}) &\displaystyle \frac{\ii }{\pi} K_0 (2\zeta^{\frac{1}{2}})\\
			\displaystyle 2\pi \ii  \zeta^{\frac{1}{2}}I'_0 (2\zeta^{\frac{1}{2}}) & \displaystyle - 2\zeta^{\frac{1}{2}}K'_0(2\zeta^{\frac{1}{2}})
		\end{pmatrix}, & -\frac{2\pi }{3}<\arg \zeta < \frac{2\pi }{3}, \\	
		\begin{pmatrix} \displaystyle \frac{1}{2}H^{(1)}_0 (2(-\zeta)^{\frac{1}{2}}) &\displaystyle \frac{1}{2} H^{(2)}_0 (2(-\zeta)^{\frac{1}{2}})\\
			\displaystyle \pi \zeta^{\frac{1}{2}} (H^{(1)}_0)' (2(-\zeta)^{\frac{1}{2}}) & \displaystyle  \pi \zeta^{\frac{1}{2}} (H^{(2)}_0)'(2(-\zeta)^{\frac{1}{2}})
		\end{pmatrix},  &    \frac{2\pi }{3} <\arg \zeta < \pi, \\
		\begin{pmatrix} \displaystyle \frac{1}{2}H^{(2)}_0 (2(-\zeta)^{\frac{1}{2}}) &\displaystyle -\frac{1}{2} H^{(1)}_0 (2(-\zeta)^{\frac{1}{2}})\\
			\displaystyle -\pi \zeta^{\frac{1}{2}} (H^{(2)}_0)' (2(-\zeta)^{\frac{1}{2}}) & \displaystyle  \pi \zeta^{\frac{1}{2}} (H^{(1)}_0)'(2(-\zeta)^{\frac{1}{2}})
		\end{pmatrix},  &  -\pi < \arg \zeta< - \frac{2\pi }{3},
	\end{cases} 
\end{align}
% with
% \begin{align}\label{asy bessel}
% 	\Psi_\Bes(\zeta) = \left( 2\pi \zeta^{\frac{1}{2}}\right)^{-\frac{1}{2}\sigma_3} \frac{1}{\sqrt{2}} \begin{pmatrix}
% 		\displaystyle 1& \ii  \displaystyle \\
% 		\displaystyle \ii& \displaystyle 1
% 	\end{pmatrix} \left(I +  \oo \left(\frac{1}{\zeta^{\frac{1}{2}}}\right)\right)\E^{2\zeta^{\frac{1}{2}}\sigma_3}
% \end{align}
% uniformly as $\zeta \rightarrow \infty$ everywhere in the complex plane aside from the jumps. 
where $I_0(\zeta)$ and $K_0(\zeta)$ are the modified Bessel functions of order $0$, $H^{(j)}_0(\zeta)$, $j=1,2$, are the Hankel functions (cf. \cite[Chapter 10]{dlmf}) and the principal branch is taken for $\zeta^{\frac{1}{2}}$.

\subsection{The Airy parametrix}\label{app airy}

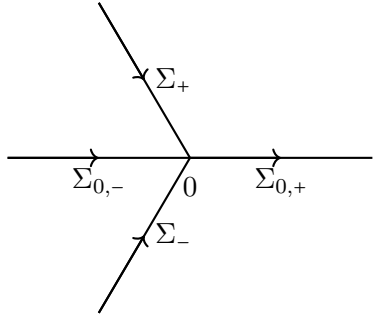
\begin{figure}[h]
	\centering
	\begin{tikzpicture}[scale=1.2]

		\draw[thick,->] (-2,0)--(-1,0)node[below]{$\Sigma_{0,-}$};
		\draw [thick] (-2,0)--(0,0);
		\draw[thick,->](-1,1.713)--(-0.5,0.857) node[right]{$\Sigma_+$} ;
		\draw[thick] (-1,1.713)--(0,0);
		\draw[thick,->](-1,-1.713)--(-0.5,-0.857)node[right]{$\Sigma_-$} ;
		\draw[thick] (-1,-1.713)--(0,0);
		\draw (0,-0.1)node [below]{$0$};
\draw [thick] (0,0)--(2,0);
   \draw[thick,->] (0,0)--(1,0)node[below]{$\Sigma_{0,+}$};     
	\end{tikzpicture}
	\caption{\footnotesize The jump contour $\Sigma_\Psi=\Sigma_+ \cup\Sigma_-\cup \Sigma_{0,+} \cup \Sigma_{0,-}$ of $\Psi_{\Ai}$ and $M^{\rm{P}_{34}}$.}
	\label{Airy}
\end{figure}

The Airy parametrix $\Psi_{\Ai}$ is the unique solution of the following RH problem.
%for Airy function \cite{Deift1999, DKMVZ1999}:
\begin{Rhp}\label{RHairy}\
    \begin{itemize}
        \item $\Psi_{\Ai}(\zeta)$ is analytic for $\zeta\in\mathbb{C}\setminus\Sigma_\Psi$, where $\Sigma_\Psi$ is the union of four contours $\Sigma_\pm=\{\zeta: \arg \zeta=\pm \frac{2\pi}{3} \}$, $\Sigma_{0,-}=\{\zeta: \arg \zeta = \pi \}$ and      $\Sigma_{0,+}=\{\zeta: \arg \zeta = 0 \}$ as shown in Figure \ref{Airy}.
  \item $\Psi_{\Ai}(\zeta)$ satisfies the jump condition 
\begin{align*}
	\Psi_{\Ai,+}(\zeta)=\Psi_{\Ai,-}(\zeta) \left\{\begin{array}{lll}
		\left(\begin{array}{cc}
			1 & 0\\
			1 & 1
		\end{array}\right),  &\zeta\in \Sigma_+\cup\Sigma_-,\\
		\left(\begin{array}{cc}
			0 & 1\\
			-1 & 0
		\end{array}\right),  &\zeta\in \Sigma_{0,-},\\
		\left(\begin{array}{cc}
			1 & 1\\
			0& 1
		\end{array}\right),  &\zeta\in \Sigma_{0,+}.\\
	\end{array}\right.
\end{align*}
\item As $\zeta\to \infty$, we have 
\begin{equation}\label{asymlo}
	\Psi_{\Ai}(\zeta) = \zeta^{-\frac{\sigma_3}{4}} \frac{1}{\sqrt{2}}\begin{pmatrix}
		1	&	\ii \\
		\ii &1\\
	\end{pmatrix} \left(I+\oo\left(\zeta^{-\frac{3}{2}}\right)\right) \e^{-\frac{2}{3}\zeta^{\frac{3}{2}}\sigma_3}.
\end{equation}
\item $\Psi_{\Ai}$ remains bounded as $\zeta\to 0$, $\zeta \in \C \setminus \Sigma_{\Psi}$.
    \end{itemize}
\end{Rhp}

By \cite{Deift1999, DKMVZ1999}, the solution to RH problem \ref{RHairy} is given by 
%is constructed with the help of Airy functions. Setting $\omega=\e^{\frac{2\pi \i}{3}}$, we have  
\begin{align}\label{def:Airy}
    \Psi_{\Ai}(\zeta) =
	\begin{cases}
		\sqrt{2 \pi}\begin{pmatrix} \displaystyle \Ai(\zeta) &\displaystyle - \omega^2\Ai (\omega^2\zeta)\\
			\displaystyle - \ii  \Ai'(\zeta) & \displaystyle \i \omega   \Ai'(\omega^2\zeta)
		\end{pmatrix}, & 0<\arg \zeta < \frac{2\pi }{3}, \\
		\sqrt{2 \pi}\begin{pmatrix} \displaystyle -\omega\Ai(\omega\zeta) &\displaystyle - \omega^2\Ai (\omega^2\zeta)\\
			\displaystyle  \ii\omega^2  \Ai'(\zeta) & \displaystyle \i \omega   \Ai'(\omega^2\zeta)
		\end{pmatrix},  &    \frac{2\pi }{3} <\arg \zeta < \pi, \\
		\sqrt{2 \pi}\begin{pmatrix} \displaystyle  -\omega^2\Ai(\omega^2\zeta) &\displaystyle  \omega\Ai (\omega\zeta)\\
			\displaystyle \ii \omega \Ai'(\omega^2\zeta) & \displaystyle -\i \omega^2   \Ai'(\zeta)
		\end{pmatrix},  &  -\pi < \arg \zeta< - \frac{2\pi }{3},\\
        \sqrt{2 \pi}\begin{pmatrix} \displaystyle \Ai(\zeta) &\displaystyle  \omega\Ai (\omega\zeta)\\
			\displaystyle - \ii  \Ai'(\zeta) & \displaystyle -\i \omega^2   \Ai'(\zeta)
		\end{pmatrix},  &   - \frac{2\pi }{3}< \arg \zeta< 0,
	\end{cases} 
\end{align}
where $\Ai(\zeta)$ is the Airy function (cf. \cite[Chapter 9]{dlmf}) and $\omega=\e^{2\pi \i/3}$.

\subsection{The generalized Laguegrre polynomial parametrix}\label{Lague}
The generalized Laguerre polynomials with index $0$ are given by 
\begin{equation*}
L_m(\zeta) = \frac{ \E^{\zeta}}{m!} \frac{\ddd ^m}{\ddd  \zeta^m} \left(\E^{-\zeta} \zeta^{m}\right), \qquad m=0,1,\ldots,
\end{equation*}
which satisfy the orthogonality condtions
\begin{equation*}
\int_{0}^\infty  \E^{-\zeta} L_m(\zeta)  L_n(\zeta) \ddd  \zeta = \frac{\Gamma(m+1)}{m!} \delta_{m,n}.
\end{equation*}
% For convenience, let
% \begin{equation*}
% \pi_m(\zeta) = (-1)^{m}m!L_m(\zeta).
% \end{equation*}
% Thus,
% \begin{equation*}
% \int_{0}^\infty \E^{-\zeta} \pi_m(\zeta)  \pi_n(\zeta) \ddd  \zeta = \Gamma(m+1) m! \delta_{m,n}.
% \end{equation*}
It comes out that $L_m$ is characterized by the following RH problem, which we call the generalized Laguegrre polynomial parametrix.
%Then $L(\zeta)$ is the solution of the following model RH problem for the generalized Laguerre polynomials with index $0$ and degree $m$ \cite{its1992}:
\begin{Rhp}\label{Lzeta}\
	%Find a $2\times 2$ matrix valued function $L(\zeta)$ with the following properties:
	\begin{itemize}
		\item $L(\zeta)$ is  analytic for $\zeta \in \mathbb{C} \setminus [0,+\infty).$
		\item $L$ satisfies the jump condition 
        %For $\zeta \in (0,+\infty)$,
		\begin{equation}\label{Ljumps}
		L_+(\zeta) = L_-(\zeta)
		\begin{pmatrix} 1 & 0 \\  \E^{-\zeta} & 1 \end{pmatrix}, \qquad \zeta\in (0,+\infty).
		\end{equation}	
		\item As $\zeta \to\infty$, we have 
        \begin{align}\label{Linf}
            L(\zeta)= \left( I + \oo(\zeta^{-1}) \right)\zeta^{-m \sigma_3},
        \end{align}
         where $m=0,1,  \ldots$.
         \item As $\zeta \to 0$, we have 
        \begin{align}
            L(\zeta)= \oo \left(\ln |\zeta| \right).
        \end{align}
	\end{itemize}
\end{Rhp}

By \cite{its1992}, we have 
%The solution of the above RH problem is written as follows: for $m \ge 1$,
\begin{align}\label{def:LagSol}
L
(\zeta)=\left\{
  \begin{array}{ll}
    \begin{pmatrix}
-\frac{1}{\Gamma(m)^2 }\int_{0}^{+\infty}\frac{ \pi_{m-1} (z) \E^{-z}}{z-\zeta} \ddd  z		& -\frac{2\pi \ii}{\Gamma(m)^2 } \pi_{m-1}(\zeta)\\
\frac{1}{2\pi \ii }\int_{0}^{+\infty}\frac{ \pi_m (z) \E^{-z}}{z-\zeta}\ddd  z	 &\pi_m(\zeta)\\
\end{pmatrix}, & \hbox{$m\geq 1$;} \\
    \begin{pmatrix}
1		& 0\\
\frac{1}{2\pi \ii }\int_{0}^{+\infty}\frac{  \E^{-z}}{z-\zeta}\ddd  z	 & 1\\
\end{pmatrix}, & \hbox{$m=0$,}
  \end{array}
 \right.
%\begin{pmatrix}
% -\frac{1}{\Gamma(m)^2 }\int_{0}^{+\infty}\frac{ \pi_{m-1} (z) \E^{-z}}{z-\zeta} \ddd  z		& -\frac{2\pi \ii}{\Gamma(m)^2 } \pi_{m-1}(\zeta)\\
% \frac{1}{2\pi \ii }\int_{0}^{+\infty}\frac{ \pi_m (z) \E^{-z}}{z-\zeta}\ddd  z	 &\pi_m(\zeta)\\
% \end{pmatrix},
\end{align}
where $\pi_m(\zeta) = (-1)^{m}m!L_m(\zeta)=\zeta^m+\cdots$ is the monic Laguerre polynomial. 
% and for $m=0$,
% \begin{align*}
% L(\zeta)=\begin{pmatrix}
% 1		& 0\\
% \frac{1}{2\pi \ii }\int_{0}^{+\infty}\frac{  \E^{-z}}{z-\zeta}\ddd  z	 & 1\\
% \end{pmatrix}.
% \end{align*}

\subsection{The Painlev\'e XXXIV parametrix}\label{app4}

The Painlev\'e XXXIV parametrix $M^{\rm{P}_{34}}$ solves the following RH problem.

% for 
%  Painlev\'e XXXIV equation \cite{FIKY2006,ikj2008}:

\begin{Rhp}\label{RHp34}\
%	Find a $2\times 2$ matrix valued function $M^{\rm{P}_{34}}(\zeta)$ with the following properties:
	\begin{itemize}
	 	\item $M^{\rm{P}_{34}}(\zeta)=M^{\rm{P}_{34}}(\zeta; b, \omega, s)$ is analytic for $\zeta \in
		\mathbb{C} \setminus \Sigma_\Psi$, where $\Sigma_\Psi$ is shown in Figure \ref{Airy}.
		
		\item $M^{\rm{P}_{34}}(\zeta)$  satisfies the jump condition
		\begin{equation*}
		M^{\rm{P}_{34}}_+ (\zeta)=M^{\rm{P}_{34}}_- (\zeta)
		\left\{ \begin{array}{ll}
		\begin{pmatrix}
		1 & \omega
		\\
		0 & 1
		\end{pmatrix}, &\quad \zeta \in \Sigma_{0,+},
		\\
		\begin{pmatrix}
		1 & 0 \\
		\E^{2b\pi \ii} & 1
		\end{pmatrix}, &\quad \zeta \in \Sigma_+,
		\\
		\begin{pmatrix}
		0 & 1 \\
		-1 & 0
		\end{pmatrix},& \quad
		\zeta \in \Sigma_{0,-},
		\\
		\begin{pmatrix}
		1 & 0 \\
		\E^{-2b\pi \ii} & 1
		\end{pmatrix},  & \quad \zeta \in \Sigma_-.
		\end{array}
		\right .
		\end{equation*}
		
		\item As $\zeta \to \infty$, there exists a function $a(s)=a(s;b,\omega)$ such that
		\begin{align} \label{eq:Psi-infinity}
		M^{\rm{P}_{34}}(\zeta) = \begin{pmatrix}
		1 & 0\\
		-\ii a(s) & 1
		\end{pmatrix}
		\left(I+\frac {M^{\rm{P}_{34}}_1(s)}{\zeta}
		+\oo \left( \zeta^{-2} \right) \right)
		\frac{\zeta^{-\frac{1}{4}\sigma_3}}{\sqrt{2}}
		\begin{pmatrix}
		1 & \ii
		\\
		\ii & 1
		\end{pmatrix} \E^{-(\frac{2}{3}\zeta^{3/2}+s\zeta^{1/2}) \sigma_3},
		\end{align}
		where
        % we take the  principle branch for the fractions and
		\begin{equation*}
		\left(M^{\rm{P}_{34}}_1(s)\right)_{12}=\ii a(s).
		\end{equation*}
		
		\item As $\zeta \to 0$, we have, if  $-1/2 < b < 0$
		\begin{equation*}
		M^{\rm{P}_{34}}(\zeta)=\oo({\zeta^{b}}),
		\end{equation*}
		and if $b \geq 0$,
		\begin{equation*}
		M^{\rm{P}_{34}}(\zeta)=\left\{ \begin{array}{ll}
		\begin{pmatrix}
		 \oo(\zeta^{b}) & \oo(\zeta^{-b})
		\\
		 \oo(\zeta^{b}) &  \oo(\zeta^{-b})
		\end{pmatrix}, &\quad  -\frac{2\pi}{3}<\arg \zeta <\frac{2\pi}{3},
		\\
		\oo(\zeta^{-b}), &\quad  \frac{2\pi}{3} <\arg < \pi \, \text{and}\,  -\pi <\arg \zeta < -\frac{2\pi}{3}.
		\end{array}
		\right .
		\end{equation*}
		\end{itemize}
\end{Rhp}

By \cite{FIKY2006,ikj2008,ikj2009,XZ2011}, the above RH problem is uniquely solvable for $b>-1/2$, $\omega \in \mathbb{C}\setminus (-\infty,0)$, and $s\in \mathbb{R}$. Moreover, with $a(s)$ given in
\eqref{eq:Psi-infinity}, the function
\begin{equation}\label{def:u}
    u(s):=u(s;b,\omega)=a'(s;b,\omega)-\frac s2
\end{equation}
satisfies the  Painlev\'{e}  XXXIV equation
\begin{equation}\label{def:P34}
u''(s)=4u(s)^2+2su(s)+\frac{u'(s)^2-(2b)^2}{2u(s)}.
\end{equation}

\begin{comment}
and is pole-free on the real axis.
Particularly, one has
\begin{equation}\label{eq:uasygen}
u(h;b, 0)=\left\{ \begin{array}{ll}
b/\sqrt{h}+\oo(h^{-2}), &\qquad h\to +\infty,
\\
-h/2+\oo(h^{-2}), &\qquad h\to -\infty.
\end{array}
\right .
\end{equation}
This, together with the fact that $a(h;b,0)\to 0$ as $h\to -\infty$, implies that
\begin{equation}\label{p34-a}
a(h;b,0)=\int_{-\infty}^h \left( u(t;b, 0) + \frac t2\right) \ddd  t.
\end{equation}
\end{comment}

\section{Unique solvability of the equation \eqref{equ xialpha} }\label{app5}
In this section, we show that \eqref{equ xialpha} admits a unique solution.
%give the proof of the existence for the solution $\alpha(\xi)$ of  equation \eqref{equ xialpha}. 
To proceed, we rewrite the right-hand of \eqref{equ xialpha} as 
$N(\alpha)/D(\alpha)$, where
% a function of $\alpha$ as 
% \begin{align}\label{E1}
% \frac{-\alpha^2-\frac{1}{\alpha^2}+\frac{(\alpha +\frac{1}{\alpha}+\eta_1 +\frac{1}{\eta_1})}{2}(\alpha +\frac{1}{\alpha}+2-\frac{4\Pi(l_1^2,k(\alpha))}{K(k(\alpha))})+\frac{\int_{\eta_1}^{\eta_1^{-1}}\frac{s^2+s^{-2}}{R(s)}	\ddd  s}{\int_{\eta_1}^{\eta_1^{-1}}\frac{1}{R(s)}	\ddd  s}}{\alpha+\alpha^{-1}+2 - 4\frac{\Pi(l_1^2,k(\alpha))}{K(k(\alpha))}}
% \end{align}
% with
% \begin{align}
%     &k(\alpha)=\dfrac{l_1(\alpha+1)}{\alpha-1}.\label{equ kalpha2}
% \end{align}
% For convenience, we denote the numerator and denominator of \eqref{E1} as $N(\alpha)$ and $D(\alpha)$ respectively, namely,
\begin{align}
   &N(\alpha):=-\alpha^2-\frac{1}{\alpha^2}+\frac{(\alpha +\frac{1}{\alpha}+\eta_1 +\frac{1}{\eta_1})}{2}\left(\alpha +\frac{1}{\alpha}+2-\frac{4\Pi(l_1^2,k(\alpha))}{K(k(\alpha))}\right)+\frac{\int_{\eta_1}^{\eta_1^{-1}}\frac{s^2+s^{-2}}{R(s)}	\ddd  s}{\int_{\eta_1}^{\eta_1^{-1}}\frac{1}{R(s)}	\ddd  s}, \label{def:Nalpha}\\
   &D(\alpha):=  \alpha+\alpha^{-1}+2 - 4\frac{\Pi(l_1^2,k(\alpha))}{K(k(\alpha))}
\end{align}
with 
 \begin{align}
     &k(\alpha)=\dfrac{l_1(\alpha+1)}{\alpha-1} 
     \label{equ kalpha2}
 \end{align}
are two functions with respect to $\alpha$. From the properties of $\Pi(l_1^2,k(\alpha))$ and $K(k(\alpha))$ in \cite[Chapter 19]{dlmf},  it follows that $D(\alpha)$ is an increasing function with respect to $\alpha$. Moreover, as $\alpha\to\eta_1$, $k(\alpha) \to 1$ and then it holds that
\begin{align*}
    \frac{\Pi(l_1^2,k(\alpha))}{K(k(\alpha))}=\frac{1}{1-l_1^2}+\frac{l_1\ln\eta_1}{1-l_1^2}\frac{1}{\ln(1-k(\alpha)^2)}+\mathcal{O}\left(\frac{1-k(\alpha)^2}{\ln(1-k(\alpha)^2)}\right),
\end{align*}
from which we have
\begin{align}\label{Da}
   D(\alpha)=-\frac{4l_1\ln\eta_1}{1-l_1^2}\frac{1}{\ln(1-k(\alpha)^2)}+\mathcal{O}\left(\frac{1-k(\alpha)^2}{\ln(1-k(\alpha)^2)}\right).
\end{align}

As for $N(\alpha)$, it is noticed from \eqref{eRn}, \eqref{ei1}  and \eqref{ei2} that 
    \begin{align*}
        \int_{\eta_1}^{\eta_1^{-1}}\frac{s^2+s^{-2} }{R(s)}	\ddd  s&=\int_{-l_1}^{l_1}\sqrt{\frac{(1-l_1^2)(1-l(\alpha)^2)}{(l_1^2-x^2)(l(\alpha)^2-x^2)}}\left(1+\frac{8x^2}{(1-x^2)^2}\right)\ddd x\\
        &= 2\int_0^1\frac{\sqrt{(1-l_1^2)(1-l(\alpha)^2)}}{l(\alpha)\sqrt{(1-x^2)(1-k(\alpha)^2x^2)}}\left(1+\frac{8l_1^2x^2}{(1-l_1^2x^2)^2}\right)\ddd x\\
        &= 2\frac{\sqrt{(1-l_1^2)(1-l(\alpha)^2)}}{l(\alpha)}\left(K(k(\alpha))+8l_1^2\Pi_{l_1^2}(l_1^2,k(\alpha))\right),
      %  =&2\frac{\sqrt{(1-l_1^2)(1-l(\alpha)^2)}}{l(\alpha)}\left(K(k(\alpha))+\frac{4l_1^2}{(k(\alpha)^2-l_1^2)(l_1^2-1)}\left(E(k(\alpha))+\frac{k(\alpha)^2-l_1^2}{l_1^2}K(k(\alpha))-\frac{k(\alpha)^2-l_1^4}{l_1^2}\Pi(l_1^2,k(\alpha))\right)\right)
    \end{align*}
    where we recall that
    \begin{align*}
        l(\alpha)=\frac{\alpha-1}{\alpha+1},
    \end{align*}
    and 
    \begin{align*}
        \Pi_{l_1^2}(l_1^2,k(\alpha)) =\frac{1}{2(k(\alpha)^2-l_1^2)(l_1^2-1)}\left(E(k(\alpha))+\frac{k(\alpha)^2-l_1^2}{l_1^2}K(k(\alpha))-\frac{k(\alpha)^2-l_1^4}{l_1^2}\Pi(l_1^2,k(\alpha))\right).
    \end{align*}
    This, together with \eqref{int 1/R(s)}, implies that
    \begin{align*}
       &\frac{\int_{\eta_1}^{\eta_1^{-1}}\frac{s^2+s^{-2}}{R(s)}	\ddd  s}{\int_{\eta_1}^{\eta_1^{-1}}\frac{1}{R(s)}	\ddd  s}=2+\frac{8l_1^2}{(k(\alpha)^2-l_1^2)(l_1^2-1)}\left(\frac{E(k(\alpha))}{K(k(\alpha))}+\frac{k(\alpha)^2-l_1^2}{l_1^2}-\frac{k(\alpha)^2-l_1^4}{l_1^2}\frac{\Pi(l_1^2,k(\alpha))}{K(k(\alpha))}\right).
    \end{align*}
   Substituting the above equation into \eqref{def:Nalpha}, we have
   \begin{align*}
     N(\alpha)= & -\alpha^2-\frac{1}{\alpha^2}+\frac{(\alpha +\frac{1}{\alpha}+\eta_1 +\frac{1}{\eta_1})}{2}(\alpha +\frac{1}{\alpha}+2-\frac{4\Pi(l_1^2,k(\alpha))}{K(k(\alpha))})\\
     &+2+\frac{8l_1^2}{(k(\alpha)^2-l_1^2)(l_1^2-1)}\left(\frac{E(k(\alpha))}{K(k(\alpha))}+\frac{k(\alpha)^2-l_1^2}{l_1^2}-\frac{k(\alpha)^2-l_1^4}{l_1^2}\frac{\Pi(l_1^2,k(\alpha))}{K(k(\alpha))}\right),
   \end{align*}
which is a decreasing function with respect to $\alpha$. Thus, the right-hand of \eqref{equ xialpha} is  a decreasing function with respect to $\alpha$. In addition, as $\alpha\to\eta_1$, we have
    \begin{align*}
       \frac{\int_{\eta_1}^{\eta_1^{-1}}\frac{s^2+s^{-2}}{R(s)}	\ddd  s}{\int_{\eta_1}^{\eta_1^{-1}}\frac{1}{R(s)}	\ddd  s}&= \eta_1^2+\eta_1^{-2}-\frac{\frac{8l_1^2}{(1-l_1^2)^2}\big(2-\frac{l_1^2+1}{l_1}\ln\eta_1\big)}{\ln(1-k(\alpha)^2)}+\mathcal{O}\left(\frac{1-k(\alpha)^2}{\ln(1-k(\alpha)^2)}\right).    \end{align*}
Combining this with \eqref{Da} yields, as $\alpha\to\eta_1$, 
  \begin{align*}
      \frac{N(\alpha)}{D(\alpha)}&=\eta_1 +\frac{1}{\eta_1}+\frac{ -\frac{\frac{8l_1^2}{(1-l_1^2)^2}\big(2-\frac{l_1^2+1}{l_1}\ln\eta_1\big)}{\ln(1-k(\alpha)^2)}+\mathcal{O}\left(\frac{1-k(\alpha)^2}{\ln(1-k(\alpha)^2)}\right)}{-\frac{4l_1\ln\eta_1}{1-l_1^2}\frac{1}{\ln(1-k(\alpha)^2)}+\mathcal{O}\left(\frac{1-k(\alpha)^2}{\ln(1-k(\alpha)^2)}\right)}\\
      &=-\frac{\eta_1-\eta_1^{-1}}{\ln\eta_1}+\mathcal{O}\left(1-k(\alpha)^2\right)
  \end{align*} 
Therefore, we conclude that $\frac{N(\alpha)}{D(\alpha)}$ is  a decreasing function on $(\eta_1,\eta_2)$ with boundary values
$$\frac{N(\eta_1)}{D(\eta_1)}=-\frac{\eta_1-\eta_1^{-1}}{\ln\eta_1}, \qquad\frac{N(\eta_2)}{D(\eta_2)}=\xi_{\mathrm{crit}},$$
where
$\xi_{\mathrm{crit}}$ is defined in \eqref{def xi}. Consequently, for each $\xi\in (\xi_{\mathrm{crit}},-\frac{\eta_1-\eta_1^{-1}}{\ln\eta_1})$, there exists a unique $\alpha(\xi)\in(\eta_1,\eta_2)$ as a solution of \eqref{equ xialpha}.       

\vspace{4mm}

\noindent{\bf Acknowledgements.}
The work of Chen is partially supported by NSFC under grant 12401311 and Natural Science Foundation of Fujian Province under grant 2024J01306.
The work of Fan is partially supported by NSFC under grants 12271104, 51879045. The work of Yang is partially supported by
Natural Science Foundation of Chongqing under grant  CSTB2025NSCQ-GPX0727, NSFC under grant 12501325  and Innovation Center for Mathematical Analysis of Fluid and Chemotaxis (Chongqing University).

\hspace*{\parindent}
\\


\begin{thebibliography}{10}



\bibitem{ABP2007}
{\small  \textsc{M. J. Ablowitz, G. Biondini and B. Prinari,} \ Inverse scattering transform for the integrable discrete nonlinear Schr\"{o}dinger equation with nonvanishing boundary conditions, \textit{Inverse Probl.,} \textbf{23} (2007), 1711-1758.}

\bibitem{AL1974}	
{\small  \textsc{M. J. Ablowitz and J. F.  Ladik,} \ Nonlinear differential-difference equations, \textit{J. Math. Phys.,} \textbf{16}  (1974), 598-603.}

\bibitem{AL1976}	
{\small  \textsc{M. J. Ablowitz and  J. F.  Ladik,}\  Nonlinear differential-difference equations and Fourier analysis, \textit{J. Math. Phys.,} \textbf{17} (1976), 1011-1018.}

\bibitem{APT2004}
{\small  \textsc{M. J. Ablowitz, B. Prinari and A. Trubatch,}\ Discrete and continuous nonlinear Schr\"{o}dinger systems, Cambridge University Press, UK, 2004.}

\bibitem{AA2025}
{\small  \textsc{A. Aggarwal,}\  Asymptotic scattering relation for the Toda lattice. arxiv2503.08018v1.}

\bibitem{AKV2020}	
{\small  \textsc{Y. Angelopoulos, R. Killip and M. Visan,} \  	Invariant measures for integrable spin chains and an integrable discrete nonlinear Schr\"{o}dinger equation, \textit{ SIAM J. Math. Anal.,}  \textbf{52} (2020), 135-163.}

%\bibitem{NLSgas2}
%{\small  \textsc{M. Bertola, T. Grava and G. Orsatti,}\ $\bar{\partial}$-problem for focusing nonlinear Schr\"odinger equation and soliton shielding, 	\textit{Proc. A.}, \textbf{481} (2025), 20240764.} %
	
%\bibitem{NLSgas3}
%{\small  \textsc{M. Bertola, T. Grava and  G. Orsatti,}\ Soliton shielding of the focusing nonlinear Schr\"{o}dinger equation, \textit{	Phys. Rev. Lett.}, \textbf{130} (2023), 127201.}%

\bibitem{BM2018}
{\small  \textsc{M. Bertola and  A. Minakov}, \
Laguerre polynomials and transitional asymptotics
of the modified Korteweg-de Vries equation for step-like initial data,
\textit{	Anal. Math. Phys.}, \textbf{9} (2019), 1761-1818.}

%\bibitem{NLSgas4}
%{\small  \textsc{G. Biondini, G. A. El, X.-D. Luo, J. Oregero and A. Tovbis,} \ Breather gas fission from elliptic potentials in self-focusing media, \textit{Phys. Rev. E}, \textbf{111} (2025),  014204.} %
		

%\bibitem{BKA2009}
%{\small  \textsc{Yu V. Bludov, V. V. Konotop and N. Akhmediev,} \ Rogue waves as spatial energy concentrators in arrays of nonlinear waveguides,  \textit{Opt. Lett.,} \textbf{34} (2009), 3015-3017.}

%	\bibitem{Monvel2011}
%A. Boutet de Monvel, V. P. Kotlyarov and D. Shepelsky, \newblock   Focusing NLS equation: long-time dynamics of step-like initial data.\textit{ Int. Math. Res. Not. IMRN}, 7(2011), 1613-1653.

%\bibitem{MonvelCMP1}
%A. Boutet de Monvel, J. Lenells and D. Shepelsky,
%\newblock  The focusing NLS equation with step-like oscillating background: scenarios of long-time Asymptotics.  \textit{Comm. Math. Phys.}, 383(2021), 893-952.

%\bibitem{DIW2015}
% {\small  \textsc{D. Dai, M. E. H. Ismail and J. Wang,} \
%Asymptotics for Laguerre polynomials with large order and parameters,
%\textit{J. Approx. Theory}, \textbf{193} (2015), 4-19.}








\bibitem{mis} 
{\small  \textsc{A. Boutet de Monvel, A. Its and  D. Shepelsky,} \  Painlev\'e-type asymptotics for the Camassa-Holm equation,  \textit{SIAM J. Math. Anal.,} \textbf{42} (2010), 1854-1873.}

%\bibitem{MonvelCMP2}
%{\small  \textsc{A. Boutet de Monvel, J. Lenells and D. Shepelsky,} \ The focusing NLS equation with step-like oscillating Background: The genus 3 sector, \textit{Comm. Math. Phys.},  \textbf{392} (2022), 1081-1148.}

\bibitem{CFW2025}
{\small  \textsc{M. S. Chen, E. G. Fan and Z. Y. Wang,}\ Long-time asymptotics for the Ablowitz-Ladik system with present of solitons, arXiv:2407.21526.}


	
%	\bibitem{fNLS}
%{\small  \textsc{M. Borghese, R. Jenkins, K. T. R. McLaughlin  and P. Miller,}	\newblock { Long-time asymptotic behavior of the focusing nonlinear Schr\"odinger equation, }	\textit{Ann. I. H. Poincar$\acute{e}$ Anal}, \textbf{35} (2018), 887-920.}
%	\bibitem{NLSSTEPLIKE}
%{\small \textsc{R. Buckingham and S. Venakides}, Long-time asymptotics of the nonlinear Schr\"odinger equation shock problem, \textit{Commun. Pure and Appl. Math.,}  \textbf{60} (2007), 1349-1414.}


\bibitem{CER2021}
{\small \textsc{T. Congy, G. El, and G. Roberti,}\ Soliton gas in bidirectional dispersive hydrodynamics, \textit{Phys. Rev. E}, \textbf{103} (2021), 042201.}

\bibitem{Deift1999}
{\small \textsc{P. Deift,} \ Orthogonal polynomials and random matrices: A Riemann–Hilbert approach, volume 3
of courant lecture notes, New York University, 1999}

\bibitem{DKMVZ1999}
{\small \textsc{P. Deift, T. Kriecherbauer, K.T-R McLaughlin, S. Venakides, and X. Zhou},\ Strong asymptotics of
orthogonal polynomials with respect to exponential weights, \textit{Comm. Math. Phys.}, \textbf{2} (1999), 1491-1552}

\bibitem{gfunction}
{\small \textsc{P. Deift, S. Venakides and X. Zhou,}\ The collisionless shock region for the long-time behavior of solutions of the KdV equation, \textit{Comm. Pure Appl. Math.}, \textbf{47} (1994), 199-206.}
	
\bibitem{RN6}
{\small \textsc{P. Deift and X. Zhou,}\ A steepest descent method for oscillatory Riemann-Hilbert problems, Asymptotics for the MKdV equation, \textit{Ann. Math.}, \textbf{137} (1993),  295-368.}
	
	
\bibitem{RN10}
{\small \textsc{P. Deift and X. Zhou,} \	Long-time asymptotics for solutions of the NLS equation with initial data in a weighted Sobolev space,	\textit{ Comm. Pure Appl. Math.,} \textbf{56} (2003), 1029-1077}.





%\bibitem{AAS2010}
%{\small  \textsc{A. Ankiewicz, N. Akhmediev, J. M. Soto-Crespo,} \  Discrete rogue waves of the Ablowitz–Ladik and Hirota equations, \textit{Phys. Rev. E,} \textbf{82} (2010) 026602.}

%\bibitem{WYM2016}
%{\small  \textsc{X.-Y. Wen, Z. Yan,} \ B.A. Malomed, Higher-order vector discrete rogue-wave states in the coupled Ablowitz–Ladik equations: Exact solutions and stability, \textit{Chaos,} \textbf{26} (2016) 123110.}

	
%\bibitem{BT2015}
%{\small  \textsc{V. M. Buchstaber, S. I. Tertychnyi, } \ Holomorphic solutions of the double confluent Heun equation associated with the RSJ model of the Josephson junction,  \textit{Theor. Math. Phys. (Russian Fed.),} \textbf{182} (2015), 329-355.}

	
	
%	\bibitem{BC1984}
%	{\small  \textsc{R. Beals and R. R. Coifman,} \  Scattering and inverse scattering for first order systems,
%		\textit{Commun.  Pure and Appl. Math.,} \textbf{37} (1984), 39-90.}
%	
%	\bibitem{BC1985}
%	{\small  \textsc{R. Beals and R. R. Coifman,} \ Inverse scattering and evolution equations,
%		\textit{Commun.  Pure and Appl. Math.,}   \textbf{38} (1985),  29-42.}
	
	%\bibitem{Gardner1967}  {\small  \textsc{C. S. Gardner,  J. M. Green,  M. D.  Kruskal  and R. M.  Miura,} Method for solving the Korteweg-de Vries equation, Phys. Rev. Lett. 19(1967), 1095-1097.}


\bibitem{DZZ16}{\small \textsc{S. Dyachenko, D. Zakharov and V. Zakharov,} \ Primitive potentials and bounded solutions of the KdV equation, \textit{ Phys. D}, \textbf{333} (2016),  148-156.}

%\bibitem{E03}
%{\small  \textsc{G. A. El}, \  The thermodynamic limit of the Whitham equations, \textit{ Phys. Lett. A}, \textbf{311} (2003),  374-383. }


\bibitem{EK05}
{\small  \textsc{G. A. El and A. M. Kamchantov, }\  Kinetic equation for a dense soliton gas, \textit{ Phys. Rev. Lett.}, \textbf{95} (2005), 204101.}

\bibitem{EKPZ11}
{\small  \textsc{G. A. El,  A. M. Kamchantov, M. V. Pavlov and S. A. Zykov,} \ Spectral theory of soliton and breather gases for the focusing nonlinear
Schr\"{o}inger equation, \textit{Phys. Rev. E,} \textbf{101} (2020), 052207.
}

%\bibitem{ET20}
%{\small  \textsc{G. A. El and A. Tovbis, }\ Kinetic equation for a soliton gas and its hydrodynamic reductions, \textit{J. Nonlinear Sci.}, \textbf{21} (2011), 151-191.
%}
	
\bibitem{EL2006}	
{\small  \textsc{N. M. Ercolani and G. Lozano,}\  A bi-Hamiltonian structure for the integrable, discrete non-linear Schr\"{o}dinger system, \textit{ Phys. D,}  \textbf{218} (2006), 105-121.}

%\bibitem{NLSgas1}
%{\small  \textsc{G. Falqui, T. Grava and  G. Orsatti,}\ Shielding of breathers for the focusing nonlinear	Schr\"odinger equation,	\textit{Phys. Rev. Lett.}, \textbf{130} (2023),  127201.} %

\bibitem{FLYZ2006}
{\small  \textsc{E. G. Fan,\ G. Z. Li,\ Y. L. Yang \  and  \ L. Zhang,}\ Painlev\'{e}  XXXIV  asymptotics for the defocusing nonlinear Schr\"odinger equation with a  finite-genus algebro-geometric  background,	  submitted, 2025.}

\bibitem{FIKY2006}
{\small  \textsc{A. S. Fokas, A. R. Its, A. A. Kapaev and V. Yu. Novokshenov}, \ Painlev\'e Transcendents: The Riemann-Hilbert Approach, AMS Mathematical Surveys and Monographs, Vol. 128, Amer. Math. Society, Providence R.I., 2006.}

\bibitem{its1992}
{\small  \textsc{A. S. Fokas, A. R. Its and  A. V. Kitaev}, The isomonodromy approach to matrix models in 2D quantum gravity. \textit{Comm. Math. Phys.}, \textbf{147} (1992), 395-430.}
%\bibitem{YYLAG} {\small  \textsc{E. G. Fan, G. Z. Li,  Y. L. Yang   and   L. Zhang,} Painlev\'{e}  XXXIV  asymptotics for the defocusing nonlinear Schr\"odinger equation with a  finite-genus Algebro-Geometric  background,  submitted, 2025.}


%	\bibitem{Fokas}	{\small  \textsc{A. S. Fokas,}		\newblock  On a class of physically important integrable equations,		{\it Phys. D, } \textbf{87} (1995), 145-150.}
		
%	\bibitem{FIKNBook}
%	{\small  \textsc{A. S. Fokas, A. R. Its, A. A. Kapaev and V. Y. Novokshenov,}
%	Painlev\'{e} Transcendents: The Riemann-Hilbert Approach, Math. Surv. Monog., \textit{ Amer. Math. Soc.},
%	Providence, RI, \textbf{128} (2006).}
	
%	\bibitem{itsCMP}
%	{\small  \textsc{A.S. Fokas, A.R. Its, A.V. Kitaev,}
%	The isomonodromy approach to matrix models in 2D quantum gravity,
%\textit{Comm. Math. Phys.}, \textbf{147} (1992),  395-430}

\bibitem{GDZ2007}
{\small  \textsc{X. Geng, H. H. Dai and J. Zhu,} \ Decomposition of the discrete Ablowitz–Ladik hierarchy, \textit{Stud. Appl. Math.,} \textbf{118} (2007), 281-312.}

\bibitem{kdvgas}	
{\small  \textsc{M. Girotti, T. Grava, R. Jenkins and K. T. R. McLaughlin,} \ Rigorous asymptotics of a KdV soliton gas, \textit{Comm.  Math. Phys.,} \textbf{384} (2021), 733-784.}
	
\bibitem{mkdvgas}	
{\small  \textsc{M. Girotti, T. Grava, R. Jenkins, K. T. R. McLaughlin and A. Minakov},\ Soliton versus the gas:
Fredholm determinants, analysis, and the rapid oscillations behind the kinetic equation, \textit{ Comm. Pure  Appl. Math.}, \textbf{76} (2023), 3233-3299.}

\bibitem{G2001}
{\small \textsc{T. Grava},\  From the solution of the Tsarev system to the solution of the Whitham equations,  \textit{Math. Phys. Anal. Geom.},  \textbf{4} (2001),  65-96.}
	
\bibitem{GT2002}
{\small \textsc{T. Grava and F.-R. Tian}, The generation, propagation, and extinction of multiphases in the KdV zero-dispersion limit, \textit{Comm. Pure Appl. Math.}, \textbf{55} (2002), 1569-1639.}
	
%\bibitem{Hastings1980ABV}
%{\small 	\textsc{S. Hastings and J. B. McLeod}, \
%A boundary value problem associated with the second Painlev{\'e} transcendent and the Korteweg-de Vries equation, \textit{Arch. Ration. Mech. Anal.}, \textbf{73} (1980), 31-51.} %

\bibitem{I1982}
{\small  \textsc{Y. Ishimori,} \ An integrable classical spin chain,  \textit{J. Phys. Soc. Japan,} \textbf{51} (1982),  3417-3418.}

	
\bibitem{ikj2008}
{\small  \textsc{A. R. Its, A. B. J. Kuijlaars  and J. \"{O}stensson}, \ Critical edge behavior in unitary
	random matrix ensembles and the thirty fourth Painlev\'{e}
	transcendent, \textit{ Int. Math. Res. Not.}, \textbf{2008} (2008), rnn017.}
	
	
	\bibitem{ikj2009}
	{\small  \textsc{A. R. Its, A. B. J. Kuijlaars  and J. \"{O}stensson}, \ Asymptotics for a special solution of the thirty fourth Painlev\'e equation, \textit{ Nonlinearity}, \textbf{22} (2009), 1523-1558.}
	




	
\bibitem{KC1986}
{\small  \textsc{V. M. Kenkre and D. K. Campbell,} \ Self-trapping on a dimer: Time-dependent solutions of a discrete nonlinear Schr\"{o}dinger equation, \textit{Phys. Rev. B,}  \textbf{34} (1986), 4959-4961.}	
	
	\bibitem{KMcLVAV}
		{\small  \textsc{A. Kuijlaars, K. McLaughlin, W. Van Assche and M. Vanlessen,} \ The Riemann-Hilbert approach to
	strong asymptotics for orthogonal polynomials on $[-1, 1]$, \textit{Adv. Math.}, \textbf{188} (2004), 337-398.}
	
\bibitem{mkdvtran}
{\small  \textsc{G. Z. Li, Z. Y. Wang, Y. L. Yang and L. Zhang,}\ Transient asymptotics of mKdV soliton gas,  submitted, 2025.}


\bibitem{LN2012}
{\small  \textsc{L.-C. Li and I. Nenciu,}\ The periodic defocusing Ablowitz–Ladik equation and the geometry of Floquet CMV matrices,  \textit{Adv. Math.,} \textbf{231} (2012), 3330–3388.}


\bibitem{LSG2024}
{\small  \textsc{H. Liu, J. Shen and X. G. Geng,}\ Riemann–Hilbert method to the Ablowitz–Ladik equation: Higher-order cases,   \textit{Stud. Appl. Math.,} \textbf{153} (2024), 12748.}


\bibitem{MEKL1995}
{\small  \textsc{P. D. Miller, N. M. Ercolani, I. M. Krichever and C. D. Levermore,}\ Finite genus solutions to the Ablowitz–Ladik equations, \textit{Comm. Pure Appl. Math.,} \textbf{48} (1995), 1369–1440.}


	



\bibitem{N1987}
{\small  \textsc{N. Papanicoulau,} \ Complete integrability for a discrete heisenberg chain, \textit{J. Phys. A: Math. Gen.,} \textbf{20} (1987), 3637–3652.}












\bibitem{P2016}
{\small  \textsc{B. Prinari,} \ Discrete solitons of the focusing Ablowitz-Ladik equation with nonzero boundary conditions via inverse scattering, \textit{J. Math. Phys.,} \textbf{57} (2016), 083510.}

\bibitem{PV2016}
{\small  \textsc{B. Prinari and F. Vitale,} \ Inverse scattering transform for the focusing Ablowitz–Ladik system with nonzero boundary conditions,   \textit{Stud. Appl. Math.,} \textbf{137} (2016), 28-52.}


%\bibitem{QW2008}
%{\small  \textsc{W. Y. Qiu and R. Wong, }\
%Global asymptotic expansions of the Laguerre polynomials-a Riemann-Hilbert approach, \textit{Numer. Algorithms}, \textbf{49} (2008), 331-372.}






\bibitem{OY2014}
{\small  \textsc{Y. Ohta and J. Yang,} \ General rogue waves in the focusing and defocusing Ablowitz–Ladik equations,  \textit{J. Phys. A: Math. Theor.,}  \textbf{47} (2014), 255201.}

\bibitem{dlmf}
{\small \textsc{F. W. J. Olver, A. B. Olde Daalhuis, D. W. Lozier, B. I. Schneider, R. F. Boisvert, C. W. Clark, B. R. Miller, B. V. Saunders (Eds.),} \ 
NIST digital library of mathematical functions, 2018, http://dlmf .nist .gov/, Release 1.0.21 of 2018-12-15.}

\bibitem{HM1981}
{\small  \textsc{H. Segur and M. L. Ablowitz,} \  Asymptotic solutions of nonlinear evolution equations and a Painlev\'{e} transcendent,   \textit{Phys. D,} \textbf{3} (1981), 165-184.}

 
\bibitem{TH1990}	
{\small  \textsc{S. Takeno and K. Hori,} \ A propagating self-localized mode in a one-dimensional lattice with quartic anharmonicity, \textit{ J. Phys. Soc. Japan,} \textbf{59} (1990), 3037-3040.}

%\bibitem{V2007}
%{\small  \textsc{M. Vanlessen,} \ 
%Strong asymptotics of Laguerre-type orthogonal polynomials and applications in random matrix theory, \textit{Constr. Approx.}, \textbf{25} (2007), 125-175.}


\bibitem{VK1992}
{\small  \textsc{V. E. Vekslerchik and V. V. Konotop,}\ Discrete nonlinear Schr\"{o}dinger equation under non-vanishing boundary conditions,  \textit{Inverse Probl.,} \textbf{8} (1992), 889-909.}
 	


    	\bibitem{wang2023defocusing}
  {\small  \textsc{Z. Y. Wang and E. G. Fan,}\   The defocusing nonlinear Schr\"{o}dinger equation with a nonzero background: Painlev\'e asymptotics in two
	transition regions, \textit{Comm. Math. Phys.}, \textbf{402} (2023),  2879-2930.}



		
\bibitem{XF2018}	
{\small  \textsc{B. Q.	Xia and A. Fokas,}\ Initial-boundary value problems associated with the Ablowitz-Ladik system, \textit{ Phys. D,} \textbf{364} (2018), 27-61.}
	

\bibitem{XZ2011}
{\small \textsc{S. X. Xu and Y. Q. Zhao,}
Painlev\'e  XXXIV asymptotics of orthogonal polynomials  for the Gaussian weight with a jump at the edge,
\textit{Stud. Appl. Math.}, \textbf{127} (2011), 67-105.}


\bibitem{xu2024transient}
	{\small  \textsc{T. Y. Xu, Y. L. Yang and  L. Zhang}, \  Transient asymptotics of the modified Camassa-Holm equation, \textit{ J. Lond. Math. Soc.}, \textbf{110} (2024), e12967.}


%	
%	\bibitem{Zhou1989CPAM}
%	{\small  \textsc{X. Zhou,} \  Direct and inverse scattering transforms with arbitrary spectral singularities,
%		\textit{Commun.  Pure   Appl. Math.,} \textbf{42} (1989), 895-938.}
%	
	
	
	\bibitem{Z1971}
	{\small \textsc{V. Zakharov},\ Kinetic equation for solitons, \textit{Sov. Phys. JETP}, \textbf{33}  (1971),  538-541.}
    
    
 %   \bibitem{Zhou1989} {\small \textsc{X. Zhou,} \ The Riemann-Hilbert problem and inverse scattering,  \textit{SIAM J. Math. Anal.,}  \textbf{20} (1989), 966-986.}
	

	
	%\bibitem{CA1998}
	%A.  Constantin and J. Escher,
	%\newblock   Wave breaking for nonlinear nonlocal shallow water equations,
	%\newblock {\em   Acta Math., }  191(1998), 229-243.
	
	%\bibitem{YA2000}
	%Y. A. Li and P. J. Olver,
	%\newblock   ell-posedness, blow-up solutions for an integrable nonlinearly dispersive modelwave equation,
	%\newblock {\em  J. Diff. Eqns, }  162(2000), 27-63.
	
	











%	\bibitem{Charlier2020}  C. Charlier, J. Lenells, Airy and Painlev\'e asymptotics for	the mKdV equation, J. Lond. Math. Soc. 101 (2020), 194-225.
	
%	\bibitem{hz} L. Huang, L. Zhang, Higher order Airy and Painlev\'e asymptotics for the mKdV hierarchy, SIAM J. Math. Anal. 54 (2022), 5291-5334.
	
	







%\bibitem{CG2009}T. Claeys, T. Grava, Universality of the break-up profile for the KdV equation in the small dispersion limit using the Riemann-Hilbert approach, Comm. Math. Phys. 286 (2009), 979-1009.	

%\bibitem{MLS2025} A. Boutet de Monvel, J. Lenells, D. Shepelsky, The focusing NLS equation with step-like oscillating background: asymptotics in a transition zone, J. Differential Equ. (429) 2025, 747-801.













%\bibitem{IKO2009} A. R. Its, A. B. J. Kuijlaars, J. \"{O}stensson, Asymptotics for a special solution of the thirty fourth Painlev\'e equation, Nonlinearity 22 (2009), 1523-1558.

















\end{thebibliography}
\end{document}